\DeclareMathOperator{\lOp}{\mathcal{O}_p}
\DeclareMathOperator{\lO}{\mathcal{O}}
\newcommand\lo{
  \mathchoice
    {{\scriptstyle\mathcal{O}}}
    {{\scriptstyle\mathcal{O}}}
    {{\scriptscriptstyle\mathcal{O}}}
    {\scalebox{.5}{$\scriptscriptstyle\mathcal{O}$}}
  }
\newtheorem{proposition}{Proposition}
\numberwithin{equation}{section}
\newtheorem{example}{Example}
\newtheorem{scenario}{Scenario}
\newtheorem{lemma}{Lemma}
\title{Additive Bayesian variable selection under censoring and misspecification}
\author{ \href{https://orcid.org/0000-0000-0000-0000}{\includegraphics[scale=0.06]{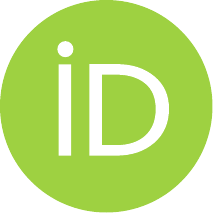}\hspace{1mm}David Rossell}\\
Universitat Pompeu Fabra, \\
Department of Business and Economics, \\
Barcelona, Spain.\\
	\texttt{david.rossell@upf.edu} \\
	\And
	\href{https://orcid.org/0000-0001-7183-8407}{\includegraphics[scale=0.06]{orcid.pdf}\hspace{1mm}Francisco J. Rubio} \\
	University College London \\
	Department of Statistical Science,\\
	London, UK.\\
	\texttt{f.j.rubio@ucl.ac.uk} \\
}
\begin{document}
\maketitle

\begin{abstract}
We discuss the role of misspecification and censoring on Bayesian model selection in  the contexts of right-censored  survival and concave log-likelihood regression.  Misspecification includes wrongly assuming the censoring mechanism to be non-informative.  Emphasis is placed on additive accelerated failure time, Cox proportional hazards and probit models. We offer a theoretical treatment that includes local and non-local priors, and a general non-linear effect decomposition to improve power-sparsity trade-offs. We discuss a fundamental question: what solution can one hope to obtain when (inevitably) models are misspecified, and  how to interpret it? Asymptotically, covariates that do not have predictive power for neither the outcome nor (for survival data) censoring times, in the sense of reducing a likelihood-associated loss, are discarded.  Misspecification and censoring have an asymptotically negligible effect on false positives, but their impact on power is exponential. We  show that it can be advantageous to consider  simple models that are computationally practical yet attain good power to detect potentially complex effects,  including the use of finite-dimensional basis to detect truly non-parametric effects.  We also discuss algorithms to capitalize on sufficient statistics and fast likelihood approximations  for Gaussian-based survival and binary  models. 
\end{abstract}

\keywords{Additive regression \and Generalized additive model \and Misspecification \and Model selection \and Survival}

\section{Introduction}\label{sec:intro}

Determining what covariates have an effect on a survival (time-to-event) outcome 
is an important task in many fields, including Biomedicine, Economics and Engineering.
For interpretability and computational convenience it is common to use parametric and semi-parametric models
such as Cox proportional hazards \citep{cox:1972} or accelerated failure time (AFT) regression for survival outcomes,  possibly with non-linear additive effects. 
The proportional hazards model assumes that covariates have a multiplicative effect on the baseline hazard, whereas in AFT models covariates drive the mean of the logarithmic (or other monotonic transform) times-to-event.
These models can be combined with Bayesian model selection to provide a powerful mechanism to select variables, enforce sparsity and quantify uncertainty.
However, the precise consequences of model misspecification and censoring are not sufficiently understood.
By misspecification we mean that the data are truly generated by a distribution outside the considered class.
For instance, one may fail to record truly relevant covariates or represent their effects imperfectly, 
 \textit{e.g.} when using a Cox model but the true covariate effects on the hazard are non-proportional. 
This issue can be addressed by enriching the model, \textit{e.g.} via non-parametric or time-dependent effects.
Then, the potential concerns are that the larger number of parameters can adversely affect inference, unless the sample size is large enough, and that computation can be costlier.
Censoring is also important. First, it reduces the effective sample size.  Second, wrongly assuming the censoring mechanism to be non-informative, \textit{i.e.} independent of the outcome conditionally on covariates, may affect the selected model, even asymptotically.

 Our goal is to help understand the consequences of three important issues on model selection: misspecification, censoring, and trade-offs when including non-linear effects.
We first consider that the data analyst assumed a non-linear additive AFT model, or an additive Cox model, but data are truly generated by a different probability distribution $F_0$. We also consider probit regression, which can be formulated as a particular case of the Normal AFT model, and more general concave log-likelihood regression,  which provides a unifying framework for the models we consider here. 

There are many data analysis methods for survival outcomes, along with theory for well-specified models and empirical results suggesting potential issues under misspecification, but  their implications for model selection  have not been described in sufficient detail.
 We first review results on the behavior of misspecified AFT and proportional hazard models, and subsequently discuss some model selection methods for survival data. 
Although both models have similar asymptotic properties, and which model is more appropriate depends on the data at hand,
AFT inference has been argued to be more robust and to better preserve interpretability under misspecification.
More precisely, the maximum likelihood estimators under misspecified Cox and AFT models
have comparable limiting distributions if censoring is absent or independent of covariates \citep{struthers:1986,yingzhiliang:1993}, but not so under covariate-dependent censoring \citep{solomon:1984}.
 Covariate-dependent censoring also affects  frequentist hypothesis tests. In misspecified Cox models it can lead to a substantial type I error inflation \cite{dirienzo:2001}.
In misspecified AFT models 
the power of the tests may be affected, but simple strategies to control the type I error are available \citep{solomon:1984,hutton:2002,hattori:2012}. 

 Another situation where both models behave differently  is when omitting truly active covariates, {\it e.g.} because these were not recorded.
A proportional hazards model with omitted variables tends to underestimate covariate effects, even for a treatment of interest that is uncorrelated with other covariates
\citep{struthers:1986,keiding:1997}. Further, even if the data-generating truth has proportional hazards, the marginal model conditioning only on the observed covariates does not (except in positive stable distributions, \cite{hougaard:1995}).
 In contrast, if the data-generating truth is an AFT model and one omits relevant covariates,  the unaccounted variability is subsumed into the error term, and regression parameters remain interpretable as averaged effects across the population \citep{hutton:2002}. 
Note that omitting covariates is intimately connected to incorporating a covariate but misspecifying its effect:  using a linear or finite-dimensional effect can be seen as omitting a subset of the columns of the basis defining a truly non-parametric effect. Thus our discussion on omitted variables  applies directly to misspecifying covariate effects.
To summarize, censoring and misspecification have non-trivial effects on estimation and hypothesis testing.

 We now review some model selection methods for survival data, discussing the extent to which they considered misspecification. 
\cite{huang:2006,simon:2011,rahaman:2019} proposed likelihood penalties for Cox and semiparametric AFT models, 
and \cite{tong:2013} 
for broader generalized hazards models. 
Most of this work focused on linear covariate effects, computation and proving consistency under covariate-independent censoring.
There are however empirical results on the effect of misspecification, 
\textit{e.g.} \cite{zhangz:2018} 
showed in simulations an increase in false positives of the Cox-LASSO method \citep{tibshirani:1997} when data truly arise from an AFT model.
 There are also many Bayesian variable selection methods for survival data.  \cite{F98} and \cite{S06} proposed shrinkage priors for the Cox and AFT models and assessed performance via simulations where the model was well-specified. \cite{ibrahim:1999} studied Bayesian model selection for the Cox model, 
\cite{dunson:2005} for the so-called additive hazards model, 
and \cite{niko:2017} for the Cox model under non-local priors \citep{JR10,JR12}. 
See \cite{ibrahim:2014} for a review, with a focus on the Cox model.
 While interesting, these Bayesian proposals do not consider misspecification. 
 \cite{rossell:2018} did study misspecified Bayesian linear regression, showing that misspecification often reduces the power to detect active variables, but did not consider censoring. 

 We summarize our main messages.  We show that, under mild assumptions, Bayesian model selection asymptotically 
discards covariates that do not predict neither the outcome nor the censoring times.
 By \textit{predict}, we refer to increasing the expectation of the log-likelihood function. For any fully-specified model, said expectation is a weighted average of a reward for assigning a high probability to the observed censoring time in individuals that are censored, and a reward for predicting survival times accurately in uncensored individuals (\textit{e.g.} mean squared error, for Normal AFT models). For the partially-specified Cox model, the reward is for assigning a high hazard to individuals who experienced the event, relative to other individuals at risk.  
We discuss that both censoring and wrongly specifying covariate effects have an exponential effect in power,
but that asymptotically neither leads to false positive inflation.
We also develop a novel non-linear effect decomposition to ameliorate the power drop, 
and study the consequences of using finite basis to describe covariate effects, a practical strategy to speed up computations when one considers many models.
For concreteness, we outline a formulation based on a novel combination of non-local priors \citep{JR10} and group-Zellner priors that induce group-level sparsity for non-linear effects.
As a technical contribution, we prove the asymptotic validity of Laplace approximations to Bayes factors for concave log-likelihoods under minimal conditions, allowing for misspecification, which provides  a simple basis to study Bayes factors that covers all models considered in this paper.  We also provide software (R package \texttt{mombf}).

The outline is as follows.
Section \ref{sec:model} discusses the likelihood for AFT and Cox models, priors and a  non-linear effect decomposition aimed at improving power.  
Section \ref{sec:theory} discusses known and novel results on asymptotic normality and Bayes factor rates, and how to interpret the Bayesian model selection solution under misspecification.
 Similar results are obtained for general concave log-likelihoods, see Section \ref{sec:theory_logconcave}. 
See also Section \ref{app:Probit} for a known but seemingly unexploited result in the literature, that probit models are a particular case of the Normal AFT model.
Section \ref{sec:computation} discusses the relative computational convenience of AFT vs. Cox models 
related to the use of sufficient statistics.
It also discusses an approximation to the Normal log-distribution function and derivatives that significantly increases speed and accuracy, and may have some independent interest, and simple model exploration strategies.
Section \ref{sec:results} illustrates the effect of misspecification and censoring in simulations and cancer datasets, practical power-sparsity trade-offs, and the use of finite-dimensional non-linear basis.
Section \ref{sec:discussion} concludes.
The supplementary material contains derivations related to the likelihood, priors and their derivatives, and prior elicitation (Sections \ref{suppsec:noninformati_censoring}-\ref{app:prior_elicitation}).
Sections \ref{app:withinmodel_comput}-\ref{app:cda} offer detailed discussions on computational algorithms, including a novel approximation to Normal log-distribution functions that may have some independent interest. Section \ref{sec:proof_asymp} contains all proofs for our main results, Sections \ref{app:AFTLaplace}-\ref{app:Probit} additional propositions for the AFT model with Laplace errors and probit models, and Section \ref{sec:theory_logconcave} on the asymptotic validity of Laplace approximations to integrated likelihoods. Finally, Section \ref{app:figs_tables} contains empirical results that supplement those in the main paper.

\section{Formulation}
\label{sec:model}

 Our discussion focuses on survival data, but see Section \ref{app:Probit} for binary regression and Section \ref{sec:theory_logconcave} for concave log-likelihoods.
Section \ref{ssec:likelihood} sets notation, reviews the AFT and proportional hazards models, their being a particular cases of the generalized hazards structure, and a non-linear effects decomposition to improve interpretability and power. Section \ref{ssec:model_selection} embeds the problem within a Bayesian model selection framework. Section \ref{ssec:elicitation} introduces prior distributions that can accommodate group and hierarchical constraints, and Section \ref{sec:elicitation} suggests default prior parameter values.

\subsection{Likelihood}
\label{ssec:likelihood}

 Let us introduce the notation. Suppose that one is interested in studying the dependence of a  survival (or time-to-event) outcome $o_i \in \mathbb{R}_+$  on a covariate vector  $x_i= (x_{i1},\ldots,x_{ip})^{\top} \in{\mathbb R}^p$, for individuals $i=1,\ldots,n$.
 Suppose that there are right-censoring times $c_i \in \mathbb{R}_+$, such that one only observes the outcome for uncensored individuals, \textit{i.e.} those for which $o_i \leq c_i$.
Denote by $u_i=\mbox{I}(o_i<c_i)$ the indicator that observation $i$ is uncensored, $y_i=\min\{\log(o_i),\log(c_i)\}$ the observed log-times,
$y=(y_1,\ldots,y_n)$, $u=(u_1,\ldots,u_n)$, and the number of uncensored individuals $n_o= \sum_{i=1}^n u_i$.

 We review two popular models for survival data, the AFT and Cox models, and discuss a strategy to decompose non-linear effects. 
The AFT model postulates
\begin{eqnarray}\nonumber
\log(o_i) = \sum_{j=1}^p g_j(x_{ij}) + \epsilon_i,
\end{eqnarray}
where $g_j:\mathbb{R} \rightarrow \mathbb{R}$ and $\epsilon_i$ are
independent across $i=1,\ldots,n$ with mean $E(\epsilon_i)=0$ and variance $V(\epsilon_i)=\sigma^2$ (assumed finite).
Typically, $g_j$ is expressed in terms of an $r$-dimensional basis, \textit{e.g.}~splines or wavelets \citep{wood:2017}.
 For interpretability and to gain power (see Section \ref{ssec:bfrates})  it is convenient to decompose $g_j$ into a linear and a deviation-from-linearity components.
To fix ideas, the cubic splines used in our examples consider
\begin{eqnarray}\label{eq:AFTModel}
\log(o_i) = x_i^{\top}\beta +  s_i^{\top} \delta + \epsilon_i,
\end{eqnarray}
where $\beta=(\beta_1,\ldots,\beta_p)^\top \in \mathbb{R}^p$, $\delta^\top=(\delta_1^\top,\ldots,\delta_p^\top) \in \mathbb{R}^{rp}$
and $s_i^\top=(s_{i1}^\top,\ldots,s_{ip}^\top)$, where $s_{ij} \in \mathbb{R}^r$ is the projection of $x_{ij}$ onto a cubic spline basis orthogonalized to $x_{ij}$ (and the intercept).
 The idea is that $x_i^\top \beta$ captures linear effects, whereas $s_i^{\top} \delta$ captures deviations from linearity. Even if a covariate truly has a non-linear effect, the linear term often captures a fraction of that effect using a single parameter, hence one can gain in power to detect its presence. 
Specifically we built $s_{ij}$, the $i^{th}$ row of the $n \times r$ matrix $S_j$, as follows.
Let $X_j$ and $\widetilde{S}_j$ have row $i$ equal to $(1,x_{ij})$ and
the cubic spline projection of $x_{ij}$ (equi-distant knots), then $S_j= (I- X_j (X_j^\top X_j)^{-1} X_j^\top) \widetilde{S}_j$ is orthogonal to $X_j$.
Denote by $(X,S)$ the design matrix with $(x_i^\top,s_i^\top)$ in its $i^{th}$ row,
and by $(X_o,S_o)$ and $(X_c,S_c)$ the submatrices with the rows for uncensored and censored individuals (respectively).
\color{black} This formulation contains partially linear models as particular cases (\textit{i.e.} when only some of the covariates are assumed to have a non-linear effect). \color{black}
We denote the parameter space by $\Gamma\subset {\mathbb R}^{p(r+1)}\times{\mathbb R}^+$.

In survival analysis it is common to pose a model only for the survival times, such as \eqref{eq:AFTModel}.
This is because the censoring is assumed to be non-informative given the covariates, and then the censoring distribution factors out of the likelihood function (see Section \ref{suppsec:noninformati_censoring} for a brief derivation).
The likelihood and partial likelihood used by the AFT and Cox models, which we review next, embed such a non-informativeness assumption.
See Section \ref{sec:theory} for a discussion on the consequences of this assumption not holding.

Regarding the likelihood associated to \eqref{eq:AFTModel}, consider the particular case where the errors are Gaussian. It is convenient to reparameterize $\alpha = \beta/\sigma$, $\kappa= \delta/\sigma$ and $\tau = 1/\sigma$, as then the log-likelihood is concave, provided the number of uncensored individuals is greater than the number of model parameters ($n_o \geq p+ r p$) and that $(X_{o},S_{o})$ has full column rank \citep{B81,silvapulle:1986}.
The log-likelihood is
\begin{eqnarray}
\ell(\alpha,\kappa,\tau) &=& -\frac{n_o}{2} \log\left(\frac{2\pi}{\tau^2}\right) - \dfrac{1}{2} \sum_{u_i=1}(\tau y_i-x_{i}^{\top}\alpha -s_{i}^{\top}\kappa)^2 \nonumber\\
 &+& \sum_{u_i=0} \log \left\{\Phi\left(x_{i}^{\top}\alpha +s_{i}^{\top}\kappa - \tau y_i\right)\right\},
 \label{logLikeAFT2}
\end{eqnarray}
see Supplementary \eqref{eq:grad_logLikeAFT2} for its gradient and hessian. 

The Cox model instead assumes that the hazard function at time $t$ takes the form
\begin{eqnarray*}
h_{PH}(t\mid x_i) = h_0(t)\exp\left\{x_i^{\top}\beta +  s_i^{\top} \delta \right\},
\end{eqnarray*}
where $h_0(\cdot)$ is a baseline hazard, typically estimated non-parametrically, and $(\beta,\delta)$ are estimated using the log partial likelihood \citep{cox:1972}
\begin{eqnarray}\label{eq:ploglik}
\ell_p(\beta,\delta) =  \sum_{u_i=1} \log\left( \frac{\exp \left\{  x_i^{\top}\beta + s_i^{\top} \delta \right\}}{ \sum_{k \in {\mathcal R}(o_i)} \exp\left\{x_k^{\top}\beta +  s_k^{\top} \delta\right\}} \right),
\end{eqnarray}
where ${\mathcal R}(t) = \{i : o_i \geq t\}$ denotes the set of individuals at risk at time $t$.

To relate both models, \eqref{eq:AFTModel} can be formulated in terms of the hazard function $h_{AFT}(t) = h_0\left(t \exp\left\{x_i^{\top}\beta + s_i^{\top} \delta \right\}\right) \exp\left\{x_i^{\top}\beta + s_i^{\top} \delta \right\}$. 
Both models are special cases of the generalized hazards structure \citep{chen:2001}
\begin{eqnarray}\label{eq:GH}
h_{GH}(t) = h_0\left(t \exp\left\{x_i^{\top}\beta + s_i^{\top}\delta \right\} \right) \exp\left\{x_i^{\top}\theta + s_i^{\top}\xi \right\},
\end{eqnarray}
which we use in our examples to portray the behaviour of misspecified AFT and Cox models.
Clearly, \eqref{eq:GH} contains the AFT model for $(\beta,\delta)=(\theta,\xi)$ 
and the proportional hazards model for $(\beta,\delta)=0$. 

\subsection{Model selection}
\label{ssec:model_selection}

Our goal is model selection, which we formalize as choosing among three possibilities
$$\gamma_j=\begin{cases}
0, \mbox{ if } \beta_j= 0, \delta_j=0, \\
1, \mbox{ if } \beta_j \neq 0, \delta_j=0 ,\\
2, \mbox{ if } \beta_j \neq 0, \delta_j\neq 0, \\
 \end{cases}$$
corresponding to no effect, a linear and a non-linear effect of each covariate $j=1,\ldots,p$.
That is, $\gamma=(\gamma_1,\ldots,\gamma_p)$ determines what covariates enter the model and their effect, and there are $3^p$ total models to consider. 
 We remark that by non-linear effect we refer to the specific effect coded by the chosen basis, \textit{e.g.} B-splines in our examples. One could extend the exercise by considering other types of non-linear effects, for example by adding a fourth possibility $\gamma_j=3$ associated to a wavelet basis. Such basis would be orthogonalized to the linear term, as described after \eqref{eq:AFTModel}. 

This formulation has two key ingredients.
First,  it decomposes effects into linear and deviation from linearity components, enforcing 
the hierarchical desiderata that the latter are only included if the linear terms are present.
This decomposition is similar to the structured additive regression of \cite{scheipl:2012}, the main difference is that they do not test for exact $\beta_j=0$, $\delta_j=0$
and that they rely on a spectral decomposition that is less general than our simpler orthogonalization of $(X_j,S_j)$.
 Our theory and results show that such decompositions improve the power to detect truly active effects. As discussed, this is because the option $\gamma_j=1$ captures part of the effect of a variable with a single parameter. 
The second ingredient is considering the group inclusion of all non-linear coefficients $\delta_j \in \mathbb{R}^r$.
The motivation is that including individual entries in $\delta_j$ increases the probability of false positives,
{\it e.g.}~if $j$ truly had no effect there would be $2^r-1$ subsets of $S_j$ leading to including $j$.

Bayesian model selection proceeds as follows.
Let $p_\gamma=\sum_{j=1}^{p} \mbox{I}(\gamma_j \neq 0)$ be the number of active variables according to model $\gamma$,
$s_\gamma= \sum_{j=1}^{p} \mbox{I}(\gamma_j=2)$ the number of non-linear effects,
and $d_\gamma= p_\gamma + r s_\gamma + 1$ the total number of parameters in $\gamma$ for AFT models,
and $d_\gamma= p_\gamma + r s_\gamma$ for Cox and probit models.
$(X_\gamma, S_\gamma)$ and $(\beta_\gamma, \delta_\gamma)$ are the corresponding submatrices of $(X,S)$ and subvectors of $(\beta, \delta)$,
and $(X_{o,\gamma},S_{o,\gamma})$ and $(X_{c,\gamma},S_{c,\gamma})$ the submatrices of  the observed  $(X_o,S_o)$  and censored  $(X_c,S_c)$ design matrices.
One then obtains posterior model probabilities
\begin{equation}\label{eq:modelposterior}
\pi(\gamma \mid y) = \frac{p(y \mid \gamma)\pi(\gamma)}{\sum_\gamma p(y \mid \gamma)\pi(\gamma)}
= \left( 1 + \sum_{\gamma' \neq \gamma}  B_{\gamma', \gamma} \frac{\pi(\gamma')}{\pi(\gamma)}  \right)^{-1},
\end{equation}
where $\pi(\gamma)$ is the model prior probability,
$B_{\gamma',\gamma}= p(y \mid \gamma')/p(y \mid \gamma)$ the Bayes factor between $(\gamma',\gamma)$ and
$$p(y\mid \gamma)= \int p(y \mid \alpha_{\gamma},\kappa_\gamma,\tau) \pi(\alpha_{\gamma},\kappa_\gamma,\tau \mid \gamma) d\alpha_{\gamma} d\kappa_\gamma d\tau,$$
the integrated likelihood $p(y \mid \alpha_{\gamma},\kappa_\gamma,\tau)$ with respect to a prior density $\pi(\alpha_{\gamma},\kappa_\gamma,\tau \mid \gamma)$.
One may choose the model with highest $\pi(\gamma \mid y)$,
variables with high marginal posterior probabilities $\pi(\gamma_j \neq 0 \mid y)$ and,
when the interest is in prediction, use Bayesian model averaging where models are weighted
according to $\pi(\gamma \mid y)$, or alternatively choosing a sparse model giving similar predictions \citep{hahn:2015}.
Either way $\pi(\gamma \mid y)$ are critical for inference, hence the importance to understand their behavior.

 To conclude, we comment upon a practically-relevant computational issue. 
In additive models, it is common to either let the basis dimension $r$ grow with $n$ and add a regularization term (\textit{e.g.}~P-splines), or to learn $r$ from the data (\textit{e.g.}~knot selection). Letting $r$ grow with $n$ is interesting theoretically and in prediction problems where one fits a single model, but less so when one considers many models. Large $r$ increases the computational cost (\textit{e.g.} matrix determinants require $r^3/3$ operations) and is often unneeded when the goal is just to detect if a covariate has an effect. Instead one may use a moderate $r$, \textit{e.g.}~misspecify the predictive-optimal model. The question is then, what answer can one hope to obtain and what are its properties.
Our theory and software allow learning $r$ among several fixed values, but in our examples a small $r=5$ provided better inference at lower cost (particularly for small $n$, \textit{e.g.} Figure \ref{fig:2vars_margpp}, bottom).

\subsection{Prior distributions}
\label{ssec:elicitation}

Although our discussion applies to a wide class of priors, we present three concrete options. 
\begin{eqnarray*}
{\pi_L(\alpha_{\gamma},\kappa_\gamma,\tau)}
  &=& \left[ \prod_{\gamma_j\geq 1}N\left(\alpha_j; 0, g_L n/ (x_j^\top x_j)\right)
      \prod_{\gamma_j=2} N\left(\kappa_j; 0, g_S n (S_j^\top S_j)^{-1}\right) \right]\pi(\tau)\\
  \pi_M(\alpha_{\gamma},\kappa_\gamma,\tau)
  &=& \left[ \prod_{\gamma_j\geq 1} \frac{\alpha_j^2}{g_M} N\left(\alpha_j; 0, g_M\right)
      \prod_{\gamma_j=2} N\left(\kappa_j; 0, g_S n (S_j^\top S_j)^{-1}\right) \right]\pi(\tau)\\
  \pi_E(\alpha_{\gamma},\kappa_\gamma,\tau)
  &=& \left[ \prod_{\gamma_j\geq 1} e^{\sqrt{2} - g_E/\alpha_j^2} N\left(\alpha_j; 0, g_E\right)
      \prod_{\gamma_j=2} N\left(\kappa_j; 0, g_S n (S_j^\top S_j)^{-1}\right) \right]\pi(\tau),
\end{eqnarray*}
where $\pi(\tau) = 2 \tau^{-3} \mbox{IG}(\tau^{-2}; a_\tau/2, b_\tau/2)$, and $\mbox{IG}$ denotes the inverse gamma density, and $g_L,g_S,g_M,g_E,a_\gamma,b_\tau \in \mathbb{R}_+$ are given dispersion parameters,  for which we propose default values in Section \ref{sec:elicitation}. 

 These choices include a standard Normal prior and two variations of non-local priors. The use of non-local priors can be argued from a foundational viewpoint, where one wishes to assign prior beliefs that are coherent with the parameters assumed non-zero by a given model \cite{JR10}. For our purposes, however, their main role is that they lead to faster Bayes factor rates to discard spurious parameters. See \cite{RT17} for further discussion.   
We refer to $\pi_L$ as group-Zellner prior. It is a product of Zellner priors across groups of linear and non-linear terms for each covariate.
This prior is local, \textit{i.e.}~it assigns non-zero density to $\alpha_\gamma$ having zeroes.
The Zellner structure is chosen for simplicity, our theory can be easily extended to other local priors,  provided they are continuous and positive at the asymptotically-optimal parameter values (Section \ref{sec:theory}, \cite{JR10}).  
The priors $\pi_M$ and $\pi_E$ are non-local with respect to $\alpha_\gamma$, the so-called product MOM and eMOM priors introduced in \cite{JR12,RTJ13}, and a group-Zellner prior on $\kappa_\gamma$.

Regarding  the prior on the models  $\pi(\gamma)$, we consider joint group inclusion of  non-linear coefficients  $\delta_j$ and the hierarchical restriction that their inclusion requires that of  the corresponding linear coefficient  $\beta_j$.
Letting $\pi(\gamma)$ depend only on the number of non-zero parameters in $(\beta_\gamma,\delta_\gamma)$, as customarily done when only linear effects are considered, would ignore such structure and hence be inadequate.
Instead, we let $\pi(\gamma)$ depend on the number of variables having linear and non-linear effects, $(p_\gamma,s_\gamma)$.
By default, we consider independent Beta-Binomial priors \cite{SB10}
\begin{align}
  \pi(\gamma)=
\frac{1}{C}  \mbox{BetaBin}(p_\gamma;p,a_1,b_1) {p \choose p_\gamma}^{-1} \mbox{BetaBin}(s_\gamma;s,a_2,b_2) {s \choose s_\gamma}^{-1},
\label{eq:priormodel}
\end{align}
where $\mbox{BetaBin}(z;p,a,b)$ is the probability of $z$ successes under a Beta-Binomial distribution with $p$ trials and parameters $(a,b)$
and $C$ a normalizing constant that does not need to be computed explicitly.
Any model such that  the number of parameters is  $p_\gamma + r s_\gamma>n$ is assigned $\pi(\gamma)=0$, as it would result in data interpolation.
By default we let $a_1=b_1=a_2=b_2=1$ akin to \cite{SB10},
{\it e.g.}~in the $p=1$ case these give $\pi(\gamma_1=0)=\pi(\gamma_1=1)=\pi(\gamma_1=2)=1/3$.
As alternatives to \eqref{eq:priormodel}, one can also consider Binomial priors where
$\mbox{BetaBin}(z;p,a_j,b_j)$ is replaced by $\mbox{Bin}(z;p,a_j)$ for a given success probability $a_j \in [0,1]$
and Complexity priors \cite{castillo:2015} where it is replaced by $1/p^{a_j z}$ for some constant $a_j>0$.
These two alternatives are implemented in our software and covered by our theory in Section \ref{sec:theory},
but for simplicity our examples focus on \eqref{eq:priormodel}.

%

\subsection{Prior elicitation}
\label{sec:elicitation}

The prior dispersion parameters $(g_L,g_M,g_E,g_S)$ are important for variable selection.
For instance, setting large dispersions helps induces sparsity,  particularly when they are allowed to grow with the sample size $n$ \citep{narisetty:2014}. 
However such large values also reduce power, see \cite{R18} and our Propositions \ref{prop:BFRatesI}, \ref{prop:BFRatesCox} and \ref{prop:BFRatesP},  and are harder to justify from the point of view that the expected effect sizes a priori should not depend on $n$. We briefly discuss default values that do not depend on $n$, and refer the reader to Section \ref{app:prior_elicitation} for details. 

 Specifying prior parameters provides an opportunity  to define what effects are practically relevant.
 Importantly, in what follows we assume that continuous covariates were standardized to unit variance, else the parameter interpretation and default values change. 
Basic considerations give a fairly narrow range of values that we deem reasonable in applications.
 For example, in AFT and Cox models $e^{|\beta_j|}$ define the effect size, when these are say $<15\%$ (\textit{i.e.}~$e^{\vert\beta_j\vert}<1.15$) they are typically practically irrelevant.
Based on these considerations,  our recommended defaults for AFT and Cox models are $g_M=0.192$, $g_E=0.091$, $g_L=1$, $g_S=1/r$ and $a_\tau=b_\tau=3$, whereas for probit regression they are $g_M=0.139$ and $g_E=0.048$.
 One should not take these defaults at their exact value, rather as defining a range of reasonable values. These ranges are discussed in 
Section \ref{app:prior_elicitation}. 
In our examples, results were robust to the prior dispersions, provided they stay within our recommended range. 

We remark that if one were to change the prior dispersion arbitrarily then results would be affected, in a similar manner to how regularization parameters affect penalized likelihood results.
However, in our view the prior beliefs implied by arbitrary prior dispersions would be unreasonable in most applications.
We also note that there is a wide objective Bayes literature on using the data to set the prior parameters, see \cite{consonni:2018} for an excellent review. We do not argue against such strategies,
but we focus on our defaults as a simple strategy that attains a fairly competitive performance in practice.

\section{Theory}
\label{sec:theory}

 This section describes the asymptotic solution returned by Bayesian model selection, when the observed data 
$(o_i,c_i,z_i) \sim F_0$ are independent realizations from some $F_0$, where $z_i \in {\mathbb R}^{p(r+1) + q}$ for $q \geq 0$ contains the observed covariates $(x_i, s_i) \in \mathbb{R}^{p(r+1)}$, and potentially also $q$  additional columns. These columns may contain covariates that were not recorded but are truly relevant for the outcome or the censoring, or non-linear effects and interactions missed by $(x_i,s_i)$. 
We do not assume $F_0$ to be parametric, rather it can be quite general,
and the whole model structure assumed by the analyst (\textit{e.g.} accelerated times, proportional hazards) may be wrong. 

 Section \ref{ssec:consistency} shows that when one assumes the Normal AFT model \eqref{eq:AFTModel} but truly $(o_i,c_i,z_i) \sim F_0$, the maximum likelihood estimator under each model $\gamma$ converges to an optimal $(\alpha_\gamma^*,\kappa_\gamma^*,\tau_\gamma^*)$ and is asymptotically normally-distributed.
See \cite{hjort:1992} and \cite{hjort:2011} for related asymptotic results, and Section \ref{app:AFTLaplace} for analogous results for the Laplace AFT model.
Section \ref{ssec:bfrates} shows that Bayesian model selection in the AFT model asymptotically returns the smallest $\gamma^*$ such that all effects in $(\alpha_\gamma^*,\kappa_\gamma^*)$ are non-zero.
Equivalently, $\gamma^*$ is defined by the zeroes in $(\alpha^*,\kappa^*)$, the optimal value under the full model including all parameters. 
Section \ref{ssec:bfrates_CoxPH} gives analogous results for Cox models.
These results are extended to probit models in Section \ref{app:Probit}, and in Section \ref{sec:theory_logconcave} to more general concave log-likelihood models. It is possible to derive similar results beyond the concave case, however this class encompasses all the models we consider here and allows simplifying the proofs and technical conditions.

Throughout we help interpret the solution and certain Bayes factors properties.
 Of particular relevance,  Section \ref{ssec:consistency} discusses that the asymptotic solution $\gamma^*$ excludes covariates that do not help predict the outcome nor the censoring times, and offers some examples.
Section \ref{ssec:bfrates} comments on potential advantages of using  low-dimensional basis and non-linear decompositions to detect covariate effects.

\subsection{Asymptotic solution in AFT models}
\label{ssec:consistency}

 As the sample size grows, Bayesian model selection recovers a model $\gamma^*$ that excludes parameters that are asymptotically estimated to be zero. Under mild regularity conditions, this limiting parameter is the value maximizing the expected log-likelihood under $F_0$. We start by defining the expected log-likelihood, then state the limiting result, and finally interpret its meaning and implications for model selection.  

Let $\eta_\gamma = (\alpha_{\gamma},\kappa_\gamma,\tau) \in \Gamma_\gamma$ be the \color{black} vector with $p_\gamma + r s_\gamma$ regression parameters under a given model $\gamma$ (Section \ref{ssec:model_selection}) plus the error variance, where $\Gamma_{\gamma}= {\mathbb R}^{p_\gamma + r s_\gamma}\times{\mathbb R}^+$ \color{black} is the corresponding parameter space. Let
\begin{eqnarray*}
 m(\eta_\gamma) &=& 
(1-u_1) \left[ \log\Phi\left(x_1^{\top}\alpha_\gamma + s_1^{\top} \kappa_\gamma - \tau \log(c_1)\right) \right]
 \nonumber \\
& + & u_1\left[ \log(\tau) -\dfrac{1}{2}\log(2\pi) - \dfrac{1}{2} \left(\tau \log(o_1)-x_1^{\top}\alpha_\gamma -s_1^{\top} \kappa_\gamma \right)^2  \right],
\end{eqnarray*}
the contribution of one observation to the log-likelihood \eqref{logLikeAFT2},  and 
\begin{align}
&M(\eta_\gamma)={\mathbb E}_{F_0}(m(\eta_\gamma))=
 P_{F_0}(u_1=0)  {\mathbb E}_{F_0} \left[ \log\Phi\left(x_{1 \gamma}^{\top}\alpha_\gamma + s_{1 \gamma}^{\top} \kappa_\gamma - \tau \log(c_1)\right) \mid u_1=0 \right],
  \nonumber \\
+& P_{F_0}(u_1=1) \left( \log(\tau) -\dfrac{1}{2}\log(2\pi)
- \dfrac{1}{2}  {\mathbb E}_{F_0} \left[ \left(\tau \log(o_1)-x_{1 \gamma}^{\top}\alpha_\gamma -s_{1 \gamma}^{\top} \kappa_\gamma \right)^2 \mid u_1=1 \right]
\right)
\label{eq:expected_logl}
\end{align}
its expectation under the data-generating $F_0$.
 Under minimal conditions, $M(\eta_\gamma)$ has a unique maximizer, denoted by $\eta_\gamma^*=(\alpha_\gamma^*,\kappa_\gamma^*,\tau_\gamma^*)$. Below we focus our interpretation on viewing \eqref{eq:expected_logl} as the expectation of a likelihood-associated reward, and $\eta_\gamma^*$ as the associated minimizer, but $\eta_\gamma^*$ can also be viewed as minimizing the Kullback-Leibler divergence to $F_0(y,u)$ (also called generalized Kullback-Leibler divergence, see \cite{hjort:1992}). 

Proposition \ref{prop:Consistency} proves that the maximum likelihood estimator $\widehat{\eta}_\gamma$ converges to $\eta_\gamma^*$,
and Proposition \ref{prop:AsympNorm} its asymptotic normality
with a sandwich covariance that is standard in misspecified models,
and corresponds to the smallest possible covariance for unbiased estimators under model misspecification.
Such variance alteration does not affect consistency but can alter finite $n$ false positives and asymptotic power (see Section \ref{ssec:bfrates}).
See also Propositions \ref{prop:Consistency_lap}-\ref{prop:AsympNorm_lap} for analogous results on the AFT model with Laplace errors.
Mild technical conditions, denoted A1-A5,  that suffice for the proposition to hold are discussed in Section \ref{sec:proof_asymp}. 
\color{black} We remark that A3 assumes the existence and finiteness of $\eta_\gamma^*$ and $\widehat{\eta}_\gamma$ (the latter for large enough $n$), which implies that these optima cannot occur at the boundary of $\Gamma_\gamma$ and must be unique (by concavity). For example, this rules out situations where $\eta_\gamma^*$ contains infinite regression parameters or variance, or zero variance, which we view as pathological cases that we exclude from consideration.
We thank an anonymous referee for pointing out an inconsistency in our original proof, and providing our current argument leading to A3.
\color{black}

\begin{proposition}\label{prop:Consistency}
Assume A1-A3.
Then, $\eta_\gamma^* = \operatorname{argmax}_{\Gamma_\gamma} M(\eta_\gamma)$ is unique and $\widehat{\eta}_\gamma \stackrel{P}{\rightarrow} \eta_\gamma^*$ as $n\rightarrow \infty$.
\end{proposition}

\begin{proposition}\label{prop:AsympNorm}
Assume A1-A5.
Then 
\begin{align}\sqrt{n}(\widehat{\eta}_\gamma-\eta_\gamma^*) \stackrel{D}{\longrightarrow} N\left(0, V_{\eta_\gamma^*}^{-1} {\mathbb E}_{F_0}[ \nabla m(\eta_\gamma^*) \nabla m(\eta_\gamma^*)^{\top}] V_{\eta_\gamma^*}^{-1}\right),
\nonumber
\end{align}
where $V_{\eta_\gamma^*}$ is the Hessian matrix of $M(\eta_{\gamma})$ evaluated at $\eta_\gamma^*$, and $m(\eta_\gamma^*) = \log p(y_1 \mid \eta_\gamma^*)$.
\end{proposition}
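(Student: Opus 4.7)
The plan is to invoke the standard M-estimator asymptotic normality argument (in the spirit of White, and \cite{hjort:1992}), tailored to the censored Normal log-likelihood \eqref{logLikeAFT2}. By Proposition \ref{prop:Consistency}, $\widehat{\eta}_\gamma \stackrel{P}{\to} \eta_\gamma^*$, and A1--A2 ensure that $\eta_\gamma^*$ is an interior maximizer. Hence with probability tending to one the MLE is also interior and satisfies $\nabla \ell(\widehat{\eta}_\gamma)=0$. A mean-value Taylor expansion of the gradient around $\eta_\gamma^*$ yields, component-wise,
\begin{equation*}
0 \;=\; \frac{1}{\sqrt{n}} \nabla \ell(\eta_\gamma^*) \;+\; \Bigl[\frac{1}{n} \nabla^2 \ell(\tilde{\eta}_\gamma)\Bigr] \sqrt{n}(\widehat{\eta}_\gamma - \eta_\gamma^*),
\end{equation*}
for some $\tilde{\eta}_\gamma$ on the segment joining $\widehat{\eta}_\gamma$ and $\eta_\gamma^*$. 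The goal is then to identify limits for the two factors and combine via Slutsky.

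For the score term, write $n^{-1/2} \nabla \ell(\eta_\gamma^*) = n^{-1/2} \sum_{i=1}^n \nabla m_i(\eta_\gamma^*)$, a sum of i.i.d.~centered random vectors: centering follows from $\eta_\gamma^*$ maximizing $M$ together with exchange of differentiation and integration, which is justified by A3--A4 via dominated convergence (since $\phi/\Phi$ is smooth and $|\log\Phi(t)|=\mathcal{O}(t^2)$ as $t\to-\infty$). The Lindeberg--L\'evy CLT applies once we verify finite second moments of $\nabla m(\eta_\gamma^*)$. The uncensored score contributes terms proportional to $(x_{1\gamma},s_{1\gamma})$ multiplied by residuals $\tau\log(o_1)-x_{1\gamma}^\top\alpha_\gamma-s_{1\gamma}^\top\kappa_\gamma$, which are square-integrable by A3 with $k=4$; the censored score contributes $(x_{1\gamma},s_{1\gamma})$ multiplied by the inverse Mills ratio $\phi/\Phi$ of $x_{1\gamma}^\top\alpha_\gamma+s_{1\gamma}^\top\kappa_\gamma-\tau\log(c_1)$, whose growth at $-\infty$ is linear and therefore square-integrable under A3 ($k=4$ on $\log(c_1)$ and on the covariates) together with A4 at $l=2$. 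Hence $n^{-1/2}\nabla\ell(\eta_\gamma^*) \stackrel{D}{\to} N\bigl(0, \mathbb{E}_{F_0}[\nabla m(\eta_\gamma^*) \nabla m(\eta_\gamma^*)^\top]\bigr)$.

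For the Hessian term, the plan is to establish a uniform law of large numbers, $\sup_{\eta_\gamma \in \Gamma_\gamma} \|n^{-1}\nabla^2 \ell(\eta_\gamma) - V_{\eta_\gamma}\| \stackrel{P}{\to} 0$, where $V_{\eta_\gamma}$ denotes the Hessian of $M$. Continuity of $\eta_\gamma \mapsto \nabla^2 m(\eta_\gamma)$ and an integrable envelope provided by A3--A4 (the Hessian involves products of $(x_{1\gamma},s_{1\gamma},\log c_1)$ up to degree two together with derivatives of $\phi/\Phi$, which stay bounded in the argument's tail behaviour after expectation) justify a standard Glivenko--Cantelli/uniform-LLN argument on the compact set $\Gamma_\gamma$. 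Consistency of $\tilde{\eta}_\gamma$ combined with continuity of $\eta_\gamma\mapsto V_{\eta_\gamma}$ then gives $n^{-1}\nabla^2\ell(\tilde{\eta}_\gamma)\stackrel{P}{\to} V_{\eta_\gamma^*}$. Invertibility of $V_{\eta_\gamma^*}$ is a byproduct of the strict concavity arguments behind Proposition \ref{prop:Consistency} (the censored part of $M$ is strictly concave along non-null directions by A2 and A4).

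Combining the two limits via Slutsky,
\begin{equation*}
\sqrt{n}(\widehat{\eta}_\gamma - \eta_\gamma^*) \;=\; -\Bigl[\frac{1}{n}\nabla^2\ell(\tilde{\eta}_\gamma)\Bigr]^{-1} \frac{1}{\sqrt{n}}\nabla\ell(\eta_\gamma^*) \;\stackrel{D}{\longrightarrow}\; N\bigl(0,\, V_{\eta_\gamma^*}^{-1} \mathbb{E}_{F_0}[\nabla m(\eta_\gamma^*) \nabla m(\eta_\gamma^*)^\top] V_{\eta_\gamma^*}^{-1}\bigr),
\end{equation*}
as required. The main obstacle, beyond bookkeeping, is controlling the censored contributions to both the score variance and the uniform Hessian convergence: the inverse Mills ratio and its derivative are unbounded as their argument tends to $-\infty$, which is precisely why the moment order must be strengthened from $(k,l)=(2,1)$ in Proposition \ref{prop:Consistency} to $(k,l)=(4,2)$ here. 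Establishing the integrable envelope for the Hessian uniformly on a neighborhood of $\eta_\gamma^*$ is the most delicate step, since it mixes polynomial growth in $(x_{1\gamma},s_{1\gamma},\log c_1)$ with the subtle tail of $\phi/\Phi$; once this envelope is in hand the remainder is routine.
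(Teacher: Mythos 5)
Your argument is correct in substance, but it takes a genuinely different route from the paper. You run the classical White-style sandwich derivation: mean-value expansion of the score at $\eta_\gamma^*$, a Lindeberg--L\'evy CLT for $n^{-1/2}\nabla\ell(\eta_\gamma^*)$, a uniform law of large numbers for the sample Hessian on the compact $\Gamma_\gamma$, and Slutsky. The paper instead invokes Theorem 5.23 of van der Vaart (1998), whose hypotheses are (i) consistency (already from Proposition \ref{prop:Consistency}), (ii) a second-order Taylor expansion of the \emph{population} criterion $M$ at $\eta_\gamma^*$ with non-singular Hessian $V_{\eta_\gamma^*}$, and (iii) a Lipschitz envelope $\vert m_{\eta_1}-m_{\eta_2}\vert \leq K(y_1,x_1,s_1)\Vert\eta_1-\eta_2\Vert$ with $\int K^2\,dF_0<\infty$; the bulk of the paper's proof is devoted to bounding $\sup_{\mathcal{B}_{\eta_\gamma^*}}\Vert\nabla m_{\eta_\gamma}\Vert$ via Minkowski/H\"older and the linear growth of the inverse Mills ratio, and to computing $\nabla^2 M$ by Leibniz's rule. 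The trade-off is this: your route additionally requires sample-level second derivatives and their uniform convergence, which is harmless here (and in fact easier than you fear -- the censored Hessian enters only through $D(z)=r(-z)^2-zr(-z)\in(0,1)$, which is bounded, so the envelope needs only second moments; the genuinely delicate unboundedness of $r$ affects the score variance, where your $(k,l)=(4,2)$ accounting is exactly right), whereas the paper's route never differentiates the sample criterion twice and therefore transfers verbatim to the non-smooth Laplace AFT model of Supplementary Section \ref{app:AFTLaplace}, which is presumably why the authors chose it. One small caveat on your write-up: A1--A2 (compactness, convexity, full-rank design) do not by themselves guarantee that $\eta_\gamma^*$ is interior to $\Gamma_\gamma$; this is an implicit assumption in both your argument and the paper's (van der Vaart's theorem also needs an inner point), justified informally by taking $\Gamma_\gamma$ ``finite but arbitrarily large,'' so it is not a gap specific to your approach.
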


Proposition \ref{prop:Consistency} has important implications for model selection.
 Let $(\alpha^*,\kappa^*)$ be the optimal parameter under the full model that includes all linear and non-linear terms. Asymptotically, one obtains the model $\gamma^*$ of smallest dimension maximizing \eqref{eq:expected_logl} (see Section \ref{ssec:bfrates}), which is defined by zeroes in $(\alpha^*,\kappa^*)$. Specifically, $\gamma_j^*=0$ if both linear and non-linear coefficients $(\alpha_j^*,\kappa_j^*)$ are zero, $\gamma_j^*= 1$ if $\alpha_j^* \neq 0$ and $\kappa_j^*=0$, and $\gamma_j^*=2$ if $\kappa_j^* \neq 0$.  

 To interpret this asymptotic solution, we turn attention to \eqref{eq:expected_logl}.  If a covariate does not contribute to improving neither of the two terms in \eqref{eq:expected_logl}, then its corresponding entry in $(\alpha^*,\kappa^*)$ is zero.
 The first term is the expected log-probability, as predicted by the model, that the individual is censored at the observed $\log(c_1)$ (conditional on being censored). Therefore, any covariate that helps the model predict more accurately the occurrence of censoring events contributes to this first term.
The second term is the mean squared error in predicting the observed time $\log(o_1)$, conditional on the time being uncensored. 
Expression \eqref{eq:expected_logl} is an average of these two components weighted by the true censoring probability $P_{F_0}(u_1=0)$, and averaged across covariate values under $F_0$.
Hence $\gamma^*$ drops covariates that do not predict survival neither censoring times, but may include those that, even if truly unrelated to survival, help explain the censoring.
 This interpretation extends to working models other than the Normal AFT. For any other fully-specified model, the first term in \eqref{eq:expected_logl} features the model log-predicted probability of censoring, and the second term the usual log-likelihood for uncensored data. 
For example, under a AFT model with Laplace errors the asymptotic solution is defined by the mean absolute error and the Laplace survival function (see Section \ref{app:AFTLaplace}). 

 We present some simple examples to illustrate our discussion.

\begin{example}
Suppose that under $F_0$, truly $\log o_i \mid c_i \sim N(x_{i1} + \theta \log c_i, \sigma^2)$ and $\log c_i \sim N(x_{i2}, \sigma^2)$. 
The analyst adopts the model $\log o_i \sim N(\beta_1 x_{i1} + \beta_2 x_{i2}, 1/\tau^2)$, which, as discussed, assumes non-informative censoring.
If $\theta=0$, the censoring under $F_0$ is non-informative, 
and then $\alpha_2^*=\beta_2^*=0$, hence $x_{i2}$ is discarded asymptotically.

However, if $\theta \neq 0$ then truly $\log o_i= x_{i1} + \theta x_{i2} + \epsilon_i$, where $\epsilon_i \sim N(0, (1 + \theta^2) \sigma^2)$. 
Plugging this expression into \eqref{eq:expected_logl}, it is easy to show that then $\alpha_2^* \neq 0$. That is, the presence of informative censoring causes $x_{i2}$ to be asymptotically selected.
\end{example}

\begin{example}
Suppose that there is a fixed administrative censoring at $\log c_i=a$ for all individuals (so it is truly non-informative under $F_0$), a single covariate $x_i \in \mathbb{R}$, and that the analyst adopts the model $\log o_i \sim N(\beta_1 + \beta_2 x_i, 1/\tau^2)$.
Suppose that $x_{i}$ truly has an effect on the outcome under $F_0$, but that said effect only occurs at a time $b>a$. 
Then the effect cannot be detected from the observed data, since all individuals are censored at $a$. The issue is that the covariate has an effect that deviates from the assumed AFT structure. For example suppose that, under $F_0$,
$$
\log o_i= z_i + \theta x_{i} \mbox{I}(z_i>b),
$$
where $x_{i} \in \{0,1\}$ indicates that individual $i$ received a treatment, $z_i \sim N(0,1)$ is the survival time for untreated individuals, and $\theta > 0$ quantifies the treatment effect. 

Here the effect is only present among individuals that live longer than $b$ and, since censoring occurs before $b$, for all uncensored individuals one observes $\log o_i= z_i$. Plugging this expression and $\log c_i= a$ into \eqref{eq:expected_logl}, and noting that the conditioning on $u_1$ can be removed from the expectations, one can show that $\alpha_2^*=\beta_2^*=0$.
This is an extreme example where one cannot detect an effect that strongly deviates from the assumed mean structure, even though the censoring is non-informative. One could conceive related examples where a covariate has a time-varying effect that is first positive and then negative, before administrative censoring occurs, so that the average effect is near-zero.
\end{example}

\begin{example}
Suppose that a potentially informative censoring occurs early, so that $P_{F_0}(u_1=0) \approx 1$.
Then \eqref{eq:expected_logl} under the full model is approximately equal to 
$$
{\mathbb E}_{F_0} \left[ \log\Phi\left(x_{1 }^{\top}\alpha + s_{1 }^{\top} \kappa - \tau \log(c_1)\right) \right].
$$
As discussed, this term is the log-probability that the outcome occurs after the observed censoring time, as predicted by the Normal AFT model.
Hence, $(\alpha^*,\kappa^*)$ are essentially chosen to predict censoring times.
If the censoring is informative and depends on a set of covariates, then $(\alpha^*, \kappa^*)$ will in general assign non-zero coefficients to these covariates, which will be asymptotically selected.
A similar argument can be made for late censoring where $P_{F_0}(u_1=1) \approx 1$, then $(\alpha^*,\kappa^*)$ is approximately the usual (population) least-squares solution. If the outcome depends on the censoring, which in turn depends on a set of covariates, then least-squares will assign a non-zero coefficient to the latter.
\end{example}

\subsection{Bayes factor rates for misspecified AFT models}
\label{ssec:bfrates}

This section proves that the posterior probability of the optimal model $\gamma^*$ converges to 1, under mild conditions.
Recall that the posterior probability of $\gamma^*$ is
\begin{align}
 \pi(\gamma^* \mid y)= \frac{p(y \mid \gamma^*) \pi(\gamma^*)}{\sum_\gamma p(y \mid \gamma) \pi(\gamma)}
= \left( 1 + \sum_{\gamma \neq \gamma^*} B_{\gamma, \gamma^*} \frac{\pi(\gamma)}{\pi(\gamma^*)}  \right)^{-1}.
\nonumber
\end{align}

Proposition \ref{prop:BFRatesI} gives the rate at which each $B_{\gamma, \gamma^*}$ converges to 0 (in probability), when ones assumes a potentially misspecified AFT model.
Provided that each $B_{\gamma, \gamma^*} \pi(\gamma)/\pi(\gamma^*)$ converges to 0 (this follows immediately in the standard case where prior model probabilities are bounded, for example) it follows that $\pi(\gamma^* \mid y) \stackrel{P}{\longrightarrow} 1$. This implies that the highest posterior probability model consistently selects $\gamma^*$, and that including covariates with marginal posterior probability $\pi(\gamma_j^* \mid y) > t$, for any fixed threshold $t$, also leads to consistent selection.

Proposition \ref{prop:BFRatesI} clarifies the role of censoring and misspecification.
The result is stated for Laplace approximations to Bayes factors,
a computationally-convenient alternative to obtaining exact marginal likelihoods,
but in our setting both are asymptotically equivalent (Proposition \ref{prop:valid_laplace_concave}).
Specifically, we consider
\begin{align}
B_{\gamma,\gamma^*}= \frac{\widehat{p}(y \mid \gamma)}{\widehat{p}(y \mid \gamma^*)},
\label{eq:bf_laplace}
\end{align}
where $\widehat{p}(y \mid \gamma)$ is obtained via a Laplace approximation:
\begin{equation*}
\widehat{p}(y \mid \gamma)= \exp\{\ell(\tilde{\eta}_\gamma) + \log \pi(\tilde{\eta}_\gamma) \}
(2\pi)^{d_\gamma/2} \left|H(\tilde{\eta}_\gamma) + \nabla^2 \log\pi(\tilde{\eta}_\gamma) \right|^{-1/2},
\label{eq:laplace_approx}
\end{equation*}
where $\tilde{\eta}_\gamma= \arg\max_{\eta_\gamma} \ell(\eta_\gamma) + \log \pi(\eta_\gamma)$ is the maximum a posteriori under prior $\pi(\eta_\gamma)$. See Section \ref{app:withinmodel_comput} for details on computing this approximation.

Proposition \ref{prop:BFRatesI} treats separately overfitted models (containing $\gamma^*$) and non-overfitted models (not containing $\gamma^*$).
Overfitted models contain all truly relevant plus a few spurious parameters, a situation where the challenge is to enforce sparsity.
Non-overfitted models are missing some truly relevant parameters, there the challenge is also to have high power to detect the missing signal.
By truly relevant we mean improving $M(\eta_\gamma^*)$, \textit{i.e.}~the prediction of either observed or censored times, see Section \ref{ssec:consistency}.
Recall that $d_\gamma= \mbox{dim}(\eta_\gamma)= p_\gamma + r s_\gamma + 1$.
Intuitively the proof of Proposition \ref{prop:BFRatesI} is based on establishing the asymptotic distribution of the likelihood-ratio test statistic $2[\ell(\tilde{\eta}_\gamma) - \ell(\tilde{\eta}_{\gamma^*})]$, which is bounded by central chi-squares in the overfitted case and non-central chi-squares in the non-overfitted case, and then finding an asymptotic approximation to the other quantities featuring in $\widehat{p}(y \mid \gamma)$.

\begin{proposition}\label{prop:BFRatesI}
Let $B_{\gamma,\gamma^*}$ be the Bayes factor in \eqref{eq:bf_laplace} under either $\pi_L$, $\pi_M$ or $\pi_E$,
where $\gamma^*$ is the AFT model with smallest $d_{\gamma^*}$ minimizing \eqref{eq:expected_logl}, and $\gamma \neq \gamma^*$ another AFT model.
Assume that both $\gamma^*$ and $\gamma$ satisfy Conditions A1-A5. 
Suppose that $(g_M,g_E,g_L,g_S)$ are non-decreasing in $n$.

\begin{enumerate}[leftmargin=*,label=(\roman*)]

\item Overfitted models. If $\gamma^* \subset \gamma$, then
$$
\log B_{\gamma \gamma^*}= \log (a_n) + \frac{r}{2} (s_{\gamma^*} - s_\gamma) \log \left( n g_S \right) + \lOp(1),
$$

where $a_n= (n g_L)^{\frac{p_{\gamma^*} - p_\gamma}{2}}$ under $\pi_L$,
{$a_n= (n g_M)^{3(p_{\gamma^*} - p_\gamma)/2}$} under $\pi_M$,
and {$a_n=(n g_E e^{2 g_E \sqrt{n}})^{(p_{\gamma^*} - p_\gamma)/2}$} under $\pi_E$.

\item Non-overfitted models. If $\gamma^* \not\subset \gamma$, then
$$
\log(B_{\gamma \gamma^*})= -n[M(\eta^*_{\gamma^*}) - M(\eta^*_\gamma)]
+ \log(b_n) + \frac{r}{2}(s_{\gamma^*} - s_\gamma) \log(n g_S) + \lOp(1)
$$
where $b_n= (n g_L)^{\frac{p_{\gamma^*} - p_\gamma}{2}}$ under $\pi_L$,
{$b_n= (n g_M^3)^{(p_{\gamma^*} - p_\gamma)/2}$} under $\pi_M$,
and {$b_n=(g_En)^{p_{\gamma^*} - p_\gamma} e^{-g_E c}$} under $\pi_E$, for finite $c \in \mathbb{R}$.
\end{enumerate}
\end{proposition}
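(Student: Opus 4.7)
I would prove this by decomposing the log Bayes factor arising from the Laplace approximation into three parts: the log-likelihood at the MAPs, the log-prior at the MAPs, and the log-determinant of the posterior Hessian. The main analytic inputs are Propositions \ref{prop:Consistency}--\ref{prop:AsympNorm}, which give $\sqrt{n}$-consistency of $\tilde\eta_\gamma$ and $\tilde\eta_{\gamma^*}$ toward the KL-optima $\eta_\gamma^*$ and $\eta_{\gamma^*}^*$, together with Proposition \ref{prop:valid_laplace_concave}, which validates the Laplace approximation under misspecification. A uniform LLN under A1--A4 also yields $-\nabla^2\ell(\tilde\eta_\gamma) = n V_{\eta_\gamma^*}(1+o_p(1))$. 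The prior Hessians contribute only at subleading order (either $O(1)$ from the Zellner pieces, or of strictly lower order than $n$ from the non-local ones), so the log-determinant piece collapses to $-\tfrac{d_\gamma-d_{\gamma^*}}{2}\log n + O_p(1)$, producing the explicit $\log n$ factors that appear in $a_n,b_n$ and in $(ng_S)^{r(s_{\gamma^*}-s_\gamma)/2}$.

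For the log-likelihood piece I would split into the two cases. When $\gamma^*\subset\gamma$, the KL-optima agree after zero-padding $\eta_{\gamma^*}^*$, so a second-order Taylor expansion of $\ell$ around the common KL-optimum, combined with $\sqrt{n}$-consistency of both MAPs, reduces $\ell(\tilde\eta_\gamma)-\ell(\tilde\eta_{\gamma^*})$ to a quadratic form in $\sqrt{n}$-normal increments, which is $O_p(1)$ (the misspecified analogue of a bounded LRT statistic). When $\gamma^*\not\subset\gamma$, the uniform LLN gives $\ell(\tilde\eta_\gamma) = n M(\eta_\gamma^*) + O_p(1)$ after Taylor refinement using asymptotic normality, and similarly at $\gamma^*$; subtracting produces the dominant $-n[M(\eta_{\gamma^*}^*) - M(\eta_\gamma^*)]$ term in (ii).

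The log-prior piece is where the three priors diverge. For every spurious non-linear group, the group-Zellner density at the MAP combined with the matching $(2\pi/n)^{r/2}$ Laplace factor yields an $(g_S n)^{-r/2}$ contribution, producing the common $(ng_S)^{r(s_{\gamma^*}-s_\gamma)/2}$ term in (i) and (ii). For each spurious linear coefficient, $\pi_L$ gives the familiar $(g_L n)^{-1/2}$ Zellner factor; $\pi_M$ adds an extra $\beta_j^2/(g_M\sigma^2) = O_p(1/(ng_M))$ evaluated at the $O_p(n^{-1/2})$ overfit MAP, raising the exponent from $1/2$ to $3/2$ in $a_n$; $\pi_E$ is the subtlest, because balancing the likelihood curvature of order $n$ against the $g_E\sigma^2/\beta_j^2$ penalty forces the overfit MAP to the non-standard rate $\tilde\beta_j \asymp (g_E/n)^{1/4}$, producing $\exp(-g_E\sigma^2/\tilde\beta_j^2) \asymp \exp(-c\, g_E\sqrt{n})$ and hence the $e^{2g_E\sqrt{n}}$ factor appearing inside $a_n$. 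In (ii), coordinates of $\gamma^*$ absent from $\gamma$ are evaluated at nonzero values, so the non-local pieces of $\pi_M,\pi_E$ contribute $O(1)$ and only the Gaussian piece survives; this explains why the pMOM exponent in $b_n$ drops from $3/2$ to $1/2$ and why peMOM contributes $(g_E n)^{1/2} e^{-g_E c}$ per missing coordinate rather than the $e^{g_E\sqrt{n}}$ seen in $a_n$.

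The main obstacle will be the peMOM analysis. The $n^{-1/4}$ rate of the spurious MAP is non-standard and requires verifying that $-\nabla^2 \log\pi_E$ at $\tilde\beta_j$ stays commensurate with (rather than dominates) the likelihood Hessian so that Proposition \ref{prop:valid_laplace_concave} applies, and that the third-order Taylor remainder in the log-posterior stays $O_p(1)$ despite being juxtaposed with the exponentially large $e^{2g_E\sqrt{n}}$ appearing in $a_n$. Once these checks are in place, combining the three pieces delivers both (i) and (ii) by routine bookkeeping of the exponents of $n$ and of the dispersion parameters.
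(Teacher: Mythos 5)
Your proposal follows essentially the same route as the paper's proof: the same three-way decomposition of the Laplace-approximated Bayes factor into the likelihood ratio at the modes (handled as a bounded misspecified LRT when $\gamma^*\subset\gamma$ and via uniform convergence of $\ell/n$ to $M$ otherwise), the Hessian-determinant ratio contributing $n^{(d_{\gamma^*}-d_\gamma)/2}$, and the prior ratio, with the same $\lOp(n^{-1/2})$ and $\lOp(n^{-1/4})$ rates for spurious coordinates under $\pi_M$ and $\pi_E$ driving the distinction between $a_n$ and $b_n$. The argument and its key inputs (consistency, asymptotic normality, and the mode rates from \cite{RT17}) match the paper's, so no further comparison is needed.
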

By Proposition \ref{prop:BFRatesI}(i) the rates to discard overfitted models are unaffected by misspecification and censoring
(but certain constants can affect finite $n$ behaviour, see the proof).
These sparsity rates are improved by non-local priors and by setting large prior dispersions $(g_L,g_M,g_E,g_S)$,
extending previous results \citep{JR12,narisetty:2014,RT17,rossell:2018} to misspecified survival models.
By Proposition \ref{prop:BFRatesI}(ii) the rate to detect non-spurious effects is
exponential in $n$ with a coefficient $M(\eta_{\gamma^*}^*)-M(\eta_\gamma^*) > 0$ that measures the drop of predictive ability
in $\gamma$ relative to $\gamma^*$,  and is hence affected by misspecification and censoring.
 Recall that predictive ability can be understood as a weighted average of forecasting the outcome to occur after the censoring time (for censored individuals) and the actual outcome time (for uncensored individuals).

When one misspecifies the model family, $M(\eta_{\gamma^*}^*)-M(\eta_\gamma^*)$ is driven by the projection of $F_0$ onto the assumed family.
Interpreting the geometry of such projections is beyond our scope,
but intuitively projections usually reduce distances and hence make $M(\eta_{\gamma^*}^*)-M(\eta_\gamma^*)$ smaller
than if one were to assume the correct model class.  By Part (ii), this would decrease the power to detect non-zero effects in $\eta_\gamma^*$. 

 To facilitate interpretation suppose there is no censoring. 
Then simple algebra shows that $M(\eta^*_{\gamma^*}) - M(\eta^*_\gamma)= {\mathbb E}_{F_0}\left[ \log (\tau^*_{\gamma^*} / \tau^*_\gamma ) \right]$,
which measures the difference in mean squared prediction errors from using model $\gamma$ instead of the optimal $\gamma^*$
(given by $1/(\tau_\gamma^*)^2$ and $1/(\tau_{\gamma^*}^*)^2$, respectively).
For instance,  omitting covariates increases $\tau^*_{\gamma^*} / \tau^*_\gamma$, causing an exponential drop in power,
see our examples in Sections \ref{ssec:res_2vars}-\ref{ssec:res_50vars} for an illustration.

Proposition \ref{prop:BFRatesI} also highlights trade-offs in modeling non-linear covariate effects. 
Including a truly active non-linear term is rewarded by an improved model fit
$M(\eta^*_{\gamma^*}) - M(\eta^*_\gamma)$, but runs into an $r \log(n g_S)$ penalty.
In contrast, including a linear effect leads to a smaller improvement in fit, but also incurs a smaller $\log(n g_S)$ penalty.
 Hence, decomposing effects into a linear and non-linear components can improve power. 

 A similar observation illustrates that for model selection purposes, the advantages of using fully non-parametric effects over a finite-dimensional basis may be small. 
Suppose one replaced the basis dimension $r$ by a larger $r^*$ maximizing $M(\eta^*_{\gamma^*}) - M(\eta^*_\gamma)$.
For $m$-degree splines with equi-spaced knots and sufficiently smooth $M()$ the improvement in $M(\eta^*_{\gamma^*}) - M(\eta^*_\gamma)$ associated to increasing $r$ to $r^*$ is at most of order $1/r^m$ \citep{rosen:1971}.
For said increase to offset the complexity penalty it needs to hold that $r^{m+1}(r^*-r)/2$ is of a smaller order than $n/ \log (n g_S)$.
Hence by letting $r^{m+1} r^*$ grow sub-linearly with $n$ could improve power relative to $r$.
However for even moderate $r$ and cubic splines ($m=3$) the required $n> r^* r^4$ can be impractically large,
\textit{e.g.}~see the examples in Section \ref{ssec:res_2vars} with $r \in \{5,10,15\}$.
Further, the computational cost of using a large $r^*$ for each considered model $\gamma$ is impractical when one wishes to consider many models. 

In summary, using a small basis dimension $r$  (\textit{e.g.} $r=5$, in our examples)  within the non-linear effect decomposition in Section \ref{ssec:likelihood}  may be practically preferable to a non-parametric basis where $r$ grows with $n$, for the purpose of detecting the effect.

\subsection{Bayes factor rates for misspecified additive Cox models}
\label{ssec:bfrates_CoxPH}

 Our Bayes factor results under misspecified Cox models are similar to Section \ref{ssec:bfrates}, but here the optimal model $\gamma^*$ is defined by zeroes in the parameter $\eta^*=(\beta^*,\delta^*)$ maximizing the expected partial likelihood \eqref{eq:ploglik} under $F_0$, see \eqref{eq:eploglik_rearrange} for its expression and some discussion. The interpretation of $\eta^*$ is also analogous, though here  \eqref{eq:ploglik} rewards predicting a higher risk for individuals who experienced the event (uncensored) than for other individuals at risk. 
An alternative interpretation is possible by noting that \eqref{eq:ploglik} can be approximated by a Poisson regression log-likelihood \citep{laird:1981}, where one models the mean number of uncensored events in infinitesimal intervals. Intuitively, any covariate that helps predict this mean, which depends on the distribution of the censoring and survival times, is asymptotically selected. Covariates that are unrelated both to survival and censoring are hence discarded.

%

We consider Bayes factors obtained by a Laplace approximation to the integrated partial likelihood
\begin{align}
  p(y\mid \gamma)= \int \exp\left\{\ell_p(\beta_\gamma,\delta_\gamma)\right\} \pi(\beta_\gamma, \delta_\gamma \mid \gamma) d\beta_\gamma d\delta_\gamma
  \label{eq:integrated_ploglik}
  \end{align}
these can be viewed as the integrated likelihood under a limiting non-informative non-parametric Gamma process prior on $h_0$,
see \cite{ibrahim:2014} and \cite{niko:2017} for a discussion. 
We obtain Bayes factor rates analogous to Section \ref{ssec:bfrates},
the proof builds upon \cite{tsiatis:1981} 
and \cite{lin:1989} who proved
that $\bar{\eta}_\gamma=(\bar{\beta}_\gamma,\bar{\delta}_\gamma)$ maximizing \eqref{eq:ploglik} are consistent and asymptotically normal under misspecification, under Conditions B1-B4 listed in Section \ref{app:CoxProportional Hazards}. 

\begin{proposition}\label{prop:BFRatesCox}
  Let $B_{\gamma,\gamma^*}$ be the Bayes factor based on \eqref{eq:integrated_ploglik}
  under $\pi_L$, $\pi_M$ or $\pi_E$, 
  $\gamma^*$ the Cox model with smallest $d_{\gamma^*}$
minimizing the expected log partial likelihood $M_p$ in \eqref{eq:eploglik}, and $\gamma \neq \gamma^*$ another Cox model.
Assume that $(\gamma^*,\gamma)$ satisfy Conditions B1-B4, and that $(g_M,g_E,g_L,g_S)$ are non-decreasing in $n$.
\begin{enumerate}[leftmargin=*,label=(\roman*)]
\item Let $a_n$ be as in Proposition \ref{prop:BFRatesI}. If $\gamma^* \subset \gamma$, then

$$
\log B_{\gamma \gamma^*}= \log (a_n) + \frac{r}{2} (s_{\gamma^*} - s_\gamma) \log \left( n g_S \right) + \lOp(1),
$$

\item Let $b_n$ be as in Proposition \ref{prop:BFRatesI}. If $\gamma^* \not\subset \gamma$, then
$$
\log(B_{\gamma \gamma^*})= -n[M_p(\eta^*_{\gamma^*}) - M_p(\eta^*_\gamma)]
+ \log(b_n) + \frac{r}{2}(s_{\gamma^*} - s_\gamma) \log(n g_S) + \lOp(1).
$$
\end{enumerate}
\end{proposition}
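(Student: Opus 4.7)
The plan is to mirror the argument for Proposition \ref{prop:BFRatesI}, replacing the full log-likelihood by Cox's log partial likelihood $\ell_p$ in \eqref{eq:ploglik} and the MLE by the partial MLE $\bar{\eta}_\gamma$. Applying the Laplace formula and taking logs, I decompose
\[
\log B_{\gamma\gamma^*} = [\ell_p(\tilde{\eta}_\gamma) - \ell_p(\tilde{\eta}_{\gamma^*})] + [\log\pi(\tilde{\eta}_\gamma) - \log\pi(\tilde{\eta}_{\gamma^*})] + \tfrac{d_\gamma - d_{\gamma^*}}{2}\log(2\pi) - \tfrac{1}{2}\log\tfrac{|H_\gamma|}{|H_{\gamma^*}|},
\]
with $H_\gamma = -\nabla^2 \ell_p(\tilde{\eta}_\gamma) - \nabla^2 \log\pi(\tilde{\eta}_\gamma)$, and attack each piece separately, leaning on the consistency and asymptotic normality of $\bar{\eta}_\gamma$ toward the $M_p$-maximizer $\eta^*_\gamma$ established under misspecification by \cite{tsiatis:1981} and \cite{lin:1989} under Conditions B1-B5.

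First I would upgrade Lin's conclusions into a Laplace-validity statement for the integrated partial likelihood analogous to Proposition \ref{prop:valid_laplace_concave}. Because $\ell_p$ is concave in $(\beta,\delta)$ and the three priors are smooth away from their singular sets, what is needed is uniform convergence of $n^{-1}\ell_p(\eta_\gamma)$ to $M_p(\eta_\gamma)$ on compacta and convergence in probability of $-n^{-1}\nabla^2 \ell_p(\bar{\eta}_\gamma)$ to the positive-definite Hessian of $M_p$, both of which are available from the counting-process martingale decomposition central to \cite{lin:1989}. This yields $\tilde{\eta}_\gamma = \bar{\eta}_\gamma + \lop(1)$ and a vanishing Laplace remainder, so that the four-term decomposition above governs $\log B_{\gamma\gamma^*}$ up to $\lop(1)$.

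For the overfitted case $\gamma^* \subset \gamma$, a second-order Taylor expansion of $\ell_p$ around $\tilde{\eta}_{\gamma^*}$ together with asymptotic normality gives $\ell_p(\tilde{\eta}_\gamma) - \ell_p(\tilde{\eta}_{\gamma^*}) = \lOp(1)$, while the Hessian ratio contributes $-\tfrac{1}{2}(d_\gamma - d_{\gamma^*})\log n + \lOp(1)$ since $H_\gamma/n$ converges to a positive-definite limit and the prior Hessian is of smaller order on the spurious coordinates. The prior density ratio then delivers exactly the $a_n$ of Proposition \ref{prop:BFRatesI}(i): under $\pi_L$ each spurious $\beta_j$ pays $(ng_L)^{-1/2}$; under $\pi_M$ the factor $\tilde{\beta}_j^2/(g_M\sigma^2)$ evaluated at $\tilde{\beta}_j = \lOp(n^{-1/2})$ injects an additional $n^{-1}$; under $\pi_E$ the factor $\exp(-g_E\sigma^2/\tilde{\beta}_j^2)$ is exponentially small at such $\tilde{\beta}_j$; and every spurious non-linear group $\delta_j$ contributes the same $(ng_S)^{r/2}$ group-Zellner penalty regardless of the prior on $\beta_j$. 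For the non-overfitted case $\gamma^* \not\subset \gamma$, the uniform LLN gives $n^{-1}[\ell_p(\tilde{\eta}_\gamma) - \ell_p(\tilde{\eta}_{\gamma^*})] \to -[M_p(\eta^*_{\gamma^*}) - M_p(\eta^*_\gamma)]<0$, which dominates. The determinant term is again $-\tfrac{1}{2}(d_\gamma - d_{\gamma^*})\log n + \lOp(1)$, and the prior density ratio now produces $b_n$ rather than $a_n$ because the pMOM and eMOM non-local factors are evaluated at coordinates bounded away from zero, reproducing the polynomial and exponential scalings of Proposition \ref{prop:BFRatesI}(ii).

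The principal obstacle is that first step: promoting Lin's misspecified consistency results into a full Laplace-approximation theorem for the integrated \emph{partial} likelihood. Unlike the AAFT setting, $\ell_p$ is not a sum of i.i.d.\ contributions because the risk sets $\mathcal{R}(o_i)$ couple individuals, so the empirical-process concentration and Hessian-convergence arguments behind Proposition \ref{prop:valid_laplace_concave} must be re-derived using counting-process martingale techniques rather than standard U-statistic bounds. One also needs uniform control of $\nabla^3 \ell_p$ on shrinking neighbourhoods of $\eta^*_\gamma$ to bound the Laplace remainder, which is again a martingale calculation. Once these ingredients are assembled, the remainder of the argument parallels the AAFT proof almost verbatim, with the only book-keeping change being that $d_\gamma$ here omits the $+1$ coming from the nuisance $\tau$.
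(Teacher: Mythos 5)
Your proposal is correct and follows essentially the same route as the paper: the same Laplace decomposition with $\ell$ replaced by $\ell_p$, the same treatment of the likelihood-ratio, Hessian-determinant and prior-ratio terms, and the same reliance on consistency and asymptotic normality of the partial MLE under misspecification. The one step you flag as the principal obstacle---uniform convergence of $n^{-1}\ell_p$ to $M_p$ and Hessian convergence despite the risk-set coupling---is handled in the paper not by a new martingale derivation but by directly invoking the misspecified-Cox results of \cite{lin:1989} and \cite{wong:1986} under Conditions B1--B5, together with concavity of $\ell_p$ and the convexity lemma.
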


That is, the Bayes factors under an assumed Cox model have similar asymptotic behavior as under an assumed AFT model, hence the conclusions stated in Section \ref{ssec:bfrates} also apply to the Cox model.

\section{Computation}
\label{sec:computation}

The two main computational challenges are exploring the model space $\gamma \in \{0,1,2\}^p$, and approximating the integrated likelihood $p(y \mid \gamma)$ in \eqref{eq:modelposterior} for each model.
 We first discuss relative advantages of the Normal AFT and Cox models for computing $p(y \mid \gamma)$, and how they relate to the amount of censored data in Section \ref{ssec:within_model}. We also discuss an approximation to the Normal log-distribution function derivatives that dramatically speeds up computation for the AFT and probit models.
Section \ref{ssec:model_search} discusses the model search, when one cannot enumerate all $3^p$ models.

\subsection{Within-model calculations}
\label{ssec:within_model}

 When the log-likelihood is concave (or locally concave around $\eta_\gamma^*$, as in asymptotically Normal models), Laplace approximations to $p(y \mid \gamma)$ are one of the fastest and more accurate methods available.
A practical limitation is that, when one wishes to consider many models or the sample size is large, solving the required optimization problems can still be cumbersome. 
This cost can be significantly ameliorated by combining convex optimization algorithms that use warm initializations, see Section \ref{app:withinmodel_comput}.  See also \cite{rossell:2020} for an approach based on approximate Laplace approximations that bypasses the optimization exercise altogether. 

Within survival analysis, an advantage of exponential-family AFT models is admitting sufficient statistics for the uncensored part of the likelihood, \textit{e.g.}~$(y_o^\top y_o, X_o^\top y, X_o^\top X_o)$ for \eqref{logLikeAFT2}.
These can be computed upfront in $n_o(1 + p + p(p+1)/2)$ operations and re-used whenever a new model $\gamma$ is considered at no extra cost, but for large $p$ such pre-computation has significant cost and memory requirements.
Since one typically visits only a small subset of models, many elements in $X_o^\top X_o$ are never used and it would be wasteful to compute them all upfront.
It is more convenient to compute the entries in $X_o^\top X_o$ when first required by any given $\gamma$ and storing them for later use.
Our software follows this strategy by using sparse matrices in the {\tt C++ Armadillo} library \citep{sanderson:2016}.

Given these sufficient statistics the log-likelihood in \eqref{logLikeAFT2} requires
$\min\{ n d_\gamma, (n_c+1) d_\gamma + d_\gamma(d_\gamma+1)/2\}$ operations, and each entry in its gradient and hessian require $n_c+1$ further operations.
In contrast the Cox model's partial likelihood
has a minimum cost of $n_o d_\gamma + n_o(n_o-1)/2$ operations when censored times precede all observed times ($\max c_i < \min o_i$),
and a maximum cost $n d_\gamma + [n(n+1) - n_c(n_c-1)]/2$ when observed times precede all censored times.
That is, the AFT likelihood has a significantly lower cost than the Cox model when $n_c < n_o$ (moderate censoring) or $n > d_\gamma$ (sparse settings).

 A caveat of the Normal AFT model, however, is requiring  the extensive evaluation of the log-cumulative distribution $\log \Phi$ and its derivatives.
Each likelihood evaluation requires $n_c$ terms featuring $\Phi$ and, although these terms can be re-used when computing $r(z)=\phi(z)/\Phi(z)$ and $D(z)=r(-z)^2 - zr(-z)$ in the gradient and hessian, evaluating $\Phi(z)$ is costly.
Briefly, the problem of approximating the inverse Mill's ratio $r(z)$ has been well-studied \cite{gasull:2014}.
There are many algorithms to approximate $\Phi(z)$, but $r(z)$ is harder, \textit{e.g.}~Expression 26.2.16 in \cite{abramowitz:1965} (page 932) has maximum absolute error $<7.5 \times 10^{-8}$ for $\Phi(z)$ but unbounded absolute error for $r(z)$ as $z \rightarrow -\infty$. 
By combining existing proposals we built a fast approximation that guarantees the small relative errors.
One may combine the Taylor series and asymptotic expansions in \cite{abramowitz:1965} (page 932, Expressions 26.2.16 and 26.2.12) for $\Phi(z)$ with an optimized Laplace continued fraction in \cite{lee_chuin:1992} (Expression (5.3)) for $r(z)$ as $z \rightarrow -\infty$. 
The  resulting $\widehat{r}(z)$ has maximum absolute and relative errors $<0.000185$ and $<0.000102$ respectively,
and for $\widehat{D}(z)=\widehat{r}(-z)^2 - z \widehat{r}(-z)$ they are $<0.000424$ and $<0.000505$.
See Section \ref{app:napp} for further details.
As an empirical check, the posterior model probabilities obtained in Section \ref{ssec:ftbrs} when replacing $(r(z),D(z))$ by $(\widehat{r}(z),\widehat{D}(z))$ remained identical to the third decimal place.

This approximation also facilitates evaluating the log-likelihood and derivatives for probit and other models involving $\log \Phi$, and may have some independent interest.
Using this approximation and the warm initializations in Section \ref{app:withinmodel_comput} is practically meaningful, for the TGFB data (Section \ref{ssec:ftbrs}, 868 parameters) they reduced the cost of 1,000 Gibbs iterations from $>$4 hours to 38 seconds.

\subsection{Model exploration}
\label{ssec:model_search}

 Recent advances in Markov Chain Monte Carlo provide model exploration strategies that perform fairly well in practice, see \cite{zanella:2019} for a tempering approach that is particularly helpful when there are multi-modalities in $p(\gamma \mid y)$, or \cite{griffin:2020} for adaptive methods that reduce the effort in exploring low posterior probability models. Further, as $n$ grows and posterior probabilities concentrate on a single model, it is possible to prove quick convergence \citep{yangyun:2016}. Intuitively, if $p(\gamma^* \mid y) \approx 1$ and the chain converges quickly, there is high probability that $\gamma^*$ will be visited after a few iterations. 
Most iterations are spent on models with high $\pi(\gamma \mid y)$ which, from Proposition \ref{prop:BFRatesI}, are models with dimension close to $d_{\gamma^*}$.
The main burden arises from obtaining $p(y \mid \gamma)$, which only needs to be computed the first time that $\gamma$ is visited and can be stored for subsequent iterations.
Hence, if $d_{\gamma^*}$ is not too large (sparse data-generating truths) or $\pi(\gamma \mid y)$ is concentrated on relatively few models, the cost is manageable.

Here for simplicity we describe Algorithm \ref{alg:augmented_gibbs}, a Gibbs algorithm that builds upon earlier proposals \citep{JR12,rossell:2018}, with the novelty that it adds a latent augmentation to enforce hierarchical restrictions (non-linear terms in $S$ are only added if the corresponding linear term in $X$ is in the model) in a computationally-efficient manner.
The algorithm obtains $B$ samples $\gamma^{(1)},\ldots,\gamma^{(B)}$ from $\pi(\gamma \mid y)$.
It is not a naive Gibbs algorithm that sequentially samples $p$ trinary indicators,
\textit{i.e.}~sets $\gamma_j^{(b)}=k$ with probability $\pi(\gamma_j=k \mid y,\gamma_1,\ldots,\gamma_{j-1},\gamma_{j+1},\ldots,\gamma_p)$ for $k \in \{0,1,2\}$.
Instead, it is more convenient to run an augmented-space Gibbs on $2p$ binary indicators.
Specifically let $\tilde{\gamma}_j=\mbox{I}(\gamma_j=1)$ for $j=1,\ldots,p$ denote that covariate $j$ only has a linear effect,
and $\tilde{\gamma}_j=\mbox{I}(\gamma_{j-p}=2)$ for $j=p+1,\ldots,2p$ a non-linear effect.
Algorithm \ref{alg:augmented_gibbs} samples $\tilde{\gamma}_j$ individually
but prevents $(\tilde{\gamma}_j,\tilde{\gamma}_{j+p})=(0,1)$,
\textit{i.e.} enforces that having a non-linear effect when $\beta_j=0$ has zero posterior probability.
The greedy initialization of $\tilde{\gamma}^{(0)}$ is analogous to that in \cite{JR12} and to the heuristic optimization in \cite{polson:2018}.

We remark that Algorithm \ref{alg:augmented_gibbs} may suffer from worse mixing than naive Gibbs sampling of $\gamma_j \in \{0,1,2\}$, but is advantageous in sparse settings.
If covariate $j$ has a small posterior probability $\pi(\gamma_j \neq 0 \mid y)$ then $\pi(\tilde{\gamma}_j=1 \mid y)$ is small
and in most iterations $\tilde{\gamma}_{j+p}$ is set to zero without the need to perform any calculation.
In contrast when sampling $\gamma_j \in \{0,1,2\}$
one must obtain the integrated likelihood for $\gamma_j=2$, which can be costly due to adding the $r$ extra parameters  needed to capture the non-linear effect. 
As an example, in Section \ref{ssec:ftbrs} sampling $\gamma_j \in \{0,1,2\}$ took over 5 times longer to run than
Algorithm \ref{alg:augmented_gibbs}, but provided the same effective sample size up to 2 decimal places.


\begin{algorithm}
\caption{Augmented-space Gibbs sampling}\label{alg:augmented_gibbs}
\begin{algorithmic}[1]
\STATE Set $b=0$, $\tilde{\gamma}^{(0)}=(0,\ldots,0)$.

\STATE For $j=1,\ldots,2p$, update $\tilde{\gamma}_j^{(0)}= \arg\max_k p\left(\tilde{\gamma}_j=k \mid y, \tilde{\gamma}_{-j}^{(b)}\right)$.
If an update was made across $j=1,\ldots,2p$ go back to Step 2,
else set $\gamma_j^{(0)}= \max \left\{\gamma_j^{(0)},\gamma_{j+p}^{(0)} \right\}$ for $j=1,\ldots,p$ and go to Step 3.

\STATE Set $b=b+1$. For $j=1,\ldots,p$ set $\tilde{\gamma}_j^{(b)}=1$ with probability
\begin{align}
P\left(\tilde{\gamma}_j=1 \mid y, \tilde{\gamma}_{-j}^{(b)}\right)=
\begin{cases}
1 \mbox{, if } \tilde{\gamma}_{j+p}=1, \\
\frac{p\left(y \mid \tilde{\gamma}_j=1, \tilde{\gamma}_{-j}^{(b)}\right) p\left(\tilde{\gamma}_j=1, \tilde{\gamma}_{-j}^{(b)}\right)}
{p\left(y \mid \tilde{\gamma}_j=0, \tilde{\gamma}_{-j}^{(b)}\right) p\left(\tilde{\gamma}_j=0, \tilde{\gamma}_{-j}^{(b)}\right) + p\left(y \mid \tilde{\gamma}_j=1, \tilde{\gamma}_{-j}^{(b)}\right) p\left(\tilde{\gamma}_j=1, \tilde{\gamma}_{-j}^{(b)}\right)} \mbox{, if } \tilde{\gamma}_{j+p}=0,
\end{cases}
\nonumber
\end{align}
and otherwise set $\tilde{\gamma}_j^{(b)}=0$.

\STATE For $j=p+1,\ldots,2p$ set $\tilde{\gamma}_j^{(b)}=1$ with probability
\begin{align}
P\left(\tilde{\gamma}_j=1 \mid y, \tilde{\gamma}_{-j}^{(b)}\right)=
\begin{cases}
0 \mbox{, if } \tilde{\gamma}_{j+p}=0, \\
\frac{p\left(y \mid \tilde{\gamma}_j=1, \tilde{\gamma}_{-j}^{(b)}\right) p\left(\tilde{\gamma}_j=1, \tilde{\gamma}_{-j}^{(b)}\right)}
{p\left(y \mid \tilde{\gamma}_j=0, \tilde{\gamma}_{-j}^{(b)}\right) p\left(\tilde{\gamma}_j=0, \tilde{\gamma}_{-j}^{(b)}\right) + p\left(y \mid \tilde{\gamma}_j=1, \tilde{\gamma}_{-j}^{(b)}\right) p\left(\tilde{\gamma}_j=1, \tilde{\gamma}_{-j}^{(b)}\right)} \mbox{, if } \tilde{\gamma}_{j+p}=1,
\end{cases}
\nonumber
\end{align}
and otherwise set $\tilde{\gamma}_j^{(b)}=0$.
If $b=B$ stop, else go back to Step 3.
\end{algorithmic}
\end{algorithm}

\section{Empirical results}
\label{sec:results}

We illustrate via examples the effect of censoring, misspecification and the use of non-linear effect decompositions on model selection.
Section \ref{ssec:res_2vars} considers a simple simulation study with $p=2$ variables, which Section \ref{ssec:res_50vars} extends to $p=50$.
We consider different data-generating truths where the covariates have a monotone or non-monotone effect, and where the truth follows an AFT, proportional hazards, or generalized hazards structure.
In Section \ref{ssec:ftbrs}, we analyze the effect of gene TGFB on colon cancer.
Given that the data-generating truth is unknown, in Section \ref{ssec:falsepos_tgfb} we study the number of false positives via a permutation exercise.
See also Supplementary Section \ref{app:esr}, where we analyze the effect of the estrogen receptor on breast cancer survival.

We consider five   model selection methods   combining the AFT and Cox models with local and non-local priors and with LASSO.
For all Bayesian methods we took the  highest posterior probability model  $\widehat{\gamma}= \arg\max \pi(\gamma \mid y)$ as the selected model.
We refer to the first three methods as AFT-Zellner, AFT-pMOMZ and AFT-LASSO.
They all assume an AFT model and use either  the block-Zellner prior  $\pi_L$,  the non-local pMOM-Zellner prior $\pi_M$  (Section \ref{ssec:elicitation}), or LASSO penalties as proposed by \cite{rahaman:2019}.
AFT-Zellner and AFT-pMOMZ assume  the Normal AFT model  in \eqref{eq:AFTModel}, whereas AFT-LASSO uses a semi-parametric AFT model.
The remaining two methods combine the Cox model with piMOM priors (Cox-piMOM, \cite{niko:2017}) and LASSO (Cox-LASSO, \cite{simon:2011}).
For AFT-Zellner and AFT-pMOMZ we used the function \texttt{modelSelection} in the R package \texttt{mombf}
with the default prior parameters, the Beta-Binomial prior $\pi(\gamma)$ in \eqref{eq:priormodel}
and $B=10,000$ iterations in Algorithm \ref{alg:augmented_gibbs}.
For Cox-piMOM we the used function \texttt{cov\_bvs} in the R package \texttt{BMSNLP} with default parameters and prior dispersion 0.25 as recommended by \cite{niko:2017}.
For AFT-LASSO and Cox-LASSO we used the functions \texttt{AEnet.aft} and \texttt{glmnet} in the R packages  \texttt{AdapEnetClass} and  \texttt{glmnet},
and we set the penalization parameter via 10-fold cross-validation.

\subsection{Censoring, model complexity and misspecification with $p=2$}
\label{ssec:res_2vars}

\begin{figure}
\begin{center}
\begin{tabular}{cc}
\includegraphics[width=0.4\textwidth,height=0.4\textwidth]{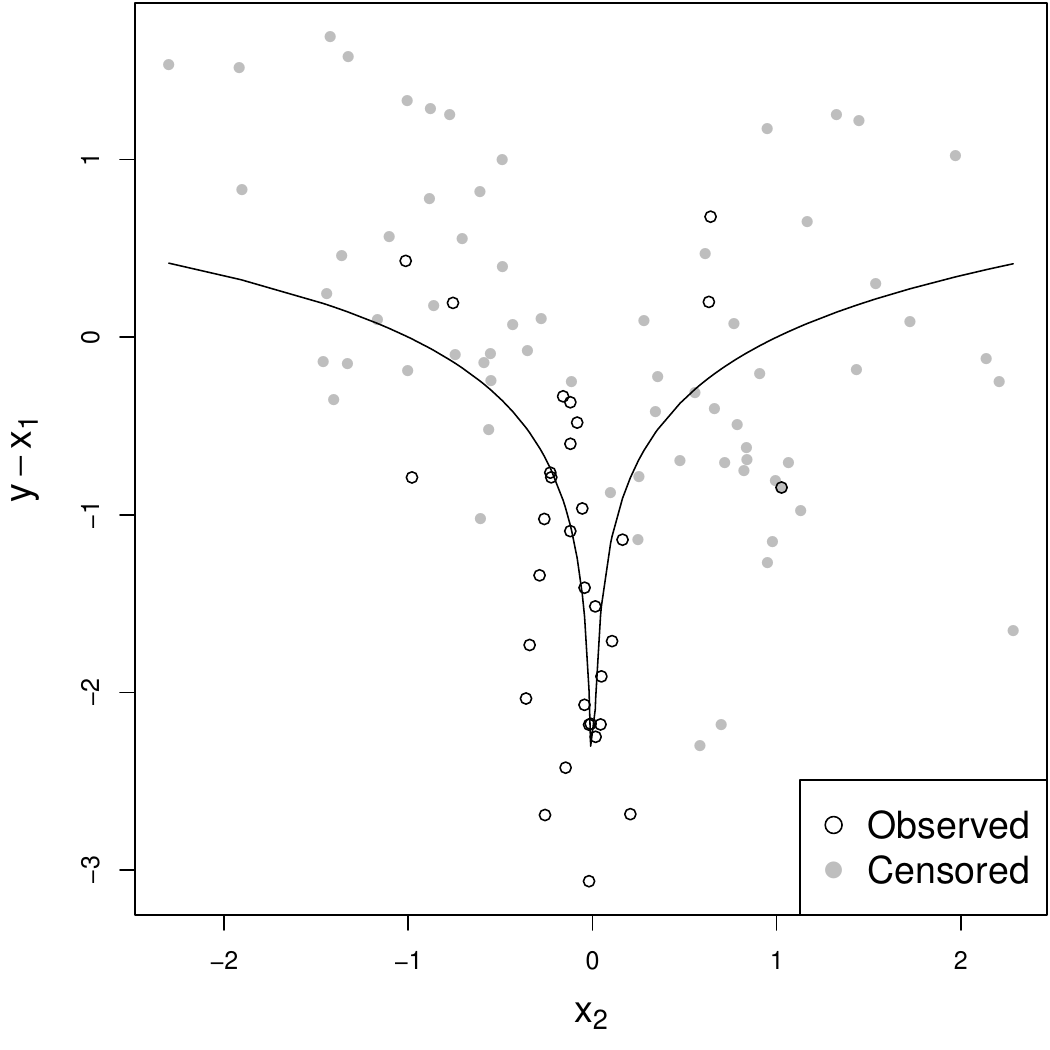} &
\includegraphics[width=0.4\textwidth,height=0.4\textwidth]{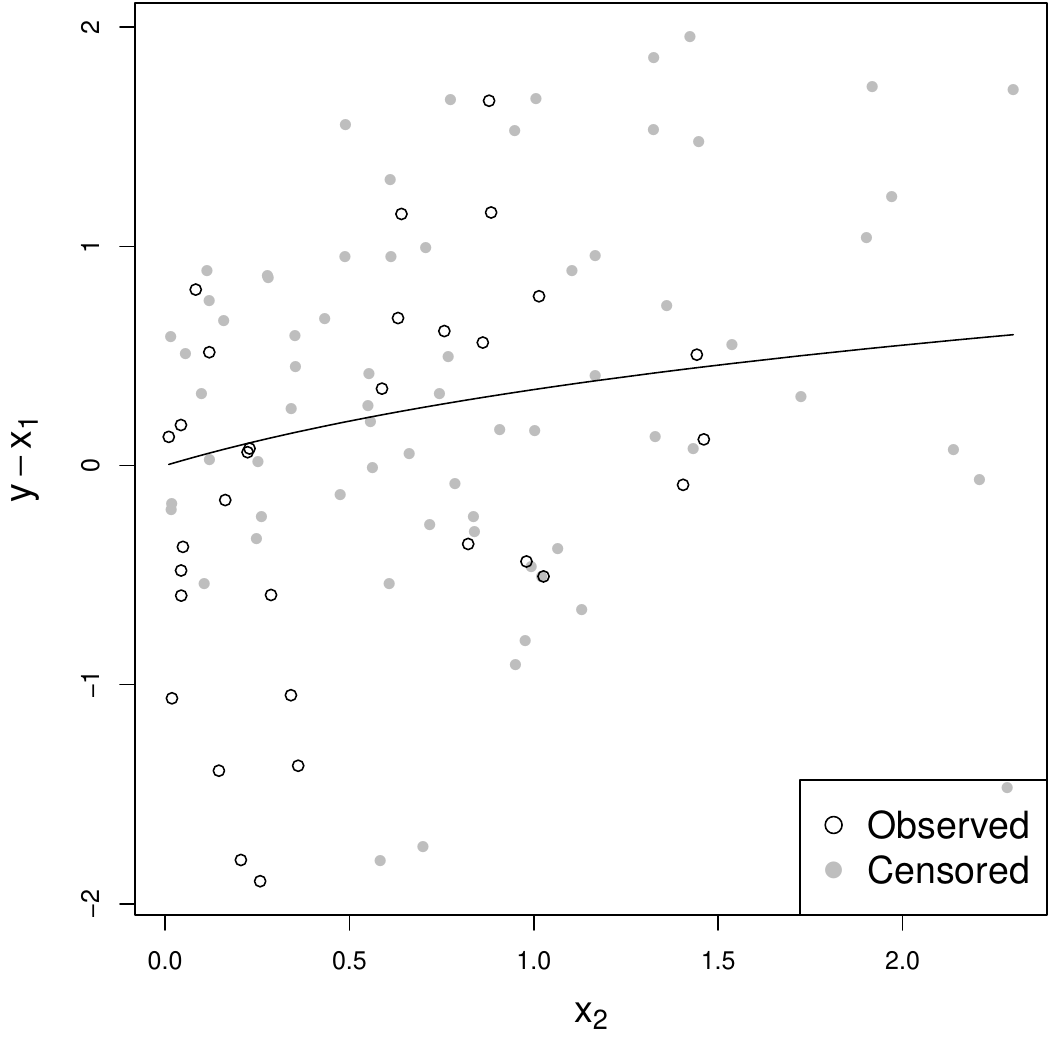} \\
\end{tabular}
\end{center}
\caption{Simulation truth and a simulated dataset for Scenarios 1 (left) and 2 (right).}
\label{fig:2vars}
\end{figure}

We consider sample sizes $n \in \{100, 500 \}$, as well as censored and uncensored data.
We present results for AFT-pMOMZ, as those for AFT-Zellner and Cox-piMOM were largely analogous.
These methods are compared to Cox-LASSO and AFT-LASSO in Section \ref{ssec:res_50vars}.
We consider 6 simulation scenarios. 
Scenarios 1-2 have a data-generating AFT model, Scenarios 3-4 a generalized hazard model and Scenarios 5-6 a proportional hazards model.
 The first covariate has a linear effect in all scenarios, whereas the second covariate has a non-linear effect.  In Scenarios 1, 3 and 5 this effect is strongly non-linear and non-monotone,
whereas in Scenarios 2, 4 and 6 it is monotone and can be roughly approximated by a linear trend, see Figure \ref{fig:2vars}.

\begin{scenario}
AFT structure with $\log o_i= x_{i1} + 0.5 \log(|x_{i2}|) + \epsilon_i$ and $c_i=0.5$,
where $x_i \sim N(0,A)$, $A_{11}=A_{22}=1$, $A_{12}=0.5$, $\epsilon_i \sim N(0,\sigma=0.5)$.
\end{scenario}
\begin{scenario}
AFT structure with $\log o_i= x_{i1} + 0.5 \log(1+x_{i2}) + \epsilon_i$ and $c_i=1$,
where $x_i=(\tilde{x}_{i1},|\tilde{x}_{i2}|)$, $\tilde{x}_i \sim N(0,A)$ and $A$, $\epsilon_i$ as in Scenario 1.
\end{scenario}
\begin{scenario}
Generalized hazards structure with 
\begin{equation*}
h_{GH}(t)= h_0(t \exp \left\{ -x_{i1}/3 + 0.5 \log(|x_{i2}|) \right\})  \exp \left\{ -x_{i1}/3 + 0.75 \log(|x_{i2}|) \right\},
\end{equation*}
$c_i=0.5$,
$h_0$ being the Log-Normal(0,0.5) baseline hazard and $x_i$ as in Scenario 1.
\end{scenario}
\begin{scenario}
 Generalized hazards structure with 
\begin{equation*}
h_{GH}(t)= h_0(t \exp \left\{ -x_{i1}/3 + 0.5 \log(1 +x_{i2}) \right\}) \exp \left\{ -x_{i1}/3 + 0.75 \log(1+x_{i2}) \right\},
\end{equation*}
$c_i=1$, and $h_0$ and $x_i$ as in Scenario 3.
\end{scenario}
\begin{scenario}
Proportional hazards with $h(t)= h_0(t) \exp \left\{ 3x_{i1}/4 - 5\log(|x_{i2}|)/4 \right\}$,
$c_i=0.55$,
$h_0$ being the Log-Normal(0,0.5) baseline hazard and $x_i$ as in Scenario 1.
\end{scenario}
\begin{scenario}
Proportional hazards with $h(t)= h_0(t) \exp \left\{ 3x_{i1}/4 - 5\log(|x_{i2}|) /4\right\}$,
$c_i=0.95$,
and $h_0$ and $x_i$ as in Scenario 5.
\end{scenario}

 In all scenarios, we first consider that there is no censoring, and then a strong administrative censoring,
giving censoring probabilities $P_{F_0}(u_i=0) \approx 0.7$. 

 We first discuss Scenarios 1-2 and illustrate the advantage of using our non-linear effect decomposition. 
We first only considered  the selection of non-linear effects, \textit{i.e.}~$\gamma_j \in \{0,2\}$.
In such case, the power to detect the effects (Figure \ref{fig:2vars_margpp}, top)
was significantly lower than when decomposing them into linear and non-linear parts (Figure \ref{fig:2vars_margpp}, middle).
These findings align with Proposition \ref{prop:BFRatesI}, in the sense that
 the improvement in model fit
needs to overcome the penalty for using a non-linear basis. 
By considering $\gamma_j \in \{0,1,2\}$, one can capture part of the effect with a single linear term.  
Figure \ref{fig:2vars_margpp} also shows that censoring tends to reduce the power for both covariates.

 Second, we illustrate the effect of the non-linear basis dimension $r$.  
We compared the earlier results, where $r$ was part of the model selection, to those obtained under a single fixed $r=$ 5, 10 or 15 (Figure \ref{fig:2vars_margpp}, bottom).
Interestingly, in Scenario 1 the best performance was observed for $r=5$, despite the data-generating truth being strongly non-linear
(Figure \ref{fig:2vars}).
In Scenario 2 the results were highly robust to $r$, as one might expect from the true effect being near-linear.
That is, the smaller $r=5$ gave a good compromise between inference and computation,
we thus used $r=5$ from now on.

The results for Scenarios 3-4 are in Figure \ref{fig:2vars_margpp_sc3}, and for Scenarios 5-6 in Figure \ref{fig:2vars_margpp_sc4}.
The effect of censoring, model complexity and misspecifiying covariate effects were largely analogous to Scenarios 1-2.
To explore further the effects of misspecification, we repeated the simulations in Scenarios 1-2 but now setting $F_0$ to have
asymmetric Laplace errors $\epsilon_i \sim \mbox{ALaplace}(0,s,a)$,
where $a=-0.5$ is the asymmetry and $s$ the scale in the parameterization of \cite{rossell:2018}.
We set $s$ such that the error variance was equal to the Normal simulations, that is $s= \sigma^2 / [2 (1+a^2)]=0.1$.
Figure \ref{fig:2vars_margpp_alapl} shows the results.
These are similar to Figure \ref{fig:2vars_margpp} except for a slight drop in the power to include active covariates.

Finally, we explored the effect of omitting covariates by analyzing the data from Scenarios 1-2
but considering that only $x_{i1}$ was actually observed, \textit{i.e.}~removing $x_{i2}$ from the analysis.
Figure \ref{fig:2vars_margpp_omitX2} shows the results.
Relative to Figure \ref{fig:2vars_margpp}, under Scenario 1 there was a reduction in the posterior evidence for including $x_{i1}$.
Such reduction was not observed in Scenario 2, presumably due to $x_{i1}$ being correlated
with $\log(1+x_{i2})$ and hence picking up part of its predictive power.

\subsection{Censoring, model complexity and misspecification with $p=50$}
\label{ssec:res_50vars}

\begin{figure}
  \begin{center}
    \begin{tabular}{cc}
      \multicolumn{2}{c}{Scenario 1} \\
      \includegraphics[width=0.4\textwidth]{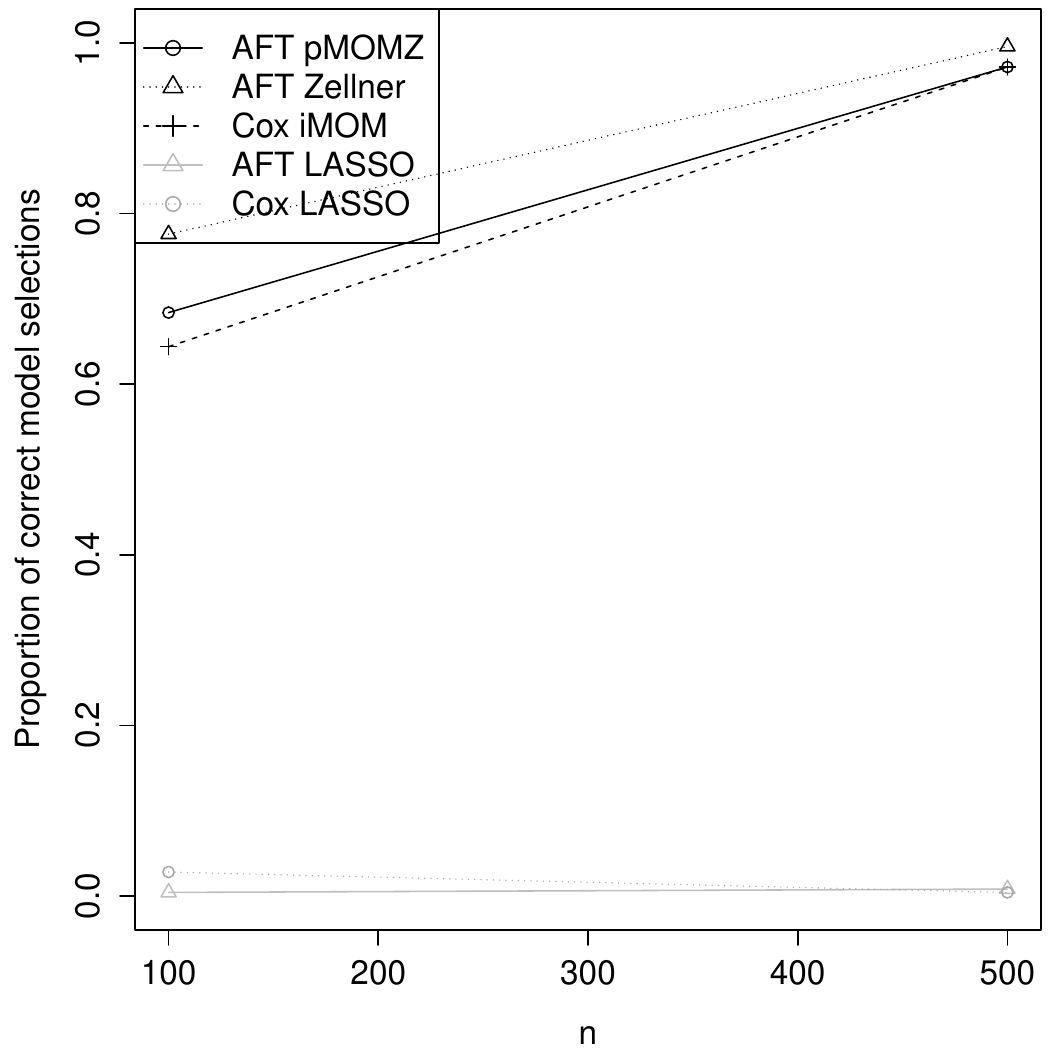} &
      \includegraphics[width=0.4\textwidth]{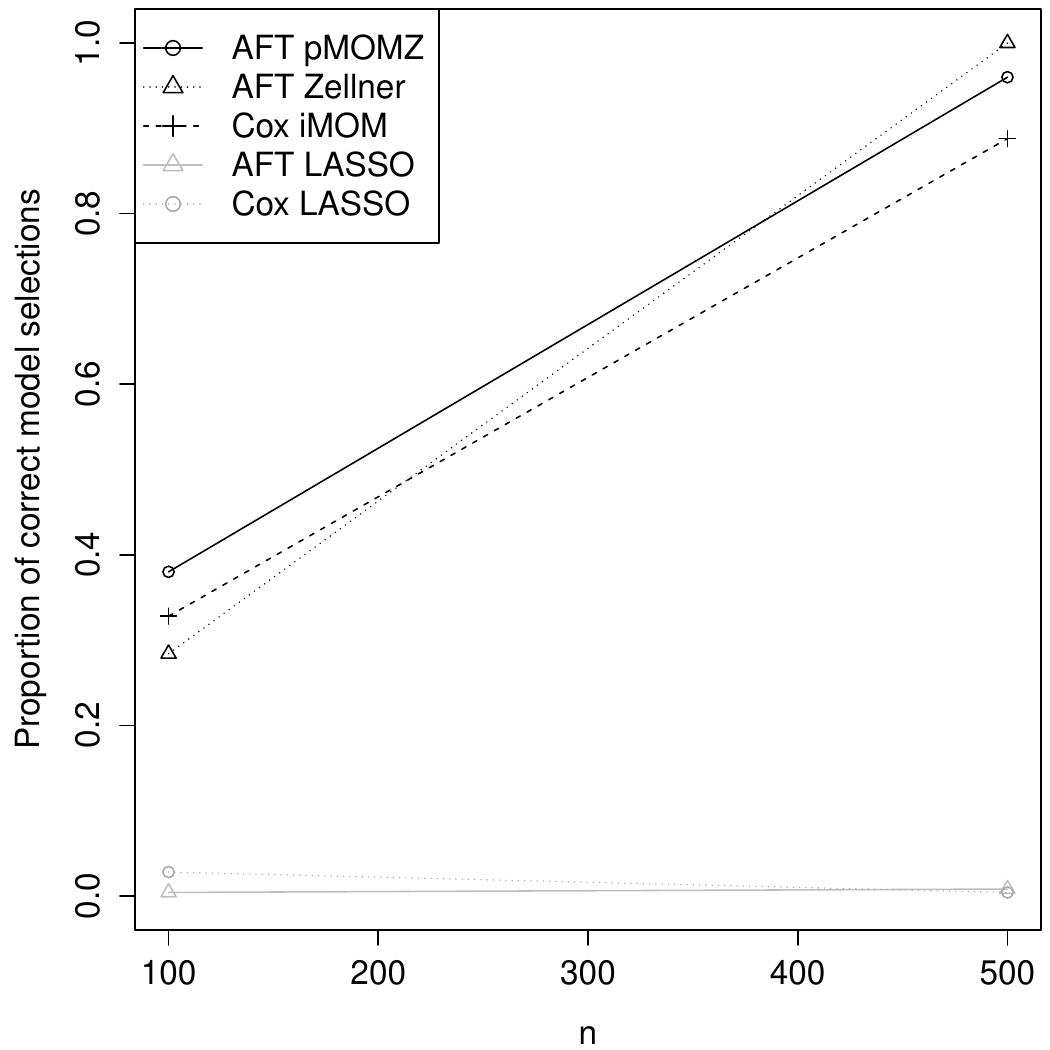} \\
      \multicolumn{2}{c}{Scenario 2} \\
      \includegraphics[width=0.4\textwidth]{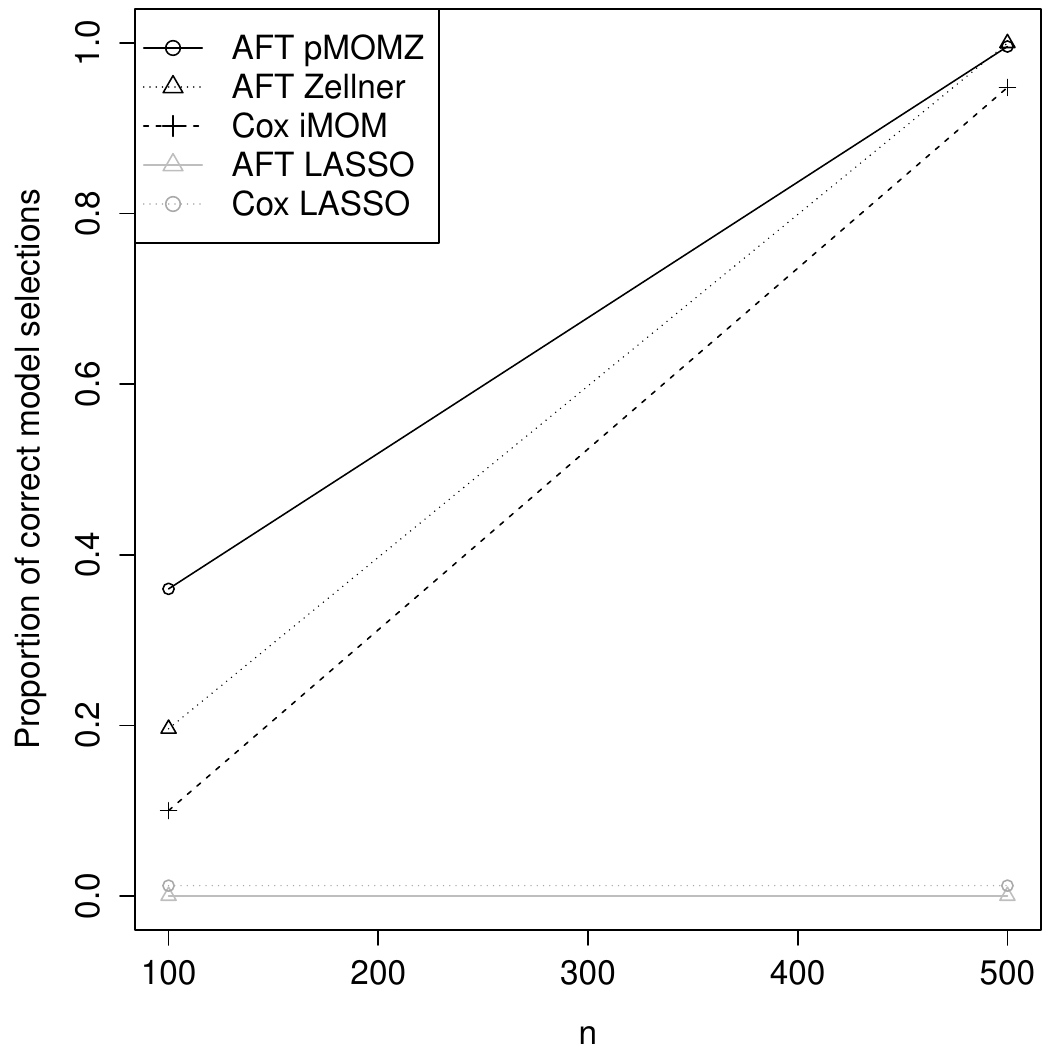} &
      \includegraphics[width=0.4\textwidth]{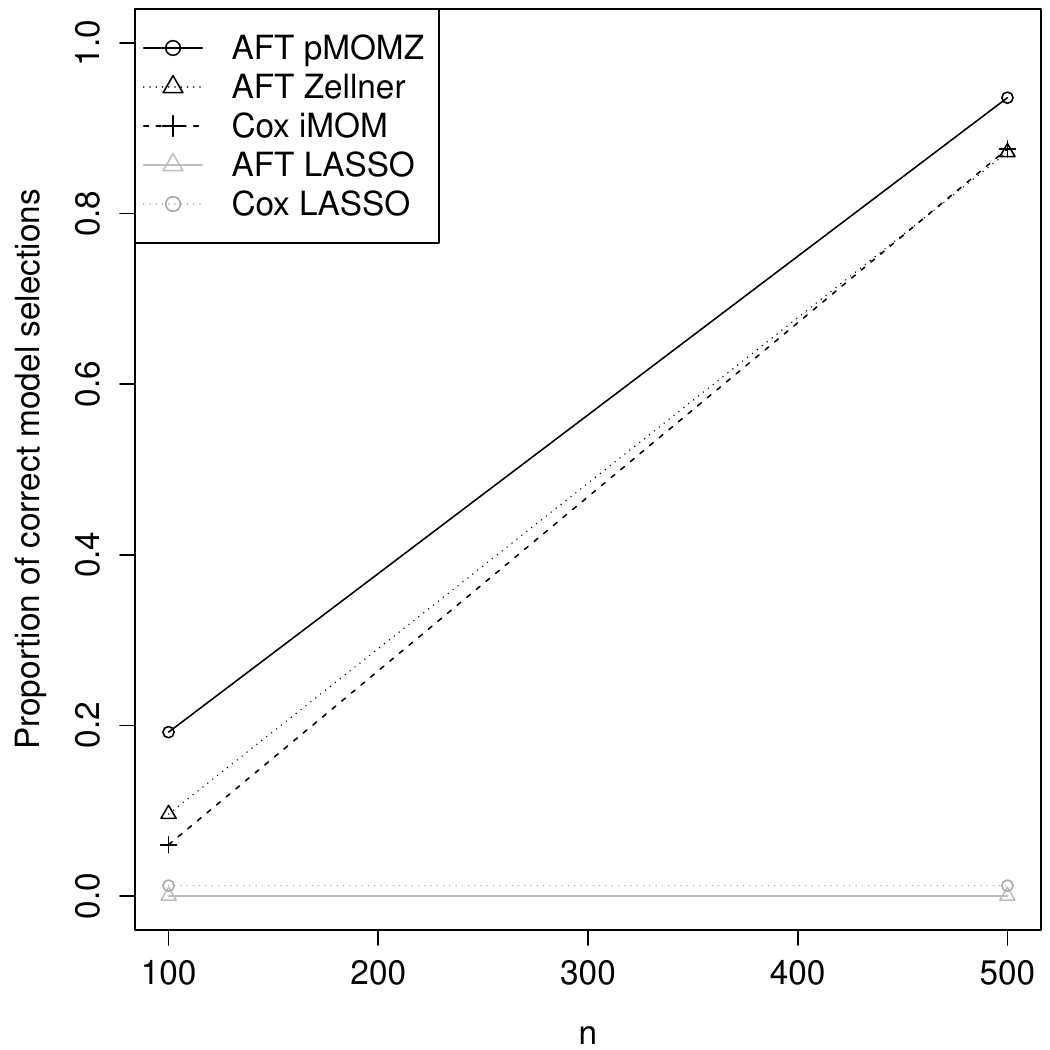}
    \end{tabular}
\end{center}
\caption{Scenarios 1-2, $p=50$. Correct model selection proportion in uncensored (left) and censored (right) data.}
\label{fig:pcorrect_scen12}
\end{figure}

We extended Scenarios 1-6 from Section \ref{ssec:res_2vars} by adding 48 spurious covariates.
We generated  covariates  $x_i \sim N(0,A)$ where $A$ is a $50 \times 50$ matrix with unit diagonal and all off-diagonal $A_{ij}=0.5$, and otherwise simulated data as in Section \ref{ssec:res_2vars}.
Figure \ref{fig:pcorrect_scen12} shows the proportion of correct model selections by each model selection  method
in Scenarios 1-2, across 250 independent simulations.
Figure \ref{fig:pcorrect_scen3} reports these results for Scenarios 3-4, and Figure \ref{fig:pcorrect_scen4} for Scenarios 5-6.
Tables \ref{tab:pcorrect_scen1}-\ref{tab:pcorrect_scen6} also display the posterior probability assigned to  the optimal model  $\gamma^*$
and the average number of truly active and truly inactive selected covariates.
All Bayesian methods exhibited a good ability to select $\gamma^*$ that improved with larger $n$ and uncensored data
(as predicted by Proposition \ref{prop:BFRatesI}),
and they all provided significant improvements over Cox-LASSO and AFT-LASSO,  particularly in reducing the number of false positives. 
As expected AFT-Zellner and AFT-pMOM tended to slightly outperform Cox-piMOM under truly AFT data (Scenarios 1-2),
and conversely under truly proportional hazards data (Scenarios 5-6), though the differences were relatively minor.
Interestingly, under the generalized hazards model (Scenarios 3-4) again AFT-Zellner and AFT-pMOMZ achieved higher correct selection rates,
presumably due to these generalized hazard settings being closer to an AFT than to an proportional hazards model.

\subsection{Effect of TGFB and fibroblasts in colon cancer metastasis}
\label{ssec:ftbrs}

\cite{calon:2012} studied the effect of 172 genes related to fibroblasts (f-TBRS signature), a cell type producing the structural framework in animals, and a growth factor (TGFB) associated with lower colon cancer survival (time until recurrence).
The authors obtained 172 genes responsive to TGFB in mice fibroblasts.
They then used independent gene expression data from human patients,  with tumor stages 1-3,  to show that an overall high mean expression of these 172 genes was strongly associated with metastasis.
We analyzed their data to provide a more detailed description of the role of TGFB and f-TBRS on survival.
 We used the $n=260$ patients with available survival times, and used tumor stage (2 dummy indicators), TGFB and the 172 f-TBRS genes as covariates, for a total of $p=175$.  
We first performed model selection via AFT-pMOMZ only for staging and TGFB.
The top model had 0.976 posterior probability and included stage and a linear effect of TGFB, confirming that TGFB is associated with metastasis.
The posterior marginal inclusion probability for a non-linear effect of TGFB was only 0.009.
As an additional check, the maximum likelihood estimator under the top model gave P-values$<0.001$ for stage and the linear TGFB effect.
The estimated time accelerations associated to TGFB are substantial (Figure \ref{fig:tgfb}, left). 

Next, we extended the exercise to all 175 variables, only considering linear effects.
The top model contained gene FLT1 and the second top model genes ESM1 and GAS1, with respective posterior model probabilities 0.088 and 0.081.
These were also the genes with highest inclusion probabilities (0.208, 0.699 and 0.567 respectively).
There is plausible biology connecting FLT1, ESM1 and GAS1 to metastasis.
From {genecards.org} \citep{stelzer:2016},
FLT1 is a growth and permeability factor in cell proliferation and cancer invasion.
ESM1 is related to endothelium disorders, growth factor receptor binding and gastric cancer networks,
and GAS1 plays a role in growth and tumor suppression.
Interestingly the marginal inclusion probability for TGFB was only 0.107,
that is after accounting for the top 3 genes TGFB did not show a significant effect on survival.
For confirmation, we fitted via maximum likelihood the model with FLT1, ESM1, GAS1, stage and TGFB.
The P-value for TGFB was 0.281 and its estimated effect was substantially reduced (Figure \ref{fig:tgfb}, right).
Finally, we considered both linear and non-linear effects ($p(1+r)=1050$ columns in $(X,S)$).
All non-linear effects had inclusion probabilities below 0.5
and the top 2 models contained FLT1, ESM1 and GAS1, as before.
For comparison we run Cox-piMOM, AFT-LASSO and Cox-LASSO on the linear effects $(p=175)$.
Stage and FLT1 were again selected by the top model under Cox-piMOM and by Cox-LASSO.
Cox-LASSO selected 9 other genes, but only 4 had a significant P-value upon fitting a Cox model via maximum likelihood.
Finally AFT-LASSO selected stage and 6 genes, two of which were also selected by Cox-LASSO. 
See Section \ref{app:esr} for a similar analysis of the estrogen receptor ESR1 effect on breast cancer.

 Since this is a real-data application with an unknown ground truth, it is hard to assess which method performed best. 
As a first check, Table \ref{tab:ci_tgfb} reports the estimated predictive accuracy of each method
via the leave-one-out cross-validated concordance index \citep{harrell:1996}.
Cox-LASSO and AFT-pMOMZ achieved the highest concordance indexes,
with the former selecting more variables than the latter on average across the cross-validation
(13.6 vs. 3.9 for $p=175$ and 11.7 vs. 4.9 for $p(1+r)=1050$).
 We remark that predictive accuracy is not our primary goal, but if a method were to miss truly active covariates then one would expect accuracy to decrease, hence it serves as a rough proxy for statistical power. To complete the exercise, we next evaluate false positive probabilities. 




\subsection{False positive assessment under colon cancer data}
\label{ssec:falsepos_tgfb}

We did a permutation exercise to assess false positive findings in the colon cancer data.
We randomly permuted the recurrence times, and left the covariates unpermuted.
We obtained 100 independent permutations and recorded the model selected by each method. 
We first included only stage, a linear and non-linear term for TGFB as covariates, for a total of $p(r+1)=8$ columns in $(X,S)$.
Next, we repeated the exercise considering linear effects for staging and the 173 genes, for a total of $p=175$ columns.

The results are in Table \ref{tab:falsepos_tgfb} and Figure \ref{fig:falsepos_tgfb}.
AFT-pMOMZ achieved an excellent false positive control, it selected the null model in all permutations
and assigned an average  posterior probability  $\pi(\gamma= 0 \mid y)=0.846$ and $0.844$ to the null model in the exercises with 8 and 175 columns (respectively).
That is, AFT-pMOMZ not only selected the null model but also assigned a high confidence to that selection.
All competing methods selected the null model significantly less frequently.
They also showed inflated false positive percentages for the analysis with 8 columns,
though interestingly these percentages were lower in the analysis with 175 columns.
Figure \ref{fig:falsepos_tgfb} reveals an interesting pattern for Cox-piMOM, in $>97$\% of the permutations only 1 covariate was included.
That is, although the mean false positives percentage for Cox-piMOM was similar to AFT-LASSO and Cox-LASSO,
the selected model was always very close to the null model,
as expected from the strong sparsity-inducing properties of non-local priors.

\begin{table}
\begin{center}
\begin{tabular}{l|rr|rr} \hline
 & \multicolumn{2}{c|}{Stage + TGFB $(p(r+1)= 8)$} & \multicolumn{2}{c}{Stage + all genes $(p= 175)$} \\
 & False positives & $\widehat{\gamma}=0$ & False positives & $\widehat{\gamma}=0$ \\
  \hline
  AFT-pMOMZ & 0.0 & 100.0 & 0.0 & 100.0 \\
 Cox-piMOM  & 12.1 & 3.0 & 0.6 & 1.0 \\
  AFT-LASSO & 35.9 & 31.0 & 2.2 & 45.0 \\
  Cox-LASSO & 12.6 & 68.0 & 1.5 & 61.0 \\
   \hline
\end{tabular}
\end{center}
\caption{Percentage of false positives and correct model selections $(\widehat{\gamma}=0)$ in permuted colon cancer data (100 permutations)
when the design had 8 columns (stage, linear and non-linear effect of TGFB) and 175 columns (stage and linear effect of 173 genes).}
\label{tab:falsepos_tgfb}
\end{table}

\section{Discussion}
\label{sec:discussion}

Our main contributions are describing a generic Bayesian model selection framework
to incorporate non-linear effects in a data-driven fashion to balance power and sparsity
and, perhaps more importantly, helping understand the interplay between censoring, misspecification and model complexity.
In survival models, we showed that one asymptotically discards covariates that do not help predict the outcome neither censoring times (conditionally on other covariates),
whereas in probit regression one keeps those that help reduce the probit loss function,  and similarly for other concave log-likelihoods. 
We showed that censoring and misspecification can reduce power significantly.
 Understanding this phenomenon can be useful in the design of experiments, where one may increase the follow-up length to gain power.  
 Enriching the model class, by considering semi- and non-parametric terms, to alleviate model misspecification requires some care as these additional terms can incur computational and statistical power losses. 
Our recommendation is to use Bayesian model selection to decide their inclusion in a data-adaptive manner, as in the proposed linear plus deviation from linearity decomposition.
Although not discussed here for simplicity, one can also easily incorporate interactions between covariates into the proposed theory and computational methods.

From a technical point of view we used standard asymptotic arguments which, for concave log-likelihoods, lead to simpler proofs and technical conditions. 
It should be possible to extend our results, with some care, to  non-concave and non-asymptotic settings (for example, using the high-dimensional framework in  \citep{panov:2015}),  interval and left censored data, as well as to cure rate, recurrence or excess hazards models.
We focused on fixed $p$ to provide simpler results and intuition, under less restrictive technical conditions. While, in theory, it can be potentially interesting to allow the non-linear basis dimension $r$ to grow with $n$, for actual methodology this often implies an impractical computational cost. This is critical in structural learning, where one wishes to consider many models. For this reason, in applied settings, it is common to use a finite basis.

Regarding high-dimensional settings, from recent results on misspecified penalized non-concave likelihood \citep{loh:2017} Bayesian model selection \citep{yangyun:2017,R18}, we speculate that our main findings should remain valid.
We remark, however, that high-dimensional formulations often incorporate stronger sparsity via the prior distribution, hence the power drop caused by censoring and misspecification could be more problematic than in our fixed $p$ case.

We focused on model selection within additive models, but our results extend directly when one wishes to consider interactions, by adding the corresponding basis to our formulation. 
Our theory is valid for any given basis and also when performing selection on the basis itself, however, admittedly our examples focused on spline basis with fixed knots.
We feel that a detailed study of basis selection would obscure the high-level intuition of our main results, but it represents an interesting aspect for future research.

\section*{Acknowledgments}

David Rossell was supported by Spanish Government grants
RyC-2015-18544, Plan Estatal PGC2018-101643-B-I00, Europa Excelencia EUR2020-112096,
Ayudas Fundaci\'on BBVA a Investigaci\'on en Big Data 2017, and NIH grant R01 CA158113-01.

\clearpage

\section*{Supplementary Material: Additive Bayesian variable selection under censoring and misspecification}

\section{Likelihood factorization under non-informative censoring}
\label{suppsec:noninformati_censoring}

Most popular survival models assume that the censoring times are non-informative, that is that the survival and censoring times $(o_i,c_i)$ are conditionally independent given covariates $x_i$.
Under such an assumption, the contribution of the censoring distribution to the likelihood function factors out, and can hence be ignored for making inference on survival times.
To see this, the likelihood associated to an arbitrary joint model $p_{o,c}(o_i,c_i \mid x_i)$ is
\begin{align}
\prod_{u_i=1} p_{o \mid c}(o_i \mid c_i \geq o_i, x_i) P_c(c_i \geq o_i \mid x_i) \prod_{u_i=0} p_{c \mid o}(c_i \mid  c_i < o_i, x_i) P_o(o_i > c_i \mid x_i)
\nonumber
\end{align}
which, if $(o_i,c_i)$ are assumed conditionally independent given $x_i$, simplifies to
\begin{align}
\prod_{i=1}^n p_o(o_i \mid x_i)^{u_i}  P_o(o_i > c_i \mid x_i)^{1-u_i}
\prod_{i=1}^n P_c(c_i \geq o_i \mid x_i)^{u_i} p_c(c_i \mid  x_i)^{1-u_i}.
\nonumber
\end{align}
The second term can be disregarded, for purposes of making inference on the marginal distribution of the survival times.

\section{Log-likelihood and priors}

\subsection{Marginal prior on $\beta_j$}
\label{app:marginalprior}


Straightforward algebra shows that the marginal MOM prior density on $\beta_j$ associated to $ \pi_M$ is
\begin{align}
  \pi_M(\beta_j)= \int \pi_M(\beta_j | \sigma^2) \mbox{IG}(\sigma^2; a_\tau/2,b_\tau/2)=
  \frac{2 \Gamma \left( (a_\tau+3)/2 \right)}{\Gamma(a_\tau/2) \pi^{1/2} (b_\tau g_M)^{3/2}}
  \frac{\beta_j^2}{(1+ \beta_j^2/(g_M b_\tau))^{(a_\tau+3)/2}}.
\nonumber
\end{align}
Similarly, the marginal eMOM prior density associated to $ \pi_E$ is given by
\begin{align}
  \pi_E(\beta_j)=
  \frac{e^{\sqrt{2}} \Gamma \left( (a_\tau+1)/2 \right)}{\Gamma(a_\tau/2) (\pi g_E b_\tau)^{1/2}}
  \frac{M \left( -g_E/\beta_j^2, (a+1)/2, (b+\beta_j^2/g_M)/2 \right)}{(1+ \beta_j^2/(g_E b_\tau))^{(a_\tau+1)/2}},
\nonumber
\end{align}
where $M(t,c,d)$ is the moment generating function of an $\mbox{IG}(a,b)$ distribution evaluated at $t$.
Let $t(\beta_j; \nu)$ be the density of a t distribution with $\nu$ degrees of freedom.
If $a_\tau=b_\tau$ then $\lim_{\beta_j \to \infty} \pi_M(\beta_j)/t(\beta_j;a_\tau)=C$ where $C$ is a constant not depending on $\beta_j$,
\textit{i.e.}~$\pi_M(\beta_j)$ has the same tail thickness as $t(\beta_j;a_\tau)$.
Similarly for $\pi_E(\beta_j)$ note that $\exp\{-g_E/\beta_j^2\}\leq 1$, then the Dominated Converge Theorem implies that $\lim_{\beta_j \to \infty} M()= 1$
and hence $\lim_{\beta_j \to \infty} \pi_M(\beta_j)/t(\beta_j;a_\tau)=D$ where $D$ is a constant.

To evaluate the cumulative distribution functions associated to $\pi_M(\beta_j)$ and $\pi_E(\beta_J)$
we used the quadrature-based numerical integration implemented in functions \texttt{pmomigmarg} and \texttt{pemomigmarg} from R package \texttt{mombf}.

\begin{figure}[h!][h!]
\begin{center}
\begin{tabular}{cc}
  \includegraphics[width=0.35\textwidth]{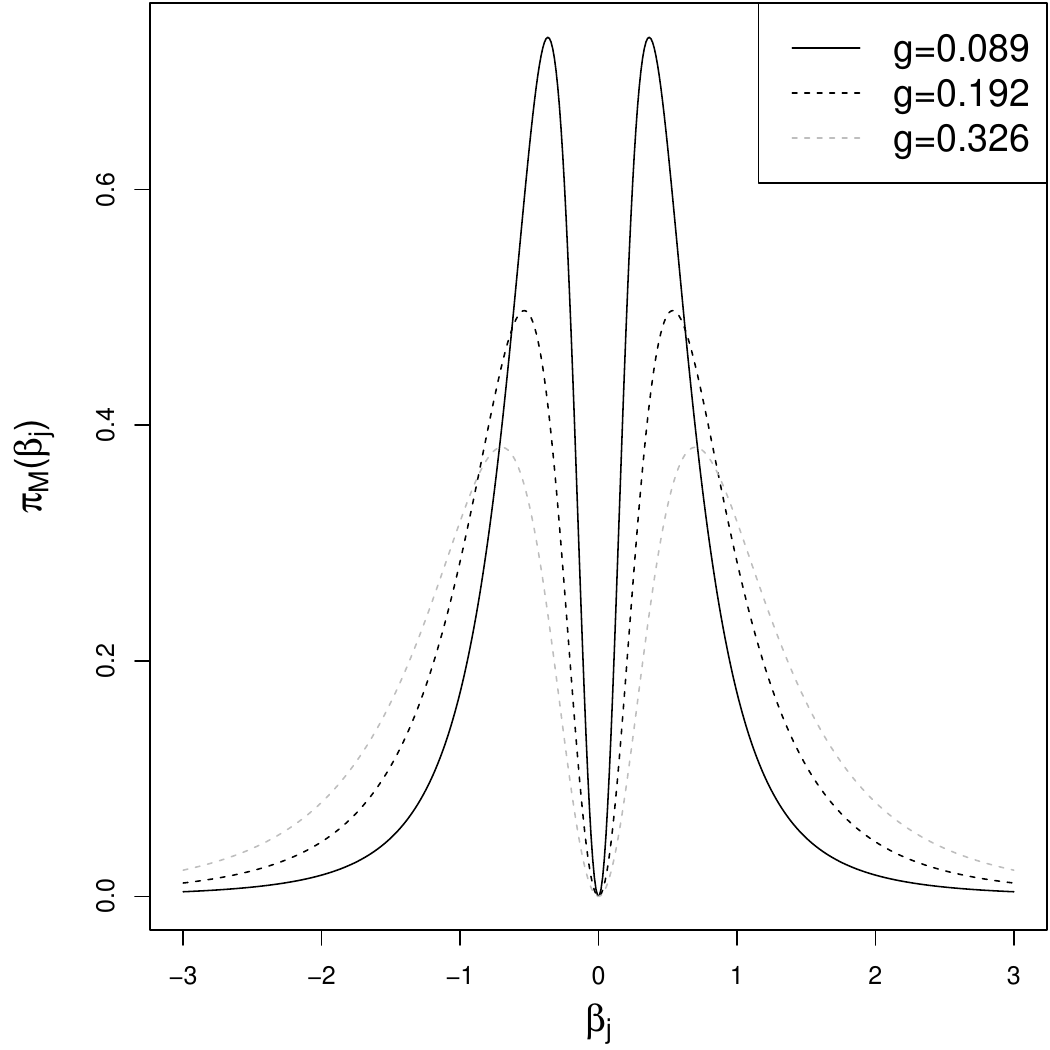}
  \includegraphics[width=0.35\textwidth]{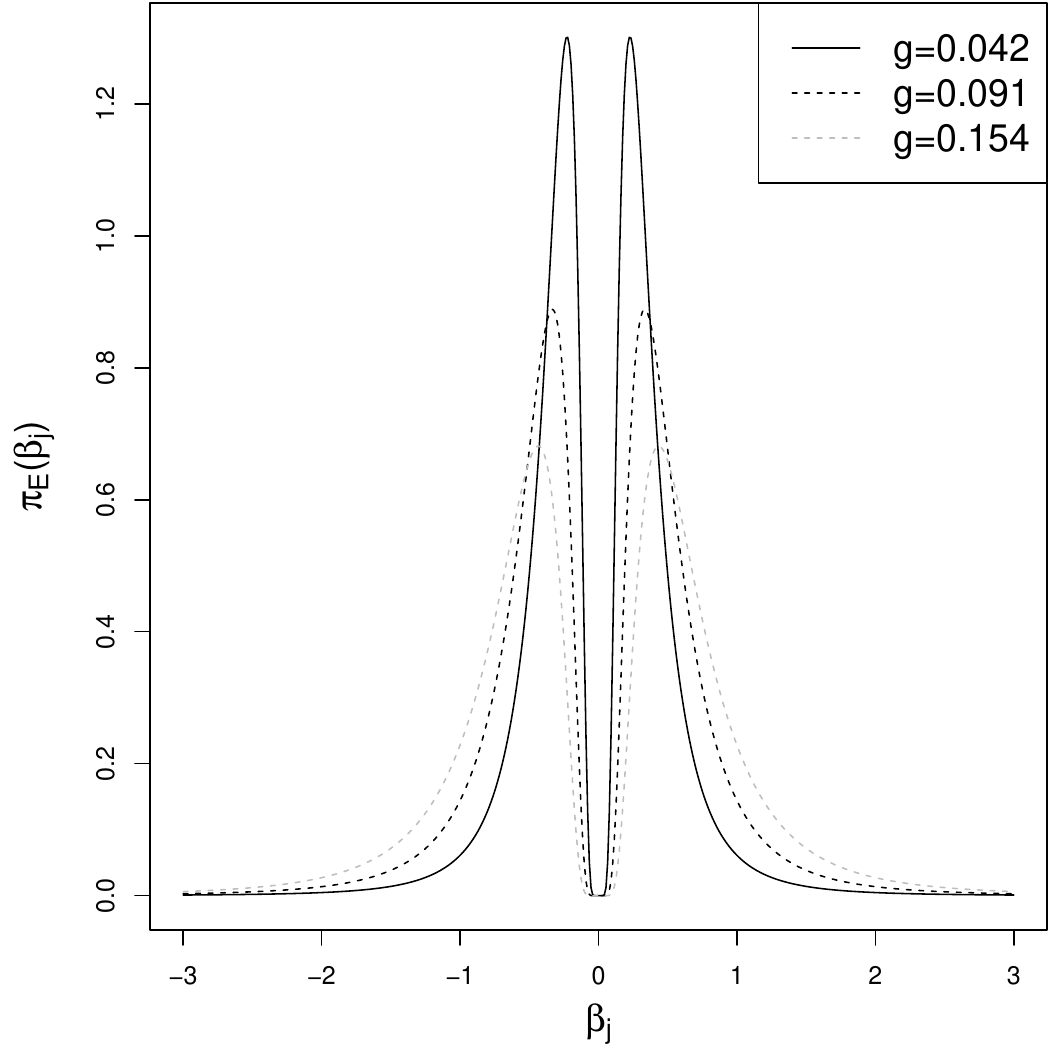}
\end{tabular}
\end{center}
\caption{Marginal pMOMZ prior $\pi_M(\beta_j)$ for $g_M = 0.089,0.192,0.326$ (left)
  and peMOMZ prior $\pi_E(\beta_j)$ for $g_E=0.042,0.091,0.154$.
  For both $a_\tau=b_\tau=3$.}
\label{fig:MOMpriors}
\end{figure}
\subsection{Gradient and Hessian of the priors on $(\alpha,\kappa,\tau)$}
\label{eq:prior_reparam}

The gradient of the logarithm of $\pi_1(\alpha_{\gamma},\kappa_\gamma,\tau)$ is
\begin{eqnarray*}
\nabla_\alpha \log \pi_1(\alpha_{\gamma},\kappa_\gamma,\tau) &=& \dfrac{2}{\alpha_{\gamma}} -\dfrac{\alpha_{\gamma}}{ g_M},\\
\nabla_{\kappa_j} \log \pi_1(\alpha_{\gamma},\kappa_\gamma,\tau) &=&  - \frac{1}{g_S n} S_j^\top S_j \kappa_\gamma,\\
  \nabla_\tau \log \pi_1(\alpha_{\gamma},\kappa_\gamma,\tau) &=&
                                                                 \frac{a_\tau - 1}{\tau} -  b_\tau \tau.
\end{eqnarray*}
where $1/\alpha_{\gamma}$ denotes inverse of each entry of this vector.
Regarding $\pi_2(\alpha_{\gamma},\kappa_\gamma,\tau)$ we obtain
\begin{eqnarray*}
\nabla_\alpha \log \pi_2(\alpha_{\gamma},\kappa_\gamma,\tau) &=& \dfrac{2 g_E}{\alpha_{\gamma}^3} -\dfrac{\alpha_{\gamma}}{ g_E},\\
\nabla_{\kappa_j} \log \pi_2(\alpha_{\gamma},\kappa_\gamma,\tau) &=&  - \frac{1}{g_S n} S_j^\top S_j \kappa_\gamma,\\
  \nabla_\tau \log \pi_2(\alpha_{\gamma},\kappa_\gamma,\tau) &=& \frac{a_\tau - 1}{\tau} -  b_\tau \tau.
\end{eqnarray*}

The Hessian of $\log \pi_1$ is
\begin{eqnarray*}
\nabla^2_\alpha \log \pi_1(\alpha_{\gamma},\kappa_\gamma,\tau) &=& \operatorname{diag}\left\{  -\dfrac{2}{\alpha_{\gamma}^2} -\dfrac{1}{ g_M}\right\},\\
\nabla^2_{\kappa_j} \log \pi_1(\alpha_{\gamma},\kappa_\gamma,\tau) &=&  - \frac{1}{g_S n} S_j^\top S_j,\\
\nabla^2_{\tau} \log \pi_1(\alpha_{\gamma},\tau) &=& -\dfrac{a_\tau -1}{\tau^2} - b_\tau,
\end{eqnarray*}

The Hessian of $\log \pi_2$ is:
\begin{eqnarray*}
\nabla^2_\alpha \log \pi_2(\alpha_{\gamma},\kappa_\gamma,\tau) &=& \operatorname{diag}\left\{ -\dfrac{6 g_E}{\alpha_{\gamma}^4} -\dfrac{1}{ g_E} \right\},\\
\nabla^2_{\kappa_j} \log \pi_2(\alpha_{\gamma},\kappa_\gamma,\tau) &=&  - \frac{1}{g_S n} S_j^\top S_j,\\
\nabla^2_\tau \log \pi_2(\alpha_{\gamma},\kappa_\gamma,\tau) &=& -\dfrac{a_\tau -1}{\tau^2} - b_\tau,
\end{eqnarray*}

Further all other elements in the hessian are zero. That is, for $l=1,2$ we have
$\nabla_\kappa \nabla_{\alpha} \log \pi_l(\alpha_{\gamma},\kappa,\tau)=
\nabla_\tau \nabla_{\alpha} \log \pi_l(\alpha_{\gamma},\kappa,\tau)= \nabla_\tau \nabla_{\kappa} \log \pi_l(\alpha_{\gamma},\kappa_\gamma,\tau)= 0.$

\subsection{Gradient and Hessian of the  log-likelihood and priors}
\label{app:prior_reparam_logtau}

The gradient of \eqref{logLikeAFT2} is $g(\alpha,\kappa,\tau)=$
\begin{equation}
\begin{pmatrix}
\nabla_{(\alpha,\kappa)} \ell(\alpha,\kappa,\tau)  \\
\nabla_\tau \ell(\alpha,\kappa,\tau)
\end{pmatrix} =
\begin{pmatrix}
\sum_{u_i=1} \begin{pmatrix} x_i \\ s_i \end{pmatrix} \left(\tau y_i-x_i^{\top}\alpha -s_i^{\top}\kappa \right) + \sum_{u_i=0} \begin{pmatrix} x_i \\ s_i \end{pmatrix} r\left(x_i^{\top}\alpha + s_i^{\top}\kappa - \tau y_i \right)  \\
\dfrac{n_o}{\tau} -  \sum_{u_i=1} y_i\left(\tau y_i-x_i^{\top}\alpha-s_i^{\top}\kappa \right)  - \sum_{u_i=0} y_i r\left(x_i^{\top}\alpha + s_i^{\top} \kappa - \tau y_i \right)
\end{pmatrix},
\label{eq:grad_logLikeAFT2}
\end{equation}

Let $r(t) = \phi(t)/\Phi(t)$ be the Normal inverse Mills ratio, the Hessian of \eqref{logLikeAFT2} is $H(\alpha,\kappa,\tau)=$
\begin{eqnarray*}
  \nabla^2_{(\alpha,\kappa)} \ell(\alpha,\kappa,\tau) &=&
 - \sum_{u_i=1} \begin{pmatrix} x_i \\ s_i \end{pmatrix} \begin{pmatrix} x_i \\ s_i \end{pmatrix}^{\top}
- \sum_{u_i=0} \begin{pmatrix} x_i \\ s_i \end{pmatrix} \begin{pmatrix} x_i \\ s_i \end{pmatrix}^{\top} D\left(\tau y_i - x_i^{\top}\alpha - s_i^{\top} \kappa \right),\\
  \nabla_{\tau}\nabla_{(\alpha,\kappa)} \ell(\alpha,\kappa,\tau) &=& \sum_{u_i=1} \begin{pmatrix} x_i \\ s_i \end{pmatrix} y_i + \sum_{u_i=0} \begin{pmatrix} x_i \\ s_i \end{pmatrix} y_i D\left(\tau y_i - x_i^{\top}\alpha - s_i^{\top} \kappa \right),\\
  \nabla^2_{\tau}\ell(\alpha,\kappa,\tau)  &=&  -\dfrac{n_o}{\tau^2} - \sum_{u_i=1} y_i^2  - \sum_{u_i=0} y_i^2 D\left(\tau y_i - x_i^{\top}\alpha - s_i^{\top} \kappa \right).
\end{eqnarray*}
where $D\left(z \right) = r\left(-z \right)^2 -z r\left(-z \right) \in (0,1)$ can be interpreted as the proportion of information provided by an observation that was censored $z$ standard deviations after the mean survival.
In fact, as $D\left(z \right)$ is an increasing function of $z$, this implies that increasing $z$, {\it e.g.}~by increasing follow-up, increases the Hessian's curvature and therefore inferential precision.
This gain is largest when $z \in (-2,1)$ and gradually plateaus afterwards.
Observations with small $z$ provide essentially no information and, since the Bayes factor rate to detect signal is exponential in the number of complete observations \citep{dawid:1999}, censoring causes an exponential drop in power. This intuition is made precise in Section \ref{ssec:bfrates}.

\begin{figure}[h!]
\begin{center}
\begin{tabular}{c}
\includegraphics[width=0.5\textwidth]{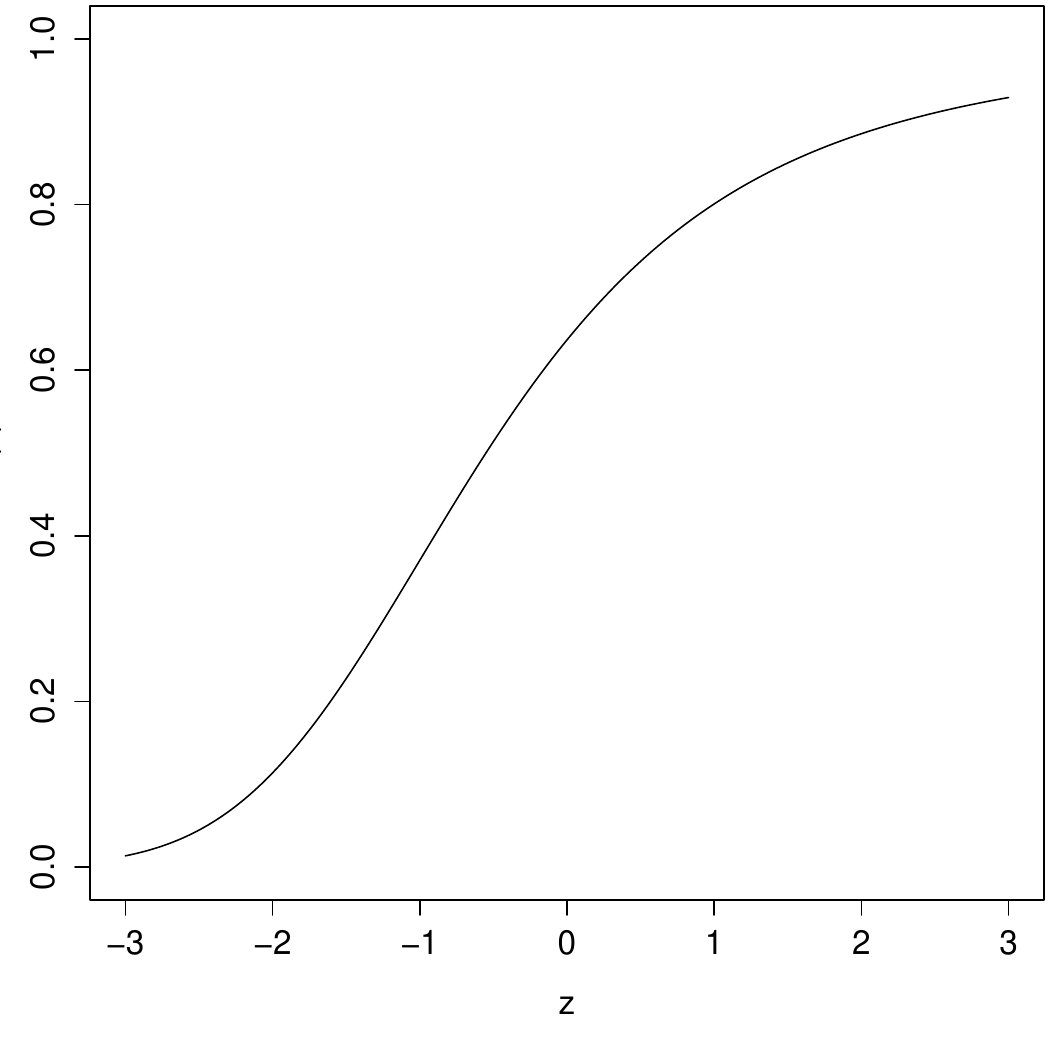}
\end{tabular}
\end{center}
\caption{Proportion of information $D(z)$ when censoring $z$ standard deviations after the mean survival.}
\label{fig:DZ}
\end{figure}

When computing Laplace approximations it is recommendable that the parameter space is unbounded,
to this end we re-parameterize $\rho= \log(\tau)$.
The log-likelihood, its gradient and hessian with respect to $(\alpha,\kappa)$ are as given in \eqref{logLikeAFT2} simply replacing $\tau$ by $e^\rho$.
Its first and second derivatives with respect to $\rho$ are
\begin{eqnarray*}
  \nabla_\rho \ell(\alpha, \kappa, \rho)&=& n_0 - e^\rho \sum_{u_i=1} y_i (e^\rho y_i - x_i^\top \alpha - s_i^\top \kappa) - e^\rho \sum_{u_i=0} y_i r(x_i^\top \alpha + s_i^\top \kappa - e^\rho y_i) \\
      \nabla_\rho \nabla_{(\alpha,\kappa)} \ell(\alpha,\kappa,\rho) &=& e^\rho \sum_{u_i=1} \begin{pmatrix} x_i \\ s_i \end{pmatrix} y_i + e^\rho \sum_{u_i=0} \begin{pmatrix} x_i \\ s_i \end{pmatrix} y_i D\left(e^\rho y_i - x_i^{\top}\alpha - s_i^{\top} \kappa \right) \\
  \nabla^2_\rho \ell(\alpha, \kappa, \rho)&=& \nabla_\rho \ell(\alpha, \kappa, \rho) - n_0
  - e^{2\rho} \left[ \sum_{u_i=1} y_i^2 + \sum_{u_i=0} y_i^2 D(e^\rho y_i - x_i^\top \alpha - s_i^\top \kappa )  \right]
\end{eqnarray*}

The priors on $(\alpha,\kappa,\rho)$ implied by $\pi_M$ and $\pi_E$ are

\begin{eqnarray*}
\tilde{\pi}_1(\alpha_{\gamma},\kappa_\gamma,\rho) &=& \prod_{\gamma_j\geq 1} \frac{\alpha_j^2}{g_M} N(\alpha_j; 0, g_M)
\prod_{\gamma_j=2} N(\kappa_j; 0, g_S n (S_j^\top S_j)^{-1}) \\
&\times &\mbox{IG}(e^{-2\rho}; a_\tau/2, b_\tau/2) 2e^{-2\rho}.\\
\tilde{\pi}_2(\alpha_{\gamma},\kappa_\gamma,\rho) &=& \prod_{\gamma_j\geq 1} e^{\sqrt{2} - g_E/\alpha_j^2} N(\alpha_j; 0, g_E)
\prod_{\gamma_j=2} N(\kappa_j; 0, g_S n (S_j^\top S_j)^{-1}) \\
&\times & \mbox{IG}(e^{-2\rho}; a_\tau/2, b_\tau/2) 2e^{-2\rho}.
\end{eqnarray*}

The gradients of $\log \tilde{\pi}_1$ and $\log \tilde{\pi}_2$ with respect to $(\alpha_\gamma,\kappa_\gamma)$ are the same expressions given for
$\log \pi_1$ and $\log \pi_2$ above.
The gradient and hessian with respect to $\rho$ are
\begin{eqnarray*}
\nabla_\rho \log \tilde{\pi}_l(\alpha_{\gamma},\kappa_\gamma,\rho) &=& a_\tau - b_\tau e^{2\rho} \\
\nabla^2_\rho \log \tilde{\pi}_l(\alpha_{\gamma},\kappa_\gamma,\rho) &=& -2 b_\tau e^{2\rho}
\end{eqnarray*}
for $l=1,2$, and clearly $\nabla_\rho \nabla_\alpha= \nabla_\rho \nabla_\kappa=0$.

\section{Prior elicitation}
\label{app:prior_elicitation}

We discuss a range of default prior parameter values $(g_L,g_M,g_E,g_S)$ that we view as reasonable for most applications.
We focus on AFT models where the relevance of a covariate is based on median survival,
but the discussion applies to Cox models where one measures relevance via log-hazard ratios.
Basic considerations give a fairly narrow range of $(g_L,g_M,g_E,g_S)$ that we deem reasonable in applications.
Without loss of generality assume that continuous covariates have zero mean and unit variance.
Then, $e^{\beta_j}$ is the increase in median survival for a unit standard deviation increase in $x_j$ (for continuous covariates) or between two categories of a discrete covariate.
Suppose that a small change in survival time, say $<15\%$ (\textit{i.e.}~$e^{\vert\beta_j\vert}<1.15$), is practically irrelevant.
We set non-local prior dispersions $(g_M,g_E)$ that assign low prior probability to this range, specifically
\begin{eqnarray}
  P( \vert \beta_j \vert > \log(t) ) = 0.99,
  \label{eq:prior_g}
\end{eqnarray}
where $t$ is a user-defined practical significance threshold.
We consider $t=1.1,1.15,1.2$ on the grounds that smaller effects are rarely viewed as relevant. 
The probability in \eqref{eq:prior_g} is under the marginal priors $\pi_M(\beta_j)$, $\pi_E(\beta_j)$ and $\pi_L(\beta_j)$,
which depend on $g_M$, $g_E$, $g_L$ and on $(a_\tau,b_\tau)$ (Supplementary Material, Section \ref{app:marginalprior}).
By default we set $a_\tau=b_\tau=3$ 
so that the marginal prior variance is finite.
Then, for $t=1.1,1.15,1.2$ one obtains $g_M=0.089,0.192,0.326$ and $g_E=0.042,0.091,0.154$ respectively, see Figure \ref{fig:MOMpriors}, our recommended defaults being $g_M=0.192$ and $g_E=0.091$.
Regarding $\pi_L$, we adopt the classical unit information prior default $g_L=1$ \citep{schwarz:1978}.

In probit regression relevance is measured by a covariate's effect on the success probability, which leads to different defaults.
Suppose that covariates that alter such probability by less than 0.05 are practically irrelevant, then one obtains $g_M=0.139$ and $g_E=0.048$ \citep{RTJ13}.
As our main focus is survival, for details we refer the reader to \cite{RTJ13}.

Finally, consider $g_S$, the prior dispersion for non-linear effects.
Mimicking the unit information prior would lead to $g_S=1$ but we view this choice as inappropriate,
since it implies the belief that the predictive power of each covariate grows unboundedly with the basis dimension $r$.
To see this, consider $\omega=\delta_j^\top S_j^\top S_j^\top \delta_j/(n\sigma^2)$, the ratio of the variance explained by $S_j$ relative to the error variance.
The marginal prior on $\omega/ g_S$ induced by $\pi_L$ , $\pi_M$ and  $\pi_E$ is
a chi-squared distribution with $r$ degrees of freedom, implying that $\mathbb{E}(\omega)= g_S r$.
By default we set $g_S=1/r$, so that $\mathbb{E}(\omega)=1$ and in particular stays bounded as a function of $(n,r)$.

\clearpage
\section{Within-model computations}
\label{app:withinmodel_comput}

We calculate marginal likelihoods via Laplace approximations,
see \cite{kass:1990} for classical results on their asymptotic validity and \cite{RT17} for a recent discussion in a Bayesian model selection setting.
That is, we approximate the numerator in \eqref{eq:modelposterior} via
\begin{align}
\widehat{p}(y \mid \gamma)= \exp\{\ell(\tilde{\eta}_\gamma) + \log \pi(\tilde{\eta}_\gamma) \}
\frac{(2\pi)^{d_\gamma/2}}{\left|H(\tilde{\eta}_\gamma) + \nabla^2 \log\pi(\tilde{\eta}_\gamma) \right|^{1/2}},
\end{align}
where $\tilde{\eta}= \arg\max_{\eta_\gamma} \ell({\eta}_\gamma) + \log \pi({\eta}_\gamma)$ is the maximum a posteriori under prior $\pi(\eta_\gamma)$.
The expressions for the log-likelihood and log-prior hessians $H$ and $\nabla^2 \log \pi$ are in Sections \ref{ssec:likelihood} and \ref{app:prior_reparam_logtau} (respectively).
Standard optimization can be employed to obtain $\tilde{\eta}_\gamma$, \textit{e.g.} 
Newton's algorithm \citep{therneau:2000}.
Newton's algorithm is very efficient when the number of parameters $d_\gamma$ is small, but requires matrix inversions that do not scale well to larger $d_\gamma$.
In contrast the Coordinate Descent Algorithm (CDA) typically requires more iterations but, since its per-iteration cost is linear in $d_\gamma$,
requires a lesser total computation time for large $d_\gamma$ \citep{simon:2011,breheny:2011}. 
For this reason we developed both a Newton algorithm and a CDA (Section \ref{app:cda}),
and use the former for small models ($d_\gamma \leq 15$) and CDA for larger models.

Finally, we discuss initializing the optimization algorithm and computing $\log \Phi$ and Mill's ratio featuring in the log-likelihood and its derivatives.
Addressing these issues can significantly increase speed, \textit{e.g.}~for the TGFB data in Section \ref{ssec:ftbrs} with $d_\gamma=868$
they reduced the cost of 1,000 Gibbs iterations from $>$4 hours to 38 seconds.
These times are under a single-core desktop running Ubuntu 18.04, Intel i7 3.40GHz processor and 32Gb RAM.
Let $\gamma^{(b)}$ be the model visited at iteration $b$ of some model search algorithm.
We consider two possible initial values $\widehat{\eta}_{\gamma}^{(0)}$ for the optimization algorithm under a new model $\gamma$:
the value $\widehat{\eta}_0$ maximizing a quadratic expansion of the log-posterior at $(\alpha,\kappa,\tau)=0$,
and the optimal value in the previously visited model $\tilde{\eta}_{\gamma^{(b-1)}}$.
If the log-posterior at $\widehat{\eta}_0$ is larger than when evaluated at $\tilde{\eta}_{\gamma^{(b-1)}}$,
we set $\widehat{\eta}_{\gamma}^{(0)}=\widehat{\eta}_0$, else we set $\widehat{\eta}_\gamma^{(0)}=\tilde{\eta}_{\gamma^{(b-1)}}$.
Since $\gamma$ and $\gamma^{(b)}$ differ by $\leq r$ variables, the optimization algorithm typically converges in a few iterations.

\section{Approximation to the Normal log distribution function and derivatives}
\label{app:napp}

After comparing several approximations we found that one may combine
the Taylor series and asymptotic expansions in \cite{abramowitz:1965} (page 932, Expressions 26.2.16 and 26.2.12) for $\Phi(z)$
with an optimized Laplace continued fraction in \cite{lee_chuin:1992} (Expression (5.3)) for $r(z)$ as $z \rightarrow -\infty$. 
Specifically this results in approximating $\Phi(z)$ with
\begin{align}
  \widehat{\Phi}(z)= \begin{cases}
    \phi(z) (-1/z + 1/z^3 - 3/z^5)^{-1} & \mbox{ if } z \leq -3.4470887
    \\
    Q(z)   & \mbox{ if } z \in (-3.4470887,0]
    \\
    1-Q(z) & \mbox{ if } z \in (0,3.4470887]
    \\
    1- \phi(z) (1/z - 1/z^3 + 3/z^5)^{-1} & \mbox{ if } z > 3.4470887
    \end{cases}
\label{eq:apnorm}
\end{align}
where
$Q(z)= \phi(z) (a_1 t + a_2 t^2 + a_3 t^3)^{-1}$,
$a_1= 0.4361836$, $a_2= -0.1201676$, $a_3= 0.9372980$
and $t= (1 + \mbox{sign}(z) 0.33267 z)^{-1}$.
We approximate $r(z)$ with the continued fraction
\begin{align}
  \widehat{r}(z)=
  -z + \frac{1}{-z+} \frac{2}{-z+} \frac{3}{-z+} \frac{4}{-z+} \frac{5}{-z+} \frac{11.5}{-z+4.890096}
\label{eq:ainvmillsnorm}
\end{align}
if $z \leq -1.756506$ and by $\widehat{r}(z)=\phi(z) / \Phi(z)$ if $z > -1.756506$.
The cutoffs defining the pieces in $\widehat{\Phi}(z)$ and $\widehat{r}(z)$ were set such that both functions are continuous.
$\widehat{r}(z)$ has maximum absolute and relative errors $<0.000185$ and $<0.000102$ respectively,
and for $\widehat{D}(z)=\widehat{r}(-z)^2 - z \widehat{r}(-z)$ they are $<0.000424$ and $<0.000505$.

\section{Coordinate Descent Algorithm}
\label{app:cda}
In order to calculate the MLE and MAP, we propose the following Coordinate Descent algorithm, which is formulated for a generic function $f:{\mathbb R}^p \times  {\mathbb R}_+ \to {\mathbb R}$, which can be either the log-likelihood or the log-posterior.

\begin{algorithm}
\caption{Coordinate Descent Algorithm}\label{alg:CDA}
\begin{algorithmic}[1]
\STATE Select the initial points $\eta^{(0)} = (\alpha^{(0)},\tau^{(0)})$, and the number of iterations $M$.
\STATE For $j = 1,\dots, p$
\begin{enumerate}[label=\bfseries S\arabic*.]
\item Define
\begin{eqnarray*}
\tilde{f}(a) = f(\alpha_1^{(0)},\dots,\alpha_{j-1}^{(0)},a,\alpha_{j+1}^{(0)},\dots,\alpha_p^{(0)},\tau^{(0)}).
\end{eqnarray*}
\item Define $a_{new} = \alpha_j^{(0)} - \tilde{f}'(\alpha_j^{(0)})/\tilde{f}''(\alpha_j^{(0)})$.
\item If $\tilde{f}(a_{new}) > \tilde{f}(\alpha_j^{(0)})$ then update $\alpha_j^{(1)} = a_{new}$.
\item If $\tilde{f}(a_{new}) \leq \tilde{f}(\alpha_j^{(0)})$ reduce the step size $a_{new}= \alpha_j^{(0)} - c  (\tilde{f}'(\alpha_j^{(0)}) / \tilde{f}''(\alpha_j^{(0)}))$, for some $c\in(0,1)$, until $\tilde{f}(a_{new}) > \tilde{f}(\alpha_j^{(0)})$, and update $\alpha_j^{(1)} = a_{new}$.
\end{enumerate}
\STATE Define
\begin{eqnarray*}
\tilde{f}(t) = f(\alpha^{(1)},t).
\end{eqnarray*}
\begin{enumerate}[label=\bfseries T\arabic*.]
\item Define $\tau_{new} = \tau^{(0)} - \tilde{f}'(\tau^{(0)})/\tilde{f}''(\tau^{(0)})$.
\item If $\tilde{f}(\tau_{new}) > \tilde{f}(\tau^{(0)})$ then update $\tau^{(1)} = \tau_{new}$.
\item If $\tilde{f}(\tau_{new}) \leq \tilde{f}(\tau^{(0)})$ reduce the step size $\tau_{new}= \tau^{(0)} - c  (\tilde{f}'(\tau^{(0)}) / \tilde{f}''(\tau^{(0)}))$, for some $c\in(0,1)$, until $\tilde{f}(\tau_{new}) > \tilde{f}(\tau^{(0)})$, and update $\tau^{(1)} = \tau_{new}$.
\end{enumerate}

\STATE If $m=M$ or $f(\alpha^{(m)}, \tau^{(m)}) - f(\alpha^{(m-1)} \tau^{(m-1)}) < c$, then stop. Otherwise set $m=m+1$ and go back to 2.
\end{algorithmic}
\end{algorithm}
The stopping criterion is either reaching the maximum number of iterations $M$ or (hopefully) converging earlier than that. In practice, this is diagnosed by the increase in the log-likelihood (log-posterior) being smaller than $c$, for some small $c$ (say $c=0.01$). This can save massive time, and is a way to diagnose convergence (in contrast to stopping after $M$ iterations, when one may still not have converged).

\clearpage
\section{Proofs of Asymptotic results}
\label{sec:proof_asymp}
%
%
%

%

We shall assume the following conditions.

\begin{enumerate}[label=\bfseries A\arabic*.]
\item \color{black} The parameter space is $\Gamma_{\gamma}= {\mathbb R}^{p_\gamma + r s_\gamma}\times{\mathbb R}^+$. \color{black}

\item There is some $\tilde{n}$ such that $(X_{o,\gamma},S_{o,\gamma})^{\top} (X_{o,\gamma},S_{o,\gamma})$ is strictly positive definite almost surely for all $n>\tilde{n}$.

\item \color{black} There exists a maximum $\eta_\gamma^* \in \Gamma_\gamma$ of $M(\eta_\gamma)=E_{F_0}[m(\eta_\gamma)]$ such that $M(\eta_\gamma^*) < \infty$,
and $E_{F_0}[|m(\eta_\gamma)|] < \infty$ for all $\eta_\gamma \in \Gamma_\gamma$.  
Further, there exists a maximum $\widehat{\eta}_\gamma \in \Gamma_\gamma$ of $M_n(\eta_\gamma)$ with probability 1, as $n \rightarrow \infty$.
\color{black}

%

\item There exists a neighborhood ${\mathcal B}_{\eta_{\gamma}^*}$ of  $\eta_{\gamma}^*$ such that, for any $\eta_{\gamma} \in {\mathcal B}_{\eta_{\gamma}^*}$,
\begin{eqnarray*}
{\mathbb E}_{F_0} \left[\sup_{\eta_\gamma\in {\mathcal B}_{\eta_\gamma^*}} \vert\vert \nabla   m(\eta_\gamma)\vert\vert^2 \right] &<& \infty.
\end{eqnarray*}

\item The entries of the second derivative matrix $ \nabla^2_{\eta_{\gamma}}M(\eta_\gamma^*) $ are finite. 
 
\end{enumerate}

Briefly, A2 is a minimal condition that the design matrix for uncensored observations has full-rank, which ensures that the log-likelihood is strictly concave with respect to the regression coefficients. 
A3-A5 are standard in asymptotic studies.
\color{black} A3 assumes the existence of the maximum likelihood estimator $\widehat{\eta}_\gamma$ (for large enough $n$) and of the maximizer $\eta_\gamma^*$ of the expectation of the log-likelihood under $F_0$, which implies that neither $\widehat{\eta}_\gamma$ nor $\eta_\gamma^*$ occur at the boundary of $\Gamma_\gamma$ and, by concavity, both $\widehat{\eta}_\gamma$ and $\eta_\gamma^*$ are unique. That is, these maxima do not occur at an infinite value for the regression coefficients nor at the error variance being 0 or infinite. Such values correspond to degenerate cases, e.g. where one attains perfect predictions, and we exclude them from consideration.
Further, the requirement that $E_{F_0}(|m(\eta_\gamma)|)<\infty$ in A3
\color{black}
as well as A4-A5 can be seen as conditions on the tails of the data-generating distribution $F_0$ of the survival and censoring process. For example, notice that the term $\vert \log \Phi(t) \vert$ in the function $m()$ is decreasing in $t$, and that $ \vert \log\Phi(t) \vert = \lO(\vert t\vert^2) $ as $t\to{-\infty}$, hence A3 basically requires a finite squared error when predicting the censoring time $c_1$ with an arbitrary $(\alpha_\gamma,\kappa_\gamma)$. These conditions if survival and censoring times are bounded, as typical in applied research.

\subsection{Proof of Proposition \ref{prop:Consistency}}
\label{app:proof_consistency}


The proof is based on showing that under conditions A1-A3, the assumptions in the consistency Theorem 5.7 from \cite{vandervaart:1998} are satisfied. This requires appealing to the concavity properties of the log likelihood function discussed in the   main paper.

Let $M_n(\eta_\gamma) = n^{-1}\ell(\eta_\gamma)$ be the average log-likelihood evaluated at $\eta_\gamma$. 
The assumptions that data are generated \emph{i.i.d.} from $F_0$ \color{black} and that ${\mathbb E}_{F_0} [|m(\eta_\gamma)|] < \infty$ \color{black} imply that, by law of large numbers, 
$M_n(\eta_\gamma) \stackrel{P}{\rightarrow} M(\eta_\gamma)$ for each $\eta_\gamma\in\Gamma_\gamma$, where
$M(\eta_\gamma) = $ 
\begin{align}
& P_{F_0}(u_1=1) {\mathbb E}_{F_0} \left[ \log(\tau) -\dfrac{1}{2}\log(2\pi) - \dfrac{1}{2} \left(\tau \log(o_1)-x_1^{\top}\alpha_\gamma -s_1^{\top} \kappa_\gamma \right)^2 \mid u_1=1 \right]
 \nonumber \\
& + P_{F_0}(u_1=0)
 {\mathbb E}_{F_0} \left[ \log\Phi\left(x_1^{\top}\alpha_\gamma + s_1^{\top} \kappa_\gamma - \tau \log(c_1)\right) \mid u_1=0 \right]
\label{eq:expected_logl_general}
\end{align}
is the expectation of $M_n(\eta_\gamma)$ under the data-generating $F_0$.

The aim is to first show that $M_n(\eta_\gamma)$ converges to its expected value $M(\eta_\gamma)$ uniformly in $\eta_\gamma$, and then show that this implies that $\widehat{\eta}_\gamma \stackrel{P}{\longrightarrow} \eta_\gamma^*$, where recall that $\widehat{\eta}_\gamma$ is the maximum of $M_n(\eta_\gamma)$ and $\eta_\gamma^*$ that of $M(\eta_\gamma)$.
To see that $M_n(\eta_\gamma)$ converges to $M(\eta_\gamma)$, uniformly in $\eta_\gamma$,
we note that under Conditions A1-A2 and by the results in \cite{B81}, $M_n(\eta_\gamma)$ is a sequence of concave functions in $\eta_\gamma$.
Then, \color{black} recalling that $\Gamma_\gamma$ is an open convex set by assumption, \color{black}
the convexity lemma in \cite{pollard:1991} gives that
\begin{eqnarray}\label{SC1}
\sup_{\eta_\gamma \in K} \left\vert M_n(\eta_\gamma) - M(\eta_\gamma)\right\vert \stackrel{P}{\longrightarrow} 0,
\end{eqnarray}
for each compact set $K\subseteq\Gamma_\gamma$,
\color{black} and that $M(\eta_\gamma)$ is concave in $\Gamma_\gamma$. \color{black}

\color{black} Assumption A3 on the existence of the maxima $\eta_\gamma^*$ and $\widehat{\eta}_\gamma$ implies that neither occurs on the boundary of $\Gamma_\gamma$, hence $\eta_\gamma^* \in K$ and $\widehat{\eta}_\gamma \in K$ (with probability 1, as $n$ grows) for some compact set $K \subseteq \Gamma_\gamma$ and, by concavity, $\eta_\gamma^*$ and $\widehat{\eta}_\gamma$ are unique. \color{black}
That is, for a distance measure $d()$ and every $\varepsilon>0$ we have
\begin{eqnarray}\label{SC2}
\sup_{d(\eta_\gamma^*,\eta_\gamma) \geq \varepsilon } M(\eta_\gamma) <M(\eta_\gamma^*).
\end{eqnarray}
The consistency result $\widehat{\eta}_\gamma \stackrel{P}{\longrightarrow} \eta_\gamma^*$ follows directly from \eqref{SC1} and \eqref{SC2} together with Theorem 5.7 from \cite{vandervaart:1998}.

\subsection{Proof of Proposition \ref{prop:AsympNorm}}
\label{app:proof_asympnorm}

The proof is based on applying Theorem 5.23 in \cite{vandervaart:1998}.
The conditions in that theorem require MLE consistency, which we already proved in Proposition \ref{prop:Consistency},
showing that the expected log-likelihood under $F_0$ has a non-singular hessian at the unique maximizer $\eta_\gamma^*$,
and that log-likelihood increments in a neighbourhood of $\eta_\gamma^*$ have finite expectation under $F_0$.

To see the latter define,
\begin{eqnarray*}
m_{\eta_\gamma}(y_1,x_1,s_1) \equiv m(\eta_\gamma) &=& u_1\left[ \log(\tau) -\dfrac{1}{2}\log(2\pi) - \dfrac{1}{2} \left(\tau \log(o_1)-x_1^{\top}\alpha_\gamma -s_1^{\top} \kappa_\gamma \right)^2  \right]
 \nonumber \\
& + & (1-u_1)
 \left[ \log\Phi\left(x_1^{\top}\alpha_\gamma + s_1^{\top} \kappa_\gamma - \tau \log(c_1)\right) \right],
\end{eqnarray*}
let ${\mathcal B}_{\eta^*_{\gamma}}\subset \Gamma_\gamma$ be a neighbourhood of $\eta_\gamma^*$
and consider $\eta_1,\eta_2 \in {\mathcal B}_{\eta^*_\gamma}$.
We need to show that $\vert m_{\eta_1}(y_1,x_1,s_1) - m_{\eta_2}(y_1,x_1,s_1) \vert$ has finite expectation under $F_0$.
Using the mean value theorem and the Cauchy-Schwarz inequality it follows that, with probability 1,
\begin{eqnarray*}
\vert m_{\eta_1}(y_1,x_1,s_1) - m_{\eta_2}(y_1,x_1,s_1) \vert &=& \vert \nabla m_{\eta_c}(y_1,x_1,s_1)^{\top} (\eta_1-\eta_2)\vert\\
&\leq& \vert \vert \nabla m_{\eta_c}(y_1,x_1,s_1) \vert \vert \cdot \vert \vert \eta_1-\eta_2 \vert \vert
\leq K(y_1,x_1,s_1) \cdot \vert \vert \eta_1-\eta_2 \vert \vert,
\end{eqnarray*}
where $\nabla m_{\eta_\gamma}(y_1,x_1,s_1)$ is the gradient of $m_{\eta_\gamma}(y_1,x_1,s_1)$,
$\eta_c = (1-c)\eta_1+c\eta_2$ for some $c\in(0,1)$
and $K(y_1,x_1,s_1)= \sup_{{\mathcal B}_{\eta_\gamma^*}} \vert\vert \nabla m_{\eta_\gamma}(y_1,x_1,s_1) \vert\vert$.
From assumption A4 it follows that,
\begin{eqnarray*}
\int K(y_1,x_1,s_1)^2 dF_0(o_1,z_1,c_1) < \infty,
\end{eqnarray*}
where $z_1 = (x_1^{\top},s_1^{\top})^{\top}$.

We now show that the Hessian of the expected log-likelihood $M(\eta_\gamma)$ in \eqref{eq:expected_logl_general} is non-singular at $\eta_\gamma^*$.
For ease of notation let $\rho = P_{F_0}(u_1=1)$. Then
\begin{eqnarray*}
M(\eta_{\gamma}) &=& \rho \log(\tau) -\dfrac{\rho}{2}\log(2\pi) - \dfrac{\rho}{2} \int  \left(\tau \log(o_1)-x_1^{\top}\alpha_\gamma - s_1^{\top} \kappa_\gamma\right)^2 dF_0(o_1,z_1\mid u_1=1)  \\
&+&  (1-\rho) \int  \log\Phi\left(x_1^{\top}\alpha_\gamma + s_1^{\top} \kappa_\gamma - \tau \log(c_1)\right) dF_0(c_1,z_1 \mid u_1=0).
\end{eqnarray*}
To obtain the gradient of $M(\eta_{\gamma})$, note that under Conditions A3-A4 we can apply Leibniz's integral rule to differentiate under the integral sign, and hence
\begin{eqnarray*}
  \nabla_{(\alpha_\gamma,\kappa_\gamma)} M(\eta_{\gamma}) &=& \rho \int \left(\tau \log(o_1)-x_1^{\top}\alpha_\gamma - s_1^{\top} \kappa_\gamma\right) \begin{pmatrix} x_1 \\ s_1 \end{pmatrix} dF_0(o_1,z_1\mid u_1=1)  \\
&+&  (1-\rho) \int   r\left(x_1^{\top}\alpha_\gamma + s_1^{\top} \kappa_\gamma - \tau \log(c_1)\right) \begin{pmatrix} x_1 \\ s_1 \end{pmatrix}  dF_0(c_1,z_1 \mid  u_1=0),\\
  \nabla_{\tau} M(\eta_{\gamma}) &=&  \dfrac{\rho}{\tau}  -  \rho \int  \left(\tau \log(o_1)-x_1^{\top}\alpha_\gamma - s_1^{\top} \kappa_\gamma\right)\log(o_1) dF_0(o_1,z_1\mid u_1=1)  \\
&-& (1-\rho) \int   r\left(x_1^{\top}\alpha_\gamma + s_1^{\top} \kappa_\gamma - \tau \log(c_1)\right)\log(c_1)  dF_0(c_1,z_1 \mid u_1=0),
\end{eqnarray*}
Similarly, the entries of the Hessian matrix are
\begin{eqnarray*}
  \nabla^2_{(\alpha_\gamma,\kappa_\gamma)} M(\eta_{\gamma}) &=& -\rho \int  \begin{pmatrix} x_1 \\ s_1 \end{pmatrix}\begin{pmatrix} x_1 \\ s_1 \end{pmatrix}^{\top}    dF_0(z_1 \mid u_1 = 1)  \\
&+&  (1-\rho) \int  D\left(\tau \log(c_1) - x_1^{\top}\alpha_\gamma - s_1^{\top} \kappa_\gamma \right) \begin{pmatrix} x_1 \\ s_1 \end{pmatrix} \begin{pmatrix} x_1 \\ s_1 \end{pmatrix}^{\top}  dF_0(c_1,z_1 \mid u_1=0), \\
    \nabla_{\tau} \nabla_{(\alpha_\gamma,\kappa_\gamma)} M(\eta_{\gamma}) &=& \rho \int \log(o_1) \begin{pmatrix} x_1 \\ s_1 \end{pmatrix} dF_0(o_1,z_1 \mid u_1=1) \\
&+&  (1-\rho) \int   D\left(\tau \log(c_1) - x_1^{\top}\alpha_\gamma - s_1^{\top} \kappa_\gamma\right) \log(c_1) \begin{pmatrix} x_1 \\ s_1 \end{pmatrix}  dF_0(c_1,z_1 \mid u_1=0),\\
  \nabla^2_{\tau} M(\eta_{\gamma}) &=&  -\dfrac{\rho}{\tau^2}  -  \rho \int  \log(o_1)^2 dF_0(o_1\mid z_1, u_1=1) dF_0(z_1 \mid u_1 = 1) \\
&-&  (1-\rho) \int  D\left(\tau \log(c_1) - x_1^{\top}\alpha_\gamma - s_1^{\top} \kappa_\gamma \right)\log(c_1)^2  dF_0(c_1,z_1 \mid u_1=0).
\end{eqnarray*}
The finiteness of $M(\eta_{\gamma})$, as well as the finiteness of the entries of its gradient and Hessian matrix follows by Conditions A3-A5. 
From Proposition \ref{prop:Consistency}, we have that $M(\eta_{\gamma})$ is concave and, consequently, the Hessian
\begin{eqnarray*}
V_{\eta_{\gamma}} =\left( \begin{matrix}
   \nabla^2_{(\alpha_\gamma,\kappa_\gamma)} M(\eta_{\gamma}) & \nabla_{\tau} \nabla_{(\alpha_\gamma,\kappa_\gamma)} M(\eta_{\gamma})\\
   \nabla_{\tau} \nabla_{(\alpha_\gamma,\kappa_\gamma)} M(\eta_{\gamma}) &   \nabla^2_{\tau} M(\eta_{\gamma})
 \end{matrix}\right),
\end{eqnarray*}
is non-singular at $\eta_{\gamma}^*$. Thus, the asymptotic normality follows by Theorem 5.23 from \cite{vandervaart:1998} together with the consistency results in Proposition \ref{prop:Consistency}.

\subsection{Proof of Proposition \ref{prop:BFRatesI}}
\label{app:proof_bfrates}

We aim to characterize the asymptotic behaviour of Laplace-approximated Bayes factors
\begin{align}
B_{\gamma \gamma^*}&=\frac{\widehat{p}(y \mid \gamma)}{\widehat{p}(y \mid \gamma^*)}=
(2 \pi)^{\frac{d_\gamma - d_{\gamma^*}}{2}} e^{T_1} T_2 T_3
\nonumber \\
T_1&= \ell(\tilde{\eta}_\gamma) -  \ell(\tilde{\eta}_{\gamma^*})
\nonumber \\
T_2&= \frac{\pi(\tilde{\eta}_\gamma \mid \gamma)}{\pi(\tilde{\eta}_{\gamma^*} \mid \gamma^*)}
\nonumber \\
T_3&= \frac{ \left| H(\tilde{\eta}_{\gamma^*}) + \nabla^2 \log \pi(\tilde{\eta}_{\gamma^*}) \right|^{\frac{1}{2}}}{\left| H(\tilde{\eta}_{\gamma}) + \nabla^2 \log \pi(\tilde{\eta}_{\gamma^*}) \right|^{\frac{1}{2}}},
\label{eq:BFLap}
\end{align}
where $y_i=\min\{o_i,c_i\}$ and the data-generating truth $(o_i,c_i,z_i) \sim F_0$ satisfies A2-A4.
$H(\tilde{\eta}_{\gamma^*})$ and $H(\tilde{\eta}_\gamma)$ denote the log-likelihood Hessians under models $\gamma^*$ and $\gamma$ (respectively),
evaluated at the posterior modes $\tilde{\eta}_{\gamma^*}$ and $\tilde{\eta}_\gamma$.

The proof strategy is to characterize each term in \eqref{eq:BFLap} individually, then combine the results.
First, $(2 \pi)^{(d_\gamma - d_{\gamma^*})/2}$ is a constant since $d_\gamma$ and $d_{\gamma^*}$ are fixed by assumption. Now, note that 
\begin{eqnarray}\label{eq:ratiodet}
T_3=
  n^{\frac{d_{\gamma^*} - d_\gamma}{2}} \frac{\left\vert n^{-1} \left[ H(\tilde{\eta}_{\gamma^*}) + \nabla^2 \log\pi (\tilde{\eta}_{\gamma^*}) \right]\right\vert^{\frac{1}{2}} }{\left\vert n^{-1} \left[ H(\tilde{\eta}_\gamma) + \nabla^2 \log\pi (\tilde{\eta}_\gamma) \right] \right\vert^{\frac{1}{2}}}.
  \nonumber
\end{eqnarray}
By Proposition \ref{prop:Consistency} together with Proposition 2(i) from \cite{RT17}, we have that the posterior modes
$\tilde{\eta}_\gamma \stackrel{P}{\longrightarrow} \eta_\gamma^*$ and $\tilde{\eta}_{\gamma^*} \stackrel{P}{\longrightarrow} \eta_{\gamma^*}^*$
under the block-Zellner prior $\pi_L$, the MOM-Zellner prior $\pi_M$ and also under the eMOM-Zellner prior $\pi_E$.
Then, appealing to the continuous mapping theorem and the asymptotic Hessians $H(\eta^*_\gamma)$ and $H(\eta^*_{\gamma^*})$ being negative definite (Proposition \ref{prop:AsympNorm} and convexity lemma),
\begin{align}
\frac{\left\vert n^{-1} \left[ H(\tilde{\eta}_{\gamma^*}) + \nabla^2 \log\pi (\tilde{\eta}_{\gamma^*}) \right]\right\vert^{\frac{1}{2}} }{\left\vert n^{-1} \left[ H(\tilde{\eta}_\gamma) + \nabla^2 \log\pi (\tilde{\eta}_\gamma) \right] \right\vert^{\frac{1}{2}}}
\stackrel{P}{\longrightarrow}
\frac{\left| H(\eta^*_{\gamma^*}) \right|^{\frac{1}{2}}}{\left| H(\eta^*_\gamma) \right|^{\frac{1}{2}}}
\nonumber
\end{align}
where the right-hand side is fixed, since $\eta^*_{\gamma^*}$ and $\eta_\gamma^*$ are fixed by assumption.
Therefore
\begin{align}
T_3 n^{\frac{d_\gamma - d_{\gamma^*}}{2}}  \stackrel{P}{\longrightarrow}
\frac{\left| H(\eta^*_{\gamma^*}) \right|^{\frac{1}{2}}}{\left| H(\eta^*_\gamma) \right|^{\frac{1}{2}}} \in (0,\infty)
\label{eq:limit_hess}
\end{align}

It is worth noticing that although the asymptotic expression for the ratio of Hessian determinants is reminiscent of the case without censoring \citep{JR12,rossell:2018}, its behavior is different as here $H()$ is a weighted sum across uncensored and censored observations, and the latter features a discount factor driven by $D()$ displayed in Figure \ref{fig:DZ} (see Section \ref{app:proof_asympnorm}).

{
To characterize $T_1$ and $T_2$ in \eqref{eq:BFLap} we must consider separately the case where $\gamma^*\not\subset \gamma$ and the case where $\gamma^*\subset \gamma$.
It is useful to note that from the continuous mapping theorem,
$\pi(\tilde{\eta}_\gamma \mid \gamma) \stackrel{P}{\longrightarrow} \pi(\eta^*_\gamma \mid \gamma)$ for any $\gamma$ and any continuous prior $\pi$, hence
\begin{align}
T_2
\frac{\pi(\eta^*_{\gamma^*} \mid \gamma^*)}{\pi(\eta^*_\gamma \mid \gamma)} \stackrel{P}{\longrightarrow} 1
\nonumber
\end{align}
for any prior such that $\pi(\eta^*_\gamma \mid \gamma) \in (0,\infty)$, as is satisfied by $\pi_L$.
}

\begin{enumerate}[leftmargin=*,label=(\roman*)]


\item {\bf Case $\gamma^*\subset \gamma$}. In this case, we have $M(\eta_\gamma^*) - M(\eta_{\gamma^*}^*)=0$.
The idea is to first show that $T_1=\ell(\tilde{\eta}_\gamma) -  \ell(\tilde{\eta}_{\gamma^*})= \lOp(1)$ using standard theory on the likelihood ratio statistic.
From \cite{RT17} and the consistency of the maximum likelihood estimators, we get that  $T_1= \ell(\widehat{\eta}_\gamma) -  \ell(\widehat{\eta}_{\gamma^*}) + \lo_p(1)$. Let $U= V_{\eta_\gamma^*}$,  $W_{\gamma^*}= V_{\eta_\gamma^*}^{-1} {\mathbb E}[ \nabla m_{\eta_\gamma^*} \nabla m_{\eta_\gamma^*}^{\top}] V_{\eta_\gamma^*}^{-1}$. From Proposition \ref{prop:AsympNorm} we have that
\begin{align}
  \sqrt{n} (\widehat{\eta}_\gamma - \eta_\gamma^*) \stackrel{D}{\longrightarrow} N \left( 0,  W_{\gamma^*} \right)
  \Rightarrow
 z \stackrel{D}{\longrightarrow} N \left( 0,  I \right),
    \nonumber
\end{align}
where $z= \sqrt{n} W_{\gamma^*}^{-1/2} (\widehat{\eta}_\gamma - \eta_\gamma^*)$. Now, note that the Hessian matrix of the log-likelihood converges to a non-singular matrix by Proposition \ref{prop:AsympNorm}, and that $\sqrt{n} \left( \widehat{\eta}_\gamma -  \widehat{\eta}_{\gamma^*}\right)= \lOp(1)$ (with respect to the Euclidean norm)
by Propositions \ref{prop:Consistency} and \ref{prop:AsympNorm}. Then, we can expand the likelihood ratio as (see Chapter 16 of \citealp{vandervaart:1998})
\begin{eqnarray*}
2 [ \ell(\widehat{\eta}_\gamma) -  \ell(\widehat{\eta}_{\gamma^*}) ]&=&
  n \left( \widehat{\eta}_\gamma -  \widehat{\eta}_{\gamma^*}\right)^{\top} U^{-1} \left( \widehat{\eta}_\gamma -  \widehat{\eta}_{\gamma^*}\right)  +  \lo_p(1)\\
&=&  n \widehat{\eta}_{\gamma \setminus \gamma^*}^{\top} U^{-1}_{\gamma \setminus \gamma^*} \widehat{\eta}_{\gamma \setminus \gamma^*}  +  \lo_p(1)
=  z^\top W_{\gamma \setminus \gamma^*}^{1/2} U^{-1}_{\gamma \setminus \gamma^*} W_{\gamma \setminus \gamma^*}^{1/2} z +  \lo_p(1).
\end{eqnarray*}
If the model is correctly specified, then $W_{\gamma \setminus \gamma^*}^{1/2} U^{-1}_{\gamma \setminus \gamma^*} W_{\gamma \setminus \gamma^*}^{1/2}=I$,
obtaining $z^\top z \stackrel{D}{\longrightarrow} \chi^2_{d_\gamma - d_{\gamma^*}}$.
If the model is misspecified, then the right-hand side converges in distribution to
a linear combination of independent chi-square random variables \citep{baldessari:1967} and can be upper bounded by
$$z^\top W_{\gamma \setminus \gamma^*}^{1/2} U^{-1}_{\gamma \setminus \gamma^*} W_{\gamma \setminus \gamma^*}^{1/2} z \leq \lambda\left( W_{\gamma^*}^{1/2} U^{-1}_{\gamma^*} W_{\gamma^*}^{1/2} \right) z^\top z$$
where $\lambda()$ denotes the largest eigenvalue (and is a bounded constant under our assumptions)
and $z^\top z \stackrel{D}{\longrightarrow} \chi^2_{d_\gamma - d_{\gamma^*}}$. In summary,
\begin{align}
2 T_1 \leq  
\lambda\left( W_{\gamma^*}^{1/2} U^{-1}_{\gamma^*} W_{\gamma^*}^{1/2} \right) z^\top z
\stackrel{D}{\longrightarrow} \lambda\left( W_{\gamma^*}^{1/2} U^{-1}_{\gamma^*} W_{\gamma^*}^{1/2} \right) \chi^2_{d_\gamma - d_{\gamma^*}} = \lOp(1).
\label{eq:limit_lrt_spurious}
\end{align}

Finally, we characterize $T_2$ in \eqref{eq:BFLap}. The local prior density is
\begin{equation}\label{eq:localAFTN}
\pi_L(\alpha_\gamma, \kappa_\gamma \mid \tau^2) =  \prod_{\gamma_j\geq 1}N\left(\alpha_j; 0, g_L n/ (x_j^\top x_j)\right)
      \prod_{\gamma_j=2} N\left(\kappa_j; 0, g_S n (S_j^\top S_j)^{-1}\right),
\end{equation}
Since $\gamma^* \subset \gamma$, the optimal parameter values under $\gamma$ can be written as $\eta^*_{\gamma}= (\eta^*_{\gamma^*}, 0)$.
Hence 
$$
\frac{\pi_L(\eta^*_{\gamma^*}, 0 \mid \gamma)}{\pi_L(\eta^*_{\gamma^*} \mid \gamma^*)}= 
\prod_{j: \gamma_j > 0, \gamma^*_j=0}  \left(\dfrac{x_j^{\top} x_j}{2\pi g_L n}\right)^{1/2} \prod_{j : \gamma_j = 2, \gamma^*_j \neq 2} \left\vert \dfrac{S_j^{\top} S_j}{2 \pi g_S n}\right\vert^{1/2}.
$$
Since $x_j^{\top} x_j/n$ converges in probability to a finite and strictly positive constant and $S_j^{\top} S_j/n$ converges in probability to a positive-definite matrix,
\begin{equation}
T_2 \frac{\pi_L(\eta^*_{\gamma^*} \mid \gamma^*)}{\pi_L(\eta^*_{\gamma^*}, 0 \mid \gamma)} \stackrel{P}{\longrightarrow} 1
\Longrightarrow 
T_2 \left( g_L \right)^{(p_\gamma - p_{\gamma^*})/2} \left( g_S n \right)^{r(s_\gamma - s_{\gamma^*})/2} 
\stackrel{P}{\longrightarrow} c \in (0, \infty)
\label{eq:conv_priorratio_local}
\end{equation}

Combining \eqref{eq:BFLap}, \eqref{eq:limit_hess}, \eqref{eq:limit_lrt_spurious}, \eqref{eq:conv_priorratio_local}, and recalling that $d_\gamma= p_\gamma + r s_\gamma$, we obtain
\begin{align}
B_{\gamma \gamma^*}  = \lOp \left( (g_L n)^{\frac{p_{\gamma^*} - p_\gamma}{2}}  (g_S n)^{\frac{r(s_{\gamma^*} - s_\gamma)}{2}} \right),
\nonumber
\end{align}
as we wished to prove.

A similar reasoning applies to characterize $T_2$ for non-local priors. 
Consider first the pMOMZ prior $\pi_M$.
By using \cite{RT17}, Proposition 2(i), it follows that the posterior mode
$\tilde{\alpha}_{\gamma j}-\widehat{\alpha}_{\gamma j}=\lOp(1/n)$, $\tilde{\kappa}_{\gamma j}-\widehat{\kappa}_{\gamma j}=\lOp(1/n)$.
Hence for any coefficient $j$ in $\gamma \setminus \gamma^*$ we have that $\tilde{\alpha}_{\gamma j}=\lOp(n^{-1/2})$  and $\tilde{\kappa}_{\gamma j}=\lOp(n^{-1/2})$,
whereas for $j$ in $\gamma^*$ we have $\tilde{\alpha}_{\gamma j} \stackrel{P}{\longrightarrow} \alpha^*_{\gamma j}$
and $\tilde{\kappa}_{\gamma j} \stackrel{P}{\longrightarrow} \kappa^*_{\gamma j}$.
Consequently, the ratio of prior densities is
\begin{eqnarray*}
&&T_2=\frac{\pi_M(\tilde{\eta}_{\gamma} \mid \gamma )}{\pi_M(\tilde{\eta}_{\gamma^*} \mid \gamma^*)}
=
    \frac{\pi_L(\tilde{\kappa}_\gamma \mid \gamma)}{\pi_L(\tilde{\kappa}_{\gamma^*} \mid \gamma^*)}
   \prod_{j \in \gamma \setminus \gamma^*} \frac{\tilde{\alpha}_{\gamma j}^2 e^{-\frac{\tilde{\alpha}_{\gamma j}^2}{2g_M}}}{g_M^{3/2} \sqrt{2\pi}} 
\prod_{j \in \gamma^*} \frac{\tilde{\alpha}_{\gamma j}^2 e^{-\frac{\tilde{\alpha}_{\gamma j}^2}{2g_M}}}{\tilde{\alpha}_{\gamma^* j}^2 e^{-\frac{\tilde{\alpha}_{\gamma^* j}^2}{2g_M}}},
\end{eqnarray*}
where $e^{-\tilde{\alpha}_{\gamma j}^2/(2g_M)}$ and $e^{-\tilde{\alpha}_{\gamma^* j}^2/(2g_M)}$ converge in probability to finite constants $\in [1,\infty)$, for all $j$. Hence
\begin{eqnarray*}
T_2 &=& 
\left(
\left[ n g_M^{3/2}  \right]^{(p_{\gamma^*} - p_\gamma)}
 \left[ g_S \right]^{\frac{r(s_{\gamma^*} - s_\gamma)}{2}} 
\right) \times \lOp(1)
\end{eqnarray*}

Combining this expression with \eqref{eq:BFLap}, \eqref{eq:limit_hess} and \eqref{eq:limit_lrt_spurious}
\begin{align}
  B_{\gamma \gamma^*}=
  (g_M n)^{\frac{3}{2}(p_{\gamma^*}-p_\gamma )} (g_S n)^{\frac{r(s_{\gamma^*} - s_\gamma)}{2}} \lOp(1),
\nonumber
\end{align}
as we wished to prove.

Consider now the peMOMZ prior $\pi_E$.
From Proposition 2(i) in \cite{RT17} it follows that for any coefficient $j$ in $\gamma \setminus \gamma^*$
we have that $\tilde{\alpha}_{\gamma j}=\lOp(n^{-1/4})$, $\tilde{\kappa}_{\gamma j} - \widehat{\kappa}_{\gamma j}=\lOp(n^{-1/4})$,
and $e^{-g_E/\tilde{\alpha}_{\gamma j}^2}= \lOp(e^{-g_E \sqrt{n}})$,
whereas for any $j$ in $\gamma^*$ we have $\tilde{\alpha}_{\gamma j} \stackrel{P}{\longrightarrow} \alpha_{\gamma j}^*$
and $\tilde{\kappa}_{\gamma j} \stackrel{P}{\longrightarrow} \kappa_{\gamma j}^*$.
Consequently,
\begin{eqnarray*}
T_2= 
   \frac{\pi_L(\tilde{\kappa}_\gamma \mid \gamma)}{\pi_L(\tilde{\kappa}_{\gamma^*} \mid \gamma^*)}
   \prod_{j \in \gamma \setminus \gamma^*} \frac{e^{-\sqrt{2} - \frac{g_E}{\tilde{\alpha}_{\gamma j}^2}} e^{-\frac{\tilde{\alpha}_{\gamma j}^2}{2 g_E}}}{\sqrt{2 \pi g_E}}
\prod_{j \in \gamma^*} 
\frac{e^{- \frac{g_E}{\tilde{\alpha}_{\gamma j}^2}} e^{-\frac{\tilde{\alpha}_{\gamma j}^2}{2 g_E}}}{e^{- \frac{g_E}{\tilde{\alpha}_{\gamma^* j}^2}} e^{-\frac{\tilde{\alpha}_{\gamma^* j}^2}{2 g_E}}}
\end{eqnarray*}
where $e^{- \tilde{\alpha}_{\gamma j}^2 /(2 g_E)}$ and $e^{- \tilde{\alpha}_{\gamma^* j}^2 /(2 g_E)}$ converge in probability to a finite constant $\geq 1$ for all $j$. Hence
\begin{eqnarray*}
T_2 &=& 
g_E^{\frac{p_{\gamma^*} - p_\gamma}{2}}
e^{g_E \sum_{j \in \gamma^*} \frac{1}{(\alpha^*_{\gamma^* j})^2} - \frac{1}{(\alpha^*_{\gamma j})^2}}
 \left[ g_S \right]^{\frac{r(s_{\gamma^*} - s_\gamma)}{2}} 
\times \lOp( e^{-g_E \sqrt{n} (p_\gamma - p_{\gamma^*})} )
\\
&=& g_E^{\frac{p_{\gamma^*} - p_\gamma}{2}}
 g_S^{\frac{r(s_{\gamma^*} - s_\gamma)}{2}} 
\times \lOp( e^{-g_E \sqrt{n} (p_\gamma - p_{\gamma^*})} )
\end{eqnarray*}

Combining this expression with \eqref{eq:BFLap}, \eqref{eq:limit_hess} and \eqref{eq:limit_lrt_spurious} gives the desired result
\begin{align}
  B_{\gamma \gamma^*}
&=(\sqrt{g_E n} e^{g_E \sqrt{n}})^{p_{\gamma^*} - p_\gamma} (g_S n)^{\frac{r(s_{\gamma^*} - s_\gamma)}{2}} \lOp(1).
\nonumber
\end{align}


\item {\bf Case $\gamma^*\not\subset \gamma$}.
By Proposition \ref{prop:Consistency}, the law of large numbers and uniform convergence of the log-likelihood to its expectation $M(\eta)$ it follows that
\begin{eqnarray*}
\frac{1}{n} T_1
\stackrel{P}{\rightarrow} M({\eta}_\gamma^*) - M({\eta}_{\gamma^*}^*)=
P_{F_0}(u_1=1) a_1(\eta_\gamma^*, \eta_{\gamma^*}^*) + P_{F_0}(u_1=0) a_2(\eta_\gamma^*, \eta_{\gamma^*}^*) <0,
\end{eqnarray*}
where $a_1(\eta_\gamma^*, \eta_{\gamma^*}^*)=$
$$ {\mathbb E}_{F_0 \mid u_1=1}\left[ \log\left(\dfrac{\tau^*_{\gamma}}{\tau^*_{\gamma^*}}\right) - \dfrac{1}{2} \left(\tau^*_{\gamma} \log(o_1)-x_1^{\top}\alpha_\gamma^* - s_1^{\top}\kappa_\gamma^*\right)^2 + \dfrac{1}{2} \left(\tau^*_{\gamma^*} \log(o_1)-x_1^{\top}\alpha^*_{\gamma^*}-s_1^{\top}\kappa^*_{\gamma^*}\right)^2 \right]$$
arises from the contribution of uncensored observations
and $a_2(\eta_\gamma^*, \eta_{\gamma^*}^*)=$
$${\mathbb E}_{F_0 \mid u_1=0} \left\{\log \dfrac{\Phi\left(x_1^{\top}\alpha_\gamma^* + s_1^{\top}\kappa_\gamma^* - \tau_\gamma^* \log(c_1)\right)}{\Phi\left(x_1^{\top}\alpha_{\gamma^*}^* + s_1^{\top}\kappa_{\gamma^*}^* - \tau_{\gamma^*}^* \log(c_1)\right)}  \right\}$$
from censored observations.
If there were no censoring, then $\tau^*_\gamma$ has closed-form that can be plugged into $M(\eta_\gamma^*)$
to show that $a_1(\eta_\gamma^*, \eta_{\gamma^*}^*)= {\mathbb E}_{F_0 \mid u_1=1}\left[ \log\left(\tau^*_{\gamma} / \tau^*_{\gamma^*}\right) \right]$,
\textit{i.e.}~the exponential Bayes factor rate is driven by the ratio of KL-optimal error variances between $\gamma$ and $\gamma^*$.
The role of $a_2$ is interesting in that, even under censoring and model misspecification,
the Bayes factor rate depends on a loss function that depends only on the KL-optimal errors
$\tau^*_\gamma \log(o_1)-x_1^{\top}\alpha^*_\gamma - s_1^{\top}\kappa^*_\gamma$
and $\tau^*_{\gamma^*} \log(c_1)-x_1^{\top}\alpha^*_{\gamma^*}-s_1^{\top}\kappa^*_{\gamma^*}$.

{
The behaviour of $T_2$ is as follows. From Part (i),
\begin{eqnarray*}
\frac{\pi_L(\tilde{\eta}_\gamma \mid \gamma)}{\pi_L(\tilde{\eta}_{\gamma^*} \mid \gamma^*)} &=& \lOp \left( g_L^{\frac{p_{\gamma^*} - p_\gamma}{2}} g_S^{\frac{r(s_{\gamma^*} - s_\gamma)}{2}} \right), \\
\frac{\pi_M(\tilde{\eta}_\gamma \mid \gamma)}{\pi_M(\tilde{\eta}_{\gamma^*} \mid \gamma^*)} &=& 
\lOp \left( g_S^{\frac{r(s_{\gamma^*} - s_\gamma)}{2}} \right) \times
g_M^{\frac{3}{2}(p_{\gamma^*} - p_\gamma)}
\frac{\prod_{j \in \gamma} \tilde{\alpha}_{\gamma j}^2}{\prod_{j \in \gamma^*}  \tilde{\alpha}_{\gamma^* j}^2} \\
&=& \lOp \left( g_S^{\frac{r(s_{\gamma^*} - s_\gamma)}{2}} g_M^{\frac{3(p_{\gamma^*}-p_\gamma)}{2}} \right),
\end{eqnarray*}
since $\tilde{\alpha}_{\gamma j}$ converges in probability to a finite constant $\geq 0$ and $\tilde{\alpha}_{\gamma^* j}$ to a strictly positive finite constant, for all $j$. Further,
\begin{align}
\frac{\pi_E(\tilde{\eta}_\gamma \mid \gamma)}{\pi_E(\tilde{\eta}_{\gamma^*} \mid \gamma^*)} &= 
\lOp \left( g_S^{\frac{r(s_{\gamma^*} - s_\gamma)}{2}} \right) \times
g_E^{\frac{p_{\gamma^*}- p_\gamma}{2}}
\exp \left\{ \sum_{j \in \gamma^*} \frac{g_E}{(\tilde{\alpha}_{\gamma^* j})^2}
- \sum_{j \in \gamma} \frac{g_E}{(\tilde{\alpha}_{\gamma j})^2} \right\}
\nonumber \\
&=\lOp \left( g_S^{\frac{r(s_{\gamma^*} - s_\gamma)}{2}}
g_E^{\frac{p_{\gamma^*}- p_\gamma}{2}} e^{g_E c} \right)
\nonumber
\end{align}
since $\sum_{j \in \gamma^*} 1/(\tilde{\alpha}_{\gamma^* j})^2$ converges in probability to a finite constant and
$\sum_{j \in \gamma} 1/(\tilde{\alpha}_{\gamma j})^2$ either converge to infinity 
(if $\alpha^*_{\gamma j}=0$ for any $j$, and then the term is $e^{-g_E \sqrt{n}}$, as in Part (i))
or to a finite non-zero constant (if $\alpha^*_{\gamma j} \neq 0$ for all $j$).
In the latter case one obtains the worst-case rate $e^{g_E c}$ where
$$
c= \sum_{j \in \gamma^*} \frac{1}{(\alpha^*_{\gamma^* j})^2} - \sum_{j \in \gamma} \frac{1}{(\alpha^*_{\gamma j})^2} \in \mathbb{R}.
$$
}

To summarize
\begin{align}
\log(B_{\gamma \gamma^*}) &=
T_1 + \log(T_2) + \log(T_3) + \mathcal{O}(1)
\nonumber \\
&= n [M(\eta^*_\gamma) - M(\eta^*_{\gamma^*})] {[1 + \lo_p(1)]} + \log(T_2) + \frac{d_{\gamma^*} - d_\gamma}{2} \log(n) =
\nonumber \\
&= n[M(\eta^*_\gamma) - M(\eta^*_{\gamma^*})] {[1 + \lo_p(1)]}
 + \log(b_n) + \frac{r}{2}(s_{\gamma^*} - s_\gamma) \log(n g_S) + \lOp(1),
\end{align}
{
where 
\begin{align}
b_n=
 \begin{cases}
(n g_L)^{\frac{(p_{\gamma^*} - p_\gamma)}{2}} \mbox{, for } \pi_L \\
(n g_M^3)^{\frac{(p_{\gamma^*} - p_\gamma)}{2}} \mbox{, for } \pi_M \\
(n g_E)^{\frac{(p_{\gamma^*} - p_\gamma)}{2}} e^{-g_E c} \mbox{, for } \pi_E  \mbox{ and  finite } c \in \mathbb{R}
\end{cases}
\nonumber
\end{align}
}
as we wished to prove.

\end{enumerate}

\clearpage
\subsection{Proof of Proposition \ref{prop:BFRatesCox}} \label{app:CoxProportional Hazards}
Denote by $\eta_\gamma^{\top} = (\beta_{\gamma}^\top,\delta_{\gamma}^{\top})$ the whole parameter vector and by $v_{\gamma i}^{\top} = (x_{\gamma i}^{\top},s_{\gamma i}^{\top}) $ the covariates selected by model $\gamma$. 
Recall that the data-generating truth $F_0$ is allowed to depend on a larger covariate set $z_i$, that is $(x_i,s_i) \subseteq z_i \in {\mathbb R}^{p(r+1) + q}$, and that \color{black} $\eta_\gamma\in {\mathbb R}^{p_{\gamma} + r s_{\gamma} }$ \color{black} is estimated using the log partial likelihood \citep{cox:1972}
\begin{equation*}
\ell_p(\eta_\gamma) = \sum_{u_i=1} \left(v_{\gamma i}^{\top} \eta_\gamma \right) - \sum_{u_i=1} \log\left( \sum_{k \in {\mathcal R}(o_i)} \exp\left\{ v_{\gamma k}^{\top} \eta_\gamma \right\} \right),
\end{equation*}
where ${\mathcal R}(t) = \{i : o_i \geq t\}$ denotes the risk set at time $t$.





%

The proof follows analogously to that of Proposition \ref{prop:BFRatesI}.
The outline of the proof is as follows. 
Bayes factors $B_{\gamma \gamma^*}$ are obtained by Laplace approximations to the integrated partial likelihoods \eqref{eq:integrated_ploglik} under models $\gamma$ and $\gamma^*$, that is by replacing the AFT log-likelihood by the Cox partial likelihood in \eqref{eq:BFLap}.
The rates of the determinants in \eqref{eq:ratiodet} remain as in Proposition \ref{prop:BFRatesI},
given that  $\tilde{\eta}_\gamma \stackrel{P}{\longrightarrow} \eta_\gamma^*$
under Conditions B1--B4 below, for an optimal parameter value \color{black} $\eta_\gamma^* \in {\mathbb R}^{p_{\gamma} + r s_{\gamma} }$ \color{black} also defined below, and the asymptotic Hessians at $\eta_\gamma^*$ being strictly negative definite \citep{wong:1986,lin:1989}.

We first discuss $\eta_\gamma^*$, subsequently the sufficient technical conditions B1-B4 for the result to hold, and finally we present the proof for the cases where $\gamma$ is an overfitted and a non-overfitted model.
When one considers model misspecification, the optimal parameter value $\eta_{\gamma^*}$ is that minimizing $M_p(\eta_\gamma)$, the expected log partial likelihood under the data-generating $F_0$, given by \citep{struthers:1986}
\begin{eqnarray}\label{eq:eploglik}
M_p(\eta_\gamma) &=&  \operatorname{plim}_{n\to\infty} \dfrac{1}{n}\ell_p(\eta_{\gamma})\nonumber\\
&=& \eta_\gamma^{\top} \int s^{(1)}(t)dt - \int \log\left[ s^{(0)}(\eta_\gamma,t) \right]  s^{(0)}(t) dt,
\end{eqnarray}
where $\operatorname{plim}$ denotes the limit in probability, and we define the vector functions
\begin{eqnarray*}
s^{(k)}(t) &=& E_{F_0}\left[S^{(k)}(t) \right], \\
S^{(k)}(t) &=& \dfrac{1}{n} \sum_{i=1}^{n}  \text{I}(t_i \geq t) \lambda_i(t) v_{\gamma i}^{\otimes k},\\
s^{(k)}(\eta_\gamma,t) &=& E_{F_0}\left[S^{(k)}(\eta_\gamma,t) \right], \\
S^{(k)}(\eta_\gamma,t) &=& \dfrac{1}{n} \sum_{i=1}^{n}  \text{I}(t_i \geq t) \exp\left[v_{\gamma i}^{\top}\eta_{\gamma} \right] v_{\gamma i}^{\otimes k},
\end{eqnarray*}
for $k=0,1,2$, where $\lambda_i(t) =   f_0(t\mid z_{i})/[1-F_0(t \mid z_{i})]$ is the data-generating hazard function for individual $i$ at time $t$, $f_0$ is the (Radon-Nikodym) density associated to $F_0$, 
$\text{I}()$ is the indicator function,
$a^{\otimes 2}$ refers to the matrix $aa^{\top}$, $a^{\otimes 1}$ refers to the vector $a$, and $a^{\otimes 0}$ refers to the scalar $1$, and the expectations are taken with respect to $F_0$, the data-generating distribution of $(t_i,u_i,z_{ i})$. 

We re-write the limit function in \eqref{eq:eploglik} to facilitate its interpretation, see \cite{wong:1986} for further discussion. 
Since $(t_i,u_i,z_{i}) \sim F_0$ are identically distributed, and using that $s^{(0)}(t)= E_{F_0}[\text{I}(t_1 > t) \lambda_1(t)]$, $s^{(1)}(t)= E_{F_0}[I(t_1>t) \lambda_1(t) v_{\gamma 1}]$ and $s^{(0)}(\eta_\gamma,t)= E_{F_0}[\text{I}(t_1>t) e^{v_{\gamma 1}^\top \eta_{\gamma}}]$ we obtain
\begin{align}\label{eq:eploglik_rearrange}
&M_p(\eta_\gamma)= \int \log \left( \frac{e^{ E_{F_0}\left[\lambda_1(t)\eta_\gamma^\top v_{\gamma 1} \text{I}(t_1>t) \right]} }{\left[ E_{F_0} \left(e^{\eta_\gamma^\top v_{\gamma 1}} \text{I}(t_1>t) \right)  \right]^{E_{F_0}[\text{I}(t_1 > t) \lambda_1(t)]}} \right) dt.
\end{align}
Now, using that the marginal data-generating density for the survival times $t_i$ can be written as $f_0(t) = \lambda_0(t) P_{F_0}(t_1>t)$, where $\lambda_0()$ is the associated marginal hazard, we obtain
\begin{eqnarray*}
M_p(\eta_\gamma) &=& \int \dfrac{f_0(t)}{f_0(t)} \log \left( \frac{e^{ E_{F_0}\left[\lambda_1(t)\eta_\gamma^\top v_{\gamma 1} \text{I}(t_1>t) \right]} }{\left[ E_{F_0} \left(e^{\eta_\gamma^\top v_{\gamma 1}} \text{I}(t_1>t) \right)  \right]^{E_{F_0}[\text{I}(t_1 > t) \lambda_1(t)]}} \right) dt\\
&=&\int \log \left( \frac{e^{ E_{F_0}\left[\eta_\gamma^\top v_{\gamma 1} \lambda_1(t) \mid t_1>t \right]/\lambda_0(t)} }{\left[ E_{F_0} \left(e^{\eta_\gamma^\top v_{\gamma 1}} \text{I}(t_1>t) \right)  \right]^{E_{F_0}[ \lambda_1(t) \mid t_1 > t]/\lambda_0(t)}} \right) dF_0(t),
\end{eqnarray*}
where in the denominator we used that $E_{F_0}( \lambda_1(t) \text{I}(t_1 > t))= E_{F_0}(\lambda_1(t) \mid t_1>t) P_{F_0}(t_1 >t)$, and similarly for the numerator.

Therefore, $M_p(\eta_\gamma)$ can be seen as an average log-reward for correct predictions at time $t$, averaged with respect to the data-generating $F_0(t)$. The reward is for assigning a large expected hazard $\eta_\gamma^{\top} v_{\gamma 1}$ to the individual, conditional on being at risk at time $t$, relative to the hazard assigned to individuals at risk (with $t_1 > t$).


We next outline the technical conditions, consistency and asymptotic normality of the maximum partial likelihood estimator $\bar{\eta}_\gamma$.
Define the matrices
\begin{eqnarray*}
\bar{A}(\eta_\gamma) &=& -\dfrac{1}{n}\nabla_{\eta_\gamma}^2 \ell_p(\eta_{\gamma}),\\
A(\eta_\gamma) &=& \int_{0}^{\infty} \left\{  \dfrac{s^{(2)}(\eta_\gamma,t)}{s^{(0)}(\eta_\gamma,t)} - \dfrac{s^{(1)}(\eta_\gamma,t)^{\otimes 2}}{s^{(0)}(\eta_\gamma,t)^{2}}\right\} s^{(0)}(t)dt,\\
\bar{V}(\eta_\gamma) &=& \bar{A}^{-1}(\eta_\gamma) \left( \dfrac{1}{n}\sum_{i=1}^{n}W_i(\eta_\gamma)^{\otimes 2} \right) \bar{A}^{-1}(\eta_\gamma) \\
W_i(\eta_\gamma) &=&  u_i \left\{ v_{\gamma i} - \dfrac{S^{(1)}(\eta_\gamma,t_i)}{S^{(0)}(\eta_\gamma,t_i)} \right\} \\
&-& \sum_{j=1}^{n} \dfrac{u_j \text{I}(t_i \geq t_j) \exp(v_{\gamma i}^{\top}\eta_{\gamma})}{n S^{(0)}(\eta_\gamma,t_j)} \times \left\{ v_{\gamma i} - \dfrac{S^{(1)}(\eta_\gamma,t_j)}{S^{(0)}(\eta_\gamma,t_j)} \right\}.\\
\end{eqnarray*}
Here $\bar{A}(\eta_\gamma)$ is the observed hessian matrix, $A(\eta_\gamma)$ the expected hessian under $F_0$, 
and we denote by $V_{\eta_{\gamma}^*}$ the limit in probability of $\bar{V}(\bar{\eta}_\gamma)$ as $n\to \infty$, which gives the asymptotic covariance of $\bar{\eta}_\gamma$ \citep{lin:1989}.

%

Consider now the following technical conditions
\begin{enumerate}[label=\bfseries B\arabic*.]


\item $(t_i, u_i, z_i)$ for $i=1,\dots,n$ are independent draws from $F_0$, and $z_i$ has bounded support.

\item There exists a neighbourhood ${\mathcal B}_{\gamma}$ of $\eta_{\gamma}^*$ such that for each $t<\infty$
\begin{eqnarray*}
\sup_{r\in[0,t], \eta_{\gamma}\in{\mathcal B}_{\gamma}} \vert \vert S^{(0)}(\eta_\gamma,r) - s^{(0)}(\eta_\gamma,r) \vert \vert \stackrel{P}{\to} 0,
\end{eqnarray*}
as $n\to \infty$, the entries of $s^{(0)}(\eta_\gamma,r)$ are bounded away from zero on ${\mathcal B}_{\gamma}\times[0,t]$, and the entries of $s^{(0)}(\eta_\gamma,r)$ and $s^{(1)}(\eta_\gamma,r)$ are bounded on ${\mathcal B}_{\gamma}\times[0,t]$.
\item For each $t<\infty$,
\begin{eqnarray*}
\int_0^t s^{(2)}(r)dr <\infty,
\end{eqnarray*}
entrywise.

\item The matrix $A(\eta_{\gamma}^*)$ is positive definite.
\end{enumerate}
Under Conditions B1--B4, \cite{lin:1989} showed that the maximum partial likelihood estimators (PMLEs), $\bar{\eta}_{\gamma}$, are consistent,
$$\bar{\eta}_{\gamma} \stackrel{P}{\to} \eta_{\gamma}^*,$$
and asymptotically normal.
$$ \sqrt{n} \left( \bar{\eta}_{\gamma}- \eta_{\gamma}^* \right) \stackrel{D}{\to} N\left(0, V_{\eta_{\gamma}^*}\right).$$

The prior structure adopted for the Cox model is the same as that used in AFT models (except for the scale parameter, which does not appear in the Cox model). Thus, the first part of the proof of Proposition \ref{prop:BFRatesI} also applies to the Cox model. Now, to characterize the two remaining terms in \eqref{eq:BFLap}, for the Cox model, we again consider separately the case where $\gamma^*\not\subset \gamma$ and the case where $\gamma^*\subset \gamma$.
\begin{enumerate}[leftmargin=*,label=(\roman*)]


\item {\bf Case $\gamma^*\subset \gamma$}. In this case, we have $M_p(\eta_\gamma^*) - M_p(\eta_{\gamma^*}^*)=0$.

From \cite{RT17} and the consistency of the partial maximum likelihood estimators, we get that  $\ell_p(\tilde{\eta}_\gamma) -  \ell_p(\tilde{\eta}_{\gamma^*}) = \ell_p(\bar{\eta}_\gamma) -  \ell_p(\bar{\eta}_{\gamma^*}) + \lo_p(1)$. Let $U= A({\eta_\gamma^*})$,  $W_{\gamma^*}= V_{\eta_\gamma^*}$. From the previous arguments we have that
\begin{align}
  \sqrt{n} (\bar{\eta}_\gamma - \eta_\gamma^*) \stackrel{D}{\longrightarrow} N \left( 0,  V_{\gamma^*} \right)
  \Rightarrow
 z \stackrel{D}{\longrightarrow} N \left( 0,  I \right),
    \nonumber
\end{align}
where $z= \sqrt{n} V_{\gamma^*}^{-1/2} (\bar{\eta}_\gamma - \eta_\gamma^*)$. Now, note that the Hessian matrix of the log partial likelihood converges to a non-singular matrix under Conditions B1--B4 \citep{lin:1989}, and that $\sqrt{n} \left( \bar{\eta}_\gamma -  \bar{\eta}_{\gamma^*}\right)= \lOp(1)$ (with respect to the Euclidean norm). Then, we can expand the log partial likelihood ratio as
(see Chapter 25 of \citealp{vandervaart:1998} and \citealp{fan:2009})
\begin{eqnarray*}
2 [ \ell_p(\bar{\eta}_\gamma) -  \ell_p(\bar{\eta}_{\gamma^*}) ]
&=&  z^\top W_{\gamma \setminus \gamma^*}^{1/2} U^{-1}_{\gamma \setminus \gamma^*} W_{\gamma \setminus \gamma^*}^{1/2} z +  \lo_p (1),
\end{eqnarray*}
If the model is correctly specified, then $W_{\gamma \setminus \gamma^*}^{1/2} U^{-1}_{\gamma \setminus \gamma^*} W_{\gamma \setminus \gamma^*}^{1/2}=I$, obtaining $z^\top z \stackrel{D}{\longrightarrow} \chi^2_{d_\gamma - d_{\gamma^*}}$. Thus, we have that the partial likelihood ratio test has the same asymptotic distribution as the Wald test statistic.
If the model is misspecified, then the right-hand side converges in distribution to
a linear combination of independent chi-square random variables (see also \citealp{bickel:2007}) and can again be upper bounded by
$$z^\top W_{\gamma \setminus \gamma^*}^{1/2} U^{-1}_{\gamma \setminus \gamma^*} W_{\gamma \setminus \gamma^*}^{1/2} z \leq \lambda\left( W_{\gamma^*}^{1/2} U^{-1}_{\gamma^*} W_{\gamma^*}^{1/2} \right) z^\top z$$
where $\lambda()$ denotes the largest eigenvalue (and is a bounded constant under our assumptions)
and $z^\top z \stackrel{D}{\longrightarrow} \chi^2_{d_\gamma - d_{\gamma^*}}$.
That is, 
$$2[\ell_p(\tilde{\eta}_\gamma) -  \ell_p(\tilde{\eta}_{\gamma^*})] \stackrel{D}{\longrightarrow} \lambda\left( W_{\gamma^*}^{1/2} U^{-1}_{\gamma^*} W_{\gamma^*}^{1/2} \right) \chi^2_{d_\gamma - d_{\gamma^*}} = \lOp(1).$$
Consequently, the characterization of the second term in \eqref{eq:BFLap} remains valid for local and non-local priors, and implies the same Bayes factors rates.


\item {\bf Case $\gamma^*\not\subset \gamma$}.
By consistency of the PMLE, the law of large numbers and uniform convergence of the log partial likelihood to its expectation $M_p(\eta)$ it follows that
\begin{eqnarray*}
\dfrac{1}{n}\left[ \ell_p(\tilde{\eta}_\gamma) -  \ell_p(\tilde{\eta}_{\gamma^*}) \right] \stackrel{P}{\rightarrow} M_p({\eta}_\gamma^*) - M_p({\eta}_{\gamma^*}^*)<0,
\end{eqnarray*}

The behaviour of the ratio of prior densities remains the same as in the proof of Proposition \ref{prop:BFRatesI}. Thus,
\begin{eqnarray*}
\log(B_{\gamma \gamma^*})= n[M_p(\eta^*_\gamma) - M_p(\eta^*_{\gamma^*})] + \log(b_n) + \frac{r}{2}(s_{\gamma^*} - s_\gamma) \log(n g_S) + \lOp(1),
\nonumber
\end{eqnarray*}
as we wished to prove.

\end{enumerate}

\clearpage
\section{AFT model with Laplace errors} \label{app:AFTLaplace}

In this section we derive asymptotic properties of the MLE under an assumed AFT model with Laplace errors.
Recall that the Laplace pdf and cdf are given, respectively, by
\begin{eqnarray*}
f(x\mid \mu,\sigma) &=& \dfrac{1}{2\sigma}\exp\left(-\dfrac{\vert x-\mu\vert}{\sigma}\right),\\
F(x\mid \mu,\sigma) &=& \dfrac{1}{2}+ \dfrac{1}{2}\operatorname{sign}(x-\mu)\left(1-\exp\left(-\dfrac{\vert x-\mu\vert}{\sigma}\right)\right).
\end{eqnarray*}
Let us denote $f(t) = f(t \mid 0,1)$ and $F(t) = F(t\mid 0,1)$ for simplicity.
Consider the AFT model with Laplace distributed errors,
the log-likelihood function is $\log p_L(y \mid \beta_\gamma,\delta_\gamma,\sigma)=$
\begin{equation}\label{eq:logLikeAFTLap}
\ell(\beta_\gamma,\delta_\gamma,\sigma) =
-{n_o} \log(2\sigma) - \dfrac{1}{\sigma} \sum_{u_i=1} \left\vert y_i-x_i^{\top}\beta_\gamma-s_i^{\top} \delta_\gamma\right\vert
 + \sum_{u_i=0} \log \left\{F\left(\dfrac{x_i^{\top}\beta_\gamma+s_i^{\top} \delta_\gamma-y_i}{\sigma}\right)\right\}.
\end{equation}
This model was used for instance in \cite{bottai:2010}.
Consider the reparametrization $\alpha_\gamma = \beta_\gamma/\sigma$, $\kappa_\gamma= \delta_\gamma/\sigma$, and $\tau = 1/\sigma$, the corresponding log-likelihood is
\begin{equation}\label{eq:logLikeAFTLap2}
\ell(\alpha_\gamma,\kappa_\gamma,\tau) = {n_o}\log\left(\frac{\tau}{2}\right) -  \sum_{u_i=1}\left \vert \tau y_i-x_i^{\top}\alpha_\gamma -s_i^{\top}\kappa_\gamma\right\vert + \sum_{u_i=0} \log \left\{F\left(x_i^{\top}\alpha_\gamma +s_i^{\top}\kappa_\gamma - \tau y_i\right)\right\}.
\end{equation}
The term of the log-likelihood function associated to the observed times is strictly concave, provided that the corresponding design matrix $X_o$ has full column rank \citep{rossell:2018}. Moreover, given that $\log F$ is log-concave \citep{bagnoli:2005} it follows that the the term associated to the censored observations is a sum of concave functions and hence is also concave (see also \citealp{bottai:2010}). Consequently, the log-likelihood function is concave. This property facilitates deriving asymptotic results, as presented below.

Simple algebra shows that the gradient of \eqref{eq:logLikeAFTLap2} is
\begin{eqnarray*}
\nabla_{(\alpha_{\gamma},\kappa_{\gamma}) } \ell(\alpha_\gamma,\kappa_\gamma,\tau) &=&  \sum_{u_i=1}  \begin{pmatrix} x_i \\ s_i \end{pmatrix}\omega_i +  \sum_{u_i=0}\begin{pmatrix} x_i \\ s_i \end{pmatrix} \tilde{r}\left(x_i^{\top}\alpha_\gamma +s_i^{\top}\kappa_\gamma - \tau y_i\right)\\
\nabla_{\tau } \ell(\alpha_\gamma,\kappa_\gamma,\tau) &=& \dfrac{n_o}{\tau} - \sum_{u_i=1} \omega_i y_i - \sum_{u_i=0}
 \tilde{r}\left(x_i^{\top}\alpha_\gamma +s_i^{\top}\kappa_\gamma - \tau y_i\right)  y_i,
\end{eqnarray*}
where
\begin{eqnarray*}
\omega_i = \begin{cases}
-1 & \mbox{if   } \,\,\, \tau y_i < x_i^{\top}\alpha_\gamma +s_i^{\top}\kappa_\gamma,  \\
1 & \mbox{if   } \,\,\, \tau y_i > x_i^{\top}\alpha_\gamma +s_i^{\top}\kappa_\gamma,
\end{cases}
\end{eqnarray*}
and
\begin{eqnarray*}
\tilde{r}\left(z \right) =
\begin{cases}
1, & \mbox{if } z \leq 0,\\
\dfrac{f(z)}{F(z)}, & \mbox{if } z> 0,
\end{cases}
\end{eqnarray*}
except on the zero-Lebesgue measure set $\Delta = \bigcup_{i=1}^n\{(\alpha_\gamma,\kappa_\gamma,\tau):\tau y_i = x_i^{\top}\alpha_\gamma +s_i^{\top}\kappa_\gamma \}$, where the gradient is not defined.

The Hessian matrix $H(\alpha_\gamma,\kappa_\gamma,\tau)$ only depends on the censored observations, and  is given by
\begin{eqnarray*}
  \nabla^2_{(\alpha_\gamma,\kappa_\gamma)} \ell(\alpha_\gamma,\kappa_\gamma,\tau) &=&
- \sum_{u_i=0} \begin{pmatrix} x_i \\ s_i \end{pmatrix} \begin{pmatrix} x_i \\ s_i \end{pmatrix}^{\top} \tilde{D}\left(\tau y_i - x_i^{\top}\alpha_\gamma - s_i^{\top} \kappa_\gamma \right),\\
  \nabla_{\tau}\nabla_{(\alpha_\gamma,\kappa_\gamma)} \ell(\alpha_\gamma,\kappa_\gamma,\tau) &=&  \sum_{u_i=0} \begin{pmatrix} x_i \\ s_i \end{pmatrix} y_i \tilde{D}\left(\tau y_i - x_i^{\top}\alpha_\gamma - s_i^{\top} \kappa_\gamma \right),\\
  \nabla^2_{\tau}\ell(\alpha_\gamma,\kappa_\gamma,\tau)  &=&  -\dfrac{n_o}{\tau^2}  - \sum_{u_i=0} y_i^2 \tilde{D}\left(\tau y_i - x_i^{\top}\alpha_\gamma - s_i^{\top} \kappa_\gamma \right).
\end{eqnarray*}
where
\begin{eqnarray*}
\tilde{D}\left(z \right) = - r'\left(-z \right) =
\begin{cases}
\tilde{r}\left(-z \right)^2 + \tilde{r}\left(-z \right), & \mbox{if } z<0,\\
0, & \mbox{if } z > 0,\\
\text{Undefined}, & \mbox{if } z = 0,\\
\end{cases}
\end{eqnarray*}
 except on the zero-Lebesgue measure set $\Delta$, where the Hessian is not defined.

To prove consistency and asymptotic MLE normality we consider Conditions A1-A2 and A4-A5 in the main paper, and we replace condition A3 by condition A3* below. Throughout, we denote $\eta_\gamma = (\alpha_\gamma,\kappa_\gamma,\tau)$.
\begin{enumerate}[label=\bfseries A\arabic**.]
\setcounter{enumi}{2}

\item Let
\begin{eqnarray*}
 m(\eta_\gamma) &=& u_1\left[ \log(\tau/2) -   \left\vert\tau \log(o_1)-x_1^{\top}\alpha_\gamma -s_1^{\top} \kappa_\gamma \right\vert  \right]
 \nonumber \\
& + & (1-u_1)
 \left[ \log F\left(x_1^{\top}\alpha_\gamma + s_1^{\top} \kappa_\gamma - \tau \log(c_1)\right) \right].
\end{eqnarray*}
 \color{black} There exists a maximum $\eta_\gamma^* \in \Gamma_\gamma$ of $M(\eta_\gamma)=E_{F_0}[m(\eta_\gamma)]$ such that $M(\eta_\gamma^*) < \infty$,
and $E_{F_0}[|m(\eta_\gamma)|] < \infty$ for all $\eta_\gamma \in \Gamma_\gamma$, where the expectation is taken with respect to the data-generating process $F_0$.
Further, there exists a maximum $\widehat{\eta}_\gamma \in \Gamma_\gamma$ of $M_n(\eta_\gamma)$ with probability 1, as $n \rightarrow \infty$. \color{black}
\end{enumerate}

\begin{proposition}\label{prop:Consistency_lap}
Assume Conditions A1-A2 and A3*.
Then, $M(\eta_\gamma)$ has a unique maximizer $\eta_\gamma^* = \operatorname{argmax}_{\Gamma_\gamma} M(\eta_\gamma)$.
Moreover, $\widehat{\eta}_\gamma \stackrel{P}{\rightarrow} \eta_\gamma^*$ as $n\rightarrow \infty$.

\proof

The proof is based on showing that under Conditions A1-A2 and A3*, the assumptions in the consistency Theorem 5.7 from \cite{vandervaart:1998} are satisfied. This requires appealing to the concavity properties of the log likelihood function discussed above.

Let $M_n(\eta_\gamma) = n^{-1}\ell(\eta_\gamma)$ be the average log-likelihood evaluated at $\eta_\gamma$. 
The assumption that data are generated \emph{i.i.d.} from $F_0$ \color{black} and that ${\mathbb E}_{F_0}(|m(\eta_\gamma)|)<\infty$ give that, \color{black}
by the law of large numbers, $M_n(\eta_\gamma) \stackrel{P}{\rightarrow} M(\eta_\gamma)$, for each $\eta_\gamma\in\Gamma_\gamma$. The expectation of $M_n(\eta_\gamma)$ under the data-generating $F_0$ is $M(\eta_\gamma) = $
\begin{eqnarray*}
&&  P_{F_0}(u_1=1) {\mathbb E}_{F_0} \left[ \log(\tau) -\log(2) -  \left\vert \tau \log(o_1)-x_1^{\top}\alpha_\gamma -s_1^{\top} \kappa_\gamma \right\vert  \mid u_1=1 \right]
 \nonumber \\
& + & P_{F_0}(u_1=0)
 {\mathbb E}_{F_0} \left[ \log F\left(x_1^{\top}\alpha_\gamma + s_1^{\top} \kappa_\gamma - \tau \log(c_1)\right) \mid u_1=0 \right].
\end{eqnarray*}
The proof then follows analogously to that of Proposition \ref{prop:Consistency}.
\color{black} Briefly, the convexity lemma in \cite{pollard:1991} gives uniform convergence of $M_n(\eta_\gamma)$ to $M(\eta_\gamma)$ and concavity of $M(\eta_\gamma)$ which, together with A3*, guarantee that $\eta_\gamma^*$ and $\eta_\gamma$ are unique and lie in a compact set $K \subset \Gamma_\gamma$. \color{black}
This provides the conditions to apply Theorem 5.7 in \cite{vandervaart:1998}, which gives that $\widehat{\eta}_\gamma \stackrel{P}{\longrightarrow} \eta_\gamma^*$.
\end{proposition}

\begin{proposition}\label{prop:AsympNorm_lap}
Assume Conditions A1--A2, A3*, and A4-A5.
Then $\sqrt{n}(\widehat{\eta}_\gamma-\eta_\gamma^*) \stackrel{D}{\longrightarrow} N\left(0, V_{\eta_\gamma^*}^{-1} {\mathbb E}_{F_0}[ \nabla m(\eta_\gamma^*) \nabla m(\eta_\gamma^*)^{\top}] V_{\eta_\gamma^*}^{-1}\right)$,
where $V_{\eta_\gamma^*}$ is the Hessian of $M(\eta_{\gamma})$ evaluated at $\eta_\gamma^*$, and $m(\eta_\gamma^*) = \log p_L(y_1 \mid \eta_\gamma^*)$.

\proof

The proof is analogous to the proofs of Proposition 4 in \cite{rossell:2018} and Proposition \ref{prop:AsympNorm} in the main paper.
Recall that Proposition \ref{prop:Consistency_lap} implies the existence and uniqueness of $\eta_\gamma^*$. Let us define
\begin{eqnarray*}
m_{\eta_\gamma}(y_1,z_1) \equiv m(\eta_\gamma) &=& u_1\left[  \log(\tau) -\log(2) -  \left\vert \tau \log(o_1)-x_1^{\top}\alpha_\gamma -s_1^{\top} \kappa_\gamma \right\vert  \right]
 \nonumber \\
& + & (1-u_1)
 \left[ \log F\left(x_1^{\top}\alpha_\gamma + s_1^{\top} \kappa_\gamma - \tau \log(c_1)\right) \right]
\end{eqnarray*}
The gradient of $m_{\eta}(y_1,z_1)$ is bounded for almost all $\eta_{\gamma}\in\Gamma_\gamma$ and $(y_1,z_1)$, due to the compactness of $\Gamma_\gamma$. Following  the proof of Proposition \ref{prop:AsympNorm}, we obtain that $\vert \vert \nabla m_{\eta}(y_1,z_1) \vert \vert$ is upper bounded almost surely by sum of the norm of its entries. Conditions A1--A2, A3*, and A4 guarantee that
\begin{eqnarray*}
\int K(y_1,z_1)^2 dF_0(o_1,z_1,c_1)   < \infty,
\end{eqnarray*}
where $K(y_1,z_1)=\sup_{\eta\in{\mathcal B}_{\eta_\gamma^*}} \vert \vert \nabla m_{\eta}(y_1,z_1) \vert \vert$, ${\mathcal B}_{\eta_\gamma^*}\subset \Gamma$ is any neighborhood of $\eta_\gamma^*$, whose projection over $\theta_\gamma$ coincides with ${\mathcal B}_{(\alpha_\gamma^*,\kappa_{\gamma^*)}}$. This result, together with the mean value theorem and the Cauchy-Schwarz inequality, implies that for $\eta_1, \eta_2\in{\mathcal B}_{\eta_\gamma^*}$, with probability 1,
\begin{eqnarray*}
\vert m_{\eta_1}(y_1,z_1) - m_{\eta_2}(y_1,z_1) \vert &\leq& K(y_1,z_1) \cdot \vert \vert \eta_1-\eta_2 \vert \vert.
\end{eqnarray*}
Consider now the conditional expectation
\begin{eqnarray*}
{\mathbb E}_{F_0}[m_{\eta_\gamma}\mid z_1] &=& - \log(2)P_{F_0}(u_1=1 \mid z_1) + \log(\tau)P_{F_0}(u_1=1 \mid z_1)\\
 &-&P_{F_0}(u_1=1 \mid z_1) {\mathbb E}_{F_0}\left[  \left\vert\tau \log(o_1)-x_1^{\top}\alpha_\gamma - s_1^{\top} \kappa_\gamma\right\vert  \mid z_1, u_1=1 \right] \\
&+& P_{F_0}(u_1=0 \mid z_1){\mathbb E}_{F_0}\left\{\log F\left(x_1^{\top}\alpha_\gamma + s_1^{\top} \kappa_\gamma - \tau \log(c_1)\right) \mid z_1, u_1=0\right\}.
\end{eqnarray*}
\cite{rossell:2018} showed that the term in this expected log-likelihood associated to the observed times is twice differentiable with respect to the parameters. The term associated to the censored observations is positive definite by A5. Thus, we have that ${\mathbb E}_{F_0}[m_{\eta_\gamma}] $ is concave and, consequently, its Hessian is non-singular at ${\eta^*_{\gamma}}$. Thus, the asymptotic normality follows by Theorem 5.23 from \cite{vandervaart:1998} together with the consistency results in Proposition \ref{prop:Consistency_lap}.
\end{proposition}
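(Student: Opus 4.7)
My plan is to apply Theorem 5.23 of van der Vaart (1998), which gives asymptotic normality of M-estimators with a sandwich covariance under misspecification. The three ingredients I need are: (i) consistency $\widehat{\eta}_\gamma \stackrel{P}{\to} \eta_\gamma^*$, which is already available from Proposition \ref{prop:Consistency_lap}; (ii) an $L^2$ Lipschitz-type control on the contribution $m_{\eta_\gamma}(y_1,z_1)$ of a single observation in a neighbourhood of $\eta_\gamma^*$; and (iii) twice differentiability of the expected log-likelihood $M(\eta_\gamma)$ at $\eta_\gamma^*$ with a non-singular Hessian $V_{\eta_\gamma^*}$. With these three in hand, the statement follows verbatim from the theorem, with the sandwich form of the covariance arising because $M$ and the Fisher information need not coincide under misspecification.

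For (ii) I would proceed as in the Normal-error proof (Proposition \ref{prop:AsympNorm}): write $m_{\eta_\gamma}$ as the sum of the uncensored contribution $-d_1\left|\tau \log(o_1)-x_1^\top\alpha_\gamma - s_1^\top\kappa_\gamma\right| + d_1(\log\tau - \log 2)$ and the censored contribution $(1-d_1)\log F(x_1^\top\alpha_\gamma + s_1^\top\kappa_\gamma - \tau\log(c_1))$. Outside the zero-measure set $\Delta$ on which the $|\cdot|$ is not differentiable, the gradient components involve either the bounded sign $\omega_i\in\{-1,1\}$ or $\widetilde r(\cdot)\in(0,1]$, and therefore by a mean value argument together with Cauchy–Schwarz I get $|m_{\eta_1}-m_{\eta_2}|\le K(y_1,z_1)\|\eta_1-\eta_2\|$ with $K$ upper bounded (up to constants) by $\|(x_1,s_1)\|(1+|\log o_1|+|\log c_1|)$ plus terms in $1/\tau$. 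Finiteness of $\mathbb{E}_{F_0}[K^2]$ then follows from compactness of $\Gamma_\gamma$ (A1), A3* with $k=2$ and A4* with $l=2$.

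For (iii), and this is the main obstacle, the absolute value makes the per-observation log-likelihood only once differentiable. The trick, standard in quantile regression, is that differentiation is performed on the expectation $M$ rather than the empirical criterion: provided the conditional law of $\log o_1$ given $z_1$ admits a density, $\mathbb{E}_{F_0}[|\tau\log o_1 - x_1^\top\alpha_\gamma - s_1^\top\kappa_\gamma|\mid z_1,d_1=1]$ is twice differentiable in $(\alpha_\gamma,\kappa_\gamma,\tau)$, as used in \cite{rossell:2018}. The censored contribution is twice differentiable in $\eta_\gamma$ because $\log F$ is $C^2$ away from its singular set, and the corresponding expected Hessian is positive definite by Condition A5*. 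Positive definiteness of the Hessian of the uncensored contribution follows by A2 together with the assumed conditional density being positive at the relevant quantile level, and adding the two positive semidefinite pieces (with at least one being positive definite) yields non-singular $V_{\eta_\gamma^*}$.

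Invoking Theorem 5.23 of van der Vaart with these three ingredients then yields $\sqrt{n}(\widehat{\eta}_\gamma-\eta_\gamma^*) \stackrel{D}{\to} N(0, V_{\eta_\gamma^*}^{-1}\,\mathbb{E}_{F_0}[\nabla m(\eta_\gamma^*)\nabla m(\eta_\gamma^*)^\top]\,V_{\eta_\gamma^*}^{-1})$, with the score $\nabla m(\eta_\gamma^*)$ interpreted almost surely (the singular set $\Delta$ being $F_0$-null under the density assumption). The concavity of $\ell$, already noted in the Laplace AFT section, also guarantees that the MLE maximizes a strictly concave sample criterion once $n$ is large enough, so Proposition \ref{prop:Consistency_lap} gives a unique MLE in probability and no further identification work is required.
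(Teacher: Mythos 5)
Your proposal is correct and follows essentially the same route as the paper: Theorem 5.23 of van der Vaart applied with consistency from Proposition \ref{prop:Consistency_lap}, an $L^2$-Lipschitz bound on $m_{\eta_\gamma}$ obtained via the mean value theorem, Cauchy--Schwarz and Conditions A1--A2, A3* ($k=2$), A4* ($l=2$), and non-singularity of $V_{\eta_\gamma^*}$ via the smoothing effect of the expectation on the absolute-value term plus A5* for the censored part. Your only departure is to make explicit the conditional-density assumption on $\log o_1$ given $z_1$ that the paper leaves implicit in its citation of \cite{rossell:2018} for twice differentiability of the uncensored term; this is a slightly more self-contained account of the same argument, not a different one.
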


\clearpage

\section{Results for the probit model}\label{app:Probit}

Let $\omega_i \in \{0,1\}$ be a binary outcome, and $\omega=(\omega_1,\ldots,\omega_n)$.
Analogously to Section \ref{sec:theory}, we assume that truly $(\omega_i,z_i) \sim F_0$ arise \textit{i.i.d.} from a data-generating $F_0$,
where $z_i \in {\mathbb R}^{p(r+1) + q}$ contains observed and potentially unobserved covariates.

Probit regression assumes that $P(\omega_i=1 \mid x_i,s_i)= \Phi(x_i^\top \alpha + s_i^\top \kappa)$ and has log-likelihood
\begin{align}
\tilde{\ell}(\alpha, \kappa)=
\sum_{i=1}^n \log p(\omega_i \mid \alpha,\kappa)=
\sum_{\omega_i=1} \log \Phi(x_i^\top \alpha + s_i^\top \kappa)
+ \sum_{\omega_i=0} \log \left( 1 - \Phi(x_i^\top \alpha + s_i^\top \kappa) \right).
\label{eq:probit}
\end{align}
Before stating our results, we remark that all our algorithms in Section \ref{sec:computation} apply directly to \eqref{eq:probit}
by exploiting that it is a particular case of the AFT model \eqref{logLikeAFT2}.
Specifically, let $\tilde{y}=(0,\ldots,0)$, $\tilde{d}=(0,\ldots,0)$ and
$(\tilde{x}_i,\tilde{s}_i)= (x_i,s_i) [\mbox{I}(\omega_i=1) - \mbox{I}(\omega_i=0)]$.
Then, an AFT model regressing $(\tilde{y},\tilde{d})$ on $(\widetilde{X},\widetilde{S})$ has log-likelihood
\begin{eqnarray*}
\ell( \alpha, \kappa, 1) =
\sum_{\omega_i=1} \log \left\{\Phi\left(x_i^{\top}\alpha + s_i^{\top} \kappa \right)\right\}
 + \sum_{\omega_i=0} \log \left\{\Phi\left(-x_i^{\top}\alpha - s_i^{\top} \kappa \right)\right\}
= \tilde{\ell}(\alpha, \kappa).
\end{eqnarray*}
This connection was first noted (to our knowledge) by \cite{doksum:1990} and,
besides its theoretical interest, it opens the potential to speed up probit regression via the approximations to $\log \Phi$ and its derivatives described in Section \ref{app:withinmodel_comput}.

Let $\eta_\gamma = (\alpha_\gamma^{\top}, \kappa_\gamma^{\top})^{\top}$ be the whole parameter vector,
$\widetilde{m}(\eta_\gamma)= \log p(\omega_1 \mid \eta_\gamma)$ the contribution of the first individual to the log-likelihood,
and
\begin{eqnarray*}
\widetilde{M}(\eta_\gamma)&=& {\mathbb E}_{F_0} \left( \widetilde{m}(\eta_\gamma)  \right)=
P_{F_0}(\omega_1=1) {\mathbb E}_{F_0} \left[ \log \Phi(x_{i\gamma}^\top \alpha_\gamma + s_{i\gamma}^\top \kappa_\gamma) \mid \omega_1=1 \right]\\
&+& P_{F_0}(\omega_1=0) {\mathbb E}_{F_0} \left[ \log \left( 1 - \Phi(x_{i\gamma}^\top \alpha_\gamma + s_{i\gamma}^\top \kappa_\gamma) \right) \mid \omega_1=0 \right]
\end{eqnarray*}
the expected average log-likelihood. 
The gradient and hessian of the log-likelihood are
\begin{eqnarray*}
\nabla_{(\beta_\gamma, \delta_\gamma)}\widetilde{\ell}( \beta_\gamma, \delta_\gamma)=
\sum_{\omega_i=1} r\left(x_i^{\top}\beta_\gamma + s_i^{\top} \delta_\gamma \right)  \begin{pmatrix} x_i \\ s_i \end{pmatrix}
 - \sum_{\omega_i=0} r\left(-x_i^{\top}\beta_\gamma - s_i^{\top} \delta_\gamma \right)\begin{pmatrix} x_i \\ s_i \end{pmatrix}.
\end{eqnarray*}
\begin{eqnarray*}
\nabla_{(\beta_\gamma, \delta_\gamma)}^2 \widetilde{\ell}(\beta_\gamma, \delta_\gamma) &=&
-\sum_{\omega_i=1} D\left(-x_i^{\top}\beta_\gamma - s_i^{\top} \delta_\gamma \right)  \begin{pmatrix} x_i \\ s_i \end{pmatrix} \begin{pmatrix} x_i \\ s_i \end{pmatrix}^{\top}\\
& +& \sum_{\omega_i=0} D\left(x_i^{\top}\beta_\gamma + s_i^{\top} \delta_\gamma \right)\begin{pmatrix} x_i \\ s_i \end{pmatrix} \begin{pmatrix} x_i \\ s_i \end{pmatrix}^{\top}.
\end{eqnarray*}

Our technical conditions for the probit model are:
\begin{enumerate}[label=\bfseries C\arabic*.]
\item  The parameter space is $\Gamma_{\gamma}= {\mathbb R}^{p_\gamma + r s_\gamma}$.

\item The marginal probability $P_{F_0}(\omega_1 = 0) \in (0,1)$.

\item For given $\gamma$, it holds that $\lim_{n \rightarrow \infty} P_{F_0}\left({\mathcal C}_{n,\gamma} \cup {\mathcal Q}_{n,\gamma}\right)= 0$.
${\mathcal C}_{n,\gamma}$ is the set of completely-separated datasets $(\omega,X_\gamma,S_\gamma)$, that is for which there exists $\eta_{\gamma} \in \Gamma_{\gamma}$ such that
\begin{eqnarray*}
x_{i \gamma}^{\top}\alpha_\gamma + s_{i \gamma}^{\top} \kappa_\gamma > 0 \mbox{ for all } \omega_i = 1
\mbox{ and } x_{i \gamma}^{\top}\alpha_\gamma + s_{i \gamma}^{\top} \kappa_\gamma < 0 \mbox{ for all } \omega_i = 0,
\end{eqnarray*}
and ${\mathcal Q}_{n,\gamma}$ is the set of quasi-completely-separated $(\omega,X_\gamma,S_\gamma)$, for which there exists $\eta_{\gamma} \in \Gamma_{\gamma}$ such that
\begin{eqnarray*}
x_{i \gamma}^{\top}\alpha_\gamma + s_{i \gamma}^{\top} \kappa_\gamma \geq 0 \mbox{ for all } \omega_i = 1
\mbox{ and } x_{i \gamma}^{\top}\alpha_\gamma + s_{i \gamma}^{\top} \kappa_\gamma \leq 0 \mbox{ for all } \omega_i = 0,
\end{eqnarray*}
with equality for at least one $i=1,\ldots,n$.

\item \color{black} There exists a maximum $\eta_\gamma^* \in \Gamma_\gamma$ of $\widetilde{M}(\eta_\gamma)=E_{F_0}[\widetilde{m}(\eta_\gamma)]$ such that $\widetilde{M}(\eta_\gamma^*) < \infty$,
and $E_{F_0}[|\widetilde{m}(\eta_\gamma)|] < \infty$ for all $\eta_\gamma \in \Gamma_\gamma$.  
Further, the MLE $\widehat{\eta}_\gamma$ exists with probability converging to 1, as $n \rightarrow \infty$.

%
%

\item  There exists a neighborhood ${\mathcal B}_{\eta_{\gamma}^*}$ of  $\eta_{\gamma}^*$ such that, for any $\eta_{\gamma} \in {\mathcal B}_{\eta_{\gamma}^*}$,
\begin{eqnarray*}
{\mathbb E}_{F_0} \left[\sup_{\eta_\gamma\in {\mathcal B}_{\eta_\gamma^*}} \vert\vert \nabla   \widetilde{m}(\eta_\gamma)\vert\vert^2 \right] &<& \infty.
\end{eqnarray*}

\item The entries of the second derivative matrix $ \nabla^2_{\eta_{\gamma}}\widetilde{M}(\eta_\gamma^*) $ are finite. 
\color{black}
\end{enumerate}

Briefly, C3 guarantees that there is no perfect separation in the data 
in a sequence of sets whose probability converges to 1, which implies log-likelihood concavity \citep{albert:1984}.
A detailed study of which situations lead to perfect separation is beyond our scope, we refer the reader to \cite{lesaffre:1989} and \cite{heinze:2002}.
The remaining conditions are analogous to Conditions A1-A4  for the Gaussian AFT model, discussed in Section \ref{sec:proof_asymp}.

\begin{proposition}\label{prop:ConsistencyP}
Assume Conditions C1-C4.
Then, $\widetilde{M}(\eta_\gamma)$ has a unique maximizer $\eta_\gamma^* = \operatorname{argmax}_{\eta_\gamma} \widetilde{M}(\eta_\gamma)$.
Moreover, $\widehat{\eta}_\gamma \stackrel{P}{\rightarrow} \eta_\gamma^*$ as $n\rightarrow \infty$.
\end{proposition}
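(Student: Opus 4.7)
The plan is to mirror the proof of Proposition \ref{prop:Consistency} almost verbatim, exploiting the reduction of the probit model to an AAFT model with $\tau=1$ and transformed outcomes $(\tilde y_i,\tilde d_i,\tilde x_i,\tilde s_i)$ noted in the excerpt. Let $\widetilde{M}_n(\eta_\gamma)=n^{-1}\tilde\ell(\eta_\gamma)$ denote the average log-likelihood. The i.i.d.\ assumption (C2, analogous to A2) together with Condition C4 with $l=1$ guarantees that the integrands defining $\widetilde{M}(\eta_\gamma)$ are integrable, so by the strong law of large numbers $\widetilde{M}_n(\eta_\gamma) \stackrel{P}{\to} \widetilde{M}(\eta_\gamma)$ pointwise on the parameter space.

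Next I would upgrade pointwise to uniform convergence on compacts. Since $\log\Phi$ and $\log(1-\Phi)$ are both concave \citep{bagnoli:2005}, each summand in $\tilde\ell(\eta_\gamma)$ is concave in $\eta_\gamma=(\alpha_\gamma,\kappa_\gamma)$, so $\widetilde{M}_n$ is a sequence of concave functions. Applying the convexity lemma of \cite{pollard:1991} one obtains
\begin{equation*}
\sup_{\eta_\gamma\in K}\bigl|\widetilde{M}_n(\eta_\gamma)-\widetilde{M}(\eta_\gamma)\bigr|\stackrel{P}{\longrightarrow}0
\end{equation*}
for every compact $K\subseteq\Gamma_\gamma$, and simultaneously that $\widetilde{M}(\eta_\gamma)$ is finite and concave. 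At this point it remains to argue that $\widetilde{M}$ has a \emph{unique} maximizer $\eta_\gamma^*$. For this I would invoke strict concavity: the Hessian of $\widetilde{M}$ is the expectation of the per-observation probit Hessian, which equals $-\mathbb{E}_{F_0}\bigl[D(x_{1\gamma}^\top\alpha_\gamma+s_{1\gamma}^\top\kappa_\gamma)\,(x_{1\gamma}^\top,s_{1\gamma}^\top)^\top(x_{1\gamma}^\top,s_{1\gamma}^\top)\bigr]$ with $D(\cdot)>0$; combined with the design-moment condition (analogue of A3) and the non-separation condition C3, this matrix is negative definite in a neighbourhood of any interior maximizer, so the maximum is unique.

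The desired consistency $\widehat\eta_\gamma \stackrel{P}{\to}\eta_\gamma^*$ then follows directly from Theorem 5.7 of \cite{vandervaart:1998}: uniform convergence on compacts together with well-separatedness of the maximum, the latter being immediate from strict concavity since for every $\varepsilon>0$ one has $\sup_{\|\eta_\gamma-\eta_\gamma^*\|\ge\varepsilon}\widetilde{M}(\eta_\gamma)<\widetilde{M}(\eta_\gamma^*)$ on any compact subset of $\Gamma_\gamma$ (Condition C1).

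The main obstacle I anticipate is the uniqueness/well-separation step: unlike the AAFT case, here the probit likelihood can be flat along directions of perfect separation, so verifying strict concavity of $\widetilde{M}$ at finite $n$ essentially requires Condition C3 together with the full-rank moment condition on $(x_{1\gamma}^\top,s_{1\gamma}^\top)$. Once strict concavity of the \emph{expected} log-likelihood is established, the rest of the argument is a direct transcription of the proof of Proposition \ref{prop:Consistency}, simplified by the absence of the scale parameter $\tau$.
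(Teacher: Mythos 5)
Your proposal is correct in outline and follows the same skeleton as the paper's proof (pointwise LLN, concavity plus a convexity lemma for uniform convergence on compacts, then Theorem 5.7 of \cite{vandervaart:1998}), but it diverges at the convexity step in a way worth noting. You argue that every summand of the probit log-likelihood is concave unconditionally (log-concavity of $\Phi$ composed with a linear map), and therefore apply Pollard's convexity lemma directly. The paper instead ties concavity to the non-separation events of \cite{albert:1984}, so that $\widetilde{M}_n$ is only guaranteed concave on a sequence of sets whose probability tends to one; this forces the authors to first prove an \emph{extension} of the convexity lemma (Lemma \ref{le:ExtConvexityLemma}) allowing convexity to hold merely with probability converging to one. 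Your observation that the probit log-likelihood is always concave is sound and, if accepted, makes the extended lemma unnecessary for this step — a genuine simplification. What the paper's route buys is a uniform treatment in which Condition C3 simultaneously handles the existence and uniqueness of the finite-sample MLE $\widehat{\eta}_\gamma$ (which can fail under separation even though the log-likelihood is concave), something your write-up only gestures at in the final paragraph. One caveat on your uniqueness argument for $\eta_\gamma^*$: establishing strict concavity of $\widetilde{M}$ via negative-definiteness of $-\mathbb{E}_{F_0}[D(\cdot)\,z_{1\gamma}z_{1\gamma}^\top]$ requires the non-singularity of $\mathbb{E}_{F_0}[z_{1\gamma}z_{1\gamma}^\top]$, which is Condition C5 and is \emph{not} among the hypotheses of this proposition (only C1--C4 are assumed); C3 by itself, being a statement about finite-sample separation probabilities, does not deliver it. The paper instead extracts uniqueness of the maximizer directly from the conclusion of the convexity lemma, mirroring the proof of Proposition \ref{prop:Consistency}, so if you prefer your strict-concavity route you should state explicitly that you are additionally invoking C5.
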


\begin{proposition}\label{prop:AsympNormP}
Assume Conditions C1-C6.
Let $V_{\eta_\gamma^*}$ be the Hessian of $\widetilde{M}(\eta_{\gamma})$ at $\eta_\gamma^*$.
Then
$\sqrt{n}(\widehat{\eta}_\gamma-\eta_\gamma^*) \stackrel{D}{\longrightarrow}
N \left( 0, V_{\eta_\gamma^*}^{-1} {\mathbb E}_{F_0}[ \nabla {\widetilde{m}}(\eta_\gamma^*) \nabla {\widetilde{m}}(\eta_\gamma^*)^{\top}] V_{\eta_\gamma^*}^{-1}\right)$,
where $V_{\eta_\gamma^*}$ is the Hessian of $\widetilde{M}(\eta_\gamma)$ evaluated at $\eta_\gamma^*$
and $\widetilde{m}(\eta_\gamma^*) = \log p(\omega_1 \mid \eta_\gamma^*)$.
\end{proposition}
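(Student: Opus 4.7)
The plan is to apply Theorem 5.23 in \cite{vandervaart:1998}, following the same strategy as in the proof of Proposition \ref{prop:AsympNorm}. The four ingredients I will need to verify are: (i) consistency of $\widehat{\eta}_\gamma$ to $\eta_\gamma^*$, already established in Proposition \ref{prop:ConsistencyP}; (ii) a Lipschitz-type envelope bound for the log-likelihood contributions $m_{\eta_\gamma}(\omega_1,z_1) = \omega_1 \log \Phi(x_{1\gamma}^\top \alpha_\gamma + s_{1\gamma}^\top \kappa_\gamma) + (1-\omega_1) \log (1- \Phi(x_{1\gamma}^\top \alpha_\gamma + s_{1\gamma}^\top \kappa_\gamma))$ with envelope square-integrable under $F_0$; (iii) non-singularity of the Hessian $V_{\eta_\gamma^*}$ of $\widetilde{M}(\eta_\gamma)$ at $\eta_\gamma^*$; (iv) identification of the asymptotic covariance as the sandwich form.

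First I would write the gradient $\nabla m_{\eta_\gamma}(\omega_1,z_1) = [\omega_1 \, r(x_{1\gamma}^\top \alpha_\gamma + s_{1\gamma}^\top \kappa_\gamma) - (1-\omega_1) \, r(-(x_{1\gamma}^\top \alpha_\gamma + s_{1\gamma}^\top \kappa_\gamma))] (x_{1\gamma}^\top,s_{1\gamma}^\top)^\top$, where $r(\cdot)$ is the Normal inverse Mills ratio. Exactly as in the proof of Proposition \ref{prop:AsympNorm}, the mean value theorem and Cauchy-Schwarz give $|m_{\eta_1}(\omega_1,z_1) - m_{\eta_2}(\omega_1,z_1)| \leq K(\omega_1,z_1) \|\eta_1 - \eta_2\|$ for $\eta_1,\eta_2$ in a neighbourhood $\mathcal{B}_{\eta_\gamma^*}$, where $K$ is the supremum of $\|\nabla m_{\eta_\gamma}\|$ on that neighbourhood. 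Using the bound $r(z) = \mathcal{O}(|z|)$ as $z \to -\infty$ together with compactness (C1), existence of second moments of the covariates (C3 with $k=2$), and the $\ell=2$ integrability condition in C4, a termwise application of Minkowski's and H\"older's inequalities yields $\int K^2 \, dF_0 < \infty$, as required.

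Next I would verify that $V_{\eta_\gamma^*}$ is non-singular. Computing $\nabla^2 \widetilde{M}(\eta_\gamma) = -P_{F_0}(\omega_1=1) \, \mathbb{E}_{F_0}[D(-(x_{1\gamma}^\top\alpha_\gamma+s_{1\gamma}^\top\kappa_\gamma)) (x_{1\gamma},s_{1\gamma})(x_{1\gamma},s_{1\gamma})^\top \mid \omega_1=1] - P_{F_0}(\omega_1=0) \, \mathbb{E}_{F_0}[D(x_{1\gamma}^\top\alpha_\gamma+s_{1\gamma}^\top\kappa_\gamma)(x_{1\gamma},s_{1\gamma})(x_{1\gamma},s_{1\gamma})^\top \mid \omega_1=0]$, where $D(z) = r(-z)^2 - zr(-z) \in (0,1)$. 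Since $D$ is strictly positive, this Hessian is negative definite provided the second-moment covariate matrix has full rank on the event $\omega_1=1$ (or $\omega_1=0$), which is exactly the content I would extract from condition C5 (analogous to A5$^*$ in the Laplace AAFT case). Differentiability under the integral sign is justified by the moment conditions C3-C4 via Leibniz's rule. Concavity of $\widetilde{M}$ also confirms that $\eta_\gamma^*$ is an interior maximum.

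The main obstacle, as in the AAFT and Laplace cases, is the Hessian non-singularity step, since under misspecification one cannot rely on the information matrix equality $V_{\eta_\gamma^*}^{-1} = \mathbb{E}_{F_0}[\nabla m \nabla m^\top]^{-1}$; the sandwich form must be carried explicitly through Theorem 5.23. A minor technical subtlety is checking that the sequence of random envelope functions for the second-order Taylor remainder is uniformly integrable, which again follows from C3 with $k=2$ together with the boundedness of $D$. Once these pieces are in place, Theorem 5.23 of \cite{vandervaart:1998} delivers the stated asymptotic normality with the sandwich covariance matrix.
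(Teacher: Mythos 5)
Your proposal follows essentially the same route as the paper's proof: Theorem 5.23 of \cite{vandervaart:1998} applied with consistency from Proposition \ref{prop:ConsistencyP}, the mean-value/Cauchy--Schwarz Lipschitz envelope with square-integrable $K(\omega_1,z_1)$ obtained via $r(z)=\mathcal{O}(|z|)$ and Minkowski/H\"older, and non-singularity of the Hessian of $\widetilde{M}$ at $\eta_\gamma^*$. The only quibble is a label slip---in the probit setting the covariate moment condition you need is C5 (existence and non-singularity of ${\mathbb E}_{F_0}[(x_{1\gamma}^\top,s_{1\gamma}^\top)(x_{1\gamma}^\top,s_{1\gamma}^\top)^\top]$), not ``C3 with $k=2$'', since C3 here is the separation condition; otherwise your explicit negative-definiteness argument via $D(z)>0$ together with C5 is sound and, if anything, cleaner than the paper's appeal to concavity.
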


\begin{proposition}\label{prop:BFRatesP}
Let $B_{\gamma,\gamma^*}$ be the Bayes factor in \eqref{eq:bf_laplace} under either $\pi_L$, $\pi_M$ or $\pi_E$,
where $\gamma^*$ is the probit model with smallest $d_{\gamma^*}$ minimizing \eqref{eq:probit} for $F_0$ as given in Condition B1, and $\gamma \neq \gamma^*$ another probit model.
Assume that both $\gamma^*$ and $\gamma$ satisfy Conditions C1-C6.
Suppose that $(g_M,g_E,g_L,g_S)$ are non-decreasing in $n$.

\begin{enumerate}[leftmargin=*,label=(\roman*)]

\item Let $a_n$ be as in Proposition \ref{prop:BFRatesI}. If $\gamma^* \subset \gamma$, then
$$
\log B_{\gamma \gamma^*}= \log (a_n) + \frac{r}{2} (s_{\gamma^*} - s_\gamma) \log \left( n g_S \right) + \lOp(1),
$$

\item Let $b_n$ be as in Proposition \ref{prop:BFRatesI}.
If $\gamma^* \not\subset \gamma$, then under $\pi_L$, $\pi_M$ and $\pi_E$ it holds that
$$
\log(B_{\gamma \gamma^*})= -n[\widetilde{M}(\eta^*_{\gamma^*}) - \widetilde{M}(\eta^*_\gamma)]
+ \log(b_n) + \frac{r}{2}(s_{\gamma^*} - s_\gamma) \log(n g_S) + \lOp(1).
$$
\end{enumerate}
\end{proposition}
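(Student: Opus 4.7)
\textbf{Proof plan for Proposition \ref{prop:BFRatesP}.} The strategy is to mimic the proof of Proposition \ref{prop:BFRatesI}, exploiting that the probit log-likelihood \eqref{eq:probit} is concave in $\eta_\gamma=(\alpha_\gamma,\kappa_\gamma)$, and has already been equipped (via Propositions \ref{prop:ConsistencyP}-\ref{prop:AsympNormP}) with MLE consistency and asymptotic normality under Conditions C1-C5. The absence of a scale parameter and of a censoring contribution actually simplifies matters relative to Proposition \ref{prop:BFRatesI}. I would first check that Conditions D1-D3 of Proposition \ref{prop:valid_laplace_concave} are satisfied: D1 and D2 are direct consequences of Propositions \ref{prop:ConsistencyP}-\ref{prop:AsympNormP} (the compactness in C1 yields $\widehat\eta_\gamma \stackrel{P}{\to}\eta_\gamma^*$ and $n^{-1}H(\eta_\gamma^*) \stackrel{P}{\to} V_{\eta_\gamma^*}$, which is negative definite as shown in Proposition \ref{prop:AsympNormP}); D3 holds because $\pi_L,\pi_M,\pi_E$ are continuous and bounded on the compact $\Gamma_\gamma$. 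Convexity of $A_n(s)$ follows from log-concavity of $\Phi$ \citep{bagnoli:2005}. Hence the Laplace-approximated Bayes factor is asymptotically equivalent to the exact Bayes factor and it suffices to analyse its three factors.

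Writing $B_{\gamma\gamma^*}= (2\pi)^{(d_\gamma-d_{\gamma^*})/2} e^{T_1} T_2 T_3$ as in \eqref{eq:BFLap}, the Hessian-ratio term $T_3$ is handled by the continuous mapping theorem. Since $\tilde\eta_\gamma-\widehat\eta_\gamma=\lOp(1)$ for local priors and $\lOp(n^{-1/2})$ or $\lOp(n^{-1/4})$ entrywise for pMOM/peMOM priors (Proposition 2(i) of \cite{RT17}), together with $n^{-1}H(\tilde\eta_\gamma)\stackrel{P}{\to}V_{\eta_\gamma^*}$ and positive-definiteness of $V_{\eta_\gamma^*}$, one obtains
\begin{equation*}
T_3 \, n^{(d_\gamma-d_{\gamma^*})/2} \stackrel{P}{\longrightarrow} \frac{|V_{\eta_{\gamma^*}^*}|^{1/2}}{|V_{\eta_\gamma^*}|^{1/2}} \in(0,\infty).
\end{equation*}
For the overfitted case $\gamma^*\subset\gamma$, I would expand the log-likelihood ratio around the MLE to obtain the misspecified Wilks expansion $2T_1 = z^\top W^{1/2}V^{-1}W^{1/2} z + o_p(1)$ with $z\stackrel{D}{\to}N(0,I_{d_\gamma-d_{\gamma^*}})$, so $T_1=\lOp(1)$; here $W$ and $V$ are the sandwich matrices appearing in Proposition \ref{prop:AsympNormP}. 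The prior-ratio term $T_2$ is handled exactly as in the proof of Proposition \ref{prop:BFRatesI}: under $\pi_L$ one picks up the $(g_L n)^{(p_{\gamma^*}-p_\gamma)/2}$ factor and the non-linear $(g_S n)^{r(s_{\gamma^*}-s_\gamma)/2}$ factor; under $\pi_M$ the rate $(ng_M)^{3(p_{\gamma^*}-p_\gamma)/2}$ follows from $\tilde\alpha_{\gamma j}=\lOp(n^{-1/2})$ and the cubic vanishing of the pMOM density; under $\pi_E$ one uses $\tilde\alpha_{\gamma j}=\lOp(n^{-1/4})$ together with $e^{-g_E/\tilde\alpha_{\gamma j}^2}=\lOp(e^{-g_E\sqrt{n}})$. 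Combining gives part (i).

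For the non-overfitted case $\gamma^*\not\subset\gamma$, uniform convergence of $n^{-1}\tilde\ell$ to $\widetilde M$ (Proposition \ref{prop:ConsistencyP}) together with $\widetilde M(\eta_\gamma^*)<\widetilde M(\eta_{\gamma^*}^*)$ yields $T_1/n\stackrel{P}{\to}\widetilde M(\eta_\gamma^*)-\widetilde M(\eta_{\gamma^*}^*)<0$, so $T_1= n[\widetilde M(\eta_\gamma^*)-\widetilde M(\eta_{\gamma^*}^*)]\,[1+o_p(1)]$. The prior ratio $T_2$ is controlled as in Proposition \ref{prop:BFRatesI}(ii): $\tilde\alpha_{\gamma j}$ now converges to a (possibly nonzero) constant, giving factors $g_L^{(p_{\gamma^*}-p_\gamma)/2}$, $g_M^{3(p_{\gamma^*}-p_\gamma)/2}$ and $g_E^{(p_{\gamma^*}-p_\gamma)/2}e^{g_E c}$ respectively, multiplied by the $(g_S)^{r(s_{\gamma^*}-s_\gamma)/2}$ contribution from the Zellner part on $\kappa_\gamma$. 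Combining $T_1$, $T_2$, $T_3$ delivers part (ii). The main obstacle, as in the survival case, is the non-local prior term under $\pi_E$: controlling the $e^{g_E c}$ constant requires careful bookkeeping of $\sum_j 1/(\tilde\alpha_{\gamma j})^2$ in the case $\alpha_{\gamma j}^*\neq 0$ for all $j\in\gamma$, but this follows verbatim from the argument given after \eqref{eq:conv_priorratio_local}, and the rest of the computation is routine given the log-concavity of the probit likelihood.
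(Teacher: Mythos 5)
Your proposal is correct and follows essentially the same route as the paper: the paper's proof is likewise a sketch that reuses the decomposition \eqref{eq:BFLap}, invokes Condition C3 (via \cite{albert:1984}) for concavity and existence/uniqueness of the MLE, Propositions \ref{prop:ConsistencyP}--\ref{prop:AsympNormP} for consistency and asymptotic normality, the misspecified Wilks expansion for the overfitted case, and uniform convergence via the extended convexity lemma for the non-overfitted case, with the prior-ratio and Hessian terms carried over verbatim from Proposition \ref{prop:BFRatesI}. Your additional verification of Conditions D1--D3 to link the Laplace approximation to the exact Bayes factor is not required (the statement concerns the Laplace-approximated Bayes factor directly) but is a harmless strengthening.
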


Propositions \ref{prop:ConsistencyP}-\ref{prop:BFRatesP} are akin to Propositions \ref{prop:Consistency}-\ref{prop:BFRatesI},
see Section \ref{sec:theory} for a discussion of their implications, although here the number of parameters $d_\gamma= \mbox{dim}(\eta_\gamma)= p_\gamma + r s_\gamma $ (i.e. the probit model has no dispersion parameter).
First, we prove an extension of the Convexity Lemma \citep{pollard:1991} in which we allow for the probability of convexity to converge to 1, rather than requiring convexity for almost all realizations of the sequence of random functions.
\begin{lemma}\label{le:ExtConvexityLemma}
Let $\{\lambda_n(\theta): \theta \in \Theta\}$ be a sequence of random  functions defined on a convex, open subset $\Theta$ of ${\mathbb R}^d$. Suppose that $p_n=P(\lambda_n \text{ is convex}) \to 1$ as $n\to \infty$. Suppose $\lambda(\cdot)$ is a real valued function on $\Theta$ for which $\lambda_n(\theta) \to \lambda(\theta)$ in probability, for each $\theta \in \Theta$. Then, for each compact subset $K$ of $\Theta$,
\begin{eqnarray*}
\sup_{\theta\in K} \vert \lambda_n(\theta) - \lambda(\theta)\vert \stackrel{P}{\to} 0.
\end{eqnarray*}
The function $\lambda(\cdot)$ is necessarily convex on $\Theta$.

\proof

Let $S_n = \{ \omega \in \Omega : \lambda_n^{(\omega)} \text{ is convex}\} $, where $\lambda_n^{(\omega)}$ denotes a realization of the random function $\lambda_n$ associated to an element $\omega$ in the sample space $\Omega$. By assumption, $p_n =  P(S_n) \to 1$ as $n\to \infty$. For any fixed $\epsilon>0$
\begin{eqnarray*}
P\left( \sup_{\theta\in K} \vert \lambda_n(\theta) - \lambda(\theta)\vert  < \epsilon \right) &=&  P\left( \omega\in S_n : \sup_{\theta\in K} \vert \lambda_n^{(\omega)}(\theta) - \lambda(\theta)\vert < \epsilon \right)\\
&+&  P\left( \omega\in S_n^{\complement} : \sup_{\theta\in K} \vert \lambda_n^{(\omega)}(\theta) - \lambda(\theta)\vert < \epsilon \right).
\end{eqnarray*}
Since $P\left( \omega\in S_n^{\complement} : \sup_{\theta\in K} \vert \lambda_n^{(\omega)}(\theta) - \lambda(\theta)\vert < \epsilon \right) \leq P(S_n^{\complement}) = 1-p_n\to 0$, by assumption, the goal is to prove that
\begin{eqnarray*}
P\left( \omega\in S_n : \sup_{\theta\in K} \vert \lambda_n^{(\omega)}(\theta) - \lambda(\theta)\vert < \epsilon \right) \rightarrow 1.
\end{eqnarray*}

Now, consider the sequence of random functions
$$\widetilde{\lambda}_n(\theta) =  \begin{cases}
\lambda_n(\theta), \omega \in S_n,\\
\varphi(\theta), \omega \in S_n^{\complement},
\end{cases}$$
where $\varphi(\cdot)$ is an any finite convex approximation of $\lambda(\cdot)$. By definition, this is a sequence of convex functions and $\widetilde{\lambda}_n(\theta) \stackrel{P}{\to} \lambda(\theta)$. Then, by the convexity lemma in \citep{pollard:1991},
\begin{eqnarray*}
\sup_{\theta\in K} \vert \widetilde{\lambda}_n(\theta) - \lambda(\theta)\vert \stackrel{P}{\to} 0,
\end{eqnarray*}
and $\lambda$ is convex on $\Theta$. Thus, for each $\epsilon>0$
 \begin{eqnarray*}
P\left(\sup_{\theta\in K} \vert \widetilde{\lambda}_n(\theta) - \lambda(\theta)\vert < \epsilon \right) &=&
P\left( \omega\in S_n : \sup_{\theta\in K} \vert \lambda_n^{(\omega)}(\theta) - \lambda(\theta)\vert < \epsilon \right)\\
 &+& P\left( \omega\in S_n^{\complement} : \sup_{\theta\in K} \vert \varphi(\theta) - \lambda(\theta)\vert < \epsilon \right).
\end{eqnarray*}
From the previous statement, we know that $P\left(\sup_{\theta\in K} \vert \widetilde{\lambda}_n(\theta) - \lambda(\theta)\vert < \epsilon \right)  \to 1$ and, since 
$P\left( \omega\in S_n^{\complement} : \sup_{\theta\in K} \vert \varphi(\theta) - \lambda(\theta)\vert < \epsilon \right)  \leq P(S_n^{\complement}) \to 0$, we obtain
$$P\left( \omega\in S_n : \sup_{\theta\in K} \vert \lambda_n^{(\omega)}(\theta) - \lambda(\theta)\vert < \epsilon \right) \to 1,$$
as we wished to prove.

\end{lemma}

\subsection*{Proof of Proposition \ref{prop:ConsistencyP}}

The proof is analogous to that of Proposition \ref{prop:Consistency} in Section \ref{app:proof_consistency}.
Let $\widetilde{M}_n(\eta_\gamma) = n^{-1}\log p_B(w \mid \eta_\gamma)$ be the average log-likelihood evaluated at $\eta_\gamma$. 
The assumption that data are generated \emph{i.i.d} from $F_0$ \color{black} and that $E_{F_0}(|\widetilde{m}(\eta_\gamma)|) < \infty$ \color{black} give that,
by the law of large numbers, $\widetilde{M}_n(\eta_\gamma) \stackrel{P}{\rightarrow} \widetilde{M}(\eta_\gamma)$, for each $\eta_\gamma\in\Gamma_\gamma$. 


The aim is to first show that the average log-likelihood $\widetilde{M}_n(\eta_\gamma)$ converges to its expected value $\widetilde{M}(\eta_\gamma)$ uniformly in $\eta_\gamma$, and then that this implies $\widehat{\eta}_\gamma \stackrel{P}{\longrightarrow} \eta_\gamma^*$. To see that $\widetilde{M}_n(\eta_\gamma)$ converges to $\widetilde{M}(\eta_\gamma)$, uniformly in $\eta_\gamma$,
we use that, under Conditions C1--C4 and by the results in \cite{albert:1984}, $\widetilde{M}_n(\eta_\gamma)$ is a sequence of concave functions in $\eta_\gamma$, for all samples in the sequence of sets ${\mathcal A}_n=\left({\mathcal C}_{n,\gamma} \cup {\mathcal Q}_{n,\gamma}\right)^\complement$, whose probability converge to 1. Thus, Lemma \ref{le:ExtConvexityLemma} implies that
\begin{eqnarray}\label{SC1P}
\sup_{\eta_\gamma \in K} \left\vert \widetilde{M}_n(\eta_\gamma) - \widetilde{M}(\eta_\gamma)\right\vert \stackrel{P}{\longrightarrow} 0,
\end{eqnarray}
for each compact set $K\subseteq\Theta_\gamma$, and also that $\widetilde{M}(\eta_\gamma)$ is finite and concave in $\Gamma_\gamma$.
\color{black} Since $M(\eta_\gamma)$ has a maximum and the MLE exists by C4, this implies that $\eta_\gamma^*$ and $\widehat{\eta}_\gamma$ must both occur in some compact set $K$ and hence that they are both unique (by concavity). \color{black}

The uniform convergence in \eqref{SC1P} and the uniqueness of the maximum provide the conditions to apply Theorem 5.7 from \cite{vandervaart:1998}, which gives that $\widehat{\eta}_\gamma \stackrel{P}{\longrightarrow} \eta_\gamma^*$.

\subsection*{Proof of Proposition \ref{prop:AsympNormP}}

The idea of the proof is the analogous to that of Proposition \ref{prop:AsympNormP}. Define,
\begin{eqnarray*}
\widetilde{m}_{\eta_\gamma}(\omega_1,x_1,s_1) &\equiv& \widetilde{m}(\eta_\gamma) \\
&=& \omega_1  \log\Phi\left(x_1^{\top}\alpha_\gamma + s_1^{\top} \kappa_\gamma \right) +  (1-\omega_1)
  \log\Phi\left(-x_1^{\top}\alpha_\gamma - s_1^{\top} \kappa_\gamma \right).
\end{eqnarray*}
let ${\mathcal B}_{\eta^*_{\gamma}}\subset \Theta_\gamma$ be a neighbourhood of $\eta_\gamma^*$ and consider $\eta_1,\eta_2 \in {\mathcal B}_{\eta^*_\gamma}$.
We need to show that $\vert \widetilde{m}_{\eta_1}(\omega_1,x_1,s_1) - \widetilde{m}_{\eta_2}(\omega_1,x_1,s_1) \vert$ has finite expectation under $F_0$.
Using the mean value theorem and the Cauchy-Schwarz inequality it follows that, with probability 1,
\begin{eqnarray*}
\vert \widetilde{m}_{\eta_1}(\omega_1,x_1,s_1) - \widetilde{m}_{\eta_2}(\omega_1,x_1,s_1) \vert &=& \vert \nabla \widetilde{m}_{\eta_c}(\omega_1,x_1,s_1)^{\top} (\eta_1-\eta_2)\vert\\
&\leq& \vert \vert \nabla \widetilde{m}_{\eta_c}(\omega_1,x_1,s_1) \vert \vert \cdot \vert \vert \eta_1-\eta_2 \vert \vert \\
&\leq& K(\omega_1,x_1,s_1) \cdot \vert \vert \eta_1-\eta_2 \vert \vert,
\end{eqnarray*}
where $\nabla \widetilde{m}_{\eta_\gamma}(\omega_1,x_1,s_1)$ is the gradient of $\widetilde{m}_{\eta_\gamma}(\omega_1,x_1,s_1)$,
$\eta_c = (1-c)\eta_1+c\eta_2$ for some $c\in(0,1)$
and $K(\omega_1,x_1,s_1)= \sup_{{\mathcal B}_{\eta_\gamma^*}} \vert\vert \nabla \widetilde{m}_{\eta_\gamma}(\omega_1,x_1,s_1) \vert\vert$.
This bound together with assumption C5
imply that
\begin{eqnarray*}
\int K(\omega_1,x_1,s_1)^2 dF_0(\omega_1 ,z_1)  < \infty,
\end{eqnarray*}
where $z_1 = (x_1^{\top},s_1^{\top})^{\top}$.
Therefore, by using the mean value theorem and the Cauchy-Schwarz inequality, it follows that for $\eta_1, \eta_2\in{\mathcal B}_{\eta_\gamma^*}$, with probability 1,
\begin{eqnarray*}
\vert \widetilde{m}_{\eta_1}(\omega_1,x_1,s_1) - \widetilde{m}_{\eta_2}(\omega_1,x_1,s_1) \vert &=& \vert \nabla \widetilde{m}_{\eta_c}(\omega_1,x_1,s_1)^{\top} (\eta_1-\eta_2)\vert\\
&\leq& \vert \vert \nabla \widetilde{m}_{\eta_c}(\omega_1,x_1,s_1) \vert \vert \cdot \vert \vert \eta_1-\eta_2 \vert \vert
\\&\leq& K(\omega_1,x_1,s_1) \cdot \vert \vert \eta_1-\eta_2 \vert \vert,
\end{eqnarray*}
where $\eta_c = (1-c)\eta_1+c\eta_2$, for some $c\in(0,1)$, and $\eta_1,\eta_2 \in {\mathcal B}_{\eta^*_{\gamma}}\subset \Theta_\gamma$.

\color{black}
We now present the entries of  the Hessian of the expected log-likelihood $\widetilde{M}(\eta_\gamma)$.
\color{black}
Let us denote $\rho =  P_{F_0}(\omega_1=1)$.
\begin{eqnarray*}
\widetilde{M}(\eta_{\gamma}) &=& \rho \int \log\Phi\left(x_1^{\top}\alpha_\gamma + s_1^{\top} \kappa_\gamma \right) dF_0(z_1 \mid \omega_1=1)
 \nonumber \\
& + & (1-\rho)\int \log\Phi\left(-x_1^{\top}\alpha_\gamma - s_1^{\top} \kappa_\gamma \right) dF_0(z_1 \mid \omega_1=0).
\end{eqnarray*}
To obtain the gradient of $\widetilde{M}(\eta_{\gamma})$, note that under Conditions C1-C4 we can apply Leibniz's integral rule to differentiate under the integral sign, and hence
\begin{eqnarray*}
\nabla_{(\beta_\gamma,\delta_\gamma)} \widetilde{M}(\eta_{\gamma}) &=& \rho \int r\left(x_1^{\top}\alpha_\gamma + s_1^{\top} \kappa_\gamma \right)  \begin{pmatrix} x_1 \\ s_1 \end{pmatrix}dF_0(z_1 \mid \omega_1=1)  \\
&+&  (1-\rho)    \int r\left(x_1^{\top}\alpha_\gamma + s_1^{\top} \kappa_\gamma \right)  \begin{pmatrix} x_1 \\ s_1 \end{pmatrix}  dF_0(z_1 \mid  \omega_1=0),
\end{eqnarray*}
Similarly, the entries of the Hessian matrix are
\begin{eqnarray*}
\nabla^2_{(\beta_\gamma,\delta_\gamma)} \widetilde{M}(\eta_{\gamma}) &=&
- \rho \int D\left(-x_1^{\top}\beta_\gamma - s_1^{\top} \delta_\gamma \right)  \begin{pmatrix} x_1 \\ s_1 \end{pmatrix} \begin{pmatrix} x_1 \\ s_1 \end{pmatrix}^{\top} dF_0(z_1 \mid \omega_1=1)\\
&+& (1-\rho)    \int  D\left(x_1^{\top}\beta_\gamma + s_1^{\top} \delta_\gamma \right)\begin{pmatrix} x_1 \\ s_1 \end{pmatrix} \begin{pmatrix} x_1 \\ s_1 \end{pmatrix}^{\top} dF_0(z_1 \mid  \omega_1=0).
\end{eqnarray*}
The finiteness of $\widetilde{M}(\eta_{\gamma})$, its gradient and Hessian follows by conditions C4--C6. From Proposition \ref{prop:ConsistencyP}, together with conditions C4--C6, we have that $\widetilde{M}(\eta_{\gamma})$ is concave and, consequently, the Hessian
\begin{eqnarray*}
V_{\eta_{\gamma}} =   \nabla^2_{(\beta_\gamma,\delta_\gamma)} \widetilde{M}(\eta_{\gamma}) .
\end{eqnarray*}
is non-singular at $\eta_{\gamma}^*$. Thus, the asymptotic normality follows by Theorem 5.23 from \cite{vandervaart:1998} together with the consistency results in Proposition \ref{prop:ConsistencyP}.

\subsection*{Proof of Proposition \ref{prop:BFRatesP}}

The proof is analogous to that of Propositions \ref{prop:BFRatesI} and \ref{prop:BFRatesCox}.
For brevity, we provide a sketch.
The decomposition of the Laplace approximation to Bayes factors is analogous to \eqref{eq:BFLap}.
Condition C3 guarantees that the probability of separation of the data converges to zero as the sample size increases,
which in turn guarantees existence and uniqueness of the MLE in a sequence of sets of realizations with probability converging to 1 \citep{albert:1984}.
Further, from Proposition \ref{prop:ConsistencyP} and Proposition \ref{prop:AsympNormP} the MLE is consistent, converges to the KL-optimal value,
and is asymptotically normally-distributed.

These results imply that the behavior of the second and third terms in \eqref{eq:BFLap} remains as in Proposition \ref{prop:BFRatesI}.
They also imply that in the case $\gamma^* \subset \gamma$ the asymptotic expansion of the log-likelihood ratio remains as in Proposition \ref{prop:BFRatesI},
namely
$$2 [ \widetilde{\ell}(\widehat{\eta}_\gamma) -  \widetilde{\ell}(\widehat{\eta}_{\gamma^*}) ]=
 z^\top W_{\gamma \setminus \gamma^*}^{1/2} U^{-1}_{\gamma \setminus \gamma^*} W_{\gamma \setminus \gamma^*}^{1/2} z +  \lo_p(1)$$
and hence $2[ \widetilde{\ell}(\widehat{\eta}_\gamma) -  \widetilde{\ell}(\widehat{\eta}_{\gamma^*})] = \lOp(1)$.
The case $\gamma^* \not\subset \gamma$ follows, as in Proposition \ref{prop:BFRatesI}, from the law of large numbers and uniform convergence,
the latter being guaranteed by the log-likelihood's concavity and Pollard's convexity lemma \citep{pollard:1991} and its extension in Lemma \ref{le:ExtConvexityLemma}.
Finally, the concavity of the expected log-likelihood in a neighbourhood of the KL optimal value, established in the proof of  Proposition \ref{prop:AsympNormP}, guarantees that the fourth term in \eqref{eq:BFLap} remains as in Proposition \ref{prop:BFRatesI}.

\section{Theory for log-concave likelihoods}
\label{sec:theory_logconcave}

We extend our theory to general concave log-likelihood settings,
including generalized linear models under the natural parameter such as logistic or Poisson regression.
The main result is that misspecified Bayes factors are asymptotically equivalent to their Laplace approximation,
the technical basis for the result builds on the asymptotic misspecified MLE normality results of \cite{hjort:2011}.
The consequence is that, under minimal conditions, one obtains Bayes factor rates of the same form 
as in Propositions \ref{prop:BFRatesI}-~\ref{prop:BFRatesCox}.
The validity of Laplace approximations was also studied in \cite{kass:1990} for even more general likelihoods,
where the authors remarkably bounded the relative error,
our contribution is considering misspecification and that our technical conditions are significantly simpler, albeit restricted to log-concave likelihoods.
We briefly outline the conditions and the strategy how to check them.

Let $(y_i,z_i) \sim F_0$ independently for $i=1,\ldots,n$. Let the optimal parameter value be
$$
\eta_\gamma^*= \arg\max_{\eta_\gamma} {\mathbb E}_{F_0} \left[ \log p(y_1,z_1 \mid \eta_\gamma, \gamma) \right],
$$
and $H(\eta_\gamma^*)$ the observed log-likelihood hessian evaluated at $\eta_\gamma=\eta_\gamma^*$.
Let 
  $$
A_n(s)= \log p(y \mid \widehat{\eta}_\gamma, \gamma) - \log p(y \mid \widehat{\eta}_\gamma + s/\sqrt{n}, \gamma)= - \frac{s^{\top} H(\eta^*_\gamma) s}{2 n} + r_n(s/\sqrt{n}),
$$
where $r_n$ is the error in the second-order Taylor log-likelihood expansion at $\eta_\gamma^*$.
Assume that
\begin{enumerate}[leftmargin=*, label={\bf D\arabic*.}]
\item $\widehat{\eta}_\gamma \stackrel{P}{\longrightarrow} \eta_\gamma^*$ under $F_0$, as $n \rightarrow \infty$.
 
\item $H(\eta_\gamma^*)/n \stackrel{P}{\longrightarrow} H_\gamma^*$ under $F_0$, as $n \rightarrow \infty$, for positive-definite $H^*_\gamma$.
 
\item $\pi(\eta_\gamma \mid \gamma)$ is continuous at $\eta^*_\gamma$ and $\sup_{\eta_\gamma } \pi(\eta_\gamma \mid \gamma) \leq C$ for a constant $C$.
\end{enumerate}

Conditions D1-D3 are minimal. If one assumes that $\eta_\gamma$ has bounded support, then D1 holds \citep{hjort:2011} and D2 also holds provided $|H(\eta_\gamma^*)|$ has finite mean, by the continuous mapping theorem and strong law of large numbers.
More generally one may show the asymptotic validity of a Taylor log-likelihood expansion around $\eta_\gamma^*$
and establish asymptotic normality of $\widehat{\eta}_\gamma$, see Theorem 4.1 in \cite{hjort:2011}.
The boundedness prior condition in D3 can be relaxed, but simplifies the proof.

\begin{proposition}
 Assume that $P\left( A_n(s) \text{ is convex in } s\right ) \to 1$, as $n\to \infty$, and Conditions D1-D3. Then, as $n \rightarrow \infty$,
 \begin{equation}
p(y \mid \gamma) \times
\frac{n^{d_\gamma/2} |H_\gamma^*|^{1/2}}{p(y \mid \widehat{\eta}_\gamma, \gamma) \pi(\eta_\gamma^* \mid \gamma) (2\pi)^{d_\gamma/2}} \stackrel{P}{\longrightarrow} 1.
\end{equation}
\label{prop:valid_laplace_concave}
\end{proposition}
As an immediate corollary, for any local prior $B_{\gamma \gamma^*}$ is asymptotically equivalent to
\begin{align}
e^{L(\gamma, \gamma^*)/2} \frac{\pi(\eta_\gamma^* \mid \gamma)}{\pi(\eta_{\gamma^*}^* \mid \gamma^*)}
\left(\frac{2 \pi}{n}\right)^{(d_\gamma - d_{\gamma^*})/2}
\frac{|H^*_{\gamma^*}|^{1/2}}{|H^*_\gamma|^{1/2}},
\nonumber
\end{align}
where $L(\gamma, \gamma^*)$ is the likelihood-ratio test statistic between $\gamma$ and $\gamma^*$.
For non-local priors the term $\pi(\eta_\gamma^* \mid \gamma)$ can be zero, which requires studying the rate at which $\pi(\widehat{\eta}_\gamma \mid \gamma)$ vanishes.
One may then characterize $B_{\gamma \gamma^*}$ via the asymptotic distribution of $L(\gamma, \gamma^*)$, which follows directly after proving the validity of the log-likelihood Taylor expansion, to obtain the Bayes factor rates featuring in Proposition \ref{prop:BFRatesI}, \ref{prop:BFRatesCox} and \ref{prop:BFRatesP}. We refer the reader to their proofs for further details.

To prove Proposition \ref{prop:valid_laplace_concave} we recall Lemma A3 in \cite{hjort:2011},
then state and prove a result on general conditions for the validity of Laplace approximations and finally prove the desired result for log-concave likelihoods.

\subsection{Dominated convergence from pointwise convergence and bounded in probability}

Lemma \ref{lem:convprob_integral} establishes that the integral of a random function that converges pointwise in probability to a limit function,
converges to the integral of the limit, provided the function can be bounded with probability tending to 1.
The proof is a direct consequence of the Dominated Convergence Theorem, see Lemma A3 from \cite{hjort:2011}.

\begin{lemma}
  (Lemma A3 from \cite{hjort:2011}).
  Let $\{G_n(s,\omega_n)\}$ be a sequence of random functions, where $\omega_n$ is the random outcome with probability distribution $F_0$.
  Suppose that $G_n(s,\omega_n) \stackrel{P}{\longrightarrow} G(s)$, for each $s$,
  and that there exists a function $W(s)$ such that $\int W(s) ds < \infty$
  and
  $$
\lim_{n \rightarrow \infty} P_{F_0}( \{ w: G_n(s,\omega_n) \leq W(s) \mbox{ for all } s \} ) = 1.
$$
Then $\int G_n(s, \omega_n) ds \stackrel{P}{\longrightarrow} \int G(s) ds$.
  \label{lem:convprob_integral}
\end{lemma}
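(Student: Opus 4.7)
The plan is to prove the stronger conclusion of $L^1$ convergence (which implies convergence in probability) via a double application of the dominated convergence theorem: once in the probability space for each fixed $s$, and once in the $s$-variable with respect to Lebesgue measure. The key device is to truncate $G_n$ at the dominating function $W$, so as to work with a uniformly bounded integrand, and then to use the high-probability event $A_n := \{\omega : G_n(s,\omega_n) \leq W(s) \text{ for all } s\}$ to transfer the conclusion back to the original $G_n$. I work under the natural assumption $G_n \geq 0$ that arises in the Laplace-approximation application where this lemma is invoked.

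First I would observe that $G(s) \leq W(s)$ for every $s$: since $G_n(s,\omega_n) \leq W(s)$ holds with $P_{F_0}$-probability tending to one and $G_n(s,\omega_n) \stackrel{P}{\longrightarrow} G(s)$, the inequality passes to the limit. Next, define the truncation $G_n^W(s,\omega) := G_n(s,\omega) \wedge W(s)$, so that $0 \leq G_n^W \leq W$ pointwise and $G_n^W(\cdot,\omega_n) \equiv G_n(\cdot,\omega_n)$ on $A_n$. A quick case split on $\{G_n \leq W\}$ versus $\{G_n > W\}$ (using $G \leq W$ in the latter) gives $|G_n^W - G| \leq |G_n - G|$, hence $G_n^W(s,\omega_n) \stackrel{P}{\longrightarrow} G(s)$ for each $s$. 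Since $|G_n^W(s,\omega_n) - G(s)| \leq 2 W(s)$ with $W(s)$ a deterministic constant in $\omega$, bounded convergence in the probability space yields $E_{F_0}|G_n^W(s,\omega_n) - G(s)| \to 0$ for each $s$. This expectation is itself dominated by $2W(s)$, which is Lebesgue-integrable in $s$, so a second application of dominated convergence yields $\int E_{F_0}|G_n^W(s,\omega_n) - G(s)|\,ds \to 0$.

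By Tonelli's theorem, the latter equals $E_{F_0} \int |G_n^W(s,\omega_n) - G(s)|\,ds$, which bounds $E_{F_0}\left| \int G_n^W(s,\omega_n)\,ds - \int G(s)\,ds \right|$. Hence $\int G_n^W(s,\omega_n)\,ds \to \int G(s)\,ds$ in $L^1$, and therefore in probability. Since $\int G_n(s,\omega_n)\,ds = \int G_n^W(s,\omega_n)\,ds$ on $A_n$ and $P_{F_0}(A_n^c) \to 0$, splitting on $A_n$ versus its complement delivers $\int G_n(s,\omega_n)\,ds \stackrel{P}{\longrightarrow} \int G(s)\,ds$.

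The main obstacle is the first step: the classical dominated convergence theorem requires almost-sure domination, whereas here the uniform-in-$s$ bound $G_n \leq W$ only holds with probability tending to one. The truncation $G_n \mapsto G_n \wedge W$ is precisely what upgrades this high-probability domination into a sure pointwise bound, making the classical theorem applicable; the residual event $A_n^c$ is then absorbed at the final step because its probability is vanishing.
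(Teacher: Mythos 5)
Your proof is correct. The paper itself gives no argument here: it simply cites Lemma A3 of \cite{hjort:2011} and remarks that the result is ``a direct consequence of the Dominated Convergence Theorem,'' so your writeup is a legitimate self-contained filling-in of that claim, and the truncation $G_n \wedge W$ is exactly the right device for upgrading the domination that holds only with probability tending to one into the sure domination that the classical theorem needs. Two small points worth keeping explicit: (i) the nonnegativity of $G_n$ that you assume is genuinely needed (without a lower bound the statement is false, e.g.\ $G_n = G - f_n$ with $0 \leq f_n \leq W$, $f_n \to 0$ pointwise but $\int f_n \not\to 0$ is ruled out only because $f_n \leq G_n + f_n = G_n^{\phantom{W}}$... more simply, mass can escape to $-\infty$ in $s$), and it does hold in the application since $G_n(s,\omega_n)=e^{-A_n(s)}\pi(\widehat\eta+s/\sqrt n)\geq 0$; and (ii) the Tonelli step tacitly uses joint measurability of $(s,\omega)\mapsto G_n(s,\omega)$, which is standard for the random functions in play but should be acknowledged.
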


\subsection{Asymptotic validity of Laplace approximation}

Recall that we assume $(y_i,z_i) \sim F_0$ independently from $i=1,\ldots,n$.
To ease notation let $\eta=\eta_\gamma$, $L_n(\eta)= p(y \mid \eta_\gamma, \gamma)$, $\pi(\eta)= \pi(\eta_\gamma \mid \gamma)$,
$m_n= \int L_n(\eta) \pi(\eta) d \eta$
and $H_n=H(\eta_\gamma^*)$.
where $\eta \in \Theta \subseteq \mathbb{R}^p$.
Recall that
$$
A_n(s)= \log L_n(\widehat{\eta}) - \log L_n(\widehat{\eta} + s/\sqrt{n})= - \frac{1}{2} s^{\top} \frac{H_n}{n} s + r_n(s/\sqrt{n}),
$$
where $r_n()$ is continuous at 0.

\begin{lemma}
Assume that
\begin{enumerate}[leftmargin=*]
\item Consistency. $\widehat{\eta} \stackrel{P}{\longrightarrow} \eta^*$, as $n \rightarrow \infty$.
{
\item  Pointwise convergence. There exist a positive definite matrix $H^*$ such that, as $n \rightarrow \infty$,
  $$
e^{-A_n(s)} \pi(\widehat{\eta} + s/\sqrt{n}) \stackrel{P}{\longrightarrow} e^{-\frac{1}{2} s^{\top} H^* s} \pi(\eta^*).
$$

}
\item Bounded in probability. There exists a function $W(s)$ such that $\int W(s) ds < \infty$ and
  $$
\lim_{n \rightarrow \infty} P_{F_0}\left( e^{-A_n(s)} \pi(\widehat{\eta} + s/\sqrt{n}) < W(s) \right)= 1.
  $$
\end{enumerate}
Then
$$
\frac{m_n n^{p/2}}{L_n(\widehat{\eta})} \stackrel{P}{\longrightarrow}
\frac{\pi(\eta^*) (2\pi)^{p/2}}{|H^*|^{1/2}},
$$

\label{lem:valid_laplace}
\end{lemma}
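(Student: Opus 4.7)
The plan is to reduce the claim to a straightforward application of Lemma \ref{lem:convprob_integral} via a standard change of variables. First, I would substitute $s = \sqrt{n}(\eta - \widehat{\eta})$ in the integral defining $m_n$, so that $d\eta = n^{-p/2} ds$ and the region of integration $\Theta$ becomes the set $S_n = \{s : \widehat{\eta} + s/\sqrt{n} \in \Theta\}$. By the definition of $A_n$, we have $L_n(\widehat{\eta} + s/\sqrt{n}) = L_n(\widehat{\eta}) e^{-A_n(s)}$, hence
\begin{equation*}
\frac{m_n n^{p/2}}{L_n(\widehat{\eta})} = \int_{S_n} e^{-A_n(s)} \pi(\widehat{\eta} + s/\sqrt{n}) ds.
\end{equation*}
Writing this as an integral over all of $\mathbb{R}^p$ with the integrand multiplied by $\mathbf{1}_{S_n}(s)$, the consistency assumption $\widehat{\eta} \stackrel{P}{\longrightarrow} \eta^*$ (together with $\eta^*$ being an interior point of $\Theta$, which is implicit in the setup) implies $\mathbf{1}_{S_n}(s) \stackrel{P}{\longrightarrow} 1$ for each fixed $s$.

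Next, I would invoke Lemma \ref{lem:convprob_integral}. Assumption 2 (pointwise convergence) together with the indicator convergence just noted gives that, for each $s$,
\begin{equation*}
G_n(s) := \mathbf{1}_{S_n}(s) e^{-A_n(s)} \pi(\widehat{\eta} + s/\sqrt{n}) \stackrel{P}{\longrightarrow} G(s) := e^{-\frac{1}{2} s^{\top} H^* s} \pi(\eta^*).
\end{equation*}
Assumption 3 (bounded in probability) supplies the dominating envelope $W(s)$ with $\int W(s) ds < \infty$, and the indicator factor only helps. Therefore Lemma \ref{lem:convprob_integral} yields
\begin{equation*}
\int G_n(s) ds \stackrel{P}{\longrightarrow} \int G(s) ds = \pi(\eta^*) \int e^{-\frac{1}{2} s^{\top} H^* s} ds = \pi(\eta^*) \frac{(2\pi)^{p/2}}{|H^*|^{1/2}},
\end{equation*}
where the last equality is the standard multivariate Gaussian normalizing constant (valid since $H^*$ is positive definite by Assumption 2).

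Combining the two displays gives exactly the claimed limit. There is no real obstacle here: the proof is bookkeeping built on top of Lemma \ref{lem:convprob_integral}. The only mild point that would need a line of justification is that the consistency assumption lets one treat the domain-restriction indicator as asymptotically innocuous, which is immediate since $\eta^*$ is an interior point (and implicit in the pointwise convergence of $\pi(\widehat{\eta} + s/\sqrt{n})$ at $s=0$ to $\pi(\eta^*)$). The heavy lifting --- verifying the pointwise convergence (via the quadratic expansion $A_n(s) \to \tfrac{1}{2} s^{\top} H^* s$) and the dominating bound --- has been offloaded to the hypotheses, which is precisely why this lemma is useful for the subsequent proof of Proposition \ref{prop:valid_laplace_concave}, where log-concavity of the likelihood will be used to produce an explicit exponentially-decaying envelope $W(s)$.
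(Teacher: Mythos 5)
Your proof is correct and follows essentially the same route as the paper's: the change of variables $s=\sqrt{n}(\eta-\widehat{\eta})$ with Jacobian $n^{-p/2}$, followed by a direct application of Lemma \ref{lem:convprob_integral} using the hypothesized pointwise limit and dominating envelope $W(s)$, and the Gaussian normalizing constant. Your extra remark about the domain-restriction indicator $\mathbf{1}_{S_n}(s)$ is a small point of additional care that the paper elides, but it does not change the argument.
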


\begin{proof}

Note that
\begin{align}
  &\frac{m_n}{L_n(\widehat{\eta})}=  \int \exp \{ \log L_n(\eta) - \log L_n(\widehat{\eta}) \} \pi(\eta) d \eta=
  \nonumber \\
  & \frac{1}{n^{p/2}}  \int \exp \{ - [ \log L_n(\widehat{\eta} + s/\sqrt{n}) - \log L_n(\widehat{\eta} ) ] \} \pi(\widehat{\eta}+s/\sqrt{n}) d s
\nonumber \\
  & = \frac{1}{n^{p/2}} \int e^{-A_n(s)} \pi(\widehat{\eta}_n+s/\sqrt{n}) d s,
    \nonumber
\end{align}
where we used the change of variables $s= \sqrt{n}(\eta - \widehat{\eta})$, which has Jacobian equal to $n^{-p/2}$.
Given the pointwise convergence and boundedness in probability assumptions, we can apply Lemma \ref{lem:convprob_integral} to obtain
$$
\int e^{-A_n(s)} \pi(\widehat{\eta} +s/\sqrt{n}) d s \stackrel{P}{\longrightarrow} \pi(\eta^*) \int e^{-\frac{1}{2} s^{\top}H^*s} ds=
\pi(\eta^*) \frac{(2\pi)^{p/2}}{|H^*|^{1/2}},
$$
as we wished to prove.
\end{proof}

\subsection{Proof of Proposition \ref{prop:valid_laplace_concave}}

The proof is an adaptation of the proof of Theorem 4.2 in \cite{hjort:2011}, using Lemmas \ref{lem:convprob_integral} and \ref{lem:valid_laplace}.
From the proof of Lemma \ref{lem:valid_laplace},
\begin{align}
  \frac{m_n n^{p/2}}{L_n(\widehat{\eta})}=  \int e^{-A_n(s)} \pi(\widehat{\eta}+s/\sqrt{n}) d s=
  \int e^{\frac{1}{2} s^{\top} (H_n/n) s + r_n(s/\sqrt{n})}  \pi(\widehat{\eta}+s/\sqrt{n}) d s.
    \nonumber
\end{align}
If the conditions in Lemma \ref{lem:valid_laplace} hold, then
$$
\frac{m_n n^{p/2}}{L_n(\widehat{\eta})} \stackrel{P}{\longrightarrow} \int e^{\frac{1}{2} s^{\top}H^*s} ds=
\frac{(2\pi)^{p/2}}{|H^*|^{1/2}} > 0,
$$
proving the result.
Hence we just need to check that Conditions 2-3 in Lemma \ref{lem:valid_laplace} are satisfied.
Our Assumptions 2-3, along with $r_n()$ being continuous at 0, guarantee pointwise convergence
of $e^{-A_n(s)} \pi(\widehat{\eta} + s/\sqrt{n})$ to $e^{-\frac{1}{2} s^{\top} H^* s} \pi(\eta^*)$ for any fixed $s$, as $n \rightarrow \infty$,
satisfying Condition 2 in Lemma \ref{lem:valid_laplace}.
To verify Condition 3 in Lemma \ref{lem:valid_laplace} we need to find an integrable function $W(s)$ that bounds
$e^{-A_n(s)} \pi(\widehat{\eta} + s/\sqrt{n})$ with probability tending to 1.
First, note that $\pi(\widehat{\eta} + s/\sqrt{n}) < C$ by assumption, so it suffices to bound $e^{-A_n(s)}$.
Second, given that $A_n(s)$ is convex with probability tending to $1$, $A_n(s)$ converges uniformly in probability
to $- 0.5 s^{\top}H^*s$ in compact sets (Lemma \ref{le:ExtConvexityLemma} and the Convexity Lemma in \citealp{pollard:1991}). That is,
$$\sup_{s \in K} |A_n(s) - 0.5 s^{\top}H^*s| \stackrel{P}{\longrightarrow} 0,$$
for any compact $K \in S$, where $S \subseteq \eta$ is an open convex set.
Take the set defined by $\vert\vert s \vert\vert \leq 1$ and define, within that set
\begin{align}
  e^{-A_n(s)} \leq e^{- \frac{1}{2} s^{\top}H^*s} e^{\sup_{\vert\vert s \vert\vert  \leq 1} |A_n(s) + \frac{1}{2} s^{\top}H^*s|} \stackrel{P}{\longrightarrow} e^{- \frac{1}{2} s^{\top}H^*s},
\nonumber
\end{align}
therefore for $\vert\vert s \vert\vert \leq 1$ we may define $W(s)= a C e^{-\frac{1}{2} s^{\top} H^* s}$ for any constant $a>1$, which is integrable.

To define $W(s)$ for $\vert\vert s \vert\vert >1$,
from the convexity of $A_n(s)$ it follows that $A_n(s) \geq \gamma_n \vert\vert s \vert\vert $ for $\vert\vert s \vert\vert \geq 1$, where
$\gamma_n= \inf_{ \vert\vert t  \vert\vert=1} A_n(t)$ (see Lemma \ref{lem:bound_convexfun} below).
Further 
$\gamma_n \stackrel{P}{\longrightarrow} \gamma_0= \inf_{ \vert\vert t  \vert\vert=1} 0.5 t^{\top} H^* t < \infty$,
where finiteness follows from $H^*$ being finite,
and $\gamma_n \stackrel{P}{\longrightarrow} \gamma_0$ from Lemma 2 (nearness of argmins) in \cite{hjort:2011}
by noting that $A_n(s)$ is convex and converges uniformly to $0.5s^{\top}H^*s$ in compact sets,
and that 
$$
\inf_{\vert\vert s - s_0 \vert\vert< \delta} -\frac{1}{2} s^{\top}H^*s + \frac{1}{2} s_0^{\top}H^* s_0 - s_0^{\top} s,
$$
where $s_0= \arg\min -0.5 s^{\top}H^*s$
is bounded below by a positive constant for each fixed $\delta$, since $H^*$ is positive definite.
Hence for $\vert\vert s \vert\vert >1$ it holds that
\begin{align}
  e^{-A_n(s)} \leq e^{- \gamma_n \vert\vert s \vert\vert } \stackrel{P}{\longrightarrow} e^{- \gamma_0 \vert\vert s \vert\vert },
\nonumber
\end{align}
where $\int_{\vert\vert s \vert\vert >1} e^{- \gamma_0 \vert\vert s \vert\vert } < \infty$.
Hence for $\vert\vert s \vert\vert >1$ we may define the integrable function $W(s)= a C e^{- \gamma_0 \vert\vert s \vert\vert }$ for any constant $a>1$.
To summarize, $e^{-A_n(s)} \pi(\widehat{\eta} + s/\sqrt{n} \leq W(s)$ with probability tending to 1,
where $\int_{\vert\vert s \vert\vert  \leq 1} W(s)ds + \int_{\vert\vert s \vert\vert >1} W(s) ds < \infty$, satisfying Condition 3 in Lemma \ref{lem:valid_laplace}.
As a final remark, note that the proof applies to any norm $\vert\vert s \vert\vert $ in $\mathbb{R}^p$.

\begin{lemma}
  Let $A_n(s)$ be a convex function such that $A_n(0)=0$, and let $\gamma_n= \inf_{ \vert\vert t  \vert\vert=1} A_n(t)$. Then $A_n(s) \geq \gamma_n \vert\vert s \vert\vert $ for any $\vert\vert s \vert\vert >1$.
  \label{lem:bound_convexfun}
\end{lemma}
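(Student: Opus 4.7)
The plan is to reduce to a one-dimensional problem along the ray from the origin through $s$ and then apply a single convexity inequality that exploits $A_n(0)=0$. Specifically, fix $s$ with $\|s\|>1$, set $t=s/\|s\|$ so that $\|t\|=1$, and define the univariate function $f(\lambda)=A_n(\lambda t)$ for $\lambda\geq 0$. Since $A_n$ is convex on $\mathbb{R}^p$, its restriction to the ray $\{\lambda t:\lambda\geq 0\}$ is convex, i.e., $f$ is a convex function on $[0,\infty)$ satisfying $f(0)=A_n(0)=0$.

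Next I would write $1$ as a convex combination of $0$ and $\|s\|$: since $\|s\|\geq 1$, the identity $1=(1-1/\|s\|)\cdot 0+(1/\|s\|)\cdot\|s\|$ has non-negative coefficients summing to $1$. Applying convexity of $f$ at this combination gives
\begin{equation*}
f(1)\;\leq\;\bigl(1-\tfrac{1}{\|s\|}\bigr)f(0)+\tfrac{1}{\|s\|}f(\|s\|)\;=\;\tfrac{1}{\|s\|}\,f(\|s\|),
\end{equation*}
since $f(0)=0$. Rearranging yields $f(\|s\|)\geq \|s\|\,f(1)$, which translates back to $A_n(s)\geq \|s\|\,A_n(t)$.

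Finally I would close the argument by using the definition of $\gamma_n$: since $\|t\|=1$, we have $A_n(t)\geq \inf_{\|u\|=1}A_n(u)=\gamma_n$, so $A_n(s)\geq \gamma_n\|s\|$ as required. There is no real obstacle here beyond correctly arranging the convex combination so that $A_n(0)=0$ kills one term; the result is a standard convex analysis fact (convex functions vanishing at the origin grow at least linearly outside the unit ball at the rate given by their infimum on the unit sphere), and the only subtlety is ensuring $\|s\|\geq 1$ so that $1-1/\|s\|\geq 0$, which is precisely the hypothesis of the lemma.
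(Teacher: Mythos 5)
Your proof is correct and is essentially the same argument as the paper's: both express the unit-sphere point $s/\|s\|$ as the convex combination $(1-1/\|s\|)\cdot 0 + (1/\|s\|)\cdot s$, use $A_n(0)=0$ to obtain $A_n(s)\geq \|s\|A_n(s/\|s\|)$, and finish with the definition of $\gamma_n$. The reduction to the univariate restriction $f(\lambda)=A_n(\lambda t)$ is only a cosmetic repackaging of the same inequality.
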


\begin{proof}
  Let $s$ be a point outside the unit ball, \textit{i.e.}~$s= l u$ where $ \vert\vert u \vert\vert=1$ and $l>1$. Convexity of $A_n(s)$ implies
  $$
A_n(u) \leq \left (1 - \frac{1}{l} \right) A_n(0) + \frac{1}{l} A_n(s)= \frac{1}{l} A_n(s)
$$
and, since $\vert\vert s \vert\vert =l$,
$
A_n(s) = \vert\vert s \vert\vert  A_n(u) \geq \vert\vert s \vert\vert  \inf_{ \vert\vert t  \vert\vert=1} A_n(t).
$
\end{proof}

\clearpage

\section{Additional Figures and Tables}\label{app:figs_tables}

\subsection{Simulation study}\label{app:simul}

\begin{figure}[h!]
  \begin{center}
    \begin{tabular}{cc}
      Scenario 1 & Scenario 2 \\
      \includegraphics[width=0.5\textwidth,height=0.4\textwidth]{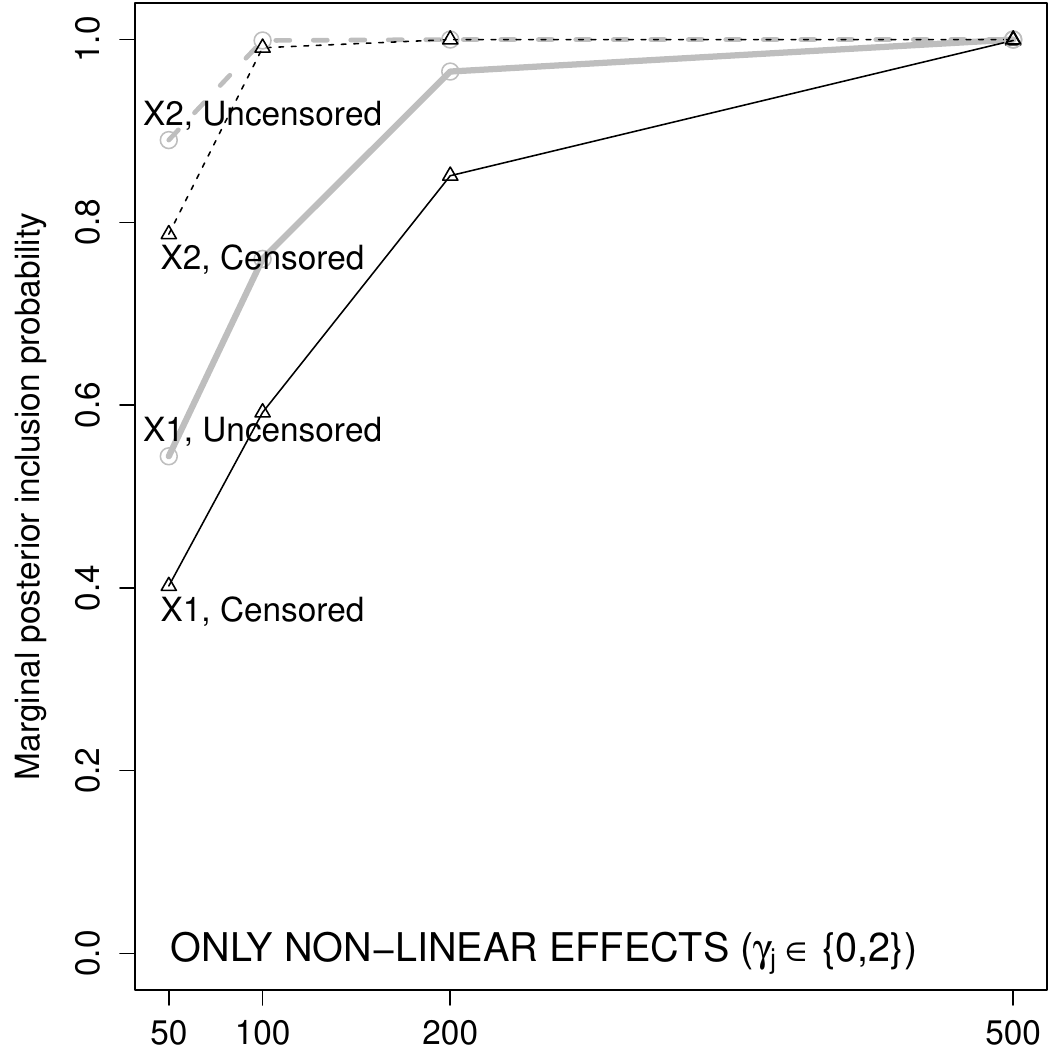} &
      \includegraphics[width=0.5\textwidth,height=0.4\textwidth]{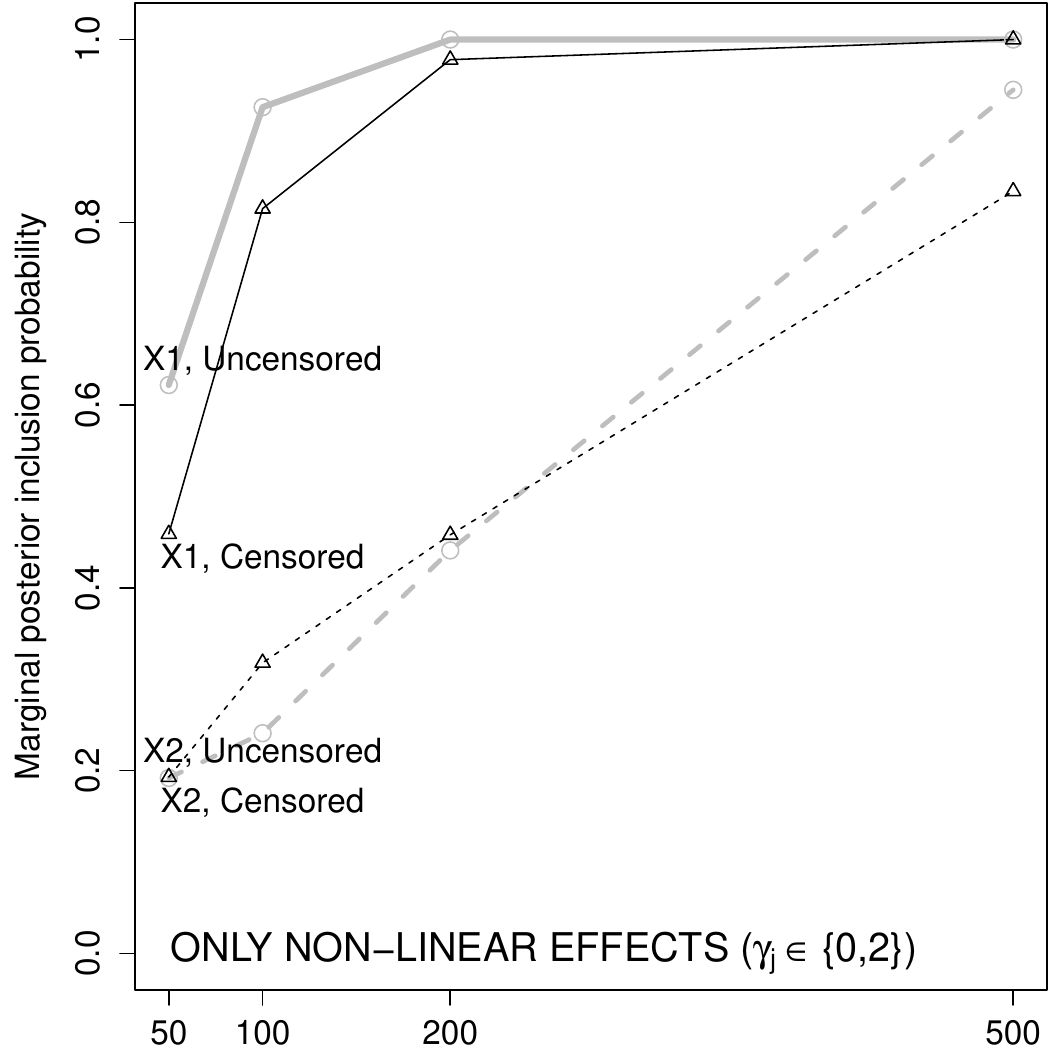} \\
      \includegraphics[width=0.5\textwidth,height=0.4\textwidth]{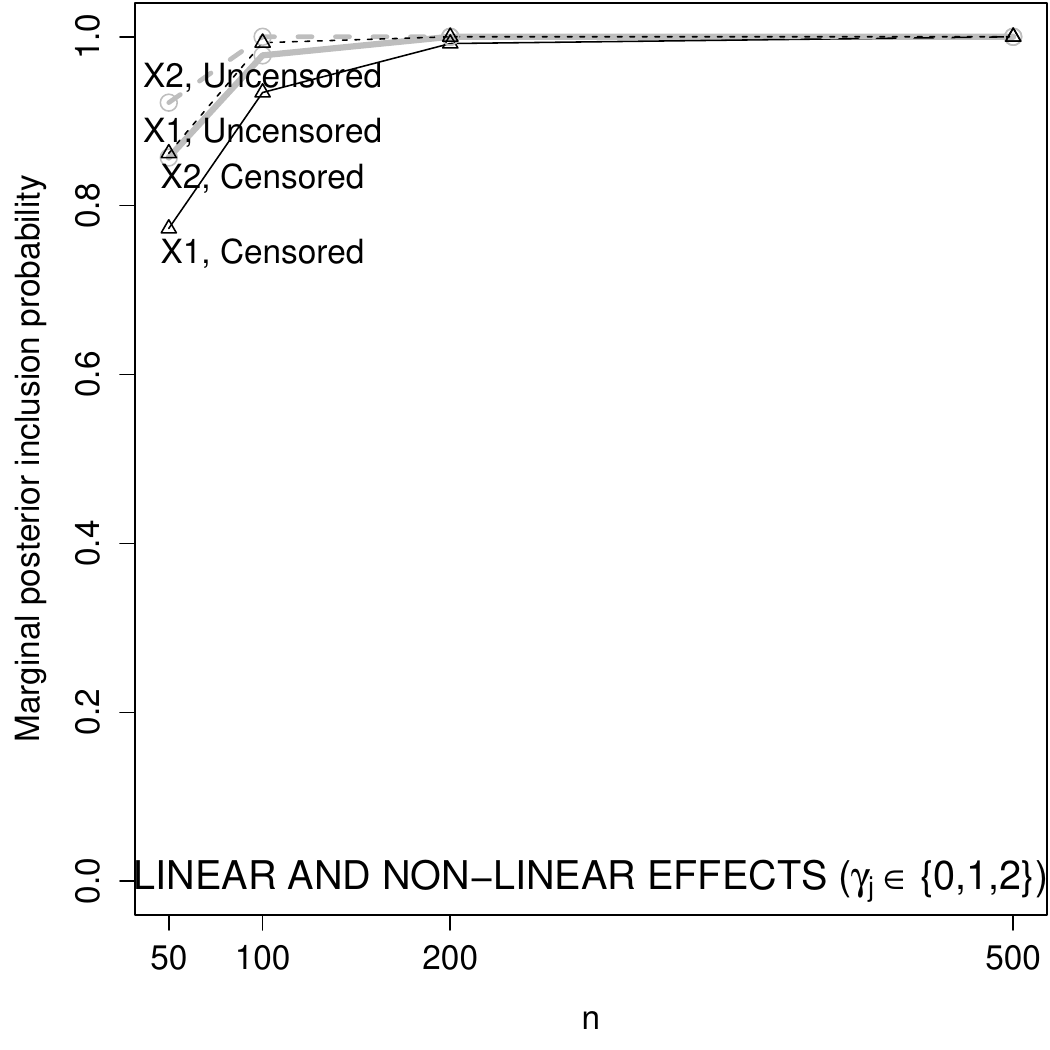} &
      \includegraphics[width=0.5\textwidth,height=0.4\textwidth]{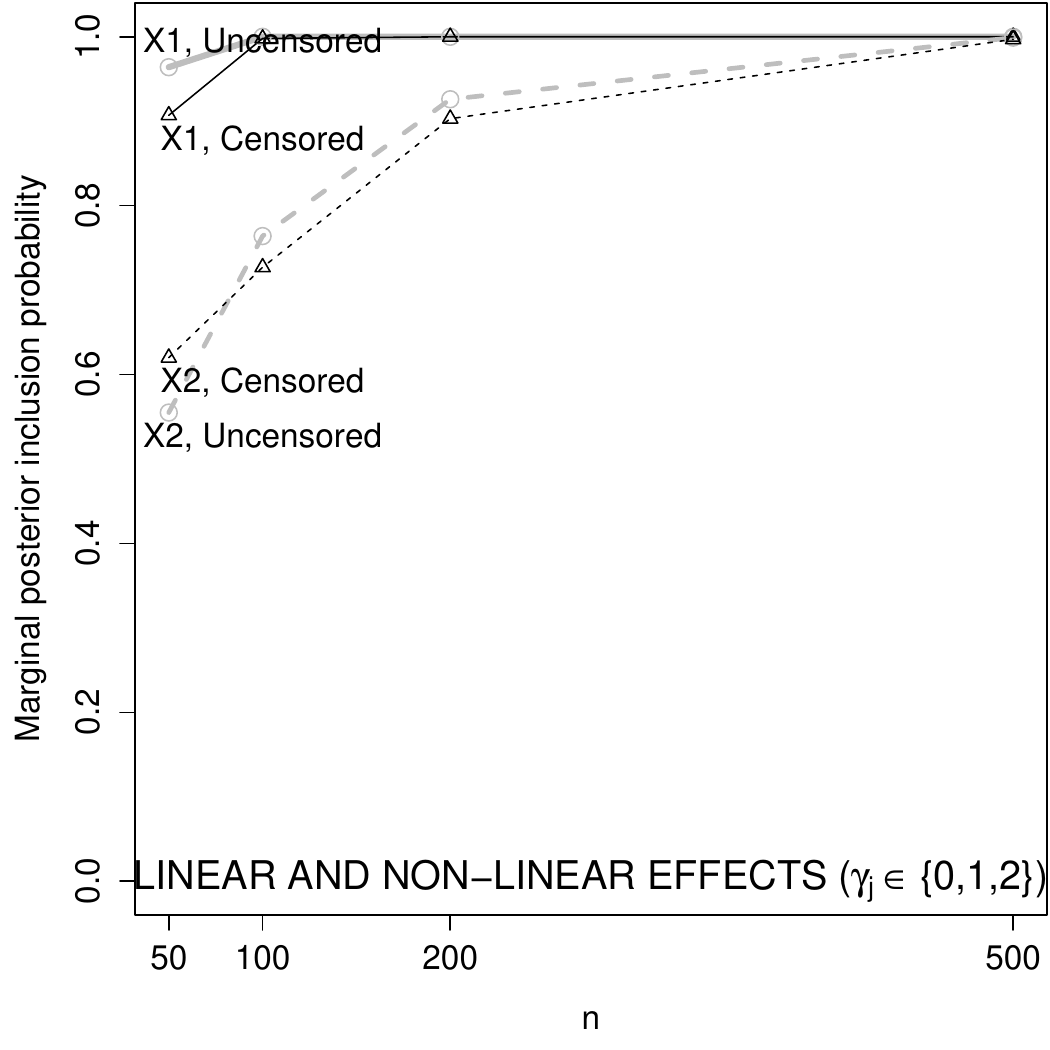} \\
      \includegraphics[width=0.5\textwidth,height=0.4\textwidth]{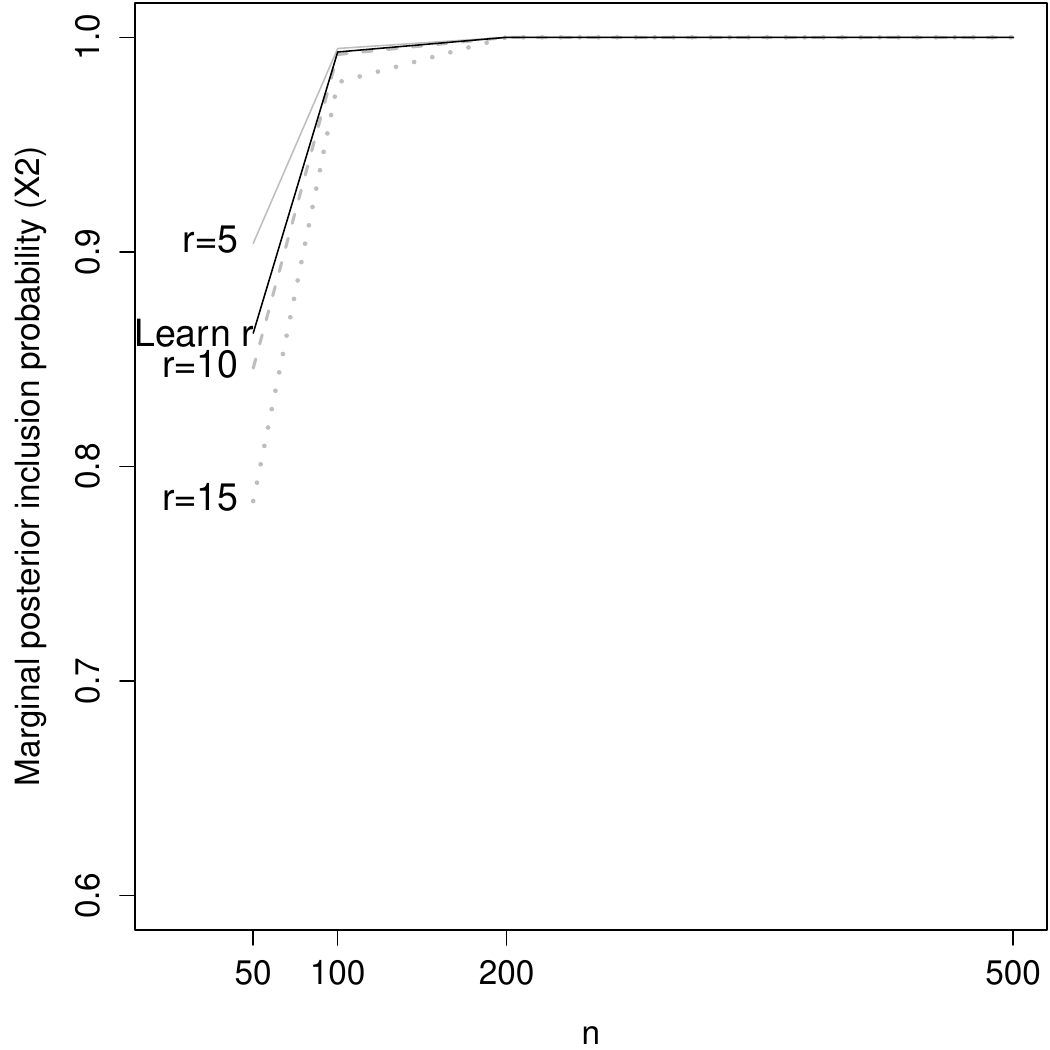} &
      \includegraphics[width=0.5\textwidth,height=0.4\textwidth]{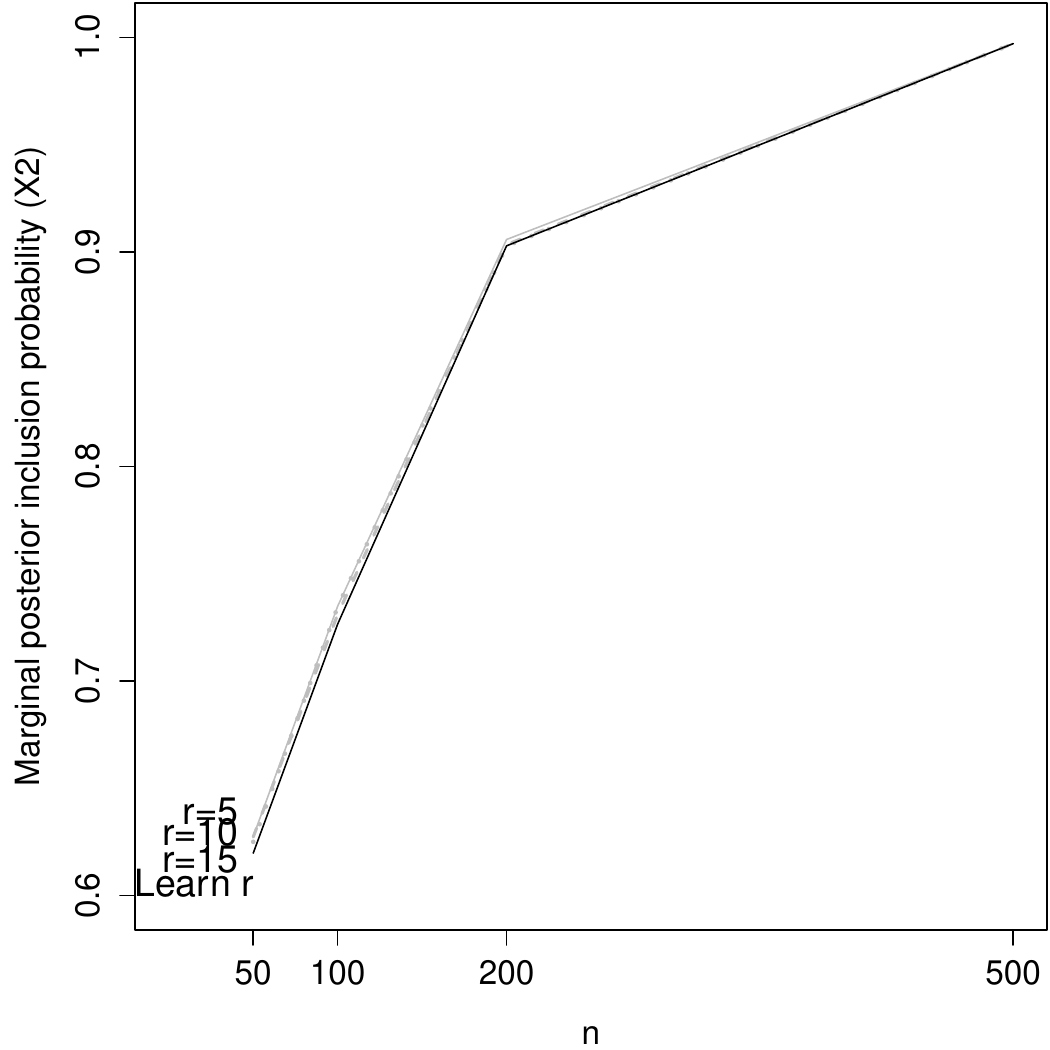} \\
    \end{tabular}
\end{center}
\caption{Scenarios 1-2. $p=2$. Mean posterior inclusion probabilities for AFT-pMOMZ considering only
non-linear (top) or both linear/non-linear (middle) effects. Spline dimension $r \in \{5,10,15\}$ learned from data.
Bottom: comparison of $\pi(\gamma_2 \geq 1 \mid y)$ to fixed $r=5,10,15$}
\label{fig:2vars_margpp}
\end{figure}

\begin{figure}[h!][h]
  \begin{center}
    \begin{tabular}{cc}
      Scenario 3 & Scenario 4 \\
      \includegraphics[width=0.5\textwidth,height=0.45\textwidth]{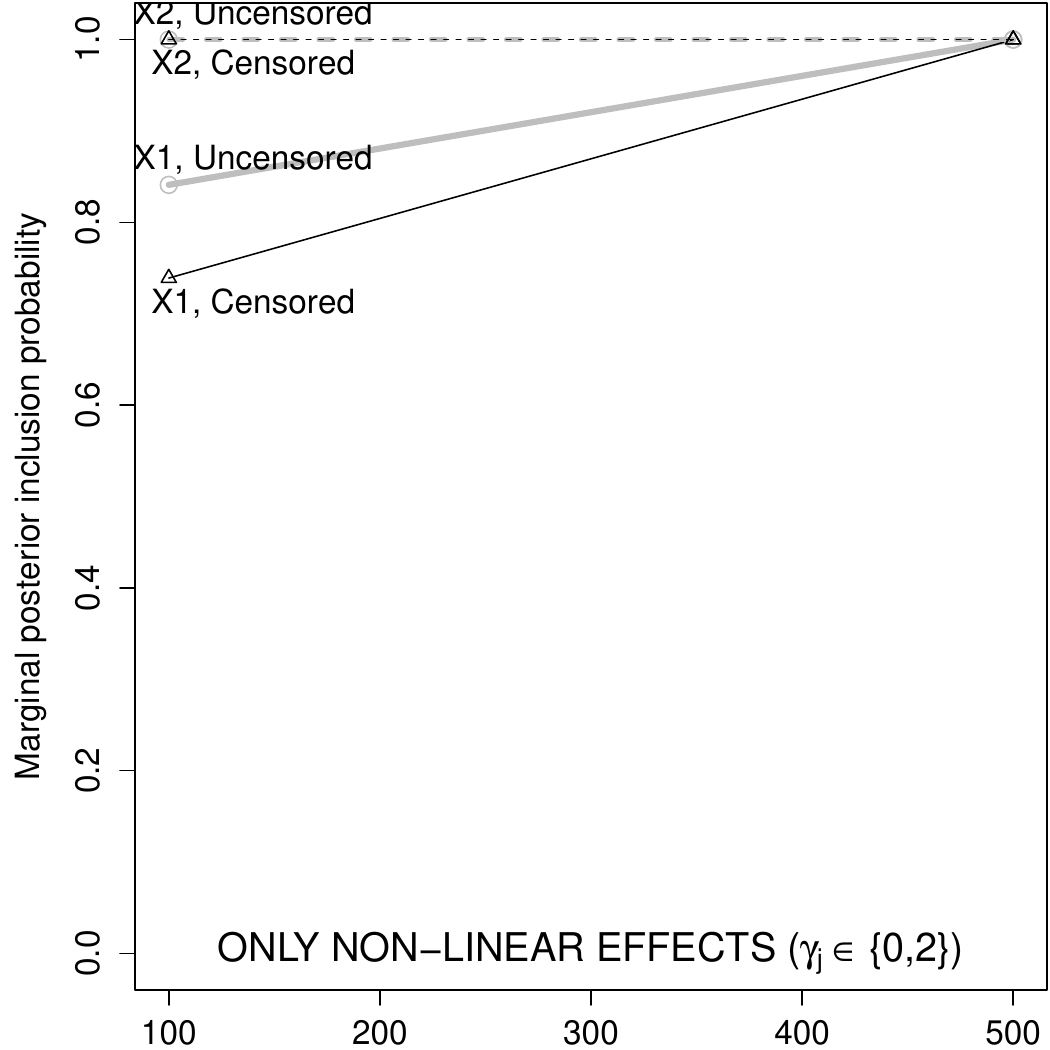} &
      \includegraphics[width=0.5\textwidth,height=0.45\textwidth]{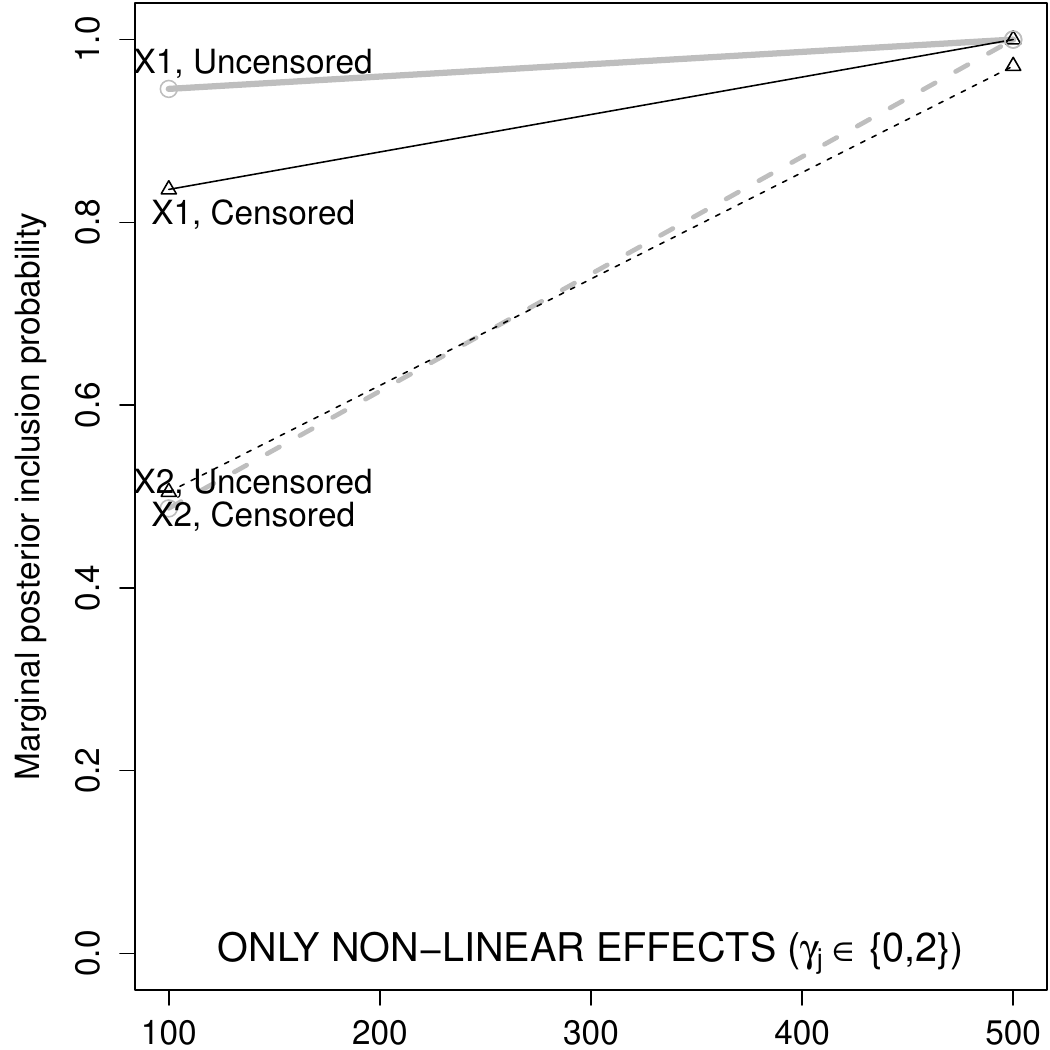} \\
      \includegraphics[width=0.5\textwidth]{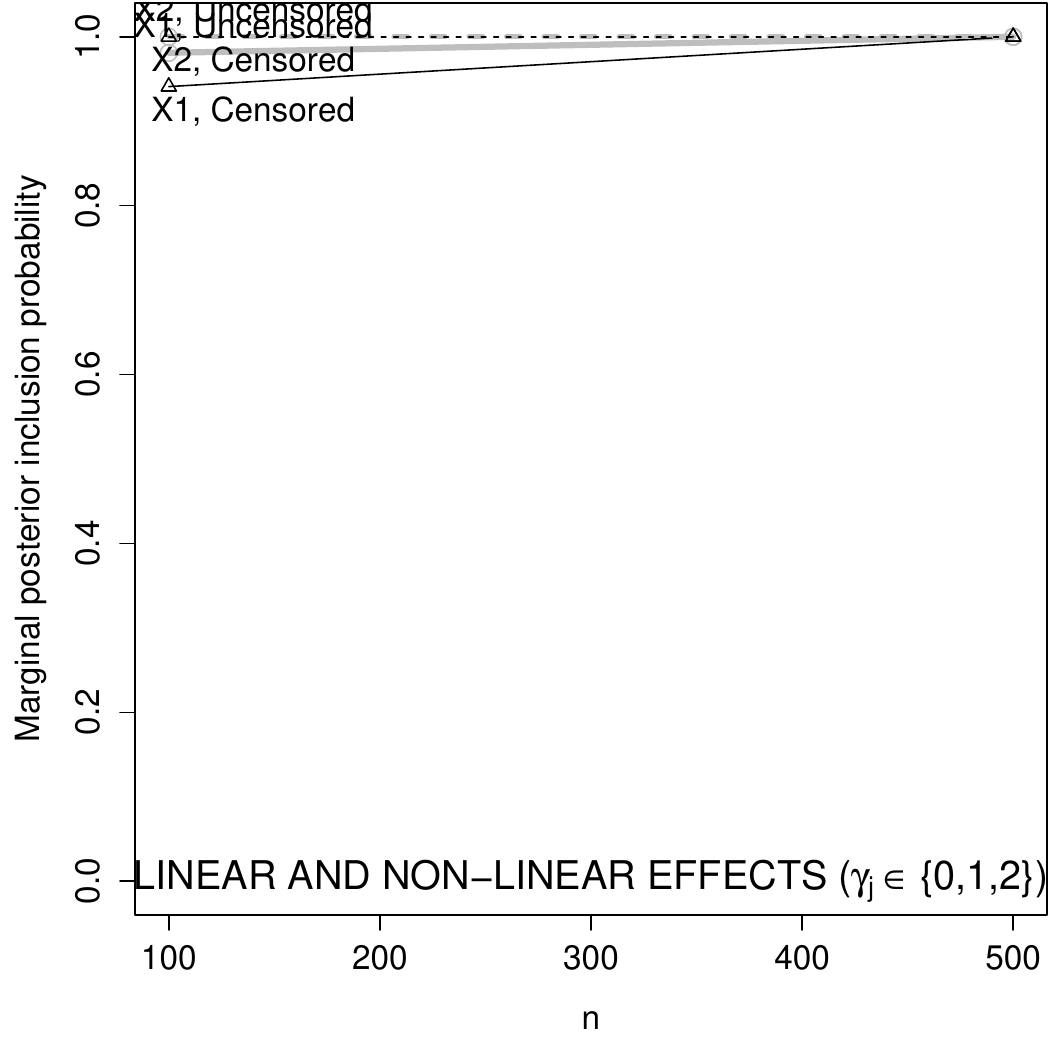} &
      \includegraphics[width=0.5\textwidth]{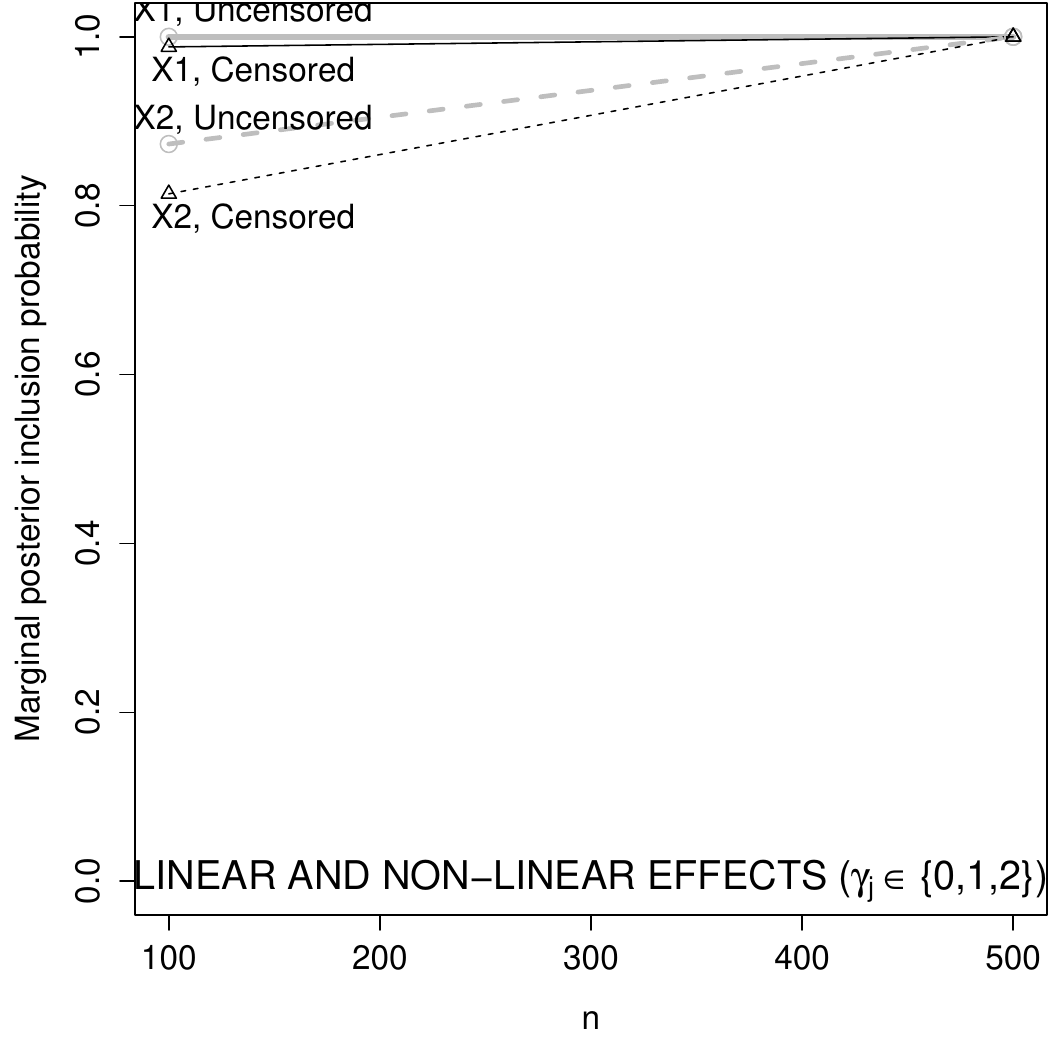}
    \end{tabular}
\end{center}
\caption{Scenarios 3-4. $p=2$. Truly GH $(\mu = 0,\sigma = 0.5)$ structure.
  Average marginal posterior inclusion probabilities under AFT-pMOMZ when considering 
only non-linear (top) or both linear/non-linear (bottom) effects.}
\label{fig:2vars_margpp_sc3}
\end{figure}

\begin{figure}[h!]
  \begin{center}
    \begin{tabular}{cc}
      Scenario 5 & Scenario 6 \\
      \includegraphics[width=0.5\textwidth,height=0.45\textwidth]{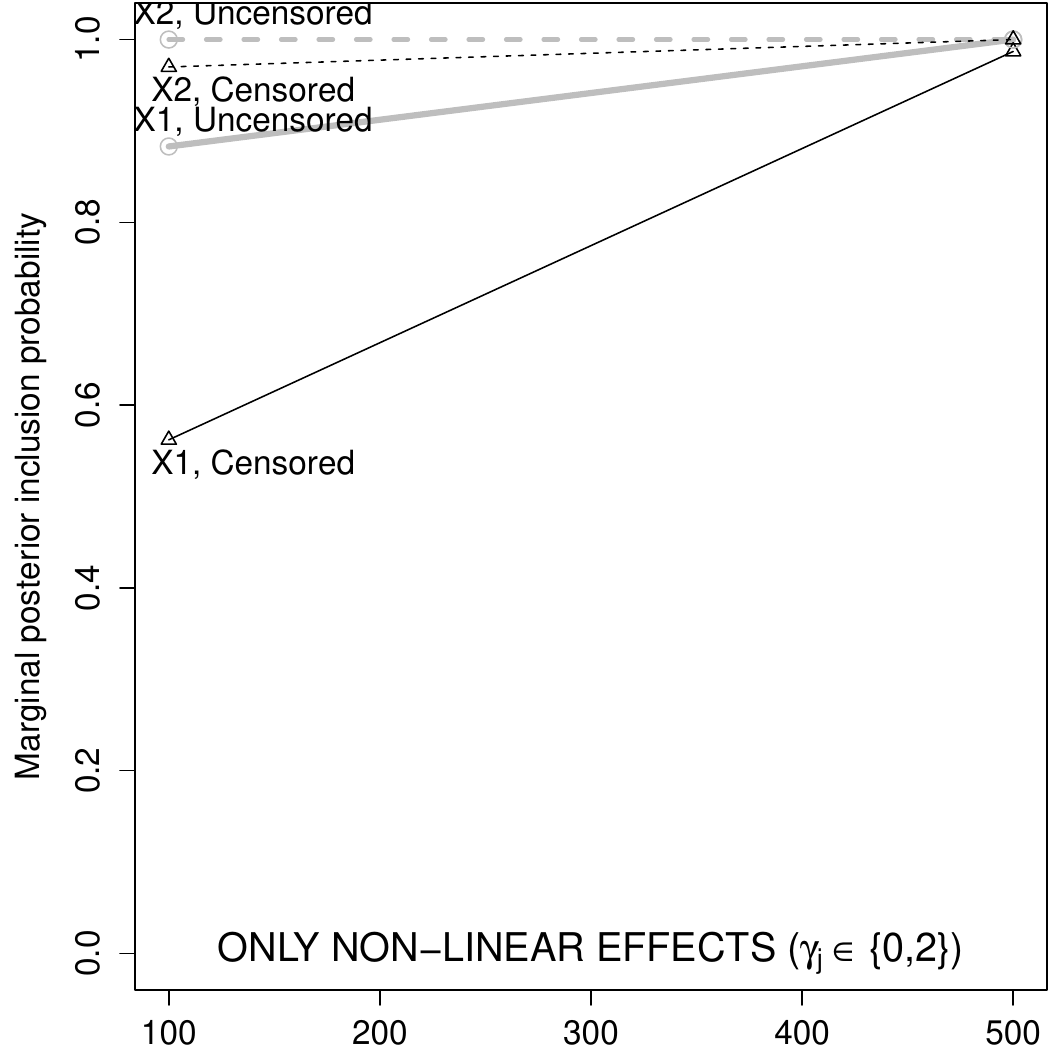} &
      \includegraphics[width=0.5\textwidth,height=0.45\textwidth]{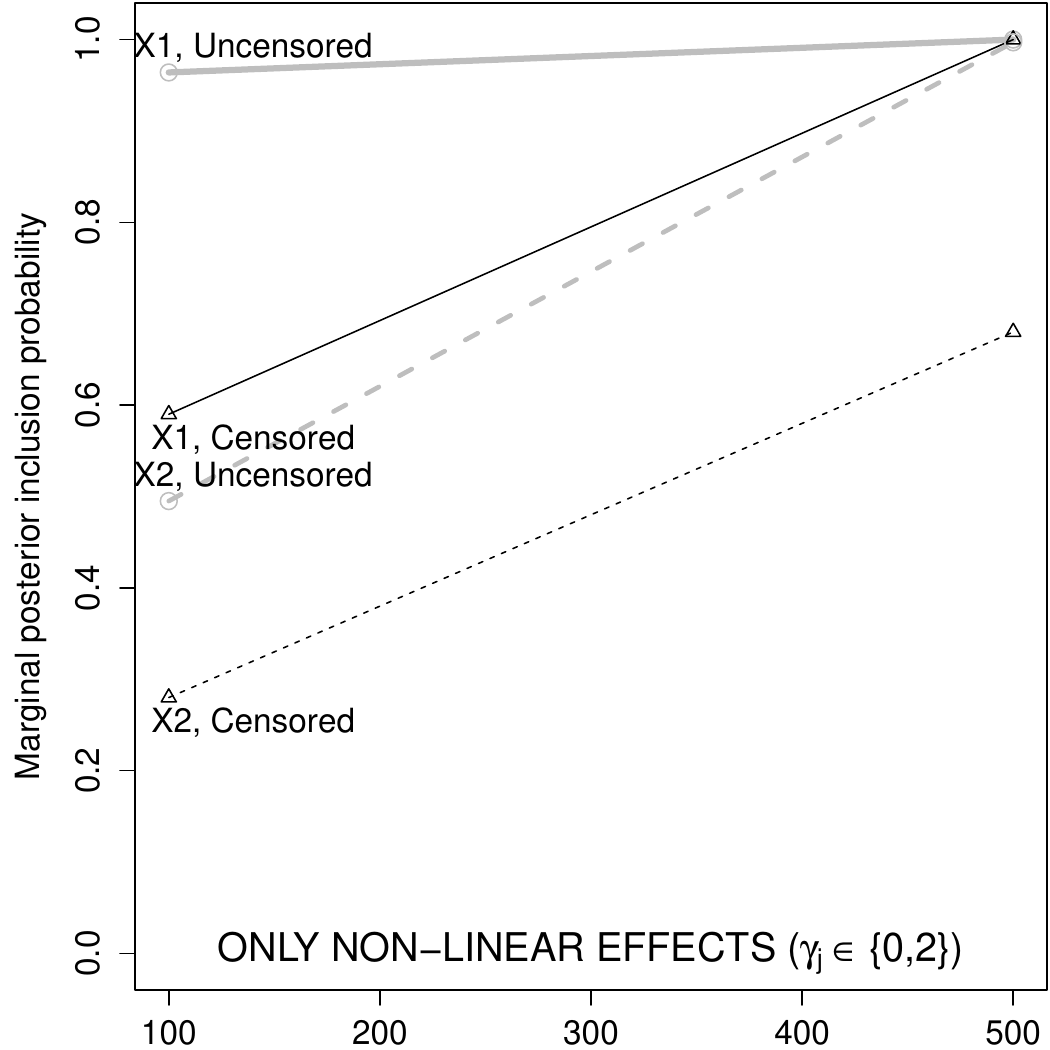} \\
      \includegraphics[width=0.5\textwidth]{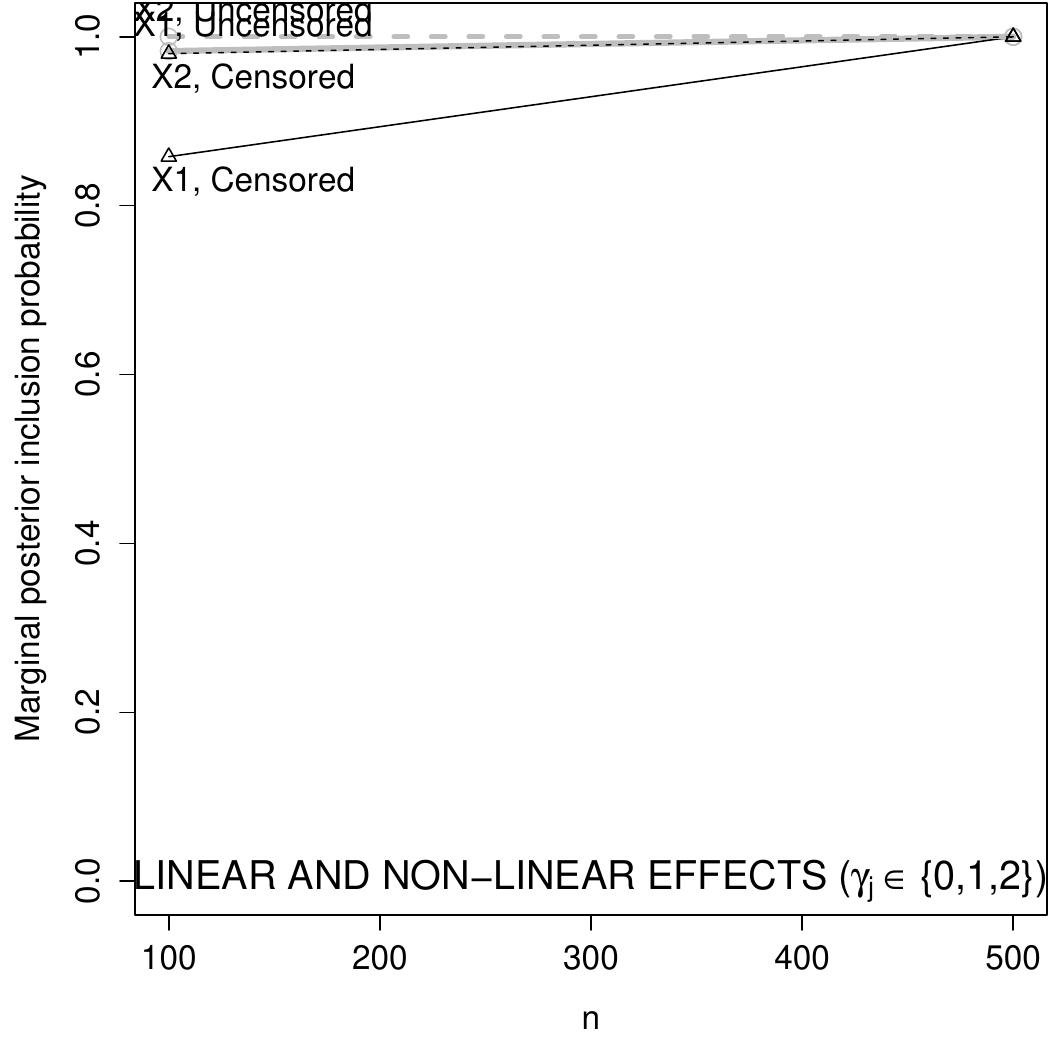} &
      \includegraphics[width=0.5\textwidth]{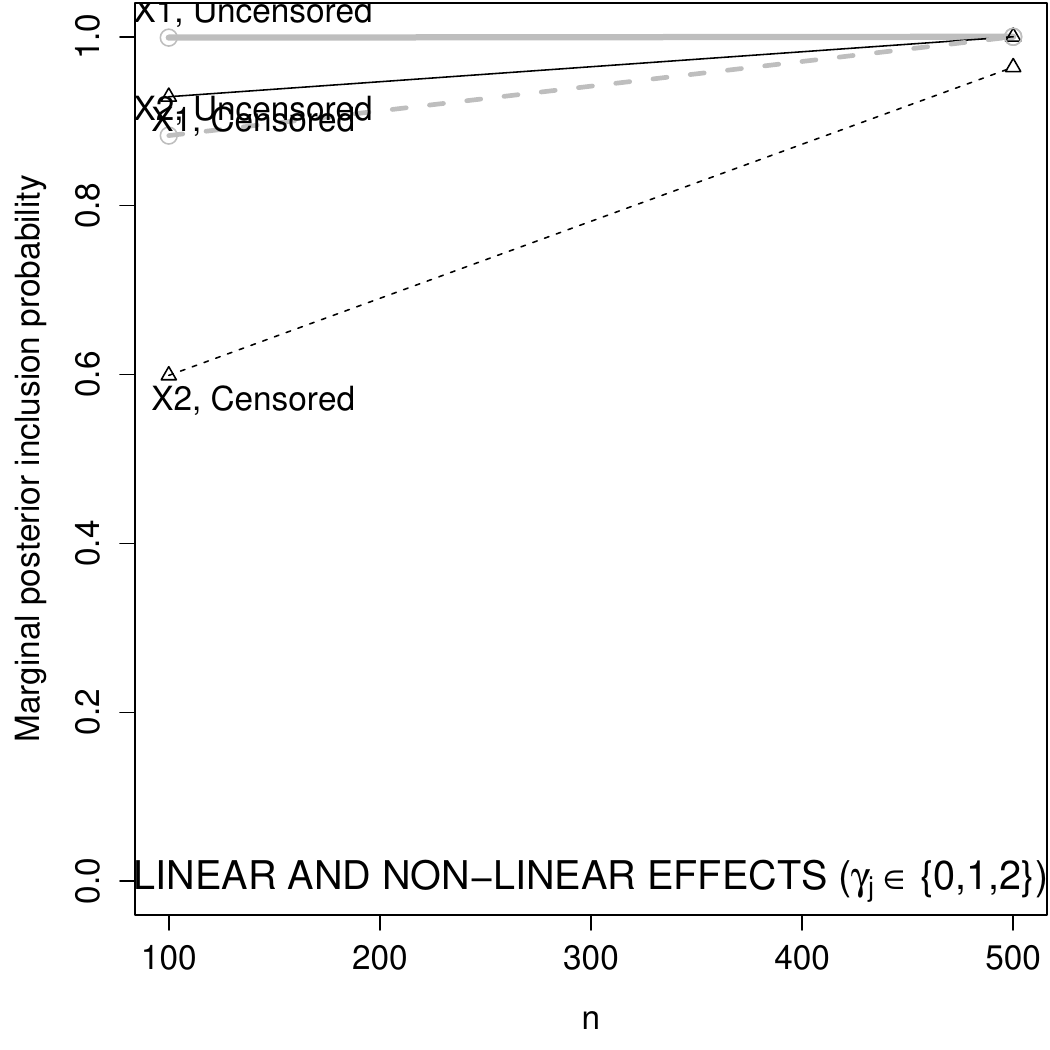}
    \end{tabular}
\end{center}
\caption{Scenarios 5-6. $p=2$. Truly Proportional Hazards $(\mu=0,\sigma = 0.8)$ structure.
  Average marginal posterior inclusion probabilities  under AFT-pMOMZ when considering 
only non-linear (top) or both linear/non-linear (bottom) effects}
\label{fig:2vars_margpp_sc4}
\end{figure}

\begin{figure}[h!][h]
  \begin{center}
    \begin{tabular}{cc}
      Scenario 1 & Scenario 2 \\
      \includegraphics[width=0.5\textwidth,height=0.45\textwidth]{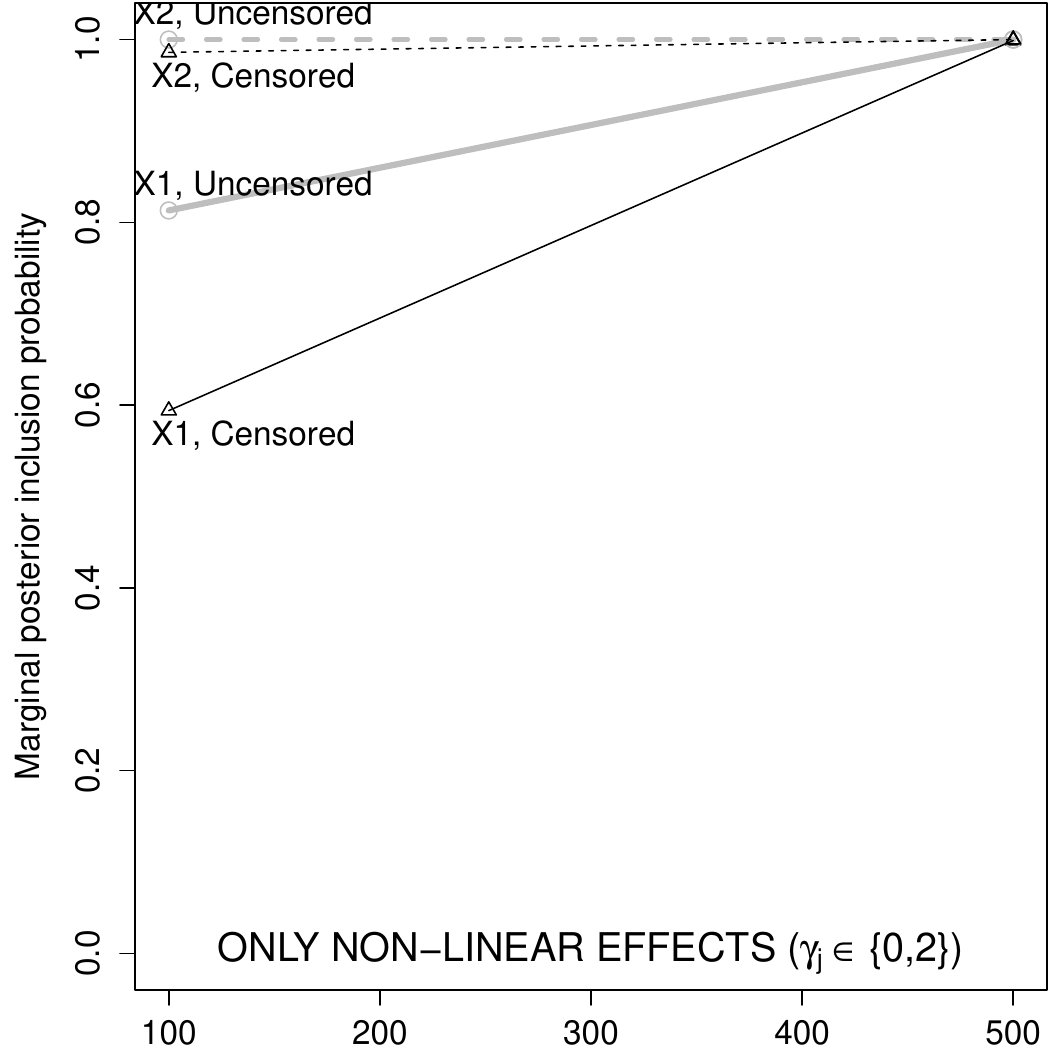} &
      \includegraphics[width=0.5\textwidth,height=0.45\textwidth]{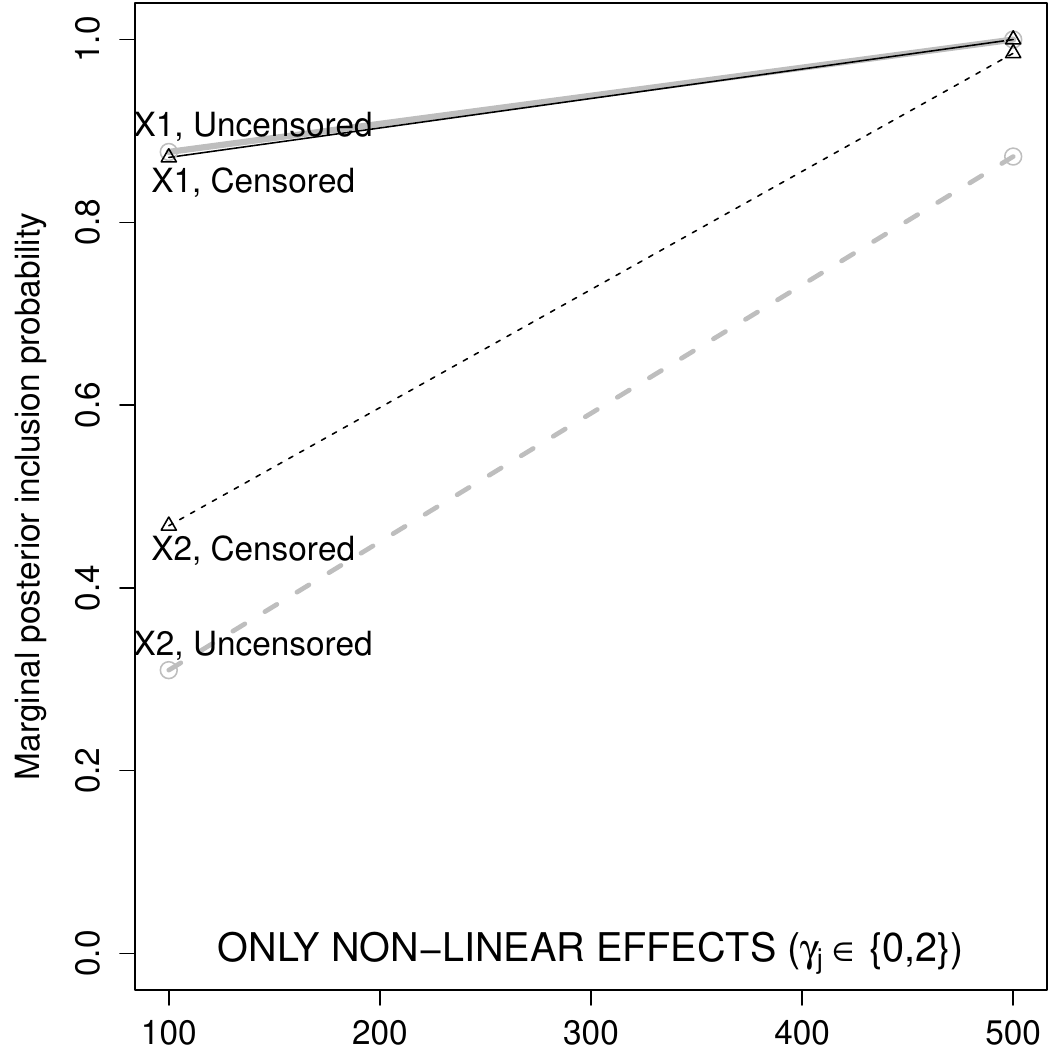} \\
      \includegraphics[width=0.5\textwidth]{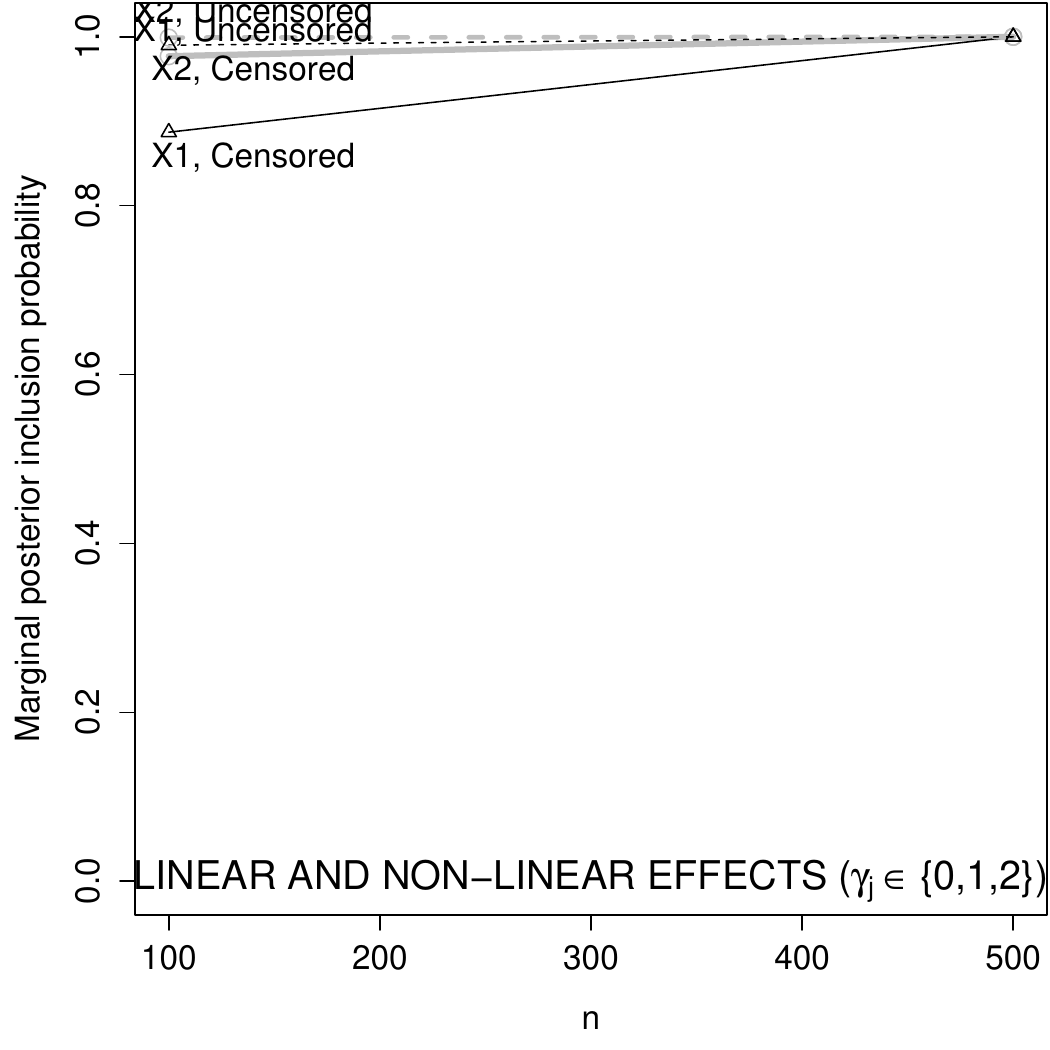} &
      \includegraphics[width=0.5\textwidth]{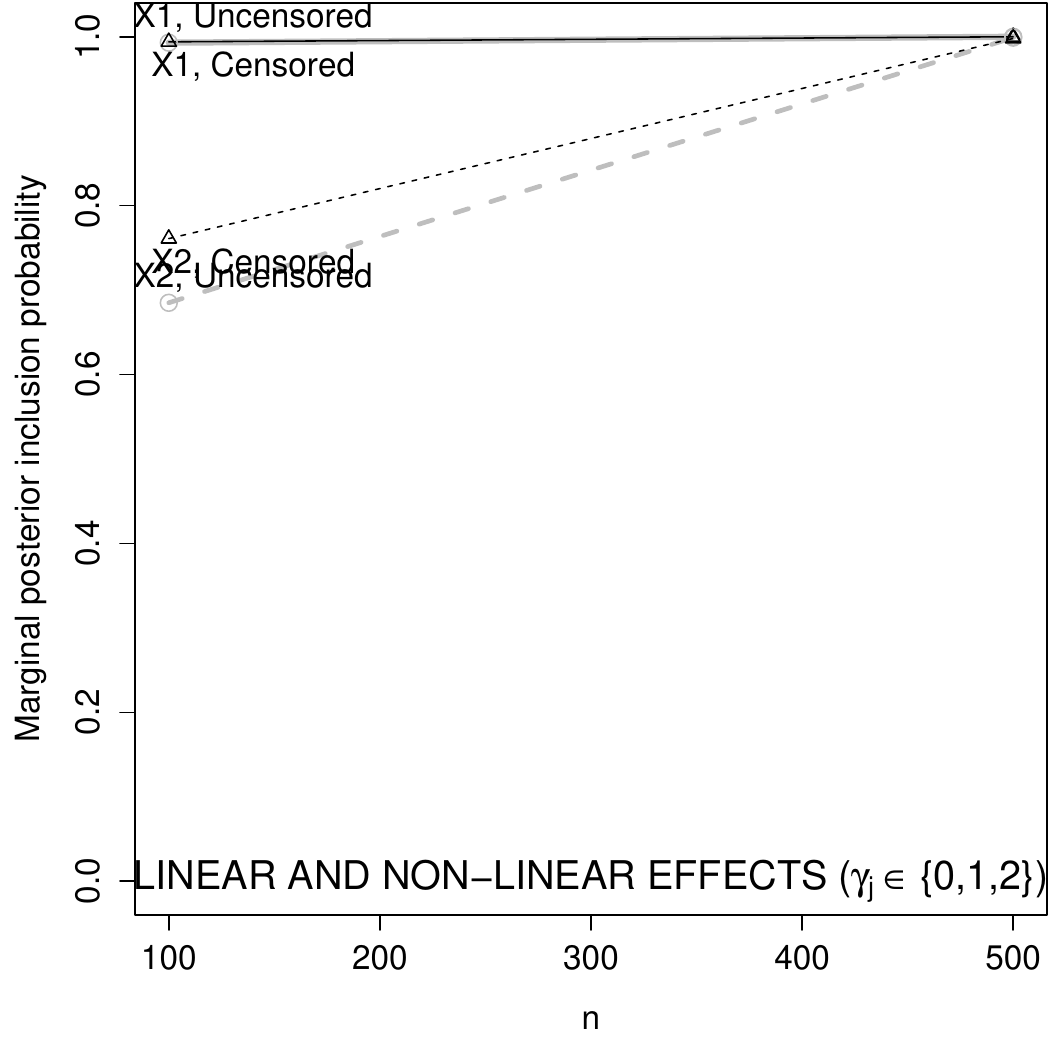}
    \end{tabular}
\end{center}
\caption{Scenarios 1-2. $p=2$. Truly $\epsilon_i \sim \mbox{ALaplace}(0,s=0.1,a=-0.5)$ errors.
  Average marginal posterior inclusion probabilities  under AFT-pMOMZ when considering 
only non-linear (top) or both linear/non-linear (bottom) effects}
\label{fig:2vars_margpp_alapl}
\end{figure}

\begin{figure}[h!][h]
  \begin{center}
    \begin{tabular}{cc}
      Scenario 1 & Scenario 2 \\
      \includegraphics[width=0.5\textwidth,height=0.45\textwidth]{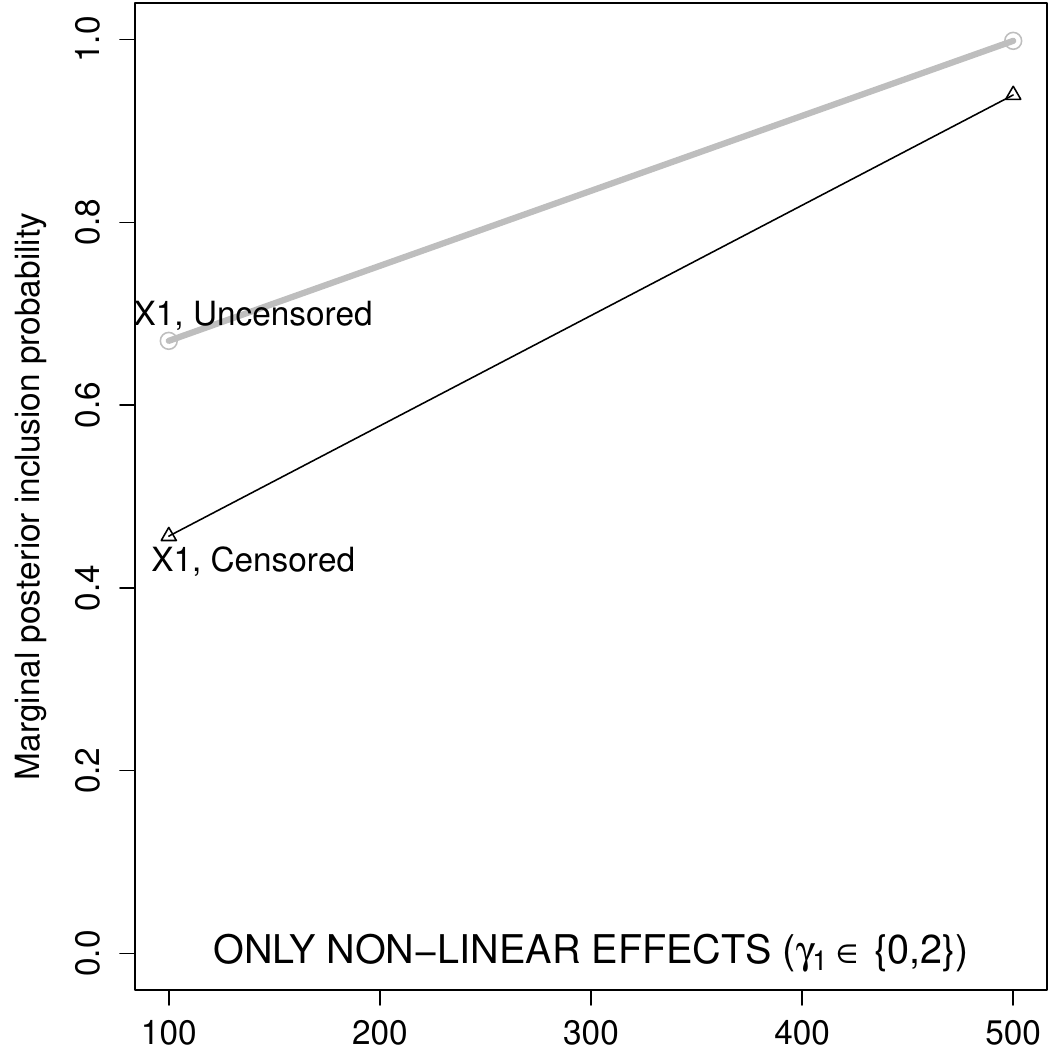} &
      \includegraphics[width=0.5\textwidth,height=0.45\textwidth]{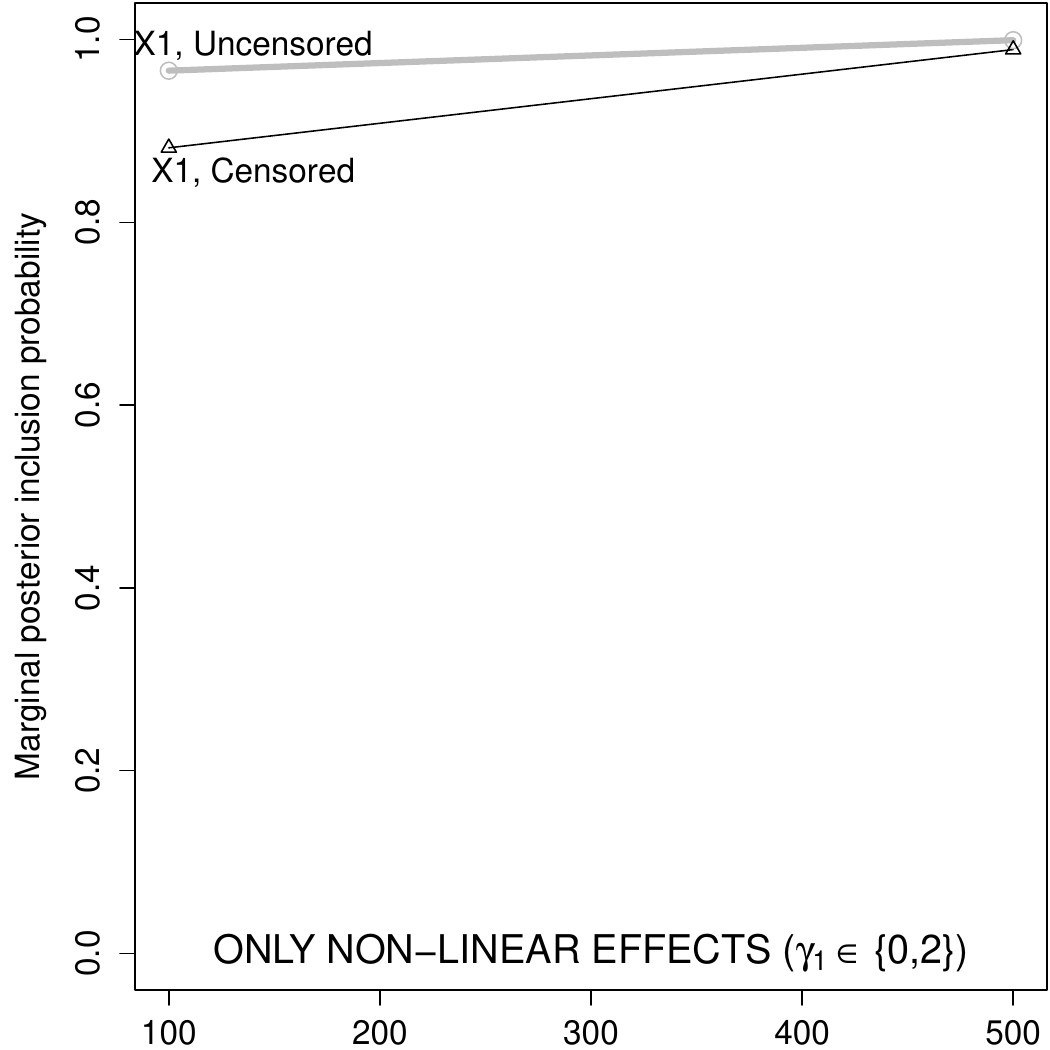} \\
      \includegraphics[width=0.5\textwidth]{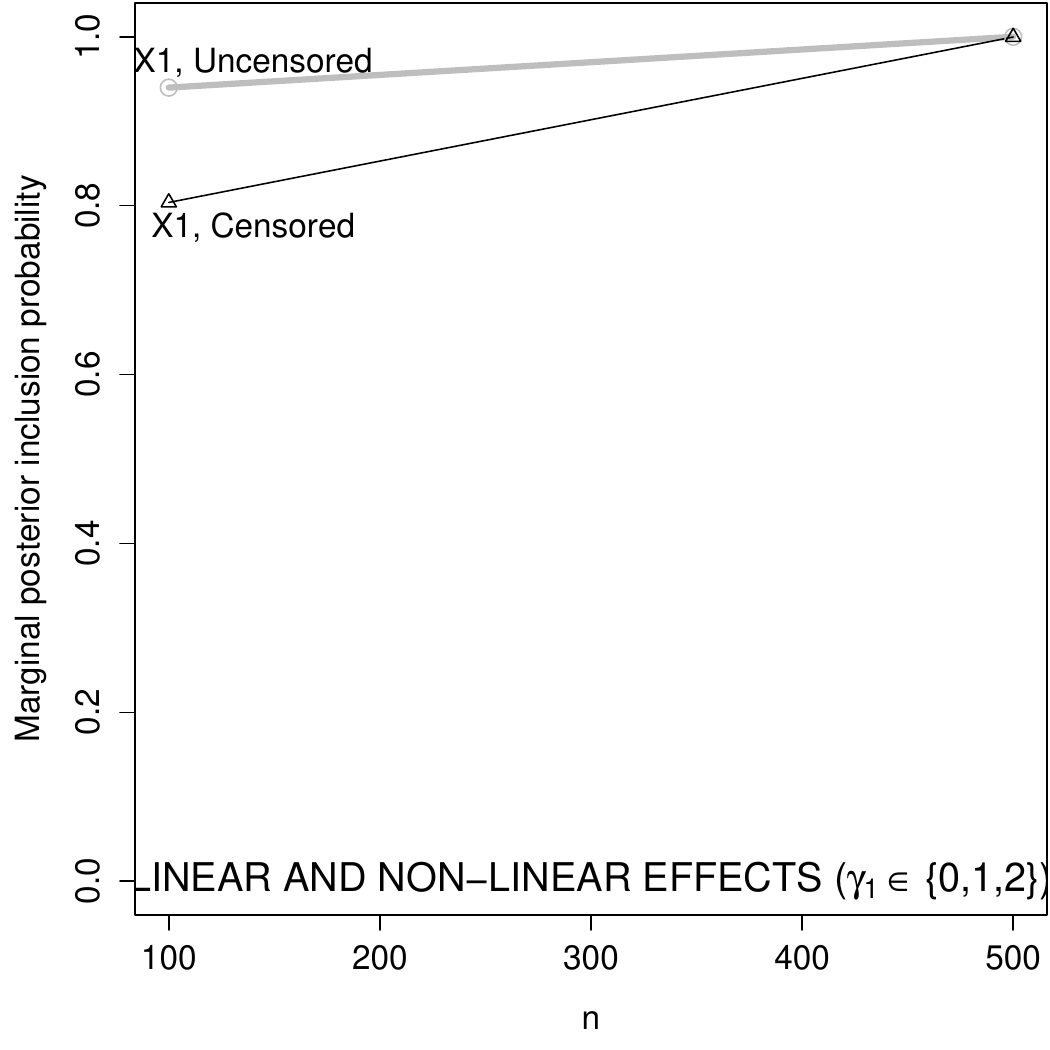} &
      \includegraphics[width=0.5\textwidth]{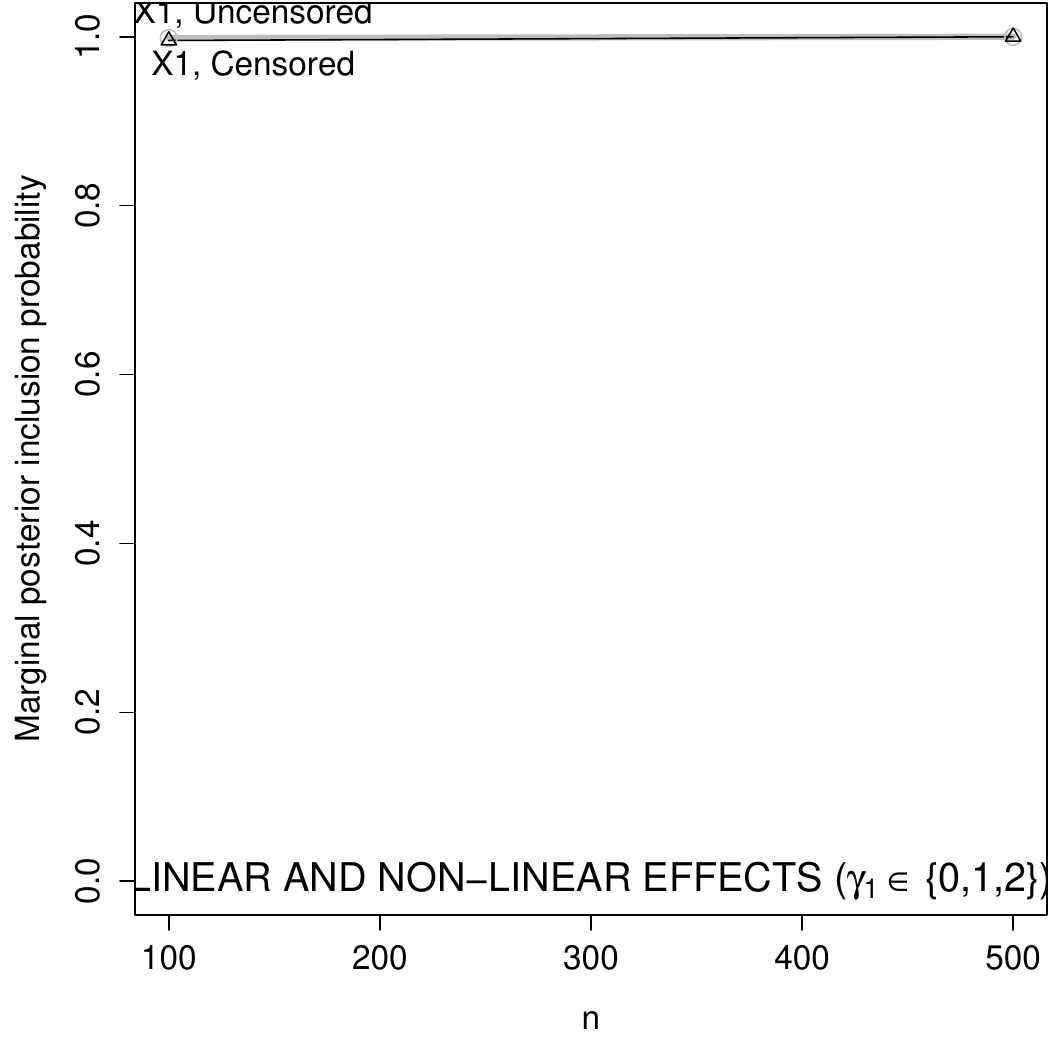}
    \end{tabular}
\end{center}
\caption{Scenarios 1-2. $p=2$ with omitted $x_{i2}$.
Average marginal posterior inclusion probabilities  under AFT-pMOMZ when only considering 
non-linear (top) or both linear and non-linear (bottom) effects}
\label{fig:2vars_margpp_omitX2}
\end{figure}


\begin{figure}[h!]
  \begin{center}
    \begin{tabular}{cc}
      \multicolumn{2}{c}{Scenario 3} \\
      \includegraphics[width=0.4\textwidth]{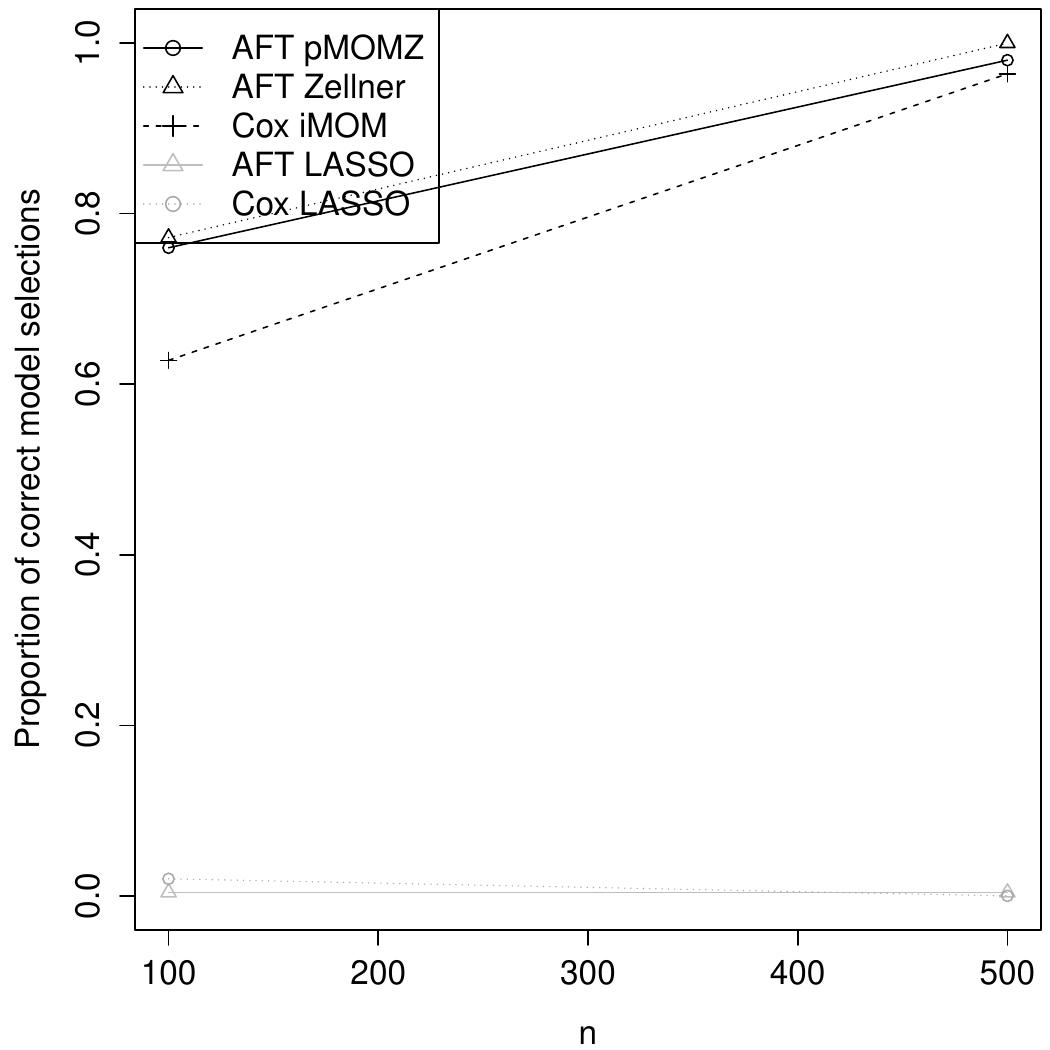} &
      \includegraphics[width=0.4\textwidth]{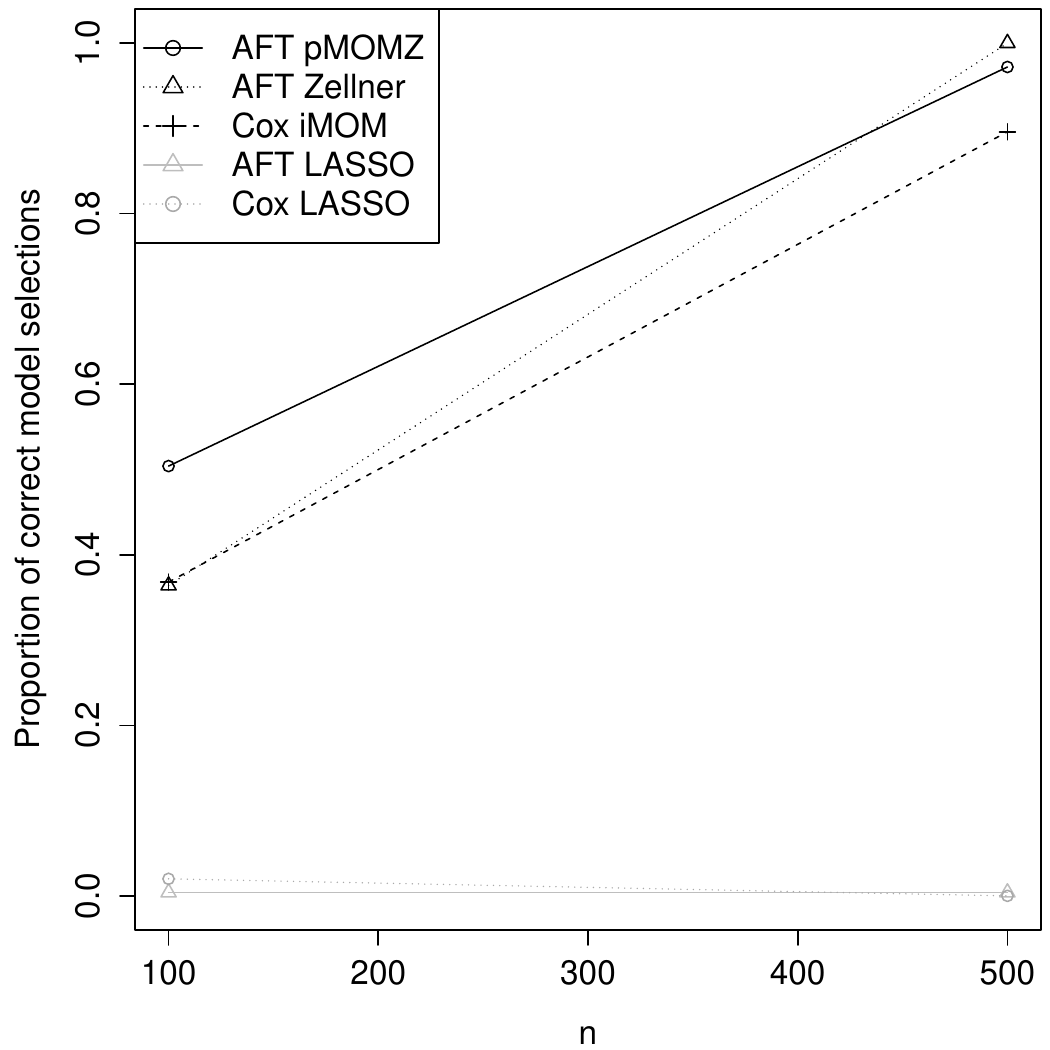} \\
      \multicolumn{2}{c}{Scenario 4} \\
      \includegraphics[width=0.4\textwidth]{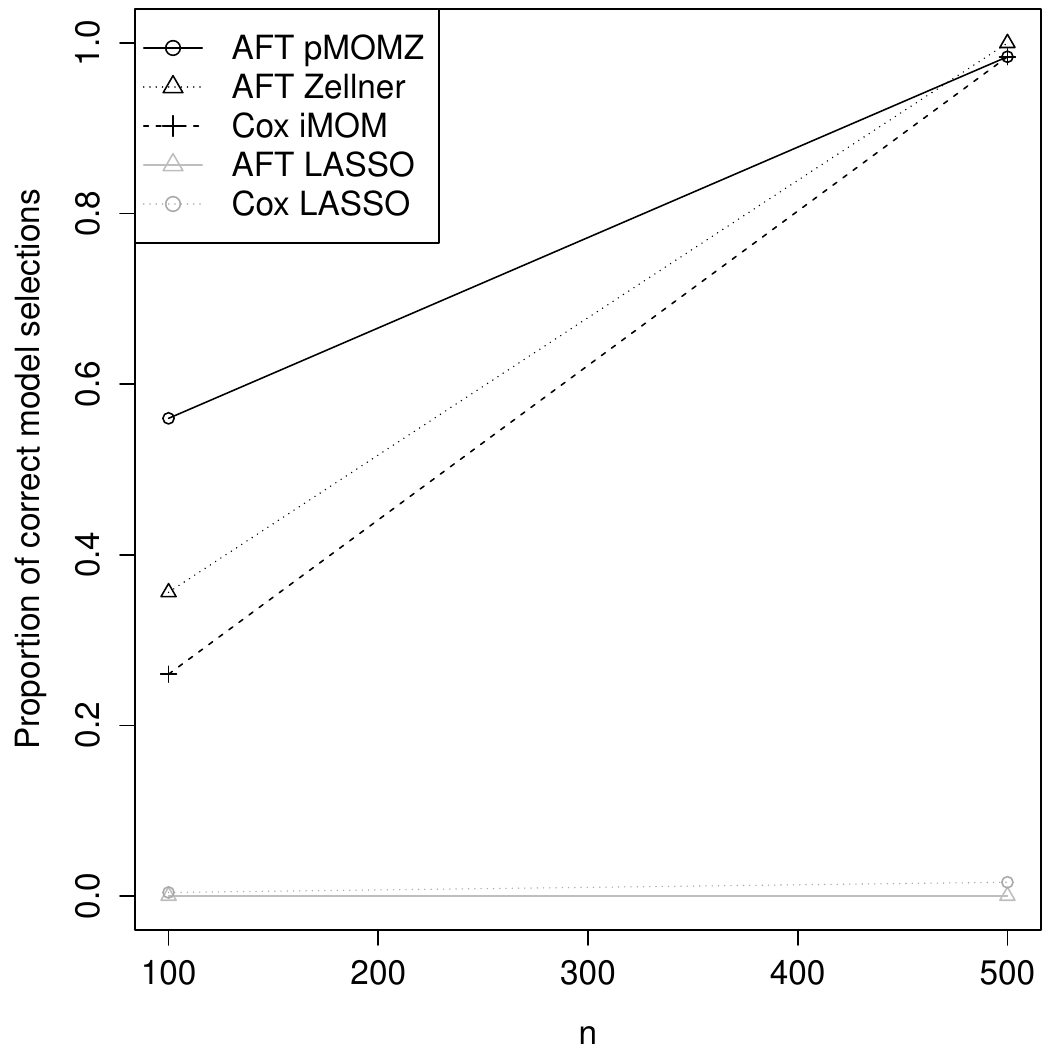} &
      \includegraphics[width=0.4\textwidth]{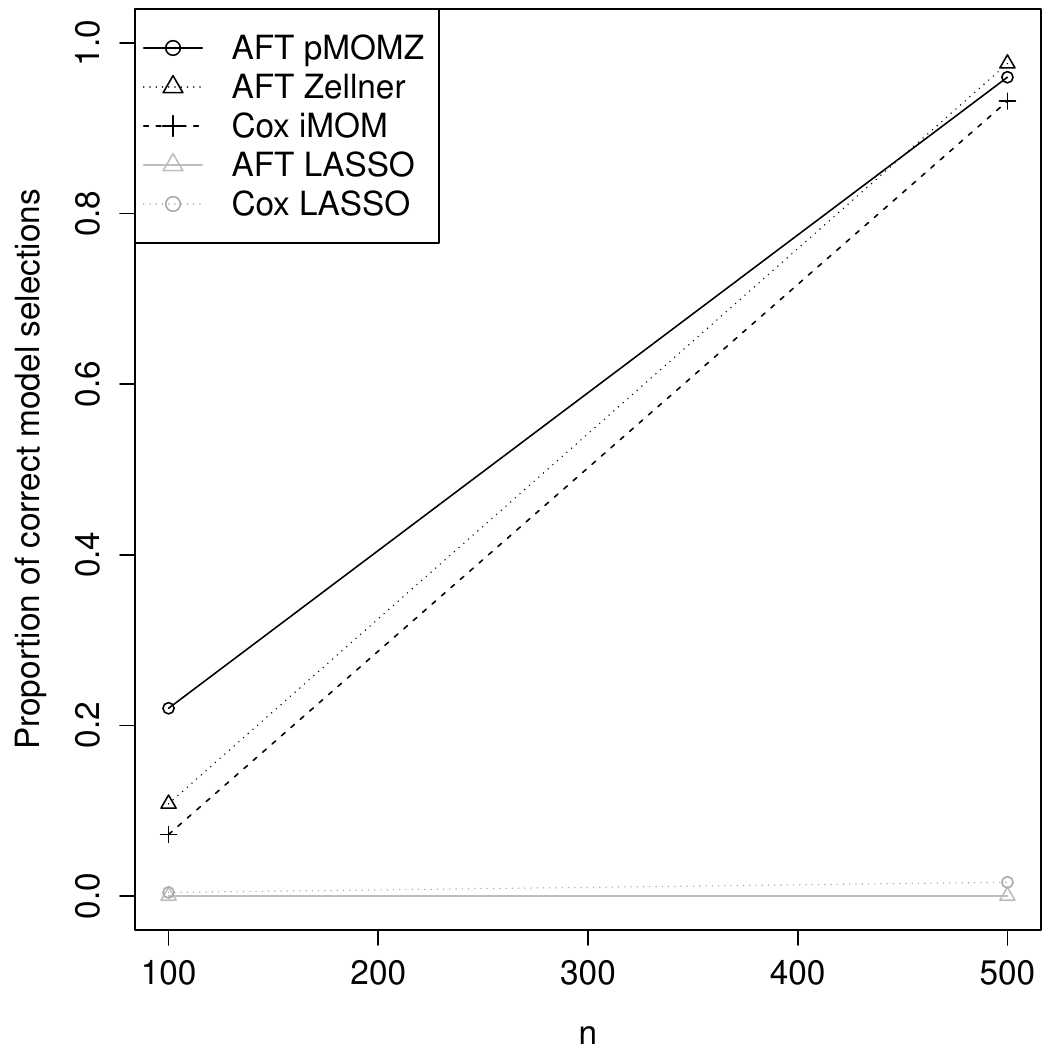}
    \end{tabular}
\end{center}
\caption{Scenarios 3-4, $p=50$. Correct model selection proportion in uncensored (left) and censored (right) data}
\label{fig:pcorrect_scen3}
\end{figure}

\begin{figure}[h!]
  \begin{center}
    \begin{tabular}{cc}
      \multicolumn{2}{c}{Scenario 5} \\
      \includegraphics[width=0.4\textwidth]{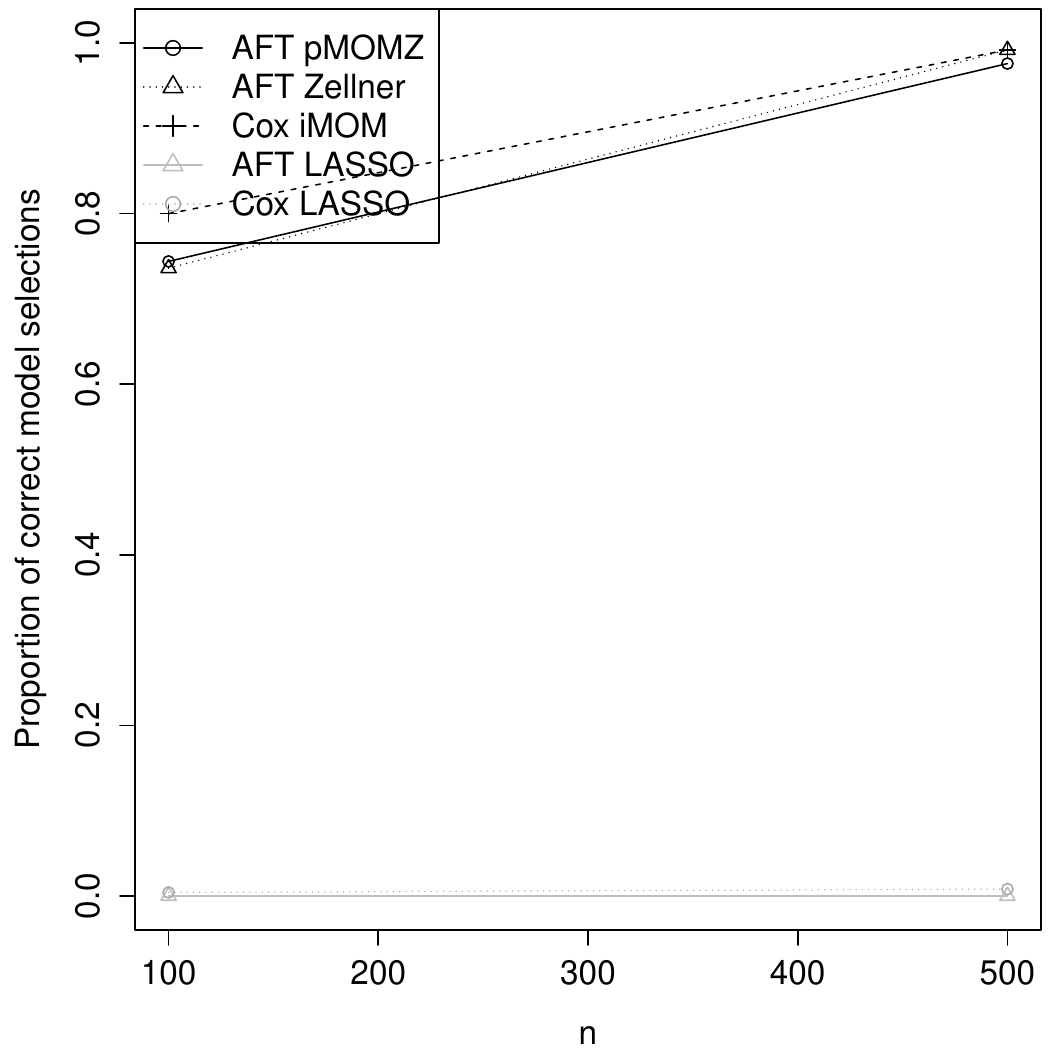} &
      \includegraphics[width=0.4\textwidth]{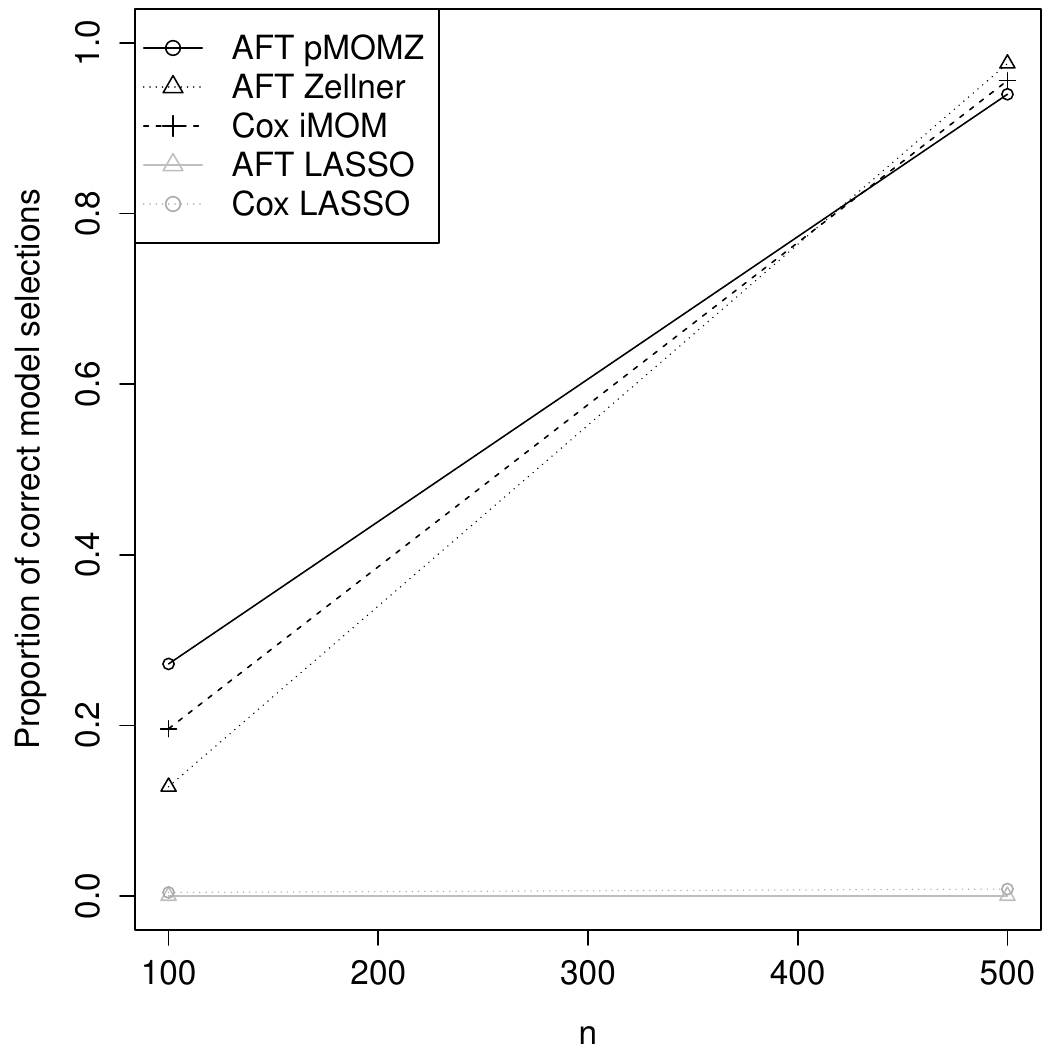} \\
      \multicolumn{2}{c}{Scenario 6} \\
      \includegraphics[width=0.4\textwidth]{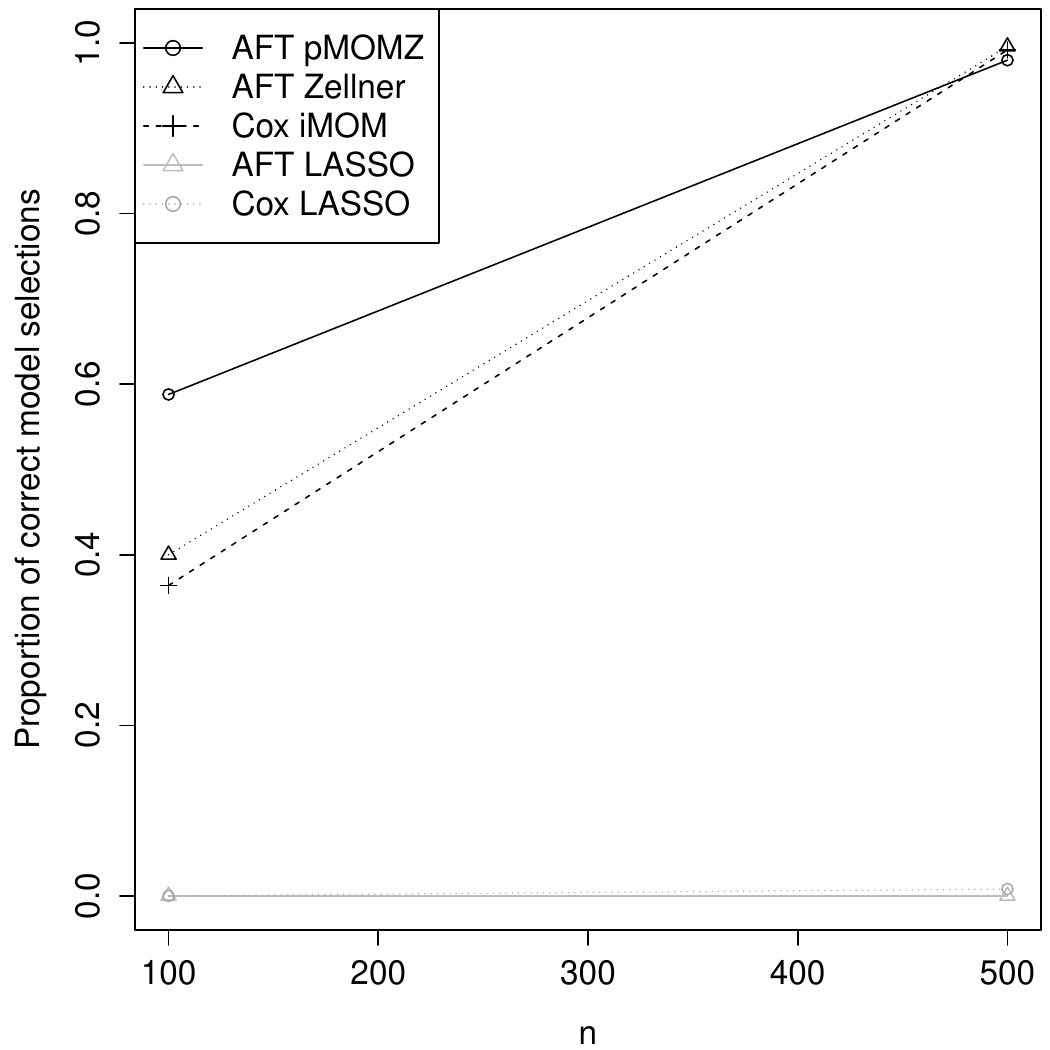} &
      \includegraphics[width=0.4\textwidth]{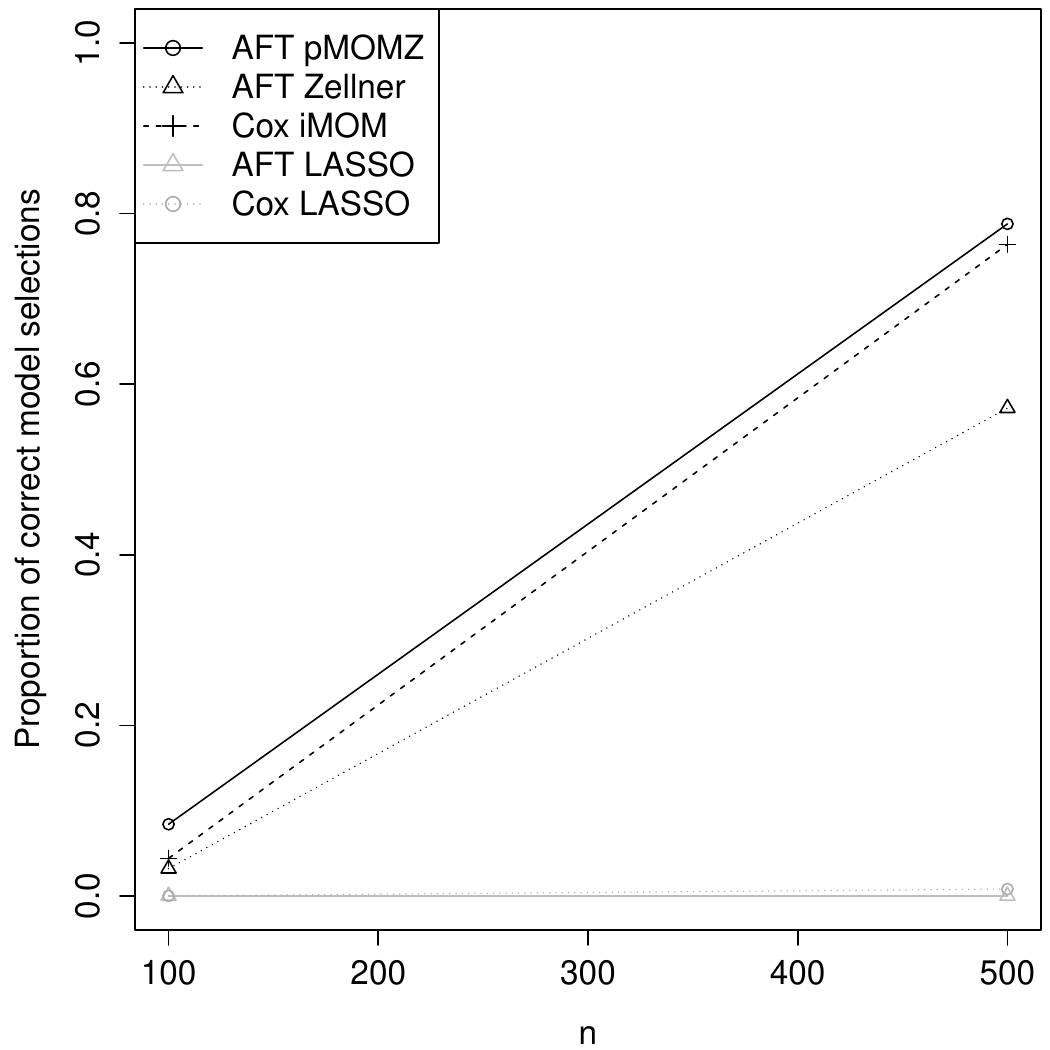}
    \end{tabular}
\end{center}
\caption{Scenarios 5-6, $p=50$. Correct model selection proportion in uncensored (left) and censored (right) data}
\label{fig:pcorrect_scen4}
\end{figure}

\begin{table}
\begin{center}
\begin{tabular}{lrrrrr}  \hline
  Method & $n$ & $\widehat{\gamma}=\gamma^*$ & $\pi(\gamma^* \mid y,\widehat{\gamma}=\gamma^*)$ & Truly active selected & Truly inactive selected \\  \hline
   AFT-pMOMZ   & 100 & 0.380 & 0.331 & 1.34 & 0.37 \\
   AFT-pMOMZ   & 500 & 0.960 & 0.824 & 2.00 & 0.04 \\
   AFT-Zellner & 100 & 0.284 & 0.651 & 1.03 & 0.09 \\
   AFT-Zellner & 500 & 1.000 & 0.957 & 2.00 & 0.00 \\
   Cox-piMOM & 100 & 0.328 & 0.379 & 1.37 & 0.36 \\
   Cox-piMOM & 500 & 0.888 & 0.717 & 2.00 & 0.12 \\
   AFT-LASSO & 100 & 0.004 & - & 0.89 & 5.90 \\
   AFT-LASSO & 500 & 0.008 & - & 1.72 & 15.96 \\
   Cox-LASSO & 100 & 0.028 & -  & 1.78 & 8.24 \\
  Cox-LASSO & 500 & 0.004 & - & 2.00 & 14.06 \\
   \hline
\end{tabular}
\end{center}
\caption{Scenario 1, $p=50$. Proportion of correctly selected models ($\widehat{\gamma}=\gamma^*$) for uncensored and censored data,
  average posterior model probability when the correct model was selected,
number of truly active and truly inactive variables $\widehat{\gamma}$}
\label{tab:pcorrect_scen1}
\end{table}

\begin{table}
\begin{center}
\begin{tabular}{lrrrrr}  \hline
  Method & $n$ & $\widehat{\gamma}=\gamma^*$ & $\pi(\gamma^* \mid y,\widehat{\gamma}=\gamma^*)$ & Truly active selected & Truly inactive selected \\  \hline
  AFT-pMOMZ & 100 & 0.192 & 0.376 & 1.12 & 0.27 \\
  AFT-pMOMZ & 500 & 0.936 & 0.854 & 1.96 & 0.03 \\
  AFT-Zellner & 100 & 0.096 & 0.612 & 0.86 & 0.08 \\
  AFT-Zellner & 500 & 0.872 & 0.925 & 1.87 & 0.00 \\
   Cox-piMOM & 100 & 0.060 & 0.333 & 0.92 & 0.29 \\
   Cox-piMOM & 500 & 0.876 & 0.851 & 1.90 & 0.04 \\
   AFT-LASSO & 100 & 0.000 &  & 0.45 & 5.28 \\
   AFT-LASSO & 500 & 0.000 &  & 1.49 & 13.93 \\
   Cox-LASSO & 100 & 0.012 &  & 1.22 & 5.72 \\
   Cox-LASSO & 500 & 0.012 &  & 1.99 & 10.42 \\
   \hline
\end{tabular}
\end{center}
\caption{Scenario 2, $p=50$. Proportion of correctly selected models ($\widehat{\gamma}=\gamma^*$) for uncensored and censored data,
  average posterior model probability when the correct model was selected,
number of truly active and truly inactive variables $\widehat{\gamma}$}
\label{tab:pcorrect_scen2}
\end{table}

\begin{table}
\begin{center}
\begin{tabular}{lrrrrr}  \hline
  Method & $n$ & $\widehat{\gamma}=\gamma^*$ & $\pi(\gamma^* \mid y,\widehat{\gamma}=\gamma^*)$ & Truly active selected & Truly inactive selected \\  \hline
 AFT-pMOMZ & 100 & 0.504 & 0.357 & 1.54 & 0.26 \\
 AFT-pMOMZ & 500 & 0.972 & 0.829 & 2.00 & 0.03 \\
 AFT-Zellner & 100 & 0.364 & 0.663 & 1.29 & 0.05 \\
 AFT-Zellner & 500 & 1.000 & 0.963 & 2.00 & 0.00 \\
 Cox-piMOM & 100 & 0.368 & 0.398 & 1.45 & 0.31 \\
 Cox-piMOM & 500 & 0.896 & 0.727 & 2.00 & 0.12 \\
 AFT-LASSO & 100 & 0.004 &  & 1.01 & 7.16 \\
 AFT-LASSO & 500 & 0.004 &  & 1.01 & 7.16 \\
 Cox-LASSO & 100 & 0.020 &  & 1.83 & 8.37 \\
 Cox-LASSO & 500 & 0.000 &  & 2.00 & 14.17 \\
   \hline
\end{tabular}
\end{center}
\caption{Scenario 3, $p=50$. Proportion of correctly selected models ($\widehat{\gamma}=\gamma^*$) for uncensored and censored data,
  average posterior model probability when the correct model was selected,
number of truly active and truly inactive variables $\widehat{\gamma}$}
\label{tab:pcorrect_scen3}
\end{table}

\begin{table}
\begin{center}
\begin{tabular}{lrrrrr}  \hline
  Method & $n$ & $\widehat{\gamma}=\gamma^*$ & $\pi(\gamma^* \mid y,\widehat{\gamma}=\gamma^*)$ & Truly active selected & Truly inactive selected \\  \hline
 AFT-pMOMZ & 100 & 0.220 & 0.412 & 1.13 & 0.34 \\
 AFT-pMOMZ & 500 & 0.960 & 0.869 & 2.00 & 0.04 \\
 AFT-Zellner & 100 & 0.108 & 0.632 & 0.77 & 0.09 \\
 AFT-Zellner & 500 & 0.976 & 0.946 & 1.98 & 0.00 \\
 Cox-piMOM & 100 & 0.072 & 0.397 & 0.89 & 0.33 \\
 Cox-piMOM & 500 & 0.932 & 0.853 & 1.97 & 0.04 \\
 AFT-LASSO & 100 & 0.000 &  & 0.38 & 5.35 \\
 AFT-LASSO & 500 & 0.000 &  & 0.38 & 5.35 \\
 Cox-LASSO & 100 & 0.004 &  & 1.28 & 6.12 \\
 Cox-LASSO & 500 & 0.016 &  & 2.00 & 10.31 \\
   \hline
\end{tabular}
\end{center}
\caption{Scenario 4, $p=50$. Proportion of correctly selected models ($\widehat{\gamma}=\gamma^*$) for uncensored and censored data,
  average posterior model probability when the correct model was selected,
number of truly active and truly inactive variables $\widehat{\gamma}$}
\label{tab:pcorrect_scen4}
\end{table}


\begin{table}
\begin{center}
\begin{tabular}{lrrrrr}  \hline
  Method & $n$ & $\widehat{\gamma}=\gamma^*$ & $\pi(\gamma^* \mid y,\widehat{\gamma}=\gamma^*)$ & Truly active selected & Truly inactive selected \\  \hline
 AFT pMOMZ & 100 & 0.272 & 0.341 & 1.03 & 0.30 \\
 AFT pMOMZ & 500 & 0.940 & 0.835 & 1.99 & 0.07 \\
 AFT Zellner & 100 & 0.128 & 0.653 & 0.75 & 0.06 \\
 AFT Zellner & 500 & 0.976 & 0.945 & 1.98 & 0.01 \\
 Cox iMOM & 100 & 0.196 & 0.393 & 1.18 & 0.22 \\
 Cox iMOM & 500 & 0.956 & 0.763 & 2.00 & 0.04 \\
 AFT LASSO & 100 & 0.000 &  & 0.78 & 5.84 \\
 AFT LASSO & 500 & 0.000 &  & 0.78 & 5.84 \\
 Cox LASSO & 100 & 0.004 &  & 1.59 & 7.46 \\
 Cox LASSO & 500 & 0.008 &  & 2.00 & 12.73 \\
   \hline
\end{tabular}
\end{center}
\caption{Scenario 5, $p=50$. Proportion of correctly selected models ($\widehat{\gamma}=\gamma^*$) for uncensored and censored data,
  average posterior model probability when the correct model was selected,
number of truly active and truly inactive variables $\widehat{\gamma}$}
\label{tab:pcorrect_scen5}
\end{table}

\begin{table}
\begin{center}
\begin{tabular}{lrrrrr}  \hline
  Method & $n$ & $\widehat{\gamma}=\gamma^*$ & $\pi(\gamma^* \mid y,\widehat{\gamma}=\gamma^*)$ & Truly active selected & Truly inactive selected \\  \hline
 AFT pMOMZ & 100 & 0.084 & 0.310 & 0.67 & 0.36 \\
 AFT pMOMZ & 500 & 0.788 & 0.808 & 1.81 & 0.03 \\
 AFT Zellner & 100 & 0.032 & 0.449 & 0.36 & 0.12 \\
 AFT Zellner & 500 & 0.572 & 0.851 & 1.57 & 0.00 \\
 Cox iMOM & 100 & 0.044 & 0.360 & 0.66 & 0.46 \\
 Cox iMOM & 500 & 0.764 & 0.830 & 1.78 & 0.03 \\
 AFT LASSO & 100 & 0.000 &  & 0.31 & 4.40 \\
 AFT LASSO & 500 & 0.000 &  & 0.31 & 4.40 \\
 Cox LASSO & 100 & 0.000 &  & 0.98 & 5.33 \\
 Cox LASSO & 500 & 0.008 &  & 1.99 & 9.36 \\
   \hline
\end{tabular}
\end{center}
\caption{Scenario 6, $p=50$. Proportion of correctly selected models ($\widehat{\gamma}=\gamma^*$) for uncensored and censored data,
  average posterior model probability when the correct model was selected,
number of truly active and truly inactive variables $\widehat{\gamma}$}
\label{tab:pcorrect_scen6}
\end{table}

\clearpage
\subsection{TGFB data}\label{app:tgfb}

\begin{figure}[h!]
\begin{center}
\begin{tabular}{cc}
\includegraphics[width=0.5\textwidth]{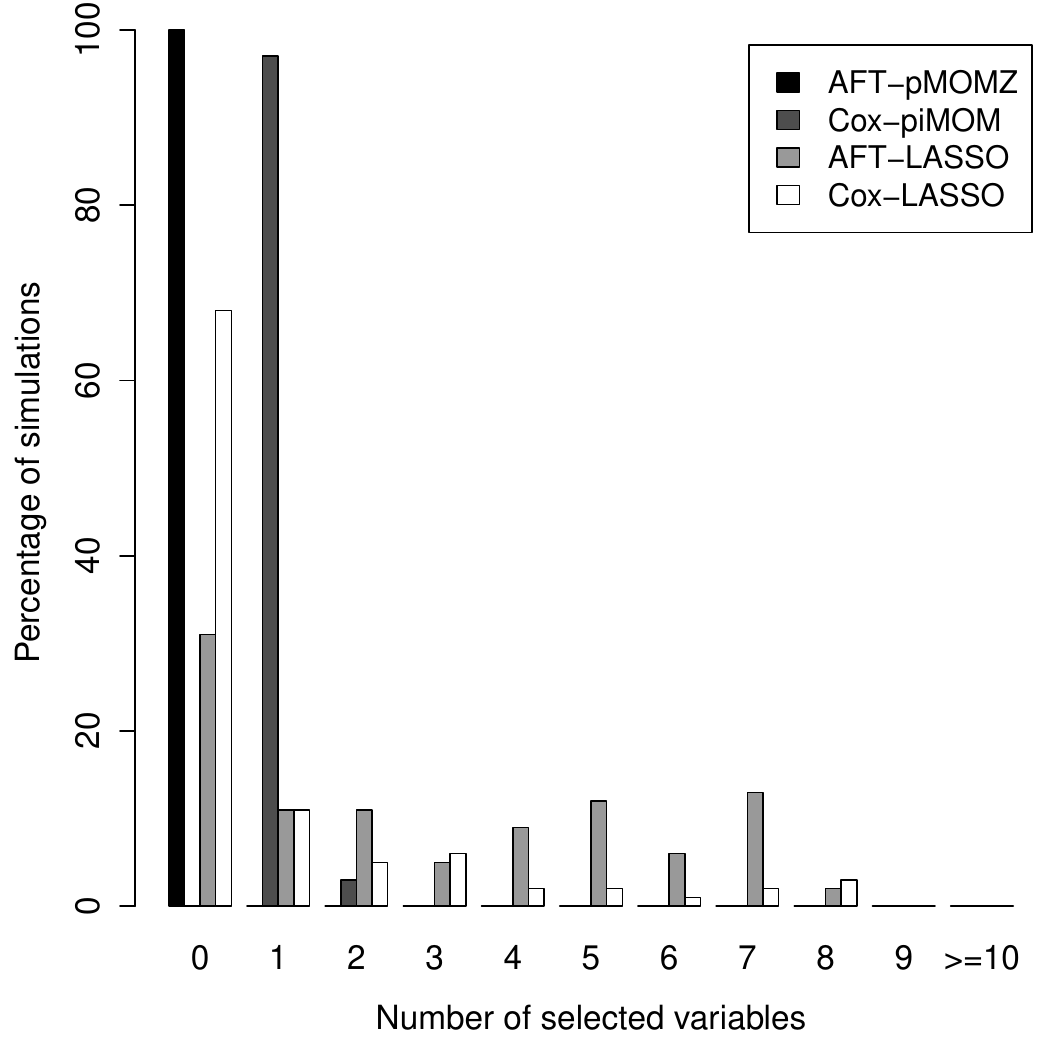} &
\includegraphics[width=0.5\textwidth]{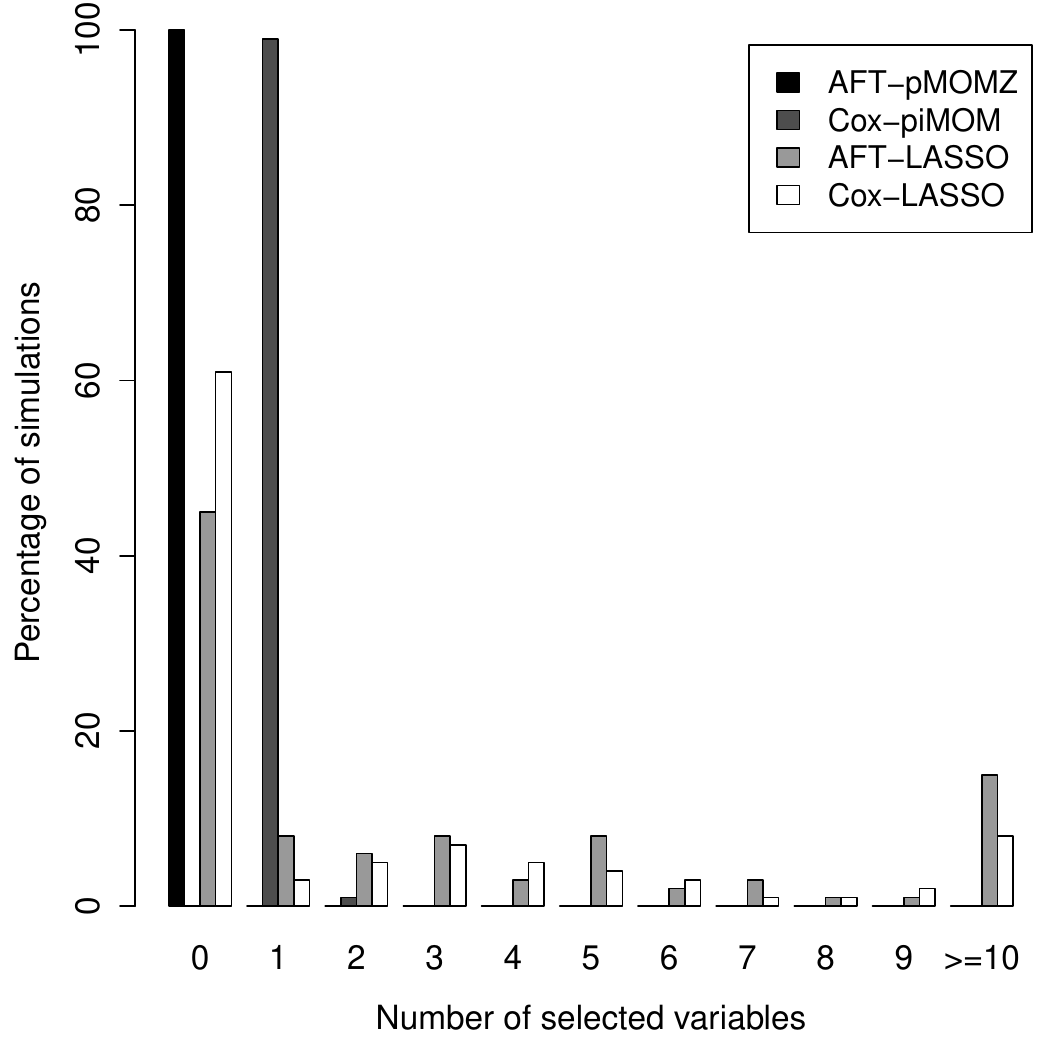}
\end{tabular}
\end{center}
\caption{Number of false positives in permuted colon cancer data (100 permutations) when
the design had 8 columns (left) for stage, linear and non-linear effect of TGFB and 175 columns (right) for stage and linear effect of 173 genes}
\label{fig:falsepos_tgfb}
\end{figure}

\begin{table}
\begin{center}
\begin{tabular}{|l|cc|cc|}\hline
& \multicolumn{2}{c|}{Linear effects} & \multicolumn{2}{c|}{Linear and non-linear} \\
AFT-pMOMZ & 0.64 & 3.9  & 0.66  &  4.9\\
Cox-piMOM & 0.53 & 2.1  & 0.67  &  1.0\\
Cox-LASSO & 0.69 & 13.6 & 0.68  & 11.7\\
AFT-LASSO & 0.51 & 20.8 & 0.45  & 14.9\\
\hline
\end{tabular}
\end{center}
\caption{Leave-one-out cross-validation on colon cancer data.
Concordance index (CI) and average number of selected parameters
when considering only linear effects $\gamma_j \in \{0,1\}$ $(p=175)$
and linear/non-linear effects $\gamma_j \in \{0,1,2\}$ $(p(r+1)=1050)$
}
\label{tab:ci_tgfb}
\end{table}

\begin{figure}[h!]
  \begin{center}
    \begin{tabular}{cc}
      \includegraphics[width=0.48\textwidth]{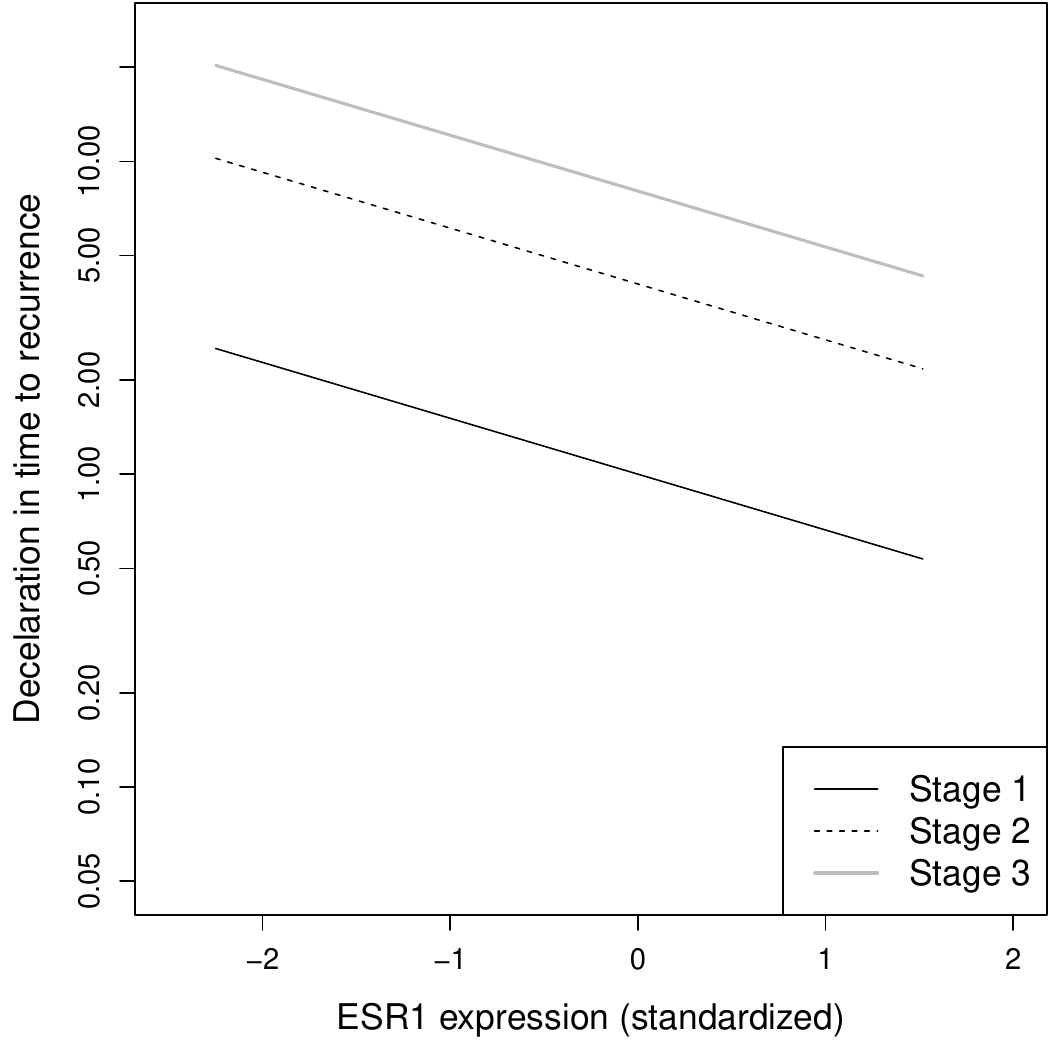} &
      \includegraphics[width=0.48\textwidth]{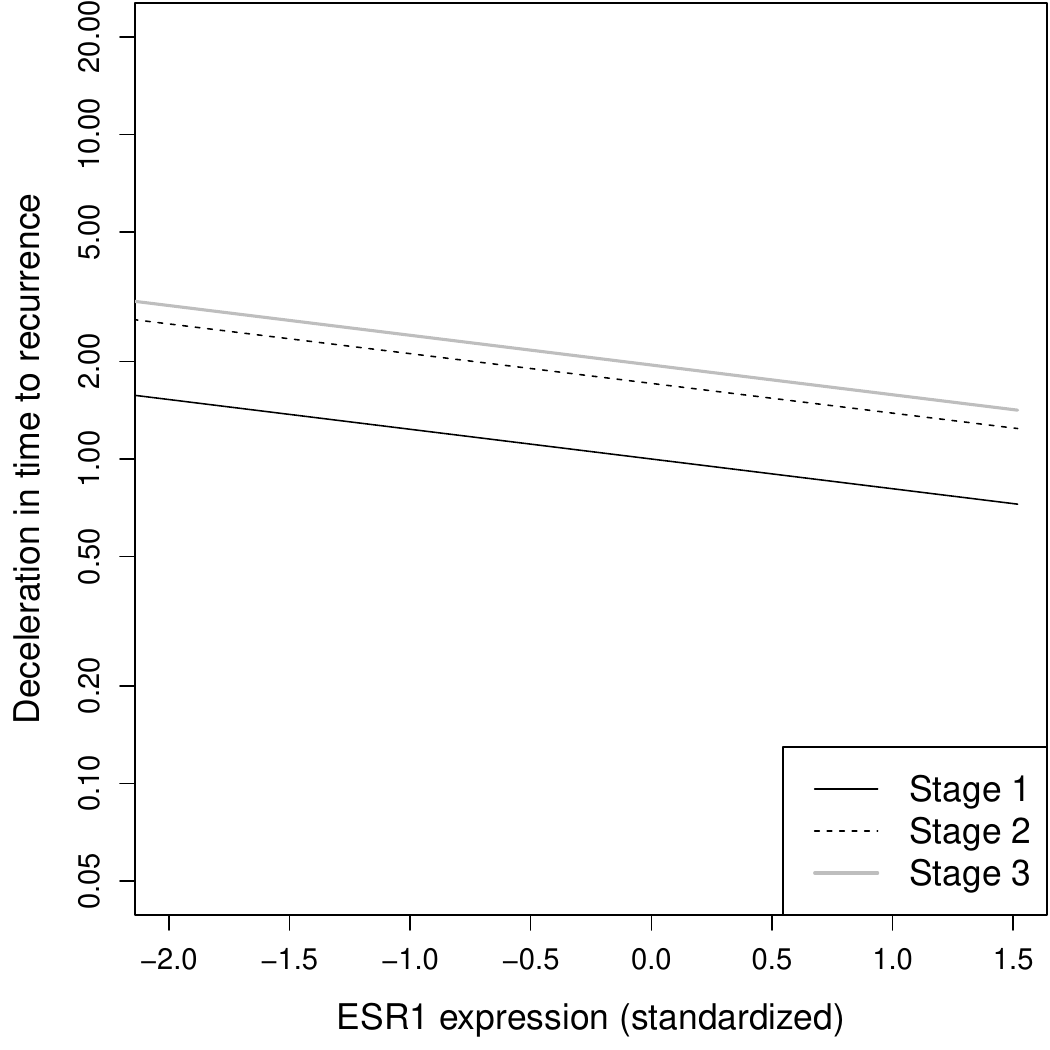}
    \end{tabular}
\end{center}
\caption{Estimated time deceleration due to stage and ESR1 expression (MLE). Model with stage and ESR1 (left) and stage, ESR1 with 8 genes with highest marginal posterior inclusion (right).
ESR1 P-values: 0.003 (left), 0.1 (right)}
\label{fig:esr1}
\end{figure}

\begin{table}[ht]
\small
\begin{tabular}{ccccccc}
\hline
Gene & NUSAP1 & Contig46452\_RC & LINC01520 & NM\_001109 & NM\_003430 & NM\_006398  \\
Prob & 0.965 & 0.411 & 0.306 & 0.286 & 0.258 & 0.217  \\
Gene &  NM\_006727 & GC11M123574 & Contig40158\_RC & NM\_001565 & NM\_004131 & Contig51202\_RC \\
Prob &  0.217 & 0.161 & 0.158 & 0.124 & 0.108 & 0.102 \\
\hline
\end{tabular}
\caption{Genes with marginal posterior probability greater than $10\%$.}
\label{tab:esr1}
\end{table}

Figure \ref{fig:tgfb} shows the estimated time acceleration as a function of tumor stage and the expression of gene TGFB, as estimated via maximum likelihood estimation, for the model with only stage and TGFB (left) and the highest posterior probability model (right).

\begin{figure}[h!]
  \begin{center}
    \begin{tabular}{cc}
      \includegraphics[width=0.48\textwidth]{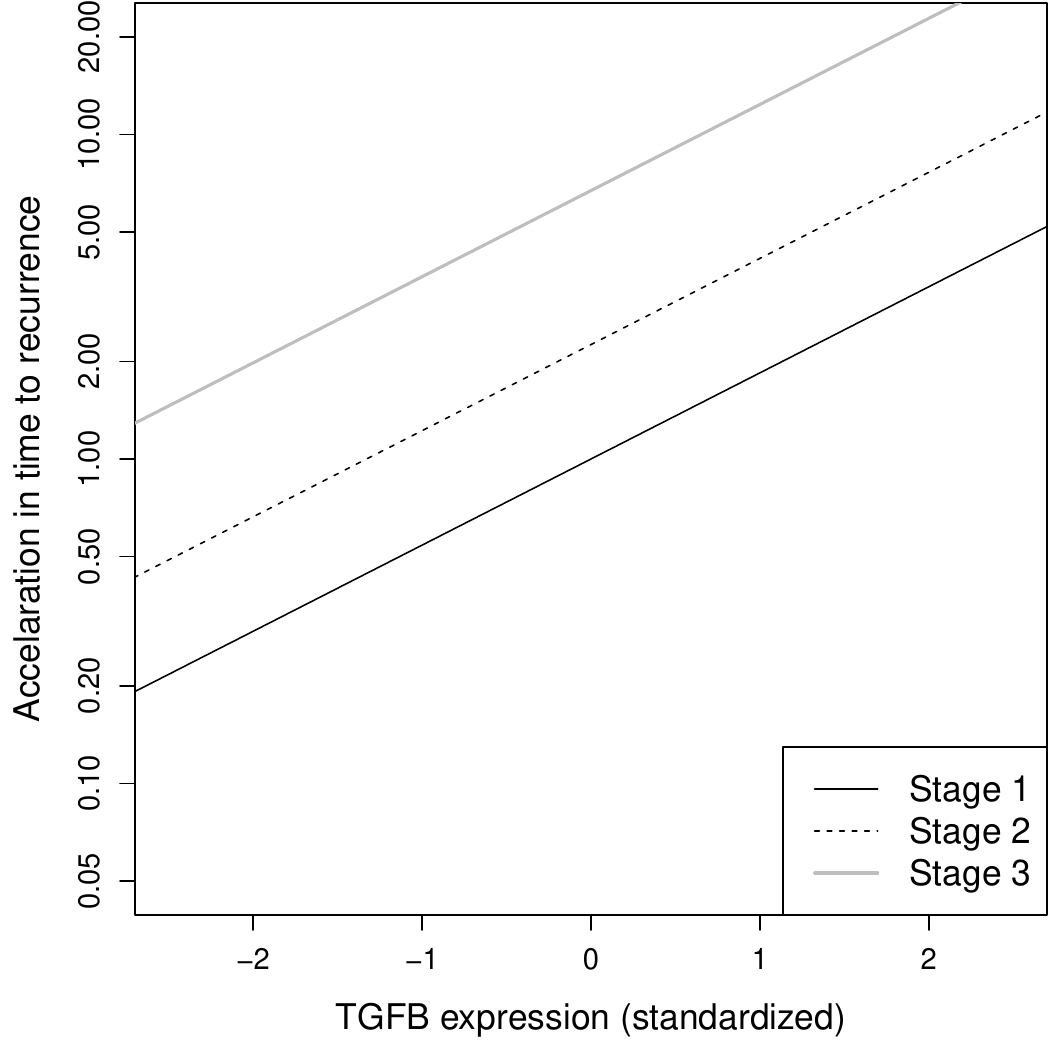} &
      \includegraphics[width=0.48\textwidth]{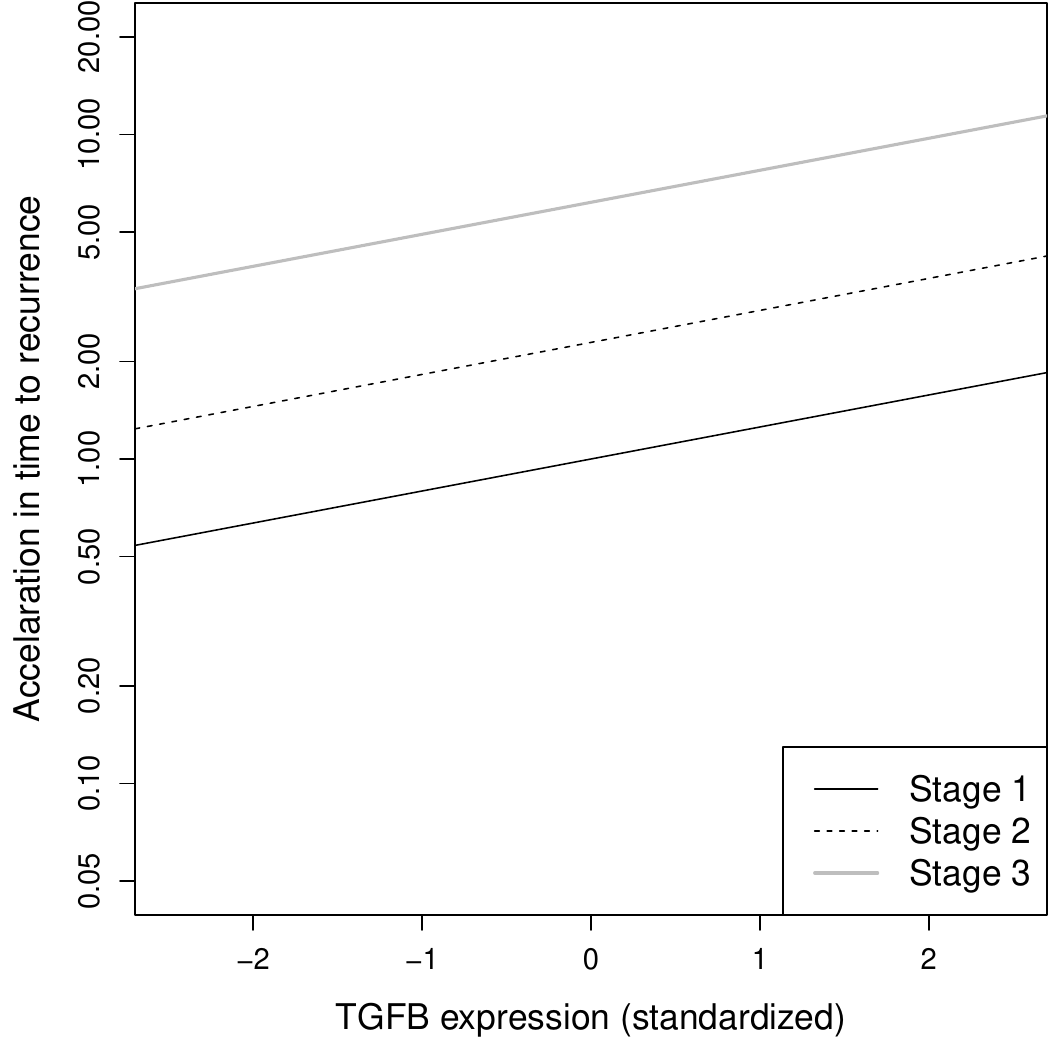}
    \end{tabular}
\end{center}
\caption{Estimated time acceleration due to stage and TGFB expression (maximum likelihood). Model with stage and TGFB (left) and stage, TGFB, FLT1, ESM1, GAS1 (right).
TGFB P-values: 0.001 (left), 0.281 (right)}
\label{fig:tgfb}
\end{figure}

\pagebreak
\subsection{Effect of ESR1 on breast cancer prognosis}\label{app:esr}

The expression of the estrogen receptor alpha gene (ESR1) is associated with therapeutic resistance and metastasis in breast cancer \citep{lei:2019}.
\cite{lum:2013} studied the effect of 1,553 genes and ESR1 on survival (time until recurrence).
We analyzed their $n=272$ patients with recorded survival.
As covariates we used tumor stage (2 dummy indicators), age, chemotherapy (binary), ESR1 and the 1,553 genes, for a total of $p=1,557$.

We first used AFT-pMOMZ to perform selection only on stage and ESR1.
The top model had 0.974 posterior probability and included stage and a linear effect of ESR1, confirming that ESR1 is associated with prognosis.
The posterior marginal inclusion probability for a non-linear effect of ESR1 was only 0.002.
As an additional check, the MLE under the top model gave P-values $<0.001$ for stage and ESR1.
Patients in stages 2 and 3 were estimated to experience recurrence 4.053 and 8.030 times faster (respectively) relative to stage 1,
and a unit standard deviation in ESR1 was associated with a 0.663 deceleration in recurrence time.
Figure \ref{fig:esr1} (left) shows the estimated deceleration.

Next, we extended the exercise to all 1,557 variables.
The top model contained age and the genes NUSAP1, GC11M123574 and Contig46452\_RC.
Table \ref{tab:esr1} shows the 10 genes with largest posterior inclusion probabilities.
Interestingly, the marginal inclusion probability for ESR1 was only 0.008,
\textit{i.e.} after accounting for other genes ESR1 did not show an effect on survival.
For confirmation we fitted via MLE the model with the top 8 genes in Table \ref{tab:esr1}, stage and ESR1.
The P-value for ESR1 was 0.1 and its estimated effect was substantially reduced (Figure \ref{fig:esr1}, right).
As a further check, there appear to be plausible mechanisms how the selected genes may mediate the effect of ESR1 on survival.
NUSAP1 plays a role in spindle microtubule organization that promotes cell proliferation and metastasis
(genecards.org, \cite{stelzer:2016}),
GC11M123574 is positively correlated with carcinoma-promising cell cycle regulator CKS1
and is regulated by cancer-associated transcription factor BCLAF1 \citep{lee_eunkyoung:2011},
and Contig46452\_RC belongs to a set of 100 reporter genes related to ESR1 selected for their prognostic power on breast cancer progression
\citep{vantveer:2002}.

%
%
%
%

\clearpage

\bibliographystyle{unsrtnat}
\bibliography{references}

\end{document}